\DeclareMathAlphabet\mathbfcal{OMS}{cmsy}{b}{n}
\newcommand{\mbf}{\mathbf}
\newcommand{\mc}{\mathcal}
\newcommand{\bmx}{\begin{bmatrix}}
\newcommand{\emx}{\end{bmatrix}}
\renewcommand{\l}{\left}
\renewcommand{\r}{\right}
\def\wh{\widehat}
\def\wt{\widetilde}
\newcommand{\E}[0]{\mathsf{E}}
\newcommand{\p}{\mathsf{P}}
\newcommand{\R}{\mathbb{R}}
\newcommand{\Z}{\mathbb{Z}}
\newcommand{\nn}{\nonumber}
\newcommand{\vech}[0]{\mathsf{Vech}}
\theoremstyle{definition}
\newtheorem{thm}{Theorem}
\theoremstyle{definition}
\theoremstyle{definition}
\newtheorem{lem}[thm]{Lemma}
\theoremstyle{definition}
\newtheorem{prop}[thm]{Proposition}
\theoremstyle{definition}
\newtheorem{assum}{Assumption}
\theoremstyle{remark}
\theoremstyle{definition}
\newtheorem{defn}{Definition}
\theoremstyle{definition}
\newtheorem{ex}{Example}
\theoremstyle{definition}
\begin{document}

\title{Moving sum procedure for multiple change point detection in large factor models}
\author{Matteo Barigozzi$^1$ and Haeran Cho$^2$ and Lorenzo Trapani$^3$}
\maketitle

\begin{abstract}
This paper proposes a moving sum methodology for detecting multiple change points in high-dimensional time series under a factor model, where changes are attributed to those in loadings as well as emergence or disappearance of factors. We establish the asymptotic null distribution of the proposed test for family-wise error control, and show the consistency of the procedure for multiple change point estimation. Simulation studies and an application to a large dataset of volatilities demonstrate the competitive performance of the proposed method.
\end{abstract}

\footnotetext[1]{%
Department of Economics, Universit\`a di Bologna. 
Email: \url{matteo.barigozzi@unibo.it}.}

\footnotetext[2]{%
School of Mathematics, University of Bristol. Email: %
\url{haeran.cho@bristol.ac.uk}.}

\footnotetext[3]{%
Department of Economics and Management, Universita' di Pavia and School of Business, University of Leicester. Email: %
\url{lorenzo.l.trapani@gmail.com}.}


\noindent\textit{Keywords:} data segmentation, MOSUM, factor model

\noindent\textit{MSC subject classification:} 62M10, 62H25 

\section{Introduction}

Factor models are, arguably, one of the most frequently employed tools to model and carry out inference when large-dimensional, vector-valued time series are available. 
Whilst a comprehensive review of the literature goes well above and beyond the scope of this paper, it is worth noting that factor models have a long history, dating back at least to the seminal paper by \citet{spearman}. 
Since being popularised by the contribution by \citet{chamberlainrothschild83}, and since the development of the asymptotic theory to analyse large dimensional factor models \citep{bai03}, their usage has become \textit{de rigueur} in social sciences and economics where they have been applied to diverse fields such as business cycle analysis, asset pricing and economic monitoring and forecasting (see, \textit{inter alia}, the review by \citet{SW11} for a comprehensive list of references).

As the time dimension increases, it is inevitable that any model (including factor models) may undergo changes in its structure which, in turn, may affect the properties of estimation techniques and hamper predictive ability. 
There is a huge literature on both ex-post and online detection of change points in various contexts, and we refer to the recent book by \citet{horvath2024change} for a review focussed in particular on ex-post detection, and to the paper by \citet{aue2024state} for a review of the literature on online (sequential) detection. 

Where factor models are concerned, on the one hand, the analysis in \citet{SW09} and \citet{bates2013} shows that, when changes are ``sufficiently
small'' and do not involve a change in the number of common factors, the presence of change points is inconsequential for inference on the factor spaces. 
On the other hand, in several applications, there is no guarantee that the factor structure only undergoes small changes and that the number of common factors is time-invariant. 
Hence, it is not surprising that the literature has developed several methods to test, retrospectively, whether there is a change point or not. 
\citet{breitung2011}, \citet{corradi2014}, \citet{han2014}, \citet{yamamoto2015}, \citet{baltagi2017identification}, \citet{cheng2016} and \citet{baiduan} propose several tests to check if there is a break, based on the idea that a change point in the loadings is observationally equivalent to a change in the covariance matrix of the common factors.
Thus assuming homoscedastic factors, a change in the covariance matrix of the estimated common factors
can only be due to a break in the loadings. 
See also 
\citet{barigozzi2020sequential} and \citet{he2021online} for alternative formulation of change point tests in an online context. 
Whilst the literature on change point testing is well-established, there is relatively little work on
detection and estimation of (possibly) multiple change points in the factor structure with some exceptions \citep{barigozzi2018simultaneous, li2023detection, baiduan}.

In this paper, we make an advance on the change point literature by proposing a method to estimate the number and locations of (possibly) multiple change points under a factor model. 
Specifically, we propose a procedure based on the so-called
MOving SUMs (MOSUM) process which, since the seminal contribution by \citet{huvskova2001permutation}, has been shown to have desirable properties in multiple change point estimation under weak assumptions on the distribution of the data.
Versatility of the MOSUM procedure has been demonstrated in the context of detecting changes in the mean (\citealp{huvskova2001permutation}; \citealp{kirch}), trend \citep{kim2024moving} and the drift of stochastic processes \citep{kirch2021moving}; see also \cite{kirch2024data} for a general framework for multivariate time series segmentation based on an estimating function.
Some recent contributions extend the use of this methodology to high-dimensional time series (\citealp{cho2022high}; \citealp{cho2023high}), demonstrating its scalability and suggesting that it is worth exploring the performance of a MOSUM procedure in the context of large factor models.

We build on that, as discussed above, a change in the
loadings is observationally equivalent to a change in the covariance matrix of the common factors, and construct a MOSUM statistic based on the (moving) partial sums of the outer products of the estimated factors.
In this respect, our statistic is akin to the one based on the (maximally selected) sequence of likelihood ratio tests proposed e.g.\ in \citet{duan2023quasi}, save for the fact that it is based on using \textit{moving}, as opposed to \textit{cumulative}, partial sums. 
Upon testing for changes in the factor model using the MOSUM process, if the null hypothesis of no change is rejected, we perform multiple change point detection by estimating both the total number and the locations of the breaks.
We establish the asymptotic null distribution of the MOSUM process as well as the consistency of the proposed change point detection procedure in estimating the total number and locations of the change points, with the accompanying rate of estimation for individual breaks.
In doing so, we derive a Strong Invariance Principle for the MOSUM process based on the outer products of the estimated factors, which may be of independent interest.


The remainder of the paper is organised as follows. We describe the MOSUM procedure in Section~\ref{mosum}. Assumptions, the limiting distribution under the null, and the consistency under the alternative are reported in Section~\ref{theory}. 
Sections~\ref{simulations} and~\ref{sec:real} provide findings from simulation studies and an application to financial data. 
Section~\ref{sec:conc} concludes the paper, and all the proofs of theoretical results are given in Appendix.
An implementation of the proposed method is available at \url{https://github.com/haeran-cho/mosum.fts}.

\paragraph{Notations.}

For a random variable $X$, we have $|X|_{\nu} = ( \mathsf{E}(|X|^{\nu}))^{1/\nu}$. 
We denote: weak convergence on the space $\mathcal{D}[0, 1]$ with $\overset{\mathcal{D}}{\rightarrow}$; $\overset{\mathcal{D}}{=}$ is equality in distribution; $\rightarrow$ is the ordinary limit; $\Gamma(\cdot)$ is the Gamma function; $\log(x)$ is the natural logarithm of $x > 0$; $\mathbb{I}_{\mathcal{A}}$ is an indicator function satisfying $\mathbb{I}_{\mathcal{A}} = 1$ if the event $\mathcal{A}$ is true and $\mathbb{I}_{\mathcal{A}} = 0$ otherwise.
By $\mathbf{I}$, $\mathbf{O}$ and $\mathbf{0}$, we denote an
identity matrix, a matrix of zeros and a vector of zeros whose dimensions depend on the context. 
For a matrix $\mathbf{A} \in \R^{m \times n}$, we denote by $\mathbf{A}^{\top}$ its transpose and $\Vert \mathbf{A}\Vert $ its Euclidean norm, with $\Lambda_{\max}(\mathbf{A})$ and $\Lambda_{\min}(\mathbf{A})$ denoting its largest and the
smallest eigenvalues in modulus. 
When $m = n$, we denote by $\mathsf{Vech}(\mathbf{A})$ the
vector of length $m(m+1)/2$ that stacks the elements on and below the main
diagonal of $\mathbf{A}$, and $\mathbf{L}_{m}$ the $m(m+1)/2\times m^{2}$-matrix satisfying $\mathsf{Vech}(\mathbf{A})=\mathbf{L}_{m}\mathsf{Vec}(\mathbf{A})$, where $\mathsf{Vec}(\cdot)$ is an operator that stacks the
columns of the matrix into a vector. Conversely, we define $\mathbf{K}_{m}\in \mathbb{R}^{m^{2}\times m(m+1)/2}$ satisfying $\mathsf{Vec}(\mathbf{A})=\mathbf{K}_{m}\mathsf{Vech}(\mathbf{A})$, see Appendix~A.12 of \cite{lutkepohl2005new}. 
Finally, given two sequences $\{a_{n}\}$ and $\{b_{n}\}$, we
write $a_{n}=O(b_{n})$ if, for some finite positive constant $C$ there exists $N\in \mathbb{N}_{0}=\mathbb{N}\cup \{0\}$ such that $|a_{n}||b_{n}|^{-1}\leq C$ for all $n\geq N$.

\section{MOSUM procedure for data segmentation}
\label{mosum}

\subsection{Model}

Let $\{\mathbf{X}_{t},\,1\leq t\leq T\}$ denote an $N$-dimensional time
series that admits the following factor model representation with $R$ change points, as
\begin{equation}
\mathbf{X}_{t}=\left\{ 
\begin{array}{ll}
\bm\Lambda_{0}\mathbf{f}_{t}+\mathbf{e}_{t} & \text{for \ }k_{0}+1=1\leq
t\leq k_{1}, \\ 
\bm\Lambda_{1}\mathbf{f}_{t}+\mathbf{e}_{t} & \text{for \ }k_{1}+1\leq
t\leq k_{2}, \\ 
\vdots &  \\ 
\bm\Lambda_{R}\mathbf{f}_{t}+\mathbf{e}_{t} & \text{for \ }k_{R}+1\leq
t\leq k_{R+1}=T. \label{eq:model:one}
\end{array}%
\right.
\end{equation}%
Here, the matrix of loadings $\bm\Lambda_{j}=[\bm\lambda_{j,1},\ldots ,\bm%
\lambda_{j,N}]^{\top }\in \mathbb{R}^{N\times r}$ has column rank $r_{j}\leq r$ fixed for all $N$, and ``loads'' the vector of random factors $\mathbf{f}_{t}=(f_{1,t},\ldots ,f_{r,t})^{\top}$ onto the cross-sections of 
$\mathbf{X}_{t}$, and $\mathbf{e}_{t}=(e_{1,t},\ldots ,e_{N,t})^{\top}$
denotes the idiosyncratic component. 
The model in~\eqref{eq:model:one} allows for the changes due to emergence or disappearance of factor(s) by permitting the ranks of $\bm\Lambda_j$ to vary,
as well as rotational changes in the loading matrices.
At the same time, the model is not identifiable in that the changes in the loadings are
(observationally) equivalent to changes in the second-order properties
of the common factor $\mathbf{f}_{t}$ -- a fact, as mentioned in Introduction,
frequently explored in the relevant literature for developing change point tests; 
see e.g.\ \citet{baiduan}. 
We further illustrate this point in the following example.

\begin{ex}
\label{ex:one} Consider the factor model in~\eqref{eq:model:one} with $R = 1$ and the
single change point at $k_{1}=k^{\ast}$. For any $\bm\Lambda_{j}\in 
\mathbb{R}^{N\times r_{j}}$, $j=0,1$, we can find $\bm\Lambda \in \mathbb{R}%
^{N\times r}$ of full column rank with $r\geq \max (r_{0},r_{1})$ such that $\bm\Lambda_{j}=\bm\Lambda \mathbf{A}_{j}$ for $\mathbf{A}_{j}\in \mathbb{R}^{r\times r_j}$ of rank $r_{j}$. Then, we can re-write~\eqref{eq:model:one} as 
\begin{align*}
& 
\begin{bmatrix}
\mathbf{X}_{0:k^{\ast }} \\ 
\mathbf{X}_{k^{\ast }:T}%
\end{bmatrix}%
=%
\begin{bmatrix}
\mathbf{F}_{0:k^{\ast }}\bm\Lambda_{0}^{\top } \\ 
\mathbf{F}_{k^{\ast }:T}\bm\Lambda_{1}^{\top }%
\end{bmatrix}%
+\mathbf{E}, \text{ \ where \ } 
\mbf E^\top = \begin{bmatrix}
\mbf e_1, \ldots, \mbf e_T
\end{bmatrix}, \\
& \underset{(b-a)\times N}{\mathbf{X}_{a:b}}=%
\begin{bmatrix}
\mathbf{X}_{a+1}^{\top } \\ 
\vdots \\ 
\mathbf{X}_{b}^{\top }%
\end{bmatrix}
\text{ \ and \ }
\underset{(b-a)\times r}{\mathbf{F}_{a:b}}=%
\begin{bmatrix}
\mathbf{f}_{a+1}^{\top } \\ 
\vdots \\ 
\mathbf{f}_{b}^{\top }%
\end{bmatrix}
\text{ \ for all \ }0\leq a<b\leq T.
\end{align*}
It follows that 
\begin{equation*}
\begin{bmatrix}
\mathbf{X}_{0:k^{\ast }} \\ 
\mathbf{X}_{k^{\ast }:T}%
\end{bmatrix}%
=%
\begin{bmatrix}
\mathbf{F}_{0:k^{\ast }}\bm\Lambda_{0}^{\top } \\ 
\mathbf{F}_{k^{\ast }:T}\bm\Lambda_{1}^{\top }%
\end{bmatrix}%
+\mathbf{E}=%
\begin{bmatrix}
\mathbf{F}_{0:k^{\ast }}\mathbf{A}_{0}^{\top } \\ 
\mathbf{F}_{k^{\ast }:T}\mathbf{A}_{1}^{\top }%
\end{bmatrix}%
\bm\Lambda^{\top }+\mathbf{E} =: \mathbf{G}\bm\Lambda^{\top }+\mathbf{E},
\end{equation*}%
which is an observationally equivalent representation with a time-invariant
loading matrix $\bm\Lambda$ and pseudo factors of dimension $r$ contained in $\mathbf{G}$.
\end{ex}

We extend the observation made in Example~\ref{ex:one} to the multiple
change point setting and re-write the model in~\eqref{eq:model:one} as 
\begin{align}  \label{eq:model:two}
\mathbf{X}_t = \bm\Lambda \sum_{j = 0}^R \mathbf{A}_j \mathbf{f}_t \cdot 
\mathbb{I}_{\{ k_j < t \le k_{j + 1} \}} + \mathbf{e}_t =: \bm\Lambda \mathbf{%
g}_t + \mathbf{e}_t
\end{align}
with $\bm\Lambda_j = \bm\Lambda \mbf A_j$ for all $0 \le j \le R$.
Under the homoscedasticity of $\{\mathbf{f}_t\}$, we have 
\begin{align}
\mathsf{Cov}(\mathbf{g}_{k_j}) = \mathbf{A}_{j - 1} \mathsf{Cov}(\mathbf{f}_0) 
\mathbf{A}_{j - 1}^\top \ne \mathbf{A}_j \mathsf{Cov}(\mathbf{f}_0) \mathbf{A%
}_j^\top = \mathsf{Cov}(\mathbf{g}_{k_j + 1})  \notag
\end{align}
for $1 \le j \le R$. Then, the problem of detecting the multiple
change points in the loadings under~\eqref{eq:model:two}, becomes that of
detecting change points in the covariance of the pseudo factors~$\{\mathbf{g}_t\}$.

\subsection{Methodology}
\label{sec:method}

Let us suppose that the number of pseudo factors $r$ is known. Then, we can
estimate the pseudo factors $\mathbf{g}_{t}$ as, up to an invertible transformation, $\sqrt{T}$ times the $r$ leading eigenvectors
of the $T\times T$ matrix $(NT)^{-1}\mathbf{X}\mathbf{X}^{\top}$  with $\mathbf{X}=\mathbf{X}_{0:T}$. Denoting such estimator
by $\widehat{\mathbf{g}}_{t}$, we propose to test for a change under the model in~\eqref{eq:model:one} and, if any, estimate the multiple change points, by scanning the data
using the following MOSUM statistic: 
\begin{align}
\mathcal{T}_{N,T,\gamma }(k)& =\left\vert \mathbf{M}_{N,T,\gamma }^{\top }(k)%
\mathbf{V}_{k}^{-1}\mathbf{M}_{N,T,\gamma }(k)\right\vert^{1/2}\text{ \ for
\ }\gamma \leq k\leq T-\gamma ,\text{ \ with}  \label{mos} \\
\mathbf{M}_{N,T,\gamma }(k)& =\frac{1}{\sqrt{2\gamma }}\mathsf{Vech}\left(
\sum_{t=k+1}^{k+\gamma }\widehat{\mathbf{g}}_{t}\widehat{\mathbf{g}}%
_{t}^{\top }-\sum_{t=k-\gamma +1}^{k}\widehat{\mathbf{g}}_{t}\widehat{\mathbf{g}}_{t}^{\top }\right) ,  \notag
\end{align}
where $\gamma \ge 1$ is a pre-selected bandwidth.
The matrix $\mathbf{V}_{k}$ denotes the long-run
covariance matrix of $\mathbf{M}_{N,T,\gamma }(k)$ which, in the absence of
any change point, satisfies $\mathbf{V}_k \equiv \mathbf{V}$ for all $k$, with $\mathbf{V}$ defined explicitly in Theorem~\ref{thm:null} below.

For testing the null hypothesis of no change point, $\mathcal{H}_0: \, R = 0$, 
we compare the maximum of the MOSUM statistics against some threshold, say $D_{T, \gamma}$,
and reject $\mathcal{H}_0$ if 
\begin{align*}
\mathcal{T}_{N, T, \gamma} = \max_{\gamma \le k \le T - \gamma} \mathcal{T}%
_{N, T, \gamma}(k) > D_{T, \gamma}.
\end{align*}
In Theorem~\ref{thm:null} below, we derive the
asymptotic null distribution of $\mathcal{T}_{N, T, \gamma}$, which enables selecting $D_{T, \gamma}$ as its upper $\alpha$-quantile at a given significance level $\alpha \in (0, 1)$.

Beyond testing for any change, we propose to detect and locate the
multiple change points by adopting an approach proposed by \cite{kirch} in
the univariate mean change point setting. Simply put, we select every local
maximiser of $\mathcal{T}_{N,T,\gamma }(k)$ over a sufficiently large enough
interval at which $\mathcal{T}_{N,T,\gamma }(k)$ exceeds the threshold.
Specifically, for some fixed $\eta \in (0,1]$, we set as a change point estimator every $\widehat{k}$
that simultaneously satisfies 
\begin{equation}
\widehat{k}={\arg \max }_{k:\,|\widehat{k}-k|\leq \eta \gamma}\,\mathcal{T}%
_{N,T,\gamma }(k)\text{ \ and \ }\mathcal{T}_{N,T,\gamma }(\widehat{k}%
)>D_{T,\gamma }. \label{eq:eta}
\end{equation}%
Denoting such estimators by $\widehat{k}%
_{j},\,1\leq j\leq \widehat{R}$, their total number $\widehat{R}$ is the
estimator of the total number of change points $R$.

For the implementation of the
MOSUM procedure, we require an estimator of the long-run covariance matrix $\mathbf{V}_{k}$.
While we allow it to be location-dependent to account for the heteroscedasticity in the presence of change points, estimating a long-run covariance matrix of multivariate time series is well-known to be highly challenging.
In the current setting, this is augmented by that the computation of $\mc T_{N, T, \gamma}(k)$ calls for the inverse of $\mathbf{V}_k$, which may bring further numerical instabilities.
Therefore, we opt to use the following HAC-type estimator in place of $\mathbf{V}_k$, 
\begin{align}
\widehat{\mathbf{V}}& =\widehat{\bm\Gamma }(0)+\sum_{\ell =1}^{m}\left( 1-%
\frac{\ell }{m+1}\right) \left( \widehat{\bm\Gamma }(\ell )+\widehat{\bm%
\Gamma }^{\top }(\ell )\right) ,\text{ \ where}  \label{v-hat} \\
\widehat{\bm\Gamma }(\ell )& =\frac{1}{T}\sum_{t=\ell +1}^{T}\mathsf{Vech}%
\left( \widehat{\mathbf{g}}_{t}\widehat{\mathbf{g}}_{t}^{\top }-\mathbf{I}_{r}\right) \mathsf{Vech}\left( \widehat{\mathbf{g}}_{t-\ell }\widehat{\mathbf{g}}_{t-\ell }^{\top }-\mathbf{I}_{r}\right)^{\top}  \notag
\end{align}
with some bandwidth $m \ge 1$.
We later show that the matrix $\widehat{\mathbf{V}}$
provides a consistent estimator of the long-run covariance matrix of $\mathbf{M}_{N,T,\gamma }(k)$ under $\mathcal{H}_{0}$ (see Proposition~\ref{cor:V}).
The Bartlett kernel in~\eqref{v-hat} is only one of
the many possible choices, and we refer to \citet{baiduan} for a comprehensive analysis of various kernel-based estimators of $\mathbf{V}$.
For consistency of multiple change point detection, we only require that a positive definite matrix is used in place of $\mbf V_k$ (Theorem~\ref{thm:consistency}).
By default, we propose to adopt $\wh{\mbf V}$ (or its diagonal entries) which is shown to work well; see Section~\ref{simulations} for further discussions.

\section{Theoretical properties}
\label{theory}

\subsection{Assumptions}

We begin by providing a definition of $\mathcal{L}_{\nu}$\textit{-decomposable Bernoulli shifts}.

\begin{defn}
\label{bernoulli} {\it The $d$-dimensional sequence $\left\{ \mbf m_{t},-\infty
<t<\infty \right\} $ forms an $L_{\nu}$-decomposable Bernoulli shift if and
only if it holds that $\mbf m_{t}=h(\bm\eta_{t},\bm\eta_{t-1},\ldots )$, where (i) $h:\, \mathbb{S}^{\infty }\rightarrow \mathbb{R}^{d}$ is a non random measurable function; (ii) $\{\bm\eta_{t}\}_{t\in \mathbb{Z}}$ is an i.i.d.\ sequence with values in a
measurable space $\mathbb{S}$; (iii) $\mathsf{E}(\mbf m_{t})=0$ and $| \Vert \mbf m_{t} \Vert |_{\nu }<\infty $; and
(iv) $| \Vert \mbf m_{t}- \mbf m_{t,l}^{\ast } \Vert |_{\nu }\leq c_{0}l^{-a}$ for some $c_{0}>0$ and 
$a>0$, where $\mbf m_{t,l}^{\ast }=h(\bm\eta_{t},\ldots ,\bm\eta_{t-l+1},\bm\eta
_{t-l,t,l}^{\ast },\bm\eta_{t-l-1,t,l}^{\ast },\ldots )$, with $\{\bm\eta
_{s,t,l}^{\ast },\,-\infty <s,l,t<\infty \}$ that are i.i.d.\ copies of $\bm\eta_{0}$ which are independent of $\{\bm\eta_{t}\}_{t\in \mathbb{Z}}$.}
\end{defn}

Since the seminal works by \citet{wu2005} and \citet{berkeshormann} (see
also \citealp{hormann2009}), decomposable Bernoulli shifts have proven a very convenient way to model and study dependent time series, mainly due to
their generality and since it is much easier to verify whether a
sequence forms a decomposable Bernoulli shift than, e.g.\ verifying mixing
conditions. Virtually all of the most commonly employed models in econometrics
and statistics can be shown to generate decomposable Bernoulli shifts, such
as ARMA and (G)ARCH processes, non-linear time series models (e.g.\ random
coefficient autoregressive models and threshold models), Volterra series and
data generated by dynamical systems; see \citet{berkeshormann}, \citet{aue09}
and \citet{linliu}.

We establish the theoretical properties of the proposed MOSUM procedure
under the following assumptions.

\begin{assum}
\label{factors}

\begin{enumerate}[label = (\roman*)]

\item \label{assum:factors:one} \textit{
There exists some fixed $\epsilon \in (0, 1)$ such that $\{ \mathbf{f}_{t} \}_{t \in \mathbb{%
Z}}$ is an $\mathcal{L}_\nu$-decomposable Bernoulli shift with $\nu \ge 8\rho + \epsilon$ for $\rho = 1$ or $\rho = 2$, and $a > 2$.}

\item \label{assum:factors:two} \textit{$\bm\Sigma_F = \mathsf{E}(\mathbf{f}%
_t \mathbf{f}_t^\top) \in \mathbb{R}^{r \times r}$ is positive definite. }

\item \label{assum:factors:three} \textit{Denoting the long-run variance
matrix of $\mathcal{F}_{t}=\mathsf{Vech}(\mathbf{f}_{t}\mathbf{f}_{t}^{\top
})$ by 
\begin{equation}
\mathbf{D}=\lim_{T\rightarrow \infty }\frac{1}{T}\mathsf{E}\left\{ \left[
\sum_{t=1}^{T}(\mathcal{F}_{t}-\mathsf{E}(\mathcal{F}_{t}))\right] \; \left[ \sum_{t=1}^{T}\left(
\mathcal{F}_{t}-\mathsf{E}(\mathcal{F}_{t})\right) \right]^{\top}\right\} , \label{eq:D}
\end{equation}%
we suppose that $\mathbf{D}$ is invertible.}
\end{enumerate}
\end{assum}

\begin{assum}
\label{loadings} \textit{There exists some $c_0 \in (0, \infty)$ such that: }

\begin{enumerate}[label = (\roman*)]

\item \label{assum:loadings:one} \textit{$\bm\lambda_{i}$ is deterministic
with $\Vert \bm\lambda_{i} \Vert \leq c_{0}$ for all $1 \leq i \leq N$ and $N \in \mathbb N$.}

\item \label{assum:loadings:two} \textit{$\Vert N^{-1}\bm\Lambda^{\top }\bm%
\Lambda -\bm\Sigma_{\Lambda }\Vert \leq c_{0}N^{-1/2}$ for all $N\in\mathbb N$, where $\bm\Sigma
_{\Lambda }\in \mathbb{R}^{r\times r}$ is positive definite. }
\end{enumerate}
\end{assum}

\begin{assum}
\label{idiosyncratic} \textit{There exist some $\epsilon \in (0, 1)$ and $c_{0} \in
(0,\infty )$ such that the following holds for all $N, T \in \mathbb{N}$ and $\rho = 1$ or $\rho = 2$:}

\begin{enumerate}[label = (\roman*)]

\item \label{assum:idiosyncratic:one} \textit{$\mathsf{E}(e_{it}) = 0$ and $\mathsf{E}( \vert e_{i, t} \vert^{8 + \epsilon} ) < \infty$ for all $1 \le i \le N$ and $1 \le t \le T$. }

\item \label{assum:idiosyncratic:two}  \textit{Letting $\gamma_{s, t} =
N^{-1} \sum_{i = 1}^{N} \mathsf{E}(e_{i, s} e_{i, t})$, it holds that $\sum_{t = 1}^{T} \vert \gamma_{s, t} \vert \leq c_0$ for all $1 \le s \le T$. }

\item \label{assum:idiosyncratic:three} \textit{$\mathsf{E}( \vert \sum_{i =
1}^{N} (e_{i,t} e_{i,s} - \gamma_{s, t}) \vert^{4 + \epsilon}) \leq c_{0}
N^{2 + \epsilon / 2}$ for all $1 \le s, t \le T$. }

\item \label{assum:idiosyncratic:four}\textit{$\mathsf{E}(\Vert
\sum_{i=1}^{N}\bm\lambda_{i}e_{i,t}\Vert^{8+\epsilon })\leq
c_{0}N^{4+\epsilon /2}$ for all $1 \le t \le T$. }

\item \label{assum:idiosyncratic:five} \textit{$\sum_{j=1}^{N}\left\vert \mathsf{E}%
\mathit{(e_{i,t}e_{j,t})}\right\vert \leq c_{0}$ for all $1 \le i \le N$ and $1 \le t \le T$.}

\item \label{assum:idiosyncratic:six} \textit{
$\E( \vert \sum_{t = 1}^T (e_{i, t}e_{j, t} - \E(e_{i, t} e_{j, t})) \vert^{4 \rho + \epsilon}) \le c_0 T^{2\rho + \epsilon/2}$ for all $1 \le i, j \le N$.}
\end{enumerate}
\end{assum}

\begin{assum}
\label{depFE} \textit{There exist some $\epsilon \in (0, 1)$ and $c_{0} \in
(0,\infty )$ such that the following holds for all $N, T \in \mathbb{N}$ and $\rho = 1$ or $\rho = 2$:}

\begin{enumerate}[label = (\roman*)]

\item \label{assum:depFE:one} \textit{$\mathsf{E} (\Vert \sum_{t = a + 1}^{b} \mathbf{g}_t \sum_{i = 1}^{N} (e_{i, t} e_{i, s} - \gamma_{s, t} )
\Vert^{2\rho + \epsilon}) \leq c_{0} (N(b - a))^{\rho + \epsilon/2}$ for all $1 \le s \le T$ and $0 \le a < b \le T$. }

\item \label{assum:depFE:two} \textit{$\mathsf{E} (\vert \sum_{i = 1}^{N}
\sum_{t = a + 1}^{b} \bm\lambda_i^\top \mathbf{g}_t e_{i, t} \vert^{4\rho + \epsilon}) \leq c_0 (N (b - a))^{2\rho + \epsilon/2}$ for all $0 \le a < b \le T$. }

\item \label{assum:depFE:three} \textit{$\mathsf{E}( \vert \sum_{t = 1}^T \bm\lambda_i^\top \mathbf{g}_t e_{j, t} \vert^{4\rho + \epsilon}) \le c_0 T^{2\rho + \epsilon/2}$ for all $1 \le i, j \le N$.}
\end{enumerate}
\end{assum}

Assumptions~\ref{factors}--\ref{depFE} are closely related to those adopted in the factor model literature, see also \cite{bai03}.
Assumption~\ref{loadings} is the same as Assumptions~B in \citet{bai03}.
In Assumption~\ref{factors}~\ref{assum:factors:one}, we strengthen the moment condition typically employed in the
literature on $\mathbf{f}_{t}$, switching from the existence of the $4$-th
moment to that of the $8$-th or higher moment. 
Also, we assume a specific form of
dependence for $\{\mbf f_{t},-\infty <t<\infty \}$ which, as mentioned above, accommodates a wide range of time series models.
This is required to derive a Strong Invariance Principle (SIP) for $\mathsf{Vech}(\mbf f_t \mbf f_t^\top)$, see Lemma~\ref{sip-1} in Appendix~\ref{sec:prem:proofs}. 
Assumption~\ref{idiosyncratic} allows for weak temporal and cross-sectional dependence in the idiosyncratic component, with similarities between~\ref{assum:idiosyncratic:two} and Assumption~E1 in \citet{bai03}, \ref{assum:idiosyncratic:three} and C5, and
\ref{assum:idiosyncratic:five} and E2;
part~\ref{assum:idiosyncratic:four} strengthens their F3 and also Assumption~6~(ii) of \cite{baiduan};
part~\ref{assum:idiosyncratic:six} can be derived under more primitive conditions on $e_{i, t}$.
Assumption~\ref{depFE}~\ref{assum:depFE:one}--\ref{assum:depFE:two} extend Assumption~F1 and~F2 of \cite{bai03}, to account for the scanning for multiple change points performed by the MOSUM procedure.

Generally, the strengthened conditions found in Assumptions~\ref{factors}~\ref{assum:factors:one} and~\ref{idiosyncratic} on the moments of $\mathbf{f}_t$ and $e_{i, t}$, are required as we go a step further from the typical factor modelling literature that focuses on establishing the consistency of the estimated factors, to control the partial sums involved in the MOSUM process. 
We note that $\rho = 1$ in Assumptions~\ref{factors}, \ref{idiosyncratic} and~\ref{depFE}, is sufficient 
for deriving the asymptotic null distribution of the MOSUM test statistic (Theorem~\ref{thm:null})
as well as the detection consistency of the MOSUM procedure (Theorem~\ref{thm:consistency}~\ref{thm:consistency:one}), 
while $\rho = 2$ is required for establishing the rate of estimation for the change points 
(Theorem~\ref{thm:consistency}~\ref{thm:consistency:two}).

\begin{assum}
\label{assum:cps}

\begin{enumerate}[label = (\roman*)]

\item \label{assum:cp:one} \textit{There exist $\tau_j, \, 1 \le j \le R$, satisfying $0 < \tau_1 < \ldots < \tau_R < 1$,
such that $k_j = \lfloor \tau_j T \rfloor$. }

\item \label{assum:cp:three} \textit{$\Vert \mathbf{A}_j \Vert \le c_0 \in
(0, \infty)$ for all $0 \le j \le R$.}

\item \label{assum:cp:two} \textit{Denoting by $\bm\Sigma_G = T^{-1} \sum_{t
= 1}^T \mathsf{E}(\mathbf{g}_t \mathbf{g}_t^\top)$, the eigenvalues of $\bm%
\Sigma_G \bm\Sigma_\Lambda$ are positive and distinct. }
\end{enumerate}
\end{assum}

\begin{assum}
\label{assum:nt} \textit{There exists some $\epsilon_{\circ }\in (0,\infty
) $ such that 
\begin{equation*}
\lim_{\min (N,T)\rightarrow \infty }\frac{T^{1/2+\epsilon_{\circ }}}{N}=0.
\end{equation*}%
}
\end{assum}

When $R = 0$, Assumption~\ref{assum:cps} only requires that $\bm\Sigma_F \bm\Sigma_\Lambda$ has distinct eigenvalues, paralleling the commonly found condition such as Assumption~G of \cite{bai03}.
When $R \geq 1$, part~\ref{assum:cp:one} assumes that the change points are linearly spaced. 
The positive definiteness imposed on $\bm\Sigma_{G}$ in part~\ref{assum:cp:two}, together with part~\ref{assum:cp:one} and Assumption~\ref{factors}~\ref{assum:factors:two}, implies
that any local factors are pervasive over segment(s) where they are present.
Finally, Assumption~\ref{assum:nt} is also found in \cite{baiduan}, and arises from that we construct the MOSUM process based on an estimate of the latent factors.

\subsection{Asymptotic null distribution}

We present the limiting distribution of the maximally selected MOSUM
statistic in~\eqref{mos}. We write, for simplicity, $d = r(r + 1)/2$ and
denote by $\beta = \log(N)/\log(T)$; under Assumption~\ref{assum:nt}, we
have $1/2 + \epsilon_\circ < \beta$ with some $\epsilon_\circ > 0$. Define
also 
\begin{align}  \label{eq:zeta}
\zeta = \max\l \{ \frac{2}{\nu}, 1 - \min(1, \beta) \r\},
\end{align}
where $\nu$ is defined in Assumption \ref{factors}. Then, we always have $\zeta \in (0, 1/2)$.

\begin{thm}
\label{thm:null} \textit{Suppose that Assumption~\ref{factors}--\ref{assum:nt} hold 
with $\rho = 1$ for Assumptions~\ref{factors}, \ref{idiosyncratic} and~\ref{depFE}, and let the
bandwidth $\gamma $ satisfy 
\begin{equation}
{\frac{T^{2\zeta }\log (T/\gamma )}{\gamma }\rightarrow 0\text{ \ and \ }%
\frac{\gamma }{T}\rightarrow 0.} \label{b-mosum}
\end{equation}%
Let us define {$\mathbf{V}=\mathbf{L}_{r}(\mathbf{H}_{0}^{\top }\otimes 
\mathbf{H}_{0}^{\top })\mathbf{K}_{r}\mathbf{D}\mathbf{K}_{r}^{\top }(%
\mathbf{H}_{0}\otimes \mathbf{H}_{0})\mathbf{L}_{r}^{\top}$}, with $\mathbf{%
D}$ in~\eqref{eq:D}, and
\begin{equation}
\mathbf{H}_{0}=\quad \text{\upshape p}\!\!\!\!\!\!\!\!\!\!\!\lim_{\min (N,T)\rightarrow \infty }\frac{1}{NT}(\bm%
\Lambda^{\top }\bm\Lambda )(\mathbf{G}^{\top }\widehat{\mathbf{G}})\bm\Phi
_{NT}^{-1} \label{h-tilde}
\end{equation}%
where we denote with $\bm\Phi_{NT}\in \mathbb{R}^{r\times r}$ the diagonal matrix containing
the $r$ largest eigenvalues of $(NT)^{-1}\mathbf{X}\mathbf{X}^{\top}$ on
its diagonal. }
\begin{enumerate}[label = (\alph*)]

\item \label{thm:null:one} \textit{Under $\mathcal{H}_{0}:\,R=0$, for all $x\in \mathbb{R}$, we have 
\begin{multline}
\lim_{\min (N,T)\rightarrow \infty }\mathsf{P}\left( a\left( \frac{T}{\gamma 
}\right) \max_{\gamma \leq k\leq T-\gamma }\left\vert \mathbf{M}_{N,T,\gamma
}^{\top }(k)\mathbf{V}^{-1}\mathbf{M}_{N,T,\gamma }(k)\right\vert^{1/2} - b_d\left( \frac{T}{\gamma }\right) \leq x\right) \\
=\exp \left( -2\exp (-x)\right) ,  \label{eq:null}
\end{multline}%
where $a(x)=\sqrt{2\log (x)}$ and $b_d(x)=2\log (x)+d\log \log (x)/2+\log
(1/2)-\log (\Gamma (d/2))$.}

\item \label{thm:null:two} \textit{The assertion in~\ref{thm:null:one}
continues to hold if $\mathbf{V}$ is replaced by a positive definite matrix $\widehat{\mathbf{V}}$ satisfying
\begin{equation}
\left\Vert \widehat{\mathbf{V}}-{\mathbf{V}}\right\Vert =o_{P}\left( \log
^{-1}(T/\gamma )\right) . \label{cov-stronger}
\end{equation}%
}
\end{enumerate}
\end{thm}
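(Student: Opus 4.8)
The plan is to establish part~\ref{thm:null:one} through a sequence of reductions -- from the estimated factors to a rotation of the true ones, then to a Gaussian process via strong invariance, then to a known extreme-value limit -- and to deduce part~\ref{thm:null:two} by a uniform perturbation argument. Throughout I work under $\mathcal{H}_0$, where we may take $\mathbf{g}_t = \mathbf{f}_t$ (i.e.\ $\mathbf{A}_0 = \mathbf{I}_r$) and, by construction of the PCA estimator, $T^{-1}\sum_t\widehat{\mathbf{g}}_t\widehat{\mathbf{g}}_t^\top = \mathbf{I}_r$. First I would replace the estimated factors by a rotation of the true ones: writing $\widehat{\mathbf{g}}_t = \mathbf{H}^\top\mathbf{f}_t + \mathbf{u}_t$ with $\mathbf{H}\overset{\mathsf{P}}{\to}\mathbf{H}_0$ as in~\eqref{h-tilde}, one has $\mathsf{Vech}(\widehat{\mathbf{g}}_t\widehat{\mathbf{g}}_t^\top) = \mathbf{L}_r(\mathbf{H}^\top\otimes\mathbf{H}^\top)\mathbf{K}_r\,\mathsf{Vech}(\mathbf{f}_t\mathbf{f}_t^\top) + \mathbf{R}_t$, where $\mathbf{R}_t$ collects the terms in $\mathbf{u}_t$ together with the discrepancy between $\mathbf{H}$ and $\mathbf{H}_0$. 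The crucial estimate is that the moving sum $(2\gamma)^{-1/2}\big(\sum_{t=k+1}^{k+\gamma}\mathbf{R}_t - \sum_{t=k-\gamma+1}^{k}\mathbf{R}_t\big)$ is, uniformly in $k$, of order $o_{\mathsf{P}}((\log(T/\gamma))^{-1/2})$, so that it is negligible after multiplication by the extreme-value scale $a(T/\gamma)=\sqrt{2\log(T/\gamma)}$. This is where the strengthened eighth-moment conditions in Assumptions~\ref{factors}\ref{assum:factors:one} and~\ref{idiosyncratic}, Assumption~\ref{assum:nt}, and the bandwidth condition~\eqref{b-mosum} enter: the factor-estimation error is governed by $\beta=\log(N)/\log(T)$ through the $1-\min(1,\beta)$ component of $\zeta$, and $T^{2\zeta}\log(T/\gamma)/\gamma\to 0$ is precisely what renders it asymptotically immaterial.

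Next I would invoke the strong invariance principle for $\mathcal{F}_t = \mathsf{Vech}(\mathbf{f}_t\mathbf{f}_t^\top)$ (Lemma~\ref{sip-1}), which on a suitable probability space approximates the centred partial sums $\sum_{t\le s}(\mathcal{F}_t - \mathsf{E}(\mathcal{F}_t))$ by a $d$-dimensional Wiener process of covariance $\mathbf{D}$, the approximation error being controlled -- through the $2/\nu$ term entering $\zeta$ -- by the bandwidth condition~\eqref{b-mosum}. Transporting this through the fixed linear map $\mathbf{L}_r(\mathbf{H}_0^\top\otimes\mathbf{H}_0^\top)\mathbf{K}_r$ shows that the leading part of $\mathbf{M}_{N,T,\gamma}(k)$ is approximated, uniformly in $k$, by $(2\gamma)^{-1/2}\big(\mathbf{W}(k+\gamma) - 2\mathbf{W}(k) + \mathbf{W}(k-\gamma)\big)$, with $\mathbf{W}$ a Wiener process whose covariance is exactly the matrix $\mathbf{V}$ in the statement. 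Since the two non-overlapping increments are independent with covariance $\gamma\mathbf{V}$ each, the standardised quadratic form $\mathbf{M}^\top\mathbf{V}^{-1}\mathbf{M}$ has $\chi^2_d$ marginals and defines an asymptotically stationary process in $k$ with correlation length $\gamma$.

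It then remains to apply extreme-value theory to the maximum over $\gamma\le k\le T-\gamma$ of this triangular-kernel Gaussian moving-sum process. Whitening by $\mathbf{V}^{-1/2}$ reduces $\mathbf{M}^\top\mathbf{V}^{-1}\mathbf{M}$ to the squared Euclidean norm of $d$ independent standardised MOSUM processes, and the Gumbel limit $\exp(-2\exp(-x))$ with centring $b_d$ and scaling $a$ follows from the known double-sum/Pickands argument for the extrema of such $\chi$-type processes, the effective number of asymptotically independent blocks being of order $T/\gamma$. The main obstacle is the uniform-in-$k$ control in the first reduction: both the factor-estimation error and the strong-approximation error must be shown negligible not merely pointwise but after the $a(T/\gamma)$ inflation, which is exactly what forces the joint use of the strengthened moments, Assumption~\ref{assum:nt}, and the bandwidth restriction~\eqref{b-mosum} as encoded in $\zeta$.

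Finally, part~\ref{thm:null:two} follows from a uniform perturbation bound. Part~\ref{thm:null:one} gives $\max_k\|\mathbf{M}_{N,T,\gamma}(k)\|^2 = O_{\mathsf{P}}(\log(T/\gamma))$; together with the positive definiteness of $\widehat{\mathbf{V}}$ and~\eqref{cov-stronger}, which yield $\|\widehat{\mathbf{V}}^{-1}-\mathbf{V}^{-1}\| = o_{\mathsf{P}}(\log^{-1}(T/\gamma))$, and the elementary inequality $\mathbf{M}^\top\mathbf{V}^{-1}\mathbf{M}\ge \Lambda_{\max}^{-1}(\mathbf{V})\|\mathbf{M}\|^2$, one obtains
\[
\max_k\big|\,|\mathbf{M}^\top\widehat{\mathbf{V}}^{-1}\mathbf{M}|^{1/2} - |\mathbf{M}^\top\mathbf{V}^{-1}\mathbf{M}|^{1/2}\,\big| \le \sqrt{\Lambda_{\max}(\mathbf{V})}\,\big\|\widehat{\mathbf{V}}^{-1}-\mathbf{V}^{-1}\big\|\,\max_k\|\mathbf{M}_{N,T,\gamma}(k)\| = o_{\mathsf{P}}\big(\log^{-1/2}(T/\gamma)\big).
\]
Multiplying through by $a(T/\gamma)=\sqrt{2\log(T/\gamma)}$ leaves an $o_{\mathsf{P}}(1)$ perturbation of the statistic, so the centring, scaling and hence the Gumbel limit in~\ref{thm:null:one} are preserved.
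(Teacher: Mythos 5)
Your proposal is correct and follows essentially the same route as the paper: reduce the MOSUM of $\mathsf{Vech}(\widehat{\mathbf{g}}_t\widehat{\mathbf{g}}_t^\top)$ to a fixed linear image of $\mathsf{Vech}(\mathbf{f}_t\mathbf{f}_t^\top)$ plus a remainder made uniformly negligible (after the $a(T/\gamma)$ inflation) by the strengthened moments, Assumption~\ref{assum:nt} and~\eqref{b-mosum} through $\zeta$; invoke the strong invariance principle of Lemma~\ref{sip-1} to pass to Wiener increments with covariance $\mathbf{V}$; apply classical extreme-value theory for the resulting stationary $\chi$-type Gaussian MOSUM process (the paper cites the Rayleigh-process result of Steinebach, which is the same device as your Pickands/double-sum reference); and obtain part~\ref{thm:null:two} by exactly the perturbation bound you give, combining $\max_k\Vert\mathbf{M}_{N,T,\gamma}(k)\Vert = O_P(\sqrt{\log(T/\gamma)})$ with~\eqref{cov-stronger}. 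The only cosmetic difference is that the paper first proves the limit with the finite-sample rotation $\mathbf{H}$ in the standardisation and then swaps in $\mathbf{H}_0$ via Lemma~\ref{lem:hi}, whereas you fold that discrepancy into the remainder from the start; the two bookkeeping choices are equivalent.
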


The limiting law in Theorem~\ref{thm:null} is analogous to those derived in
\citet{huvskova2001permutation} and \citet{kirch}, modulo the fact that here, we deal with $d$-variate vectors and therefore the function $b_d(\cdot)$ depends
on $d$. In contrast with a ``standard''
multivariate time series application, however, in our result, the cross-sectional dimension $N$ also plays a role through the definition of $\zeta$ in~\eqref{eq:zeta}, which enters in the condition~\eqref{b-mosum} made on the bandwidth~$\gamma$.
As a by-product, we establish the SIP of the process $\mathsf{Vech}( \widehat{\mathbf{g}}_{t} \widehat{\mathbf{g}}_{t}^{\top})$ after an appropriate centering (see Lemma~\ref{bound} in Appendix~\ref{sec:prem:proofs}), which proves crucial in deriving the asymptotic null distribution in~\eqref{eq:null}. 

Based on this limiting law of the maximally
selected MOSUM process, we reject $\mathcal{H}_{0}:\,R=0$ at the significance level $\alpha \in
(0,1)$, if 
\begin{equation}
\max_{\gamma \leq k\leq T-\gamma }\left\vert \mathbf{M}_{N,T,\gamma }^{\top
}(k)\mbf V^{-1}\mathbf{M}_{N,T,\gamma }(k)\right\vert^{1/2}>%
\widetilde{D}_{T,\gamma }(\alpha ):=\frac{b_d(T/\gamma )-\log \log \left( 
\frac{1}{\sqrt{1-\alpha }}\right) }{a(T/\gamma )}.\label{eq:thresholdDtilde}
\end{equation}

The condition in~\eqref{b-mosum} places both upper and lower bounds on the
bandwidth $\gamma$. Specifically, $\gamma $ is required
to grow sufficiently faster than $T^{2\zeta}$ while satisfying $T^{-1}\gamma \rightarrow 0$,
and the former restriction calls for larger $\gamma$ when $\mbf f_t$ has fewer moments or when $N$ is small. 
We note that $\beta = \log(N)/\log(T)$ is known
and does not need to be estimated.
Conversely, $\nu$ is in general not
known. We may select $\gamma$ satisfying~\eqref{b-mosum} by plugging in an
estimate of $\nu$, say $\widehat{\nu}$; alternatively, one can decide a
value of $\nu$, say $\nu^{\ast}$, and test whether $\left\vert \mathbf{g}%
_{t}\right\vert_{\nu^{\ast }}<\infty $.
In both cases, one difficulty is
that $\mathbf{g}_{t}$ is not observable; however, deriving $\nu $ from the
data $\mathbf{X}_{t}$ yields a lower bound.

Theorem~\ref{thm:null}~\ref{thm:null:two} shows that
when the unknown $\mathbf{V}$ is replaced by its estimator, the
asymptotic null distribution continues to hold provided that~%
\eqref{cov-stronger} is met. 
The following Proposition~\ref{cor:V}
guarantees that this requirement is met by the estimator $\widehat{\mathbf{V}}$ proposed in~\eqref{v-hat}, strengthening the observation made in %
\citet{baiduan} that $\Vert \widehat{\mathbf{V}}-\mathbf{V}\Vert =o_{P}(1)$.

\begin{prop}
\label{cor:V} \textit{Suppose that Assumption~\ref{factors}--\ref{assum:nt} hold with $\rho = 1$ for Assumptions~\ref{factors}, \ref{idiosyncratic} and~\ref{depFE}. Also, let the bandwidth $m$ satisfy
\begin{align}
\frac{\log(T/\gamma)}{m} \rightarrow 0 \text{ \ and \ } \frac{m \log(T/\gamma)}{%
\sqrt{\min(N, T)}} \rightarrow 0.  \label{b-mosum-restrict}
\end{align}
Then, as $\min(N, T) \rightarrow \infty$, the estimator $\widehat{\mathbf{V}}$ in~%
\eqref{v-hat} satisfies the condition in~\eqref{cov-stronger}.}
\end{prop}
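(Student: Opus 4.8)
The plan is to analyse $\wh{\mathbf{V}} - \mathbf{V}$ through the classical bias--variance decomposition for kernel long-run covariance estimators, tracking every rate so as to beat the threshold $\log^{-1}(T/\gamma)$ rather than merely $o_P(1)$. Write $\wh{\mathbf{u}}_t = \vech(\wh{\mathbf{g}}_t \wh{\mathbf{g}}_t^\top - \mathbf{I}_r)$ for the summand appearing in $\wh{\bm\Gamma}(\ell)$ of~\eqref{v-hat}. Under $\mathcal{H}_0$ the pseudo-factors reduce to $\mathbf{g}_t = \mathbf{f}_t$, and $\wh{\mathbf{g}}_t$ recovers $\mathbf{f}_t$ up to the rotation $\mathbf{H}_0$ of~\eqref{h-tilde}; the infeasible analogue of $\wh{\mathbf{u}}_t$ is $\mathbf{u}_t = \mathbf{L}_r(\mathbf{H}_0^\top \otimes \mathbf{H}_0^\top)\mathbf{K}_r(\mathcal{F}_t - \E\mathcal{F}_t)$, whose Bartlett-weighted long-run covariance equals $\mathbf{V}$ by the identity $\vech(\mathbf{M}\mathbf{A}\mathbf{M}^\top) = \mathbf{L}_r(\mathbf{M}\otimes\mathbf{M})\mathbf{K}_r\vech(\mathbf{A})$ and the definition of $\mathbf{D}$ in~\eqref{eq:D}. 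Denoting by $\wt{\mathbf{V}}$ the estimator obtained on replacing $\wh{\mathbf{u}}_t$ by $\mathbf{u}_t$ in~\eqref{v-hat}, I split
\[
\wh{\mathbf{V}} - \mathbf{V} = (\wh{\mathbf{V}} - \wt{\mathbf{V}}) + (\wt{\mathbf{V}} - \E\wt{\mathbf{V}}) + (\E\wt{\mathbf{V}} - \mathbf{V})
\]
and bound the three terms separately.

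The bias term $\E\wt{\mathbf{V}} - \mathbf{V}$ is handled by standard kernel arguments. Since $\{\mathbf{f}_t\}$ is an $\mc L_\nu$-decomposable Bernoulli shift with decay exponent $a > 2$ (Assumption~\ref{factors}~\ref{assum:factors:one}), the autocovariances of $\mathcal{F}_t = \vech(\mathbf{f}_t\mathbf{f}_t^\top)$ satisfy $\sum_\ell \ell\,\Vert\Cov(\mathcal{F}_0,\mathcal{F}_\ell)\Vert < \infty$, with $\mathbf{D}$ in~\eqref{eq:D} finite (Assumption~\ref{factors}~\ref{assum:factors:three}); the first-order Bartlett kernel therefore incurs a bias of order $m^{-1}$. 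Hence $\Vert\E\wt{\mathbf{V}} - \mathbf{V}\Vert = O(m^{-1})$, which is $o(\log^{-1}(T/\gamma))$ by the first condition $\log(T/\gamma)/m \to 0$ in~\eqref{b-mosum-restrict}.

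For the variance term $\wt{\mathbf{V}} - \E\wt{\mathbf{V}}$ a direct second-moment bound suffices. Each sample autocovariance of $\{\mathbf{u}_t\}$ has fluctuation around its mean of $L^2$-norm $O(T^{-1/2})$, by the four-moment control of $\mathcal{F}_t$ afforded by the eighth-moment condition on $\mathbf{f}_t$ (Assumption~\ref{factors}~\ref{assum:factors:one}) together with the weak dependence of the shift; summing the $m+1$ Bartlett-weighted lags gives $\E\Vert\wt{\mathbf{V}} - \E\wt{\mathbf{V}}\Vert^2 = O(m^2/T)$. Chebyshev's inequality then yields $\Vert\wt{\mathbf{V}} - \E\wt{\mathbf{V}}\Vert = O_P(m/\sqrt{T})$, which is $o_P(\log^{-1}(T/\gamma))$ by the second condition in~\eqref{b-mosum-restrict} since $\min(N,T)\le T$.

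The estimation-error term $\wh{\mathbf{V}} - \wt{\mathbf{V}}$ is the crux, and is where the sharpening from the $o_P(1)$ rate of \citet{baiduan} to $o_P(\log^{-1}(T/\gamma))$ must be earned. Expanding $\wh{\mathbf{u}}_t\wh{\mathbf{u}}_{t-\ell}^\top - \mathbf{u}_t\mathbf{u}_{t-\ell}^\top$ into the two cross terms $(\wh{\mathbf{u}}_t - \mathbf{u}_t)\mathbf{u}_{t-\ell}^\top$, $\mathbf{u}_t(\wh{\mathbf{u}}_{t-\ell} - \mathbf{u}_{t-\ell})^\top$ and the quadratic remainder $(\wh{\mathbf{u}}_t - \mathbf{u}_t)(\wh{\mathbf{u}}_{t-\ell} - \mathbf{u}_{t-\ell})^\top$, I would bound the $t$-average of each by Cauchy--Schwarz. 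The key input is the averaged control $T^{-1}\sum_t \Vert\wh{\mathbf{u}}_t - \mathbf{u}_t\Vert^2 = O_P((\min(N,T))^{-1})$, which follows from the factor-estimation bounds underlying the strong approximation of Lemma~\ref{bound}, together with the convergence $\mathbf{H}_{NT}\to\mathbf{H}_0$. Cauchy--Schwarz then makes each cross term $O_P((\min(N,T))^{-1/2})$ and the quadratic remainder $O_P((\min(N,T))^{-1})$ per lag, and accumulating over the $m+1$ kernel-weighted lags gives $\Vert\wh{\mathbf{V}} - \wt{\mathbf{V}}\Vert = O_P(m/\sqrt{\min(N,T)})$, which is exactly $o_P(\log^{-1}(T/\gamma))$ by the second condition in~\eqref{b-mosum-restrict}. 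The main obstacle is precisely to control these remainders at the rate $m/\sqrt{\min(N,T)}$ uniformly over $\ell \le m$ rather than merely to show that they vanish: one must carry the higher-moment bounds on the relevant partial sums --- supplied by the eighth-order moment conditions in Assumptions~\ref{factors}, \ref{idiosyncratic} and~\ref{depFE} --- throughout, so that the accumulation over lags does not inflate the rate beyond what the second bandwidth condition absorbs. Combining the bias, variance and estimation-error bounds yields~\eqref{cov-stronger}, completing the proof.
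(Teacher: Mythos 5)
Your proposal is correct and follows essentially the same route as the paper's proof: the same three-term decomposition into kernel truncation bias, sampling variance of the infeasible (rotated) estimator, and factor-estimation error, with the same respective rates $O(m^{-1})$, $O_P(m/\sqrt{T})$ and $O_P(m/\sqrt{\min(N,T)})$, each absorbed by the two bandwidth conditions in~\eqref{b-mosum-restrict}. The only organisational difference is that the paper routes the estimation error through the intermediate rotation $\mathbf{H}$ before passing to $\mathbf{H}_0$ (using Lemma~\ref{lem:hi}) and establishes your ``averaged control'' explicitly via the expansion~\eqref{eq:bai} into the $\gamma_{s,t},\zeta_{s,t},\eta_{s,t},\xi_{s,t}$ components, whereas you invoke it directly from the underlying factor-estimation bounds; the estimates needed are the same.
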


\subsection{Consistency in multiple change point estimation}
\label{segment}

To establish the consistency of the MOSUM procedure in multiple change point
detection, we make the following assumption on the size of changes. 

\begin{assum}
\label{kirch}

\begin{enumerate}[label = (\roman*)]

\item \label{assum:kirch:three} \textit{$\min_{0 \le j \le R} \Delta_j \ge 2
\gamma$, where $\Delta_j = k_{j + 1} - k_j$. }

\item \label{assum:kirch:two} \textit{
At each $1 \le j \le R$, let $\bm\delta_j = \mathbf{A}_j \bm\Sigma_F \mathbf{A}_j^\top - \mathbf{A}%
_{j - 1} \bm\Sigma_F \mathbf{A}_{j - 1}^\top$ and $d_j = \Vert \bm\delta_{j}\Vert$. Then for $\omega
_{T}^{(1)}\rightarrow \infty $ arbitrarily slowly, it holds that 
\begin{equation*}
\frac{\omega_{T}^{(1)}\sqrt{\log (T/\gamma )}}{\min_{1 \le j \le R} d_j \sqrt{\gamma}}=o(1).
\end{equation*}%
}
\end{enumerate}
\end{assum}

Assumption~\ref{kirch}~\ref{assum:kirch:three} ensures that there exists at most a single change point over each moving window, and is implied jointly by Assumption~\ref%
{assum:cps}~\ref{assum:cp:one} and the condition~\eqref{b-mosum} on $\gamma$.
Condition~\ref{assum:kirch:two} permits local changes with $d_j \rightarrow 0$, at a sufficiently slow rate, which is the case e.g.\ when the neighbouring loading matrices are rotations of one another with $\Vert \mbf A_j - \mbf A_{j - 1} \Vert \to 0$.

\begin{thm}
\label{thm:consistency} \textit{Suppose that Assumption~\ref{factors}--\ref{assum:nt} hold with $\rho = 1$ for Assumptions~\ref{factors}, \ref{idiosyncratic} and~\ref{depFE}.
Additionally, let Assumption~\ref{kirch} hold and the bandwidth $\gamma$ satisfy~\eqref{b-mosum}, and suppose that some positive definite matrix $\widetilde{\mathbf{V}}$ is used in place of $\mathbf{V}_{k}$ in $\mathcal{T}_{N,T,\gamma }(k)$, see~\eqref{mos}. 
\begin{enumerate}[label = (\alph*)]
\item \label{thm:consistency:one} For any $\alpha, \eta \in (0,1)$,
there exists some sequence $\omega_{T}^{(1)} \rightarrow \infty$ arbitrarily slowly,
such that the MOSUM
procedure with $D_{T, \gamma} = \widetilde{D}_{T,\gamma }(\alpha )\cdot \omega
_{T}^{(1)}$ as the threshold, returns $\{\widehat{k}_{j},\,1\leq j\leq 
\widehat{R}:\,\widehat{k}_{1}<\ldots <\widehat{k}_{\widehat{R}}\}$ which
satisfies
\begin{equation*}
\mathsf{P}\left( \widehat{R}=R; \, \max_{1\leq j\leq R} \vert \widehat{k}_{j}-k_{j} \vert \leq \eta \gamma \right) \rightarrow 1 \text{ \ as \ } \min(N, T) \rightarrow \infty.
\end{equation*}%
\item \label{thm:consistency:two} Further, if Assumptions~\ref{factors}, \ref{idiosyncratic} and~\ref{depFE} hold with $\rho = 2$, there exists some sequence $\omega_{T}^{(2)}\rightarrow \infty$ arbitrarily slowly, such that
\begin{align*}
\mathsf{P}\left( \widehat{R}=R;\,\max_{1\leq j\leq R}d_j^{2} \vert \widehat{k}_{j}-k_{j} \vert \leq \omega_{T}^{(2)}\right) \rightarrow 1 \text{ \ as \ } \min(N, T) \rightarrow \infty.
\end{align*}
\end{enumerate}
}
\end{thm}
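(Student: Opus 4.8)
The plan is to adapt the single-window MOSUM localisation argument of \citet{kirch} to the present setting, splitting $\mc T_{N,T,\gamma}(k)$ into a deterministic signal and a stochastic fluctuation and controlling each uniformly over $\gamma\le k\le T-\gamma$. Since $\wt{\mbf V}$ is a fixed positive definite matrix, $\|\mbf x\|_{\wt{\mbf V}^{-1}}^2 := \mbf x^\top\wt{\mbf V}^{-1}\mbf x$ is equivalent to $\|\mbf x\|^2$, so I would write $\mbf M_{N,T,\gamma}(k) = \mbf m(k) + \mbf w(k)$, where $\mbf m(k)$ is the segment-wise population mean and $\mbf w(k)$ the centred remainder. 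Borrowing the factor-estimation bounds behind Theorem~\ref{thm:null} together with the SIP of Lemma~\ref{bound}, up to a rotation $\mbf H_0$ the within-segment mean of $\wh{\mbf g}_t\wh{\mbf g}_t^\top$ on the $j$-th stretch is $\mbf H_0^\top\mbf A_j\bm\Sigma_F\mbf A_j^\top\mbf H_0$, so a window straddling a single $k_j$ at distance $s=|k-k_j|\le\gamma$ gives $\mbf m(k)=\tfrac{\gamma-s}{\sqrt{2\gamma}}\,\vech(\mbf H_0^\top\bm\delta_j\mbf H_0)$ up to negligible terms. Crucially this lies along a single direction, so $\|\mbf m(k)\|_{\wt{\mbf V}^{-1}}$ is exactly a tent function of $s$ with peak $\asymp\sqrt\gamma\,d_j$ at $k_j$ (using invertibility of $\mbf H_0$ under Assumption~\ref{assum:cps}~\ref{assum:cp:two}) decaying linearly to zero at $s=\gamma$; Assumption~\ref{kirch}~\ref{assum:kirch:three} ensures each length-$2\gamma$ window holds at most one change point.

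For the noise, the SIP of Lemma~\ref{bound} approximates the partial sums of $\vech(\wh{\mbf g}_t\wh{\mbf g}_t^\top-\mbf I_r)$ by a Wiener process, yielding $\max_k\|\mbf w(k)\|=O_P(\sqrt{\log(T/\gamma)})$ exactly as in the proof of Theorem~\ref{thm:null}. This settles \ref{thm:consistency:one}: over any window free of change points $\mc T_{N,T,\gamma}(k)=O_P(\sqrt{\log(T/\gamma)})$ stays below $D_{T,\gamma}=\wt D_{T,\gamma}(\alpha)\,\omega_T^{(1)}$ with probability tending to one, ruling out spurious detections, whereas $\mc T_{N,T,\gamma}(k_j)\ge c\sqrt\gamma\,d_j-O_P(\sqrt{\log(T/\gamma)})$ exceeds $D_{T,\gamma}$ by Assumption~\ref{kirch}~\ref{assum:kirch:two}. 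The exceedance set thus decomposes into one interval per $k_j$ (non-overlapping by \ref{assum:kirch:three}), and the $\eta\gamma$-rule~\eqref{eq:eta} returns exactly one estimator in each, giving $\wh R=R$; the tent is monotone in $s$ so the signal gap $\|\mbf m(k_j)\|_{\wt{\mbf V}^{-1}}-\|\mbf m(k)\|_{\wt{\mbf V}^{-1}}\asymp\eta\sqrt\gamma\,d_j$ over $|k-k_j|>\eta\gamma$ dominates the $O_P(\sqrt{\log(T/\gamma)})$ noise, forcing $|\wh k_j-k_j|\le\eta\gamma$.

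For the rate in \ref{thm:consistency:two} I would work locally at a fixed $k_j$. On the event of \ref{thm:consistency:one} we have $v:=\wh k_j-k_j$ with $|v|\le\eta\gamma$, so $k_j$ lies in the $\eta\gamma$-window of $\wh k_j$ and the argmax property gives $\mc T_{N,T,\gamma}(\wh k_j)^2\ge\mc T_{N,T,\gamma}(k_j)^2$. Expanding this in $\mbf M=\mbf m+\mbf w$, the quadratic signal term contributes a decrease of order $-|v|\,d_j^2$ (from the tent), which can only be offset by the cross term $\mbf m(k_j)^\top\wt{\mbf V}^{-1}(\mbf w(\wh k_j)-\mbf w(k_j))$; as $\mbf w(\wh k_j)-\mbf w(k_j)$ is $(2\gamma)^{-1/2}$ times a sum of $O(|v|)$ centred increments, a maximal (H\'ajek--R\'enyi-type) inequality bounds this cross term by $d_j\sqrt{|v|}\,\ell_T$ for some slowly varying $\ell_T$. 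Balancing $|v|\,d_j^2\lesssim d_j\sqrt{|v|}\,\ell_T$ then yields $d_j^2|v|\lesssim\ell_T^2=:\omega_T^{(2)}$, which is the claim.

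The hard part, I expect, is twofold. Under the alternative the estimated factors are extracted from data with a piecewise-constant second-order structure, so the rotation $\mbf H_0$ and the approximation $\wh{\mbf g}_t\approx\mbf H_0^\top\mbf g_t$, together with the SIP of Lemma~\ref{bound}, must be shown to survive the regime changes uniformly in $k$; the signal computation of the first paragraph rests entirely on this. More delicate still is the sharp uniform control of the short-window increments $\mbf w(k_j+v)-\mbf w(k_j)$ with the $\sqrt{|v|}$ scaling over lengths $|v|\ll\gamma$ needed for \ref{thm:consistency:two}: securing a maximal inequality at this resolution is precisely what forces the stronger $\rho=2$ moment conditions in Assumptions~\ref{factors}, \ref{idiosyncratic} and~\ref{depFE}.
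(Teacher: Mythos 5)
Your proposal follows essentially the same route as the paper's proof: the same signal-plus-noise decomposition of $\mathbf{M}_{N,T,\gamma}(k)$ with the tent-shaped deterministic term and the uniform $O_P(\sqrt{\log(T/\gamma)})$ noise bound delivered by the SIP (the paper's Lemmas~\ref{bound} and~\ref{mosum-alt-1}), the same threshold/local-maximum argument for part~\ref{thm:consistency:one}, and for part~\ref{thm:consistency:two} the same argmax comparison at $k_j$ balanced against a H\'ajek--R\'enyi-type maximal inequality over short windows (the paper's Lemma~\ref{mosum-alt-2}, which is indeed exactly where $\rho = 2$ is needed). The only sketch-level omission is that the expansion of $\mathcal{T}^2(\widehat{k}_j)-\mathcal{T}^2(k_j)$ also produces the cross term $(\mathbf{N}(k)+\mathbf{N}(k_j))^\top(\mathbf{S}(k)-\mathbf{S}(k_j))$ and a pure-noise quadratic, which the paper dominates using the same global noise bound together with Assumption~\ref{kirch}~\ref{assum:kirch:two}.
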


Theorem~\ref{thm:consistency} shows that the MOSUM procedure consistently estimates the total number and the locations of the change points.
Here, we adopt a fixed, positive definite matrix $\widetilde{\mathbf{V}}$ in place of $\mathbf{V}_k$ which, without being a consistent estimator of the latter at some~$k$, still leads to consistency in multiple change point detection.
This flexibility in the choice of $\widetilde{\mathbf{V}}$ is particularly favourable since, as noted earlier in Section~\ref{sec:method}, the estimation of (time-varying) long-run covariance matrix for multivariate time series is challenging. 
While the asymptotic null distribution in Theorem~\ref{thm:null} allows for testing the null hypothesis of no change point with the family-wise error controlled, strengthening of the threshold is necessary for consistently detecting the number of change points, see e.g.\ \cite{kirch} and \cite{baiduan} where they set $\alpha = \alpha_T \to 0$ at a suitable rate.
Instead, we introduce an additional multiplicative factor of $\omega_T^{(1)} \to \infty$ in the threshold $D_{T, \gamma}$; see Section~\ref{sec:tuning} where we discuss the choice of the threshold. 
Under a stronger moment assumption, we obtain the rate of estimation which is inversely proportional to the squared size of change as $\vert \widehat{k}_j - k_j \vert = O_P(d_j^{-2})$,
This indicates that dominant changes are located with better accuracy, such as those accompanied by a change in the dimension of the factor space due to the introduction or disappearance of factors(s).

\section{Numerical experiments}
\label{simulations}

\subsection{Tuning parameter selection}

\label{sec:tuning}

Empirical performance of the MOSUM procedure depends on the choice of tuning parameters.

Inspecting the proof of Theorem~\ref{thm:null}, we observe that the requirement on $\nu$ in Assumption~\ref{factors}~\ref{assum:factors:one} 
may be weakened if the $r$ largest eigenvalues of $(NT)^{-1} \mbf X\mbf X^\top$ are bounded away from zero deterministically.
Inspired by this and the condition in~\eqref{b-mosum}, we propose to
select the bandwidth as $\gamma = \lfloor T^{2\zeta} \cdot \log^\varrho(T)
\rfloor$ with $\zeta = \max(2/5, 1 - \log(N)/\log(T))$.
Thus-selected
bandwidth with $\varrho = 1.1$ works reasonably well in our simulation
studies where datasets of dimensions $N \le 500$ and $T \le 1000$ are
considered. For the real data application in Section~\ref{sec:real} with 
$T \ge 4000$, we set~$\varrho = 0.5$. 

We set the threshold $D_{T, \gamma}$ as described in Theorem~\ref{thm:consistency}, namely $D_{T, \gamma} = \widetilde{D}_{T, \gamma}(\alpha)
\cdot \omega_T^{(1)}$ where $\widetilde{D}_{T, \gamma}(\alpha)$ is given by \eqref{eq:thresholdDtilde}
according to the asymptotic null distribution in Theorem~\ref{thm:null}. 
As for $\omega_T^{(1)}$, we have considered $\log^\kappa(T/\gamma)$ with $\kappa \in \{0, 0.1, 0.2, 0.3\}$, and observed that the choice of $\kappa = 0.2$ returned stably good performance in all experiments, see Appendix~\ref{sec:kappa} for full detail. 
Compared to the choice of $\omega_T^{(1)}$, the selection of the fixed significance level $\alpha \in (0, 1)$ has relatively little influence and in all our studies, we set $\alpha = 0.05$.
Finally, for the detection rule in~\eqref{eq:eta}, we set $\eta = 0.6$.

For estimating the number of pseudo factors $r$, we apply the approach proposed by \cite{ABC10} in combination with the three information criteria of \cite{baing02}: Addressing the arbitrariness in the choice of the penalty, it looks for a stable estimate of the factor number as the minimiser of the information criterion over sub-samples of varying dimensions and sample sizes. 
We take the median of the estimates from the three information criteria if they do not agree. 
On simulated datasets, we find that this approach consistently identifies the correct number of factors over $90\%$ of the realisations. 

Finally, in place of $\mathbf{V}_k$ in~\eqref{mos}, we plug in the estimator $\widehat{\mathbf{V}}$ in~\eqref{v-hat} with bandwidth $m = \lfloor T^{1/4} \rfloor$.
Due to the presence of change points, the number of pseudo factors $r$ can
be large in which case inverting the $d \times d$-matrix $\widehat{\mathbf{V}%
}$ with $d = r (r + 1)/2$, may bring numerical instability, see \cite{kirch2015detection} for the alternative approaches to handling similar difficulties in multivariate time series segmentation.
Therefore, we explore two approaches, one performing the standardisation using the full matrix $\widehat{\mathbf{V}}$, and the other using its diagonal entries only, respectively referred to as MOSUM-full and MOSUM-diagonal. 
We remark that MOSUM-diagonal meets the requirement in Theorem~\ref{thm:consistency} provided that all diagonal entries of $\widehat{\mathbf{V}}$ are positive.
Our numerical experiments indicate that MOSUM-diagonal is to be preferred between the two, see Section~\ref{sec:sim:res} for further discussions.


\subsection{Settings}

We consider the following data generating processes considered in \cite{li2023detection} and 
\cite{duan2023quasi}.

\begin{enumerate}[label = (M\arabic*)]
\item \label{m:one} 
Adopted from \cite{li2023detection}, we fix $T=400$, $N=200$ and $r_{0}=5$, and introduce $R=2$ change points at $(k_{1},k_{2})=(133,267)$ as follows: $X_{it}=\chi_{it}+%
\sqrt{0.5}e_{it}$, where 
\begin{align*}
\chi_{it}& =\bm\Lambda_{j}\mathbf{f}_{t}\text{ \ with \ }\mathbf{f}_{t}\sim \mathcal{N}_{r_{0}}(\mathbf{0},\bm\Sigma_{j})\text{ \ for \ }
k_{j-1}+1\leq t\leq k_{j},\text{ \ and} \\
\bm\Sigma_{0}& =[\sigma_{0,ij}]=\mathbf{D}\bm\Sigma_{F}\mathbf{D}\text{ \
with \ }\mathbf{D}=\text{diag}(d_{ii},\,1\leq i\leq r_{0}),\,d_{ii}\sim_{%
\text{\upshape iid}}\mathcal{U}[0.5,1.5], \\
\bm\Sigma_{1}=\bm\Sigma_{2}& =[\sigma_{1,ij}]\text{ \ with \ }\sigma
_{1,ij}=\sigma_{1,ji}=\left\{ 
\begin{array}{ll}
0.9\sqrt{\sigma_{0,11}\sigma_{0,22}} & \text{for \ }(i,j)=(1,2), \\ 
1.3^{2}\sigma_{0,55} & \text{for \ }(i,j)=(5,5), \\ 
0.5^{|i-5|}\sqrt{\sigma_{0,ii}\sigma_{0,55}} & \text{for \ }1\leq i\leq 4,
\\ 
\sigma_{0,ij} & \text{otherwise,}
\end{array}%
\right.
\end{align*}
with $\bm\Sigma_{F}=[0.5^{|i-j|},\,1\leq i,j\leq r_{0}]$.
The loadings are generated as $\bm\Lambda_{0}=\bm\Lambda_{1}=[\lambda
_{0,ij},\,1\leq i\leq p,\,1\leq j\leq r_{0}]$ with $\lambda_{0,ij}\sim_{%
\text{\upshape iid}}\mathcal{U}[-1,1]$, and $\bm\Lambda_{2}=[\lambda
_{2,ij},\,1\leq i\leq p,\,1\leq j\leq r_{0}]$ with $\lambda_{2,ij}\sim_{%
\text{\upshape iid}}\mathcal{U}[-1,1]$ for $j\leq 2$, while $\lambda
_{2,ij}=\lambda_{0,ij}$ for $j\geq 3$. Within each segment, the number of
factors remains constant at $r_{0}=5$ while the overall factor number is $r=r_{0}+2$ due to the increase of factor space after $k_{2}$. The
idiosyncratic component is generated as independent Gaussian random vectors
whose covariance undergoes changes at $t=100,200$ and $300$ (with the proportion of changes set at $0.1$) which are not to be detected by the proposed MOSUM method.

\item \label{m:two} 
We join together three single change point scenarios from \cite{duan2023quasi} to form a multiple change point one: Setting $R = 3$ and $r_0 = 3$, we generate 
\begin{align*}
\mathbf{f}_t &= \rho_f \mathbf{f}_{t - 1} + \bm\varepsilon_{f, t}, \, \bm\varepsilon_{f, t} \sim_{\text{\upshape iid}} \mathcal{N}_{r_0}(\mathbf{0}, 
\mathbf{I}_{r_0}), \\
\mathbf{e}_t &= \rho_e \mathbf{e}_{t - 1} + \bm\varepsilon_{e, t}, \, \bm\varepsilon_{e, t} \sim \mathcal{N}_p(\mathbf{0}, \bm\Sigma_e) \text{ \ with \ } \bm\Sigma_e = [(0.3)^{\vert i - j \vert}, \, 1 \le i, j \le p],
\end{align*}
and $\bm\Lambda_0 = [\lambda_{0, ij}, \, 1 \le i \le p, \, 1 \le j \le r_0]$ with $\lambda_{0, ij} \sim_{\text{\upshape iid}} \mathcal{N}(0, 1/r_0)$. 
The change points are introduced at $k_j = Tj/4, \, 1 \le j \le 3$, at each of which the loading matrix undergoes a shift to $\bm\Lambda_j = \bm\Lambda_0 \mathbf{C}_j$, where 
\begin{align*}
\mathbf{C}_1 = 
\begin{bmatrix}
0.5 & 0 & 0 \\ 
c_{1, 21} & 1 & 0 \\ 
c_{1, 31} & c_{1, 32} & 1.5%
\end{bmatrix}%
, \quad \mathbf{C}_2 = 
\begin{bmatrix}
1 & 0 & 0 \\ 
0 & 1 & 0 \\ 
0 & 0 & 0%
\end{bmatrix}
\text{ \ and \ } \mathbf{C}_3 = [c_{3, ij}, \, 1 \le i, j \le r_0],
\end{align*}
with $c_{1, ij} \sim_{\text{\upshape iid}} \mathcal{N}(0, 1)$ and $c_{3, ij}
\sim_{\text{\upshape iid}} \mathcal{N}(0, 1/r_0)$. The factor number varies as $(r_0, \ldots, r_3) = (3, 3, 2, 3)$, while the number of pseudo factors
increases from $3$ to $6$ due to the change at~$k_3$. We vary $T \in \{400, 600, 800, 1000\}$ and $N \in \{100, 200, 500\}$ as well as $(\rho_f, \rho_e) = \{(0, 0), (0.7, 0.3)\}$.

\item \label{m:zero} 
Additionally, we consider the ``null'' model with $R = 0$ by generating the
data from the model corresponding to the first segment of~\ref{m:two} for each
setting.

\end{enumerate}

\subsection{Results}
\label{sec:sim:res}

\begin{figure}[h!t!b!]
\centering
\includegraphics[width = 1\linewidth]{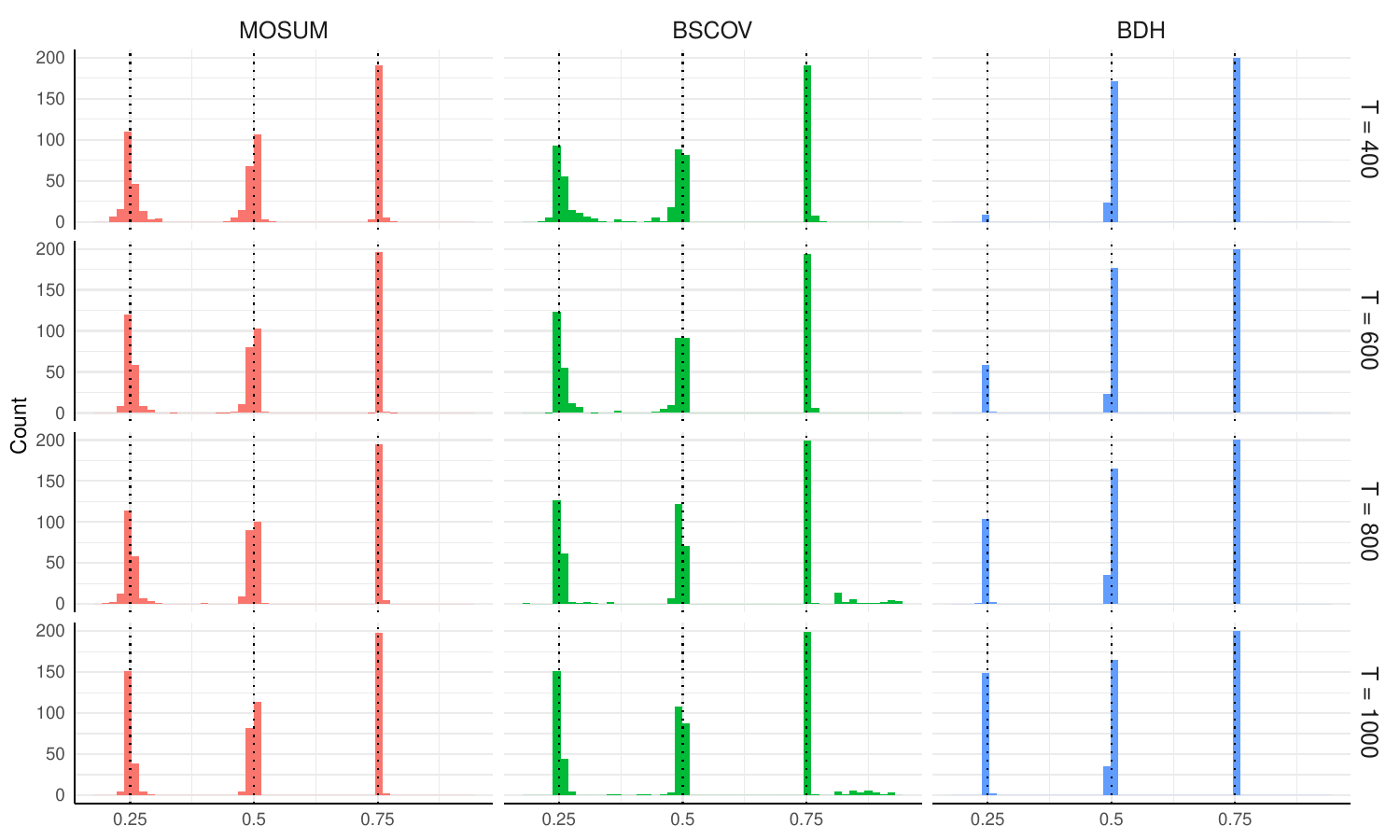}
\caption{\ref{m:two} Histogram of the change point estimators returned by MOSUM-diagonal, BSCOV and BDH when $N = 100$, $(\rho_f, \rho_e) = (0, 0)$ and varying $T \in \{400, 600, 800, 1000\}$ (top to bottom). The scaled locations of the true change points, $k_j/T$, at $(1/4, 1/2, 3/4)$ are marked by vertical dotted lines.}
\label{fig:m3:p100depFALSE}
\end{figure}

\begin{figure}[h!t!b!]
\centering
\includegraphics[width = 1\linewidth]{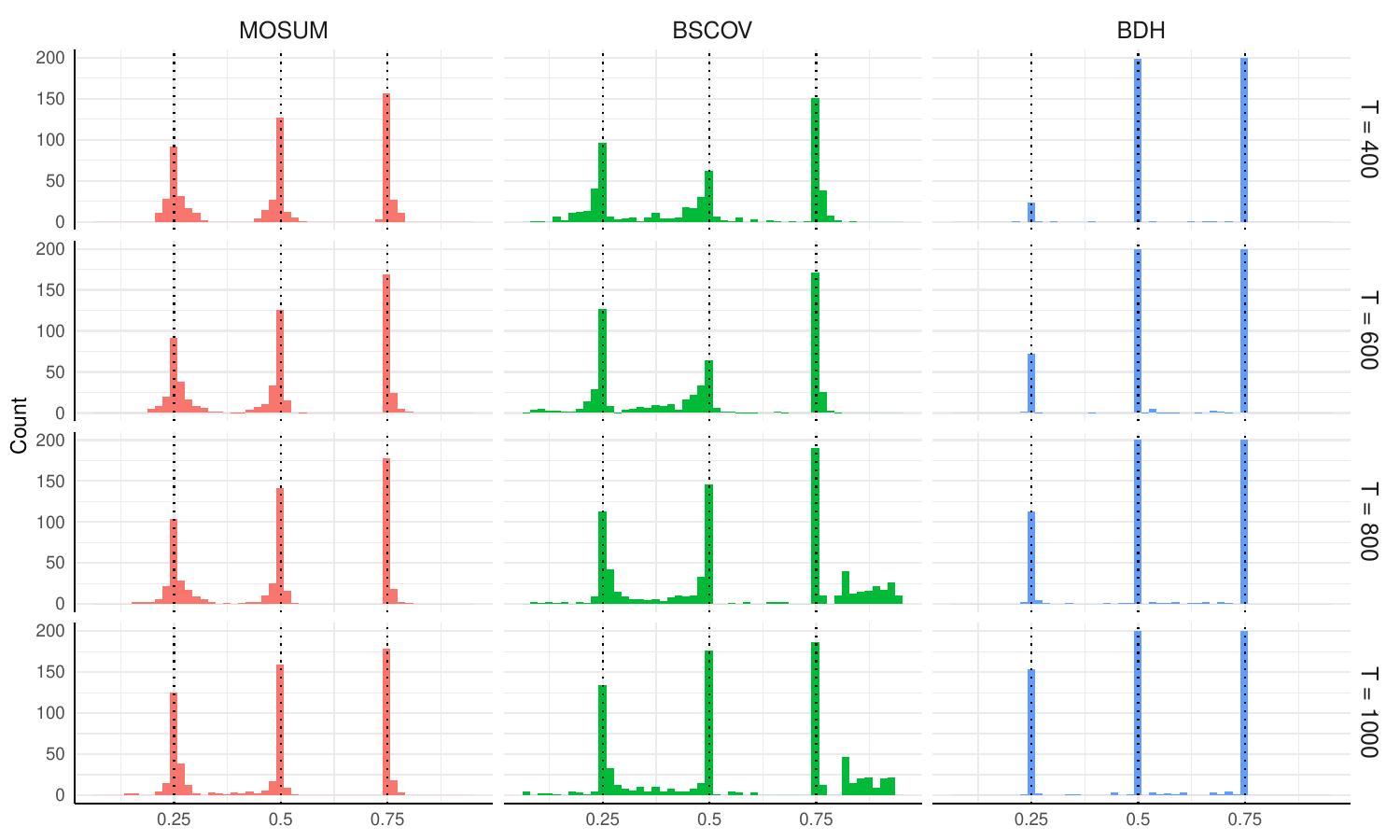}
\caption{\ref{m:two} Histogram of the change point estimators returned by MOSUM-diagonal, BSCOV and BDH with $p = 100$, $(\rho_f, \rho_e) = (0.7, 0.3)$ and varying $T \in \{400, 600, 800, 1000\}$ (top to bottom). The scaled locations of the true change points, $k_j/T$, at $(1/4, 1/2, 3/4)$ are marked by vertical dotted lines.}
\label{fig:m3:p100depTRUE}
\end{figure}

For each setting, we generate $200$ realisations and report the distribution
of $\widehat{R} - R$, and the accuracy of
change point estimators measured by 
\begin{equation*}
\frac{1}{200}\sum_{i=1}^{200}\mathbb{I}\left[ \min_{1\leq \ell \leq \widehat{K}^{(i)}}|\widehat{k}_{\ell }^{(i)}-k_{j}|\leq \log (T)\right]
\end{equation*}%
for each $1 \le j \le R$, as proposed by \cite{li2023detection}, where $\widehat{k}_{\ell
}^{(i)},\,1\leq \ell \leq \widehat{R}^{(i)}$, refer to the change point
estimators from the $i$-th realisation. We apply the MOSUM procedure with the tuning parameters chosen as described in Section~\ref{sec:tuning}, and consider the two choices of the standardisation matrix (MOSUM-full and MOSUM-diagonal). Additionally, we include the two competitors: 
\begin{enumerate}[label = (\roman*)]
\item Proposed by \cite{li2023detection}, BSCOV scans for changes in the covariance of $\{\mathbf{g}_{t}\}$ under~\eqref{eq:model:two} via an extension of the binary segmentation, produces a path of solutions and
selects the final change point model by minimising an information criterion. 
\item Proposed by \cite{baiduan}, BDH recursively applies the likelihood ratio test via binary segmentation to detect the multiple change points under~\eqref{eq:model:two}.
\end{enumerate}
Both methods are applied with default tuning parameters and in-built factor number estimators that are based on the information criterion proposed by \cite{baing02}; for BDH, we set the proportion of the data trimmed off at each recursion to be $0.1$.
Tables~\ref{tab:m:one}--\ref{tab:m:three} report the summary of the results over $200$ realisations, and Figures~\ref{fig:m3:p100depFALSE}--\ref{fig:m3:p100depTRUE} plot the histograms of change point estimators returned by the proposed MOSUM procedure, BSCOV and BDH under~\ref{m:two} when $N = 100$; see also Appendix~\ref{app:sim} for the additional results.

Overall we observe that the MOSUM procedure (`MOSUM') demonstrates competitive performance across all scenarios, both in detection and estimation.
BSCOV tends to return spurious estimators under~\ref{m:two} and~\ref{m:zero} when serial dependence is present with $(\rho_f, \rho_e) = (0.7, 0.3)$.
On the other hand, BDH suffers from lack of detection power against those changes that transform the loading matrix while do not alter the number of local factors, such as $k_1$ under~\ref{m:one} and~\ref{m:two}, particularly for smaller $T$.
For such a change point, MOSUM is able to detect its presence although with less accuracy.
Generally, $k_j$'s which are associated with changes in
the number of pseudo factors (such as $k_2$ under~\ref{m:one} and $k_3$ under~\ref{m:two})
are estimated with higher accuracy, which agrees with the observations made in \cite{duan2023quasi} and also below Theorem~\ref{thm:consistency}.

Between MOSUM-full and MOSUM-diagonal, the latter demonstrates better accuracy in estimating both the total number and locations of change points when $R \ge 1$. This is explained by that $d = r (r + 1)/2$ is as large as $d = 28$ under~\ref{m:one} and $d = 21$ under~\ref{m:two}. Retaining the diagonal elements of $\widehat{\mathbf{V}}$ only, effectively performs standardisation without suffering from the numerical instability inherent in inverting a large matrix.
When $R = 0$, MOSUM-full performs marginally better as here, the factor number is kept at $r = 3$ under~\ref{m:zero}, leading to $d = 6$.
Based on these observations, in practice, we recommend the use of MOSUM-diagonal when the number of (pseudo) factors is moderately large. 

\begin{table}[h!t!]
\caption{\ref{m:one} with $R = 2$: Summary of change point estimators returned by
MOSUM, BSCOV and BDH. The results for BSCOV have been taken from \protect\cite{li2023detection}.}
\label{tab:m:one}
\centering
{\small 
\begin{tabular}{rrccccccc}
\toprule 
&	&	\multicolumn{5}{c}{$\wh{R} - R$} & \multicolumn{2}{c}{Accuracy}		\\	
Method &	LRV &	$-2 \le$ &	$-1$ &	$0$ &	$1$ &	$\ge 2$ &	$j = 1$ &	$j = 2$	\\	\cmidrule(lr){1-2} \cmidrule(lr){3-7} \cmidrule(lr){8-9}
MOSUM &	Diagonal &	0 &	0 &	0.985 &	0.015 &	0 &	0.7 &	0.925	\\	
&	Full &	0 &	0.005 &	0.94 &	0.055 &	0 &	0.595 &	0.8	\\	
BSCOV &	— &	0 &	0.03 &	0.97 &	0 &	0 &	0.64 &	0.95	\\	
BDH &	— &	0 &	0.965 &	0.015 &	0.005 &	0.015 &	0 &	1	\\	\bottomrule
\end{tabular}
}
\end{table}

\begin{table}[h!t!p!]
\caption{\ref{m:two} with $(\protect\rho_f, \protect\rho_e) = (0, 0)$ and $R = 3$:
Summary of change point estimators returned by MOSUM and BSCOV over $200$ realisations.}
\label{tab:m:two}
\centering
{\footnotesize 
\begin{tabular}{rrrrcccccccc}
\toprule 
&	&	&	&	\multicolumn{5}{c}{$\wh{R} - R$} &			\multicolumn{3}{c}{Accuracy} \\		
$n$ &	$p$ &	Method &	LRV &	$-2 \le$ &	$-1$ &	$0$ &	$1$ &	$\ge 2$ &	$j = 1$ &	$j = 2$ &	$j = 3$	\\	\cmidrule(lr){1-4} \cmidrule(lr){5-9} \cmidrule(lr){10-12}	
$400$ &	$100$ &	MOSUM &	Diagonal &	0 &	0.01 &	0.99 &	0 &	0 &	0.82 &	0.88 &	0.985	\\		
&	&	&	Full &	0 &	0.025 &	0.975 &	0 &	0 &	0.75 &	0.895 &	0.93	\\		
&	&	BSCOV &	— &	0.02 &	0 &	0.98 &	0 &	0 &	0.725 &	0.85 &	0.98	\\		
&	&	BDH &	— &	0.02 &	0.935 &	0.045 &	0 &	0 &	0.045 &	0.98 &	1	\\		
&	$200$ &	MOSUM &	Diagonal &	0 &	0 &	1 &	0 &	0 &	0.795 &	0.875 &	0.97	\\		
&	&	&	Full &	0 &	0.035 &	0.965 &	0 &	0 &	0.73 &	0.86 &	0.93	\\		
&	&	BSCOV &	— &	0.01 &	0.005 &	0.985 &	0 &	0 &	0.79 &	0.855 &	0.99	\\		
&	&	BDH &	— &	0 &	0.93 &	0.07 &	0 &	0 &	0.07 &	1 &	1	\\		
&	$500$ &	MOSUM &	Diagonal &	0 &	0.02 &	0.98 &	0 &	0 &	0.8 &	0.88 &	0.975	\\		
&	&	&	Full &	0 &	0.03 &	0.97 &	0 &	0 &	0.74 &	0.895 &	0.95	\\		
&	&	BSCOV &	— &	0.01 &	0 &	0.98 &	0.01 &	0 &	0.8 &	0.89 &	0.98	\\		
&	&	BDH &	— &	0 &	0.885 &	0.115 &	0 &	0 &	0.115 &	1 &	1	\\	\cmidrule(lr){1-4} \cmidrule(lr){5-9} \cmidrule(lr){10-12}	
$600$ &	$100$ &	MOSUM &	Diagonal &	0 &	0 &	1 &	0 &	0 &	0.83 &	0.88 &	0.99	\\		
&	&	&	Full &	0 &	0 &	0.995 &	0.005 &	0 &	0.75 &	0.87 &	0.945	\\		
&	&	BSCOV &	— &	0 &	0 &	0.99 &	0.01 &	0 &	0.81 &	0.895 &	0.995	\\		
&	&	BDH &	— &	0 &	0.7 &	0.3 &	0 &	0 &	0.3 &	1 &	1	\\		
&	$200$ &	MOSUM &	Diagonal &	0 &	0 &	1 &	0 &	0 &	0.81 &	0.89 &	0.98	\\		
&	&	&	Full &	0 &	0.005 &	0.99 &	0.005 &	0 &	0.715 &	0.905 &	0.95	\\		
&	&	BSCOV &	— &	0 &	0 &	0.995 &	0.005 &	0 &	0.875 &	0.945 &	0.995	\\		
&	&	BDH &	— &	0 &	0.65 &	0.35 &	0 &	0 &	0.35 &	1 &	1	\\		
&	$500$ &	MOSUM &	Diagonal &	0 &	0 &	1 &	0 &	0 &	0.83 &	0.92 &	0.99	\\		
&	&	&	Full &	0 &	0 &	1 &	0 &	0 &	0.755 &	0.885 &	0.945	\\		
&	&	BSCOV &	— &	0.005 &	0 &	0.995 &	0 &	0 &	0.865 &	0.95 &	0.995	\\		
&	&	BDH &	— &	0 &	0.615 &	0.385 &	0 &	0 &	0.385 &	1 &	1	\\	\cmidrule(lr){1-4} \cmidrule(lr){5-9} \cmidrule(lr){10-12}	
$800$ &	$100$ &	MOSUM &	Diagonal &	0 &	0 &	0.995 &	0.005 &	0 &	0.785 &	0.905 &	0.975	\\		
&	&	&	Full &	0 &	0 &	0.985 &	0.01 &	0.005 &	0.715 &	0.88 &	0.925	\\		
&	&	BSCOV &	— &	0 &	0.015 &	0.795 &	0.185 &	0.005 &	0.885 &	0.91 &	0.995	\\		
&	&	BDH &	— &	0 &	0.46 &	0.54 &	0 &	0 &	0.535 &	1 &	1	\\		
&	$200$ &	MOSUM &	Diagonal &	0 &	0 &	1 &	0 &	0 &	0.805 &	0.92 &	0.965	\\		
&	&	&	Full &	0 &	0 &	0.985 &	0.015 &	0 &	0.705 &	0.875 &	0.9	\\		
&	&	BSCOV &	— &	0 &	0 &	0.78 &	0.215 &	0.005 &	0.89 &	0.965 &	0.995	\\		
&	&	BDH &	— &	0 &	0.345 &	0.655 &	0 &	0 &	0.655 &	1 &	1	\\		
&	$500$ &	MOSUM &	Diagonal &	0 &	0 &	1 &	0 &	0 &	0.775 &	0.89 &	0.975	\\		
&	&	&	Full &	0 &	0 &	0.98 &	0.02 &	0 &	0.765 &	0.92 &	0.92	\\		
&	&	BSCOV &	— &	0 &	0 &	0.8 &	0.2 &	0 &	0.86 &	0.95 &	0.995	\\		
&	&	BDH &	— &	0 &	0.375 &	0.625 &	0 &	0 &	0.62 &	1 &	1	\\	\cmidrule(lr){1-4} \cmidrule(lr){5-9} \cmidrule(lr){10-12}	
$1000$ &	$100$ &	MOSUM &	Diagonal &	0 &	0 &	1 &	0 &	0 &	0.865 &	0.9 &	0.97	\\		
&	&	&	Full &	0 &	0 &	0.94 &	0.06 &	0 &	0.715 &	0.88 &	0.925	\\		
&	&	BSCOV &	— &	0 &	0.005 &	0.835 &	0.16 &	0 &	0.87 &	0.925 &	0.99	\\		
&	&	BDH &	— &	0 &	0.245 &	0.755 &	0 &	0 &	0.75 &	1 &	1	\\		
&	$200$ &	MOSUM &	Diagonal &	0 &	0 &	0.985 &	0.015 &	0 &	0.83 &	0.935 &	0.98	\\		
&	&	&	Full &	0 &	0 &	0.925 &	0.075 &	0 &	0.735 &	0.88 &	0.92	\\		
&	&	BSCOV &	— &	0 &	0 &	0.86 &	0.13 &	0.01 &	0.91 &	0.955 &	0.99	\\		
&	&	BDH &	— &	0 &	0.185 &	0.815 &	0 &	0 &	0.81 &	1 &	1	\\		
&	$500$ &	MOSUM &	Diagonal &	0 &	0 &	0.995 &	0.005 &	0 &	0.805 &	0.895 &	0.98	\\		
&	&	&	Full &	0 &	0 &	0.925 &	0.075 &	0 &	0.73 &	0.905 &	0.935	\\		
&	&	BSCOV &	— &	0 &	0.005 &	0.83 &	0.16 &	0.005 &	0.875 &	0.965 &	0.99	\\		
&	&	BDH &	— &	0 &	0.205 &	0.795 &	0 &	0 &	0.795 &	1 &	1	\\	\bottomrule	
\end{tabular}
}
\end{table}

\begin{table}[h!t!p!]
\caption{\ref{m:two} with $(\protect\rho_f, \protect\rho_e) = (0.7, 0.3)$ and $R = 3$: Summary of change point estimators returned by MOSUM and BSCOV over $200$ realisations.}
\label{tab:m:two:dep}
\centering
{\footnotesize 
\begin{tabular}{rrrrcccccccc}
\toprule 
&	&	&	&	\multicolumn{5}{c}{$\wh{R} - R$} &					\multicolumn{3}{c}{Accuracy}			\\		
$n$ &	$p$ &	Method &	LRV &	$-2 \le$ &	$-1$ &	$0$ &	$1$ &	$\ge 2$ &	$j = 1$ &	$j = 2$ &	$j = 3$	\\	\cmidrule(lr){1-4} \cmidrule(lr){5-9} \cmidrule(lr){10-12}	
$400$ &	$100$ &	MOSUM &	Diagonal &	0.005 &	0.08 &	0.915 &	0 &	0 &	0.595 &	0.765 &	0.905	\\		
&	&	&	Full &	0 &	0.06 &	0.935 &	0.005 &	0 &	0.49 &	0.74 &	0.85	\\		
&	&	BSCOV &	— &	0.005 &	0.12 &	0.84 &	0.035 &	0 &	0.6 &	0.375 &	0.92	\\		
&	&	BDH &	— &	0.005 &	0.845 &	0.14 &	0.01 &	0 &	0.12 &	0.995 &	1	\\		
&	$200$ &	MOSUM &	Diagonal &	0.005 &	0.11 &	0.885 &	0 &	0 &	0.605 &	0.73 &	0.93	\\		
&	&	&	Full &	0 &	0.125 &	0.875 &	0 &	0 &	0.44 &	0.67 &	0.88	\\		
&	&	BSCOV &	— &	0 &	0.145 &	0.825 &	0.03 &	0 &	0.625 &	0.32 &	0.9	\\		
&	&	BDH &	— &	0 &	0.875 &	0.12 &	0.005 &	0 &	0.09 &	1 &	1	\\		
&	$500$ &	MOSUM &	Diagonal &	0 &	0.115 &	0.885 &	0 &	0 &	0.535 &	0.71 &	0.95	\\		
&	&	&	Full &	0.005 &	0.115 &	0.88 &	0 &	0 &	0.485 &	0.68 &	0.855	\\		
&	&	BSCOV &	— &	0 &	0.11 &	0.85 &	0.04 &	0 &	0.605 &	0.32 &	0.915	\\		
&	&	BDH &	— &	0 &	0.815 &	0.185 &	0 &	0 &	0.145 &	1 &	1	\\	\cmidrule(lr){1-4} \cmidrule(lr){5-9} \cmidrule(lr){10-12}	
$600$ &	$100$ &	MOSUM &	Diagonal &	0 &	0.025 &	0.97 &	0.005 &	0 &	0.565 &	0.69 &	0.905	\\		
&	&	&	Full &	0 &	0.045 &	0.925 &	0.03 &	0 &	0.48 &	0.705 &	0.85	\\		
&	&	BSCOV &	— &	0.005 &	0.095 &	0.75 &	0.14 &	0.01 &	0.695 &	0.355 &	0.95	\\		
&	&	BDH &	— &	0 &	0.575 &	0.39 &	0.035 &	0 &	0.375 &	1 &	1	\\		
&	$200$ &	MOSUM &	Diagonal &	0 &	0.03 &	0.96 &	0.01 &	0 &	0.535 &	0.75 &	0.935	\\		
&	&	&	Full &	0 &	0.025 &	0.945 &	0.03 &	0 &	0.535 &	0.68 &	0.875	\\		
&	&	BSCOV &	— &	0 &	0.1 &	0.785 &	0.11 &	0.005 &	0.665 &	0.38 &	0.945	\\		
&	&	BDH &	— &	0 &	0.54 &	0.405 &	0.045 &	0.01 &	0.38 &	1 &	1	\\		
&	$500$ &	MOSUM &	Diagonal &	0 &	0.005 &	0.995 &	0 &	0 &	0.57 &	0.77 &	0.935	\\		
&	&	&	Full &	0 &	0.025 &	0.945 &	0.03 &	0 &	0.495 &	0.685 &	0.865	\\		
&	&	BSCOV &	— &	0.02 &	0 &	0.825 &	0.145 &	0.01 &	0.6 &	0.735 &	0.96	\\		
&	&	BDH &	— &	0 &	0.53 &	0.4 &	0.065 &	0.005 &	0.415 &	1 &	1	\\	\cmidrule(lr){1-4} \cmidrule(lr){5-9} \cmidrule(lr){10-12}	
$800$ &	$100$ &	MOSUM &	Diagonal &	0 &	0.01 &	0.96 &	0.03 &	0 &	0.54 &	0.69 &	0.92	\\		
&	&	&	Full &	0 &	0.005 &	0.885 &	0.11 &	0 &	0.49 &	0.685 &	0.85	\\		
&	&	BSCOV &	— &	0.01 &	0 &	0.115 &	0.625 &	0.25 &	0.61 &	0.705 &	0.97	\\		
&	&	BDH &	— &	0 &	0.345 &	0.605 &	0.05 &	0 &	0.58 &	1 &	1	\\		
&	$200$ &	MOSUM &	Diagonal &	0 &	0 &	0.955 &	0.045 &	0 &	0.575 &	0.785 &	0.925	\\		
&	&	&	Full &	0 &	0.03 &	0.85 &	0.12 &	0 &	0.455 &	0.69 &	0.83	\\		
&	&	BSCOV &	— &	0.01 &	0 &	0.08 &	0.595 &	0.315 &	0.625 &	0.805 &	0.975	\\		
&	&	BDH &	— &	0 &	0.275 &	0.655 &	0.065 &	0.005 &	0.685 &	1 &	1	\\		
&	$500$ &	MOSUM &	Diagonal &	0 &	0.01 &	0.95 &	0.04 &	0 &	0.6 &	0.725 &	0.915	\\		
&	&	&	Full &	0 &	0.01 &	0.835 &	0.155 &	0 &	0.495 &	0.63 &	0.8	\\		
&	&	BSCOV &	— &	0.015 &	0 &	0.085 &	0.58 &	0.32 &	0.655 &	0.775 &	0.97	\\		
&	&	BDH &	— &	0 &	0.295 &	0.63 &	0.065 &	0.01 &	0.64 &	1 &	1	\\	\cmidrule(lr){1-4} \cmidrule(lr){5-9} \cmidrule(lr){10-12}	
$1000$ &	$100$ &	MOSUM &	Diagonal &	0 &	0 &	0.945 &	0.055 &	0 &	0.585 &	0.765 &	0.91	\\		
&	&	&	Full &	0 &	0.005 &	0.805 &	0.18 &	0.01 &	0.49 &	0.725 &	0.81	\\		
&	&	BSCOV &	— &	0.005 &	0.005 &	0.195 &	0.5 &	0.295 &	0.64 &	0.825 &	0.95	\\		
&	&	BDH &	— &	0 &	0.19 &	0.715 &	0.09 &	0.005 &	0.76 &	1 &	1	\\		
&	$200$ &	MOSUM &	Diagonal &	0 &	0 &	0.945 &	0.05 &	0.005 &	0.545 &	0.77 &	0.92	\\		
&	&	&	Full &	0 &	0.01 &	0.81 &	0.17 &	0.01 &	0.47 &	0.74 &	0.84	\\		
&	&	BSCOV &	— &	0 &	0.02 &	0.21 &	0.405 &	0.365 &	0.64 &	0.825 &	0.945	\\		
&	&	BDH &	— &	0 &	0.135 &	0.765 &	0.08 &	0.02 &	0.825 &	1 &	1	\\		
&	$500$ &	MOSUM &	Diagonal &	0 &	0 &	0.93 &	0.07 &	0 &	0.605 &	0.73 &	0.915	\\		
&	&	&	Full &	0 &	0 &	0.78 &	0.21 &	0.01 &	0.515 &	0.68 &	0.84	\\		
&	&	BSCOV &	— &	0 &	0.01 &	0.205 &	0.45 &	0.335 &	0.67 &	0.82 &	0.96	\\		
&	&	BDH &	— &	0 &	0.16 &	0.755 &	0.07 &	0.015 &	0.795 &	1 &	1	\\	\bottomrule	
\end{tabular}
}
\end{table}

\begin{table}[h!t!p!]
\caption{\ref{m:zero} with $R = 0$: Distribution of $\widehat R - R$ returned by MOSUM and BSCOV over $200$ realisations for $(\rho_f, \rho_e) \in \{ (0, 0), (0.7, 0.3)\}$.}
\label{tab:m:three}
\centering
{\footnotesize
\begin{tabular}{rrrrcccccc}
\toprule 
&	&	&	&	\multicolumn{3}{c}{$(\rho_f, \rho_e) = (0, 0)$}  &			\multicolumn{3}{c}{$(\rho_f, \rho_e) = (0.7, 0.3)$} 			\\ \cmidrule(lr){5-7} \cmidrule(lr){8-10}
$n$ &	$p$ &	Method &	LRV &	$0$ &	$1$ &	$\ge 2$ &	$0$ &	$1$ &	$\ge 2$ 	\\	\cmidrule(lr){1-4} \cmidrule(lr){5-7} \cmidrule(lr){8-10}	
$400$ &	$100$ &	MOSUM &	Diagonal &	0.985 &	0.015 &	0 &	0.895 &	0.08 &	0.025	\\		
&	&	&	Full &	0.99 &	0.01 &	0 &	0.965 &	0.025 &	0.01	\\		
&	&	BSCOV &	— &	1 &	0 &	0 &	0.75 &	0.215 &	0.035	\\		
&	&	BDH &	— &	1 &	0 &	0 &	1 &	0 &	0	\\		
&	$200$ &	MOSUM &	Diagonal &	0.995 &	0.005 &	0 &	0.88 &	0.11 &	0.01	\\		
&	&	&	Full &	0.995 &	0.005 &	0 &	0.955 &	0.04 &	0.005	\\		
&	&	BSCOV &	— &	0.995 &	0.005 &	0 &	0.705 &	0.25 &	0.045	\\		
&	&	BDH &	— &	1 &	0 &	0 &	1 &	0 &	0	\\		
&	$500$ &	MOSUM &	Diagonal &	0.99 &	0.01 &	0 &	0.88 &	0.11 &	0.01	\\		
&	&	&	Full &	0.99 &	0.01 &	0 &	0.95 &	0.045 &	0.005	\\		
&	&	BSCOV &	— &	0.995 &	0.005 &	0 &	0.75 &	0.22 &	0.03	\\		
&	&	BDH &	— &	1 &	0 &	0 &	1 &	0 &	0	\\	\cmidrule(lr){1-4} \cmidrule(lr){5-7} \cmidrule(lr){8-10}	
$600$ &	$100$ &	MOSUM &	Diagonal &	0.99 &	0.01 &	0 &	0.855 &	0.125 &	0.02	\\		
&	&	&	Full &	0.99 &	0.01 &	0 &	0.905 &	0.075 &	0.02	\\		
&	&	BSCOV &	— &	1 &	0 &	0 &	0.82 &	0.155 &	0.025	\\		
&	&	BDH &	— &	1 &	0 &	0 &	1 &	0 &	0	\\		
&	$200$ &	MOSUM &	Diagonal &	0.99 &	0.01 &	0 &	0.845 &	0.14 &	0.015	\\		
&	&	&	Full &	0.995 &	0.005 &	0 &	0.88 &	0.105 &	0.015	\\		
&	&	BSCOV &	— &	1 &	0 &	0 &	0.77 &	0.18 &	0.05	\\		
&	&	BDH &	— &	1 &	0 &	0 &	1 &	0 &	0	\\		
&	$500$ &	MOSUM &	Diagonal &	0.985 &	0.015 &	0 &	0.855 &	0.11 &	0.035	\\		
&	&	&	Full &	0.995 &	0.005 &	0 &	0.875 &	0.1 &	0.025	\\		
&	&	BSCOV &	— &	1 &	0 &	0 &	0.85 &	0.135 &	0.015	\\		
&	&	BDH &	— &	1 &	0 &	0 &	1 &	0 &	0	\\	\cmidrule(lr){1-4} \cmidrule(lr){5-7} \cmidrule(lr){8-10}	
$800$ &	$100$ &	MOSUM &	Diagonal &	1 &	0 &	0 &	0.905 &	0.085 &	0.01	\\		
&	&	&	Full &	1 &	0 &	0 &	0.95 &	0.045 &	0.005	\\		
&	&	BSCOV &	— &	1 &	0 &	0 &	0.835 &	0.15 &	0.015	\\		
&	&	BDH &	— &	1 &	0 &	0 &	1 &	0 &	0	\\		
&	$200$ &	MOSUM &	Diagonal &	1 &	0 &	0 &	0.94 &	0.05 &	0.01	\\		
&	&	&	Full &	0.995 &	0.005 &	0 &	0.97 &	0.02 &	0.01	\\		
&	&	BSCOV &	— &	1 &	0 &	0 &	0.875 &	0.115 &	0.01	\\		
&	&	BDH &	— &	1 &	0 &	0 &	1 &	0 &	0	\\		
&	$500$ &	MOSUM &	Diagonal &	1 &	0 &	0 &	0.92 &	0.075 &	0.005	\\		
&	&	&	Full &	1 &	0 &	0 &	0.94 &	0.055 &	0.005	\\		
&	&	BSCOV &	— &	1 &	0 &	0 &	0.86 &	0.12 &	0.02	\\		
&	&	BDH &	— &	1 &	0 &	0 &	1 &	0 &	0	\\	\cmidrule(lr){1-4} \cmidrule(lr){5-7} \cmidrule(lr){8-10}	
$1000$ &	$100$ &	MOSUM &	Diagonal &	1 &	0 &	0 &	0.895 &	0.1 &	0.005	\\		
&	&	&	Full &	1 &	0 &	0 &	0.945 &	0.05 &	0.005	\\		
&	&	BSCOV &	— &	1 &	0 &	0 &	0.915 &	0.085 &	0	\\		
&	&	BDH &	— &	1 &	0 &	0 &	1 &	0 &	0	\\		
&	$200$ &	MOSUM &	Diagonal &	1 &	0 &	0 &	0.9 &	0.09 &	0.01	\\		
&	&	&	Full &	1 &	0 &	0 &	0.955 &	0.045 &	0	\\		
&	&	BSCOV &	— &	1 &	0 &	0 &	0.895 &	0.09 &	0.015	\\		
&	&	BDH &	— &	1 &	0 &	0 &	1 &	0 &	0	\\		
&	$500$ &	MOSUM &	Diagonal &	1 &	0 &	0 &	0.91 &	0.085 &	0.005	\\		
&	&	&	Full &	1 &	0 &	0 &	0.95 &	0.05 &	0	\\		
&	&	BSCOV &	— &	1 &	0 &	0 &	0.91 &	0.085 &	0.005	\\		
&	&	BDH &	— &	1 &	0 &	0 &	1 &	0 &	0	\\	\bottomrule	
\end{tabular}
}
\end{table}

\section{Real data application}
\label{sec:real}

We consider daily stock prices from $72$ US blue chip companies across industry sectors
between January 3, 2005 and February 16, 2022,
retrieved from the Wharton Research Data Services.
We measure the volatility as the daily high-low range as $\sigma_{it}^2 = 0.361 (p_{it}^{\text{high}} - p_{it}^{\text{low}})^2$
where $p_{it}^{\text{high}}$ (resp.\ $p_{it}^{\text{low}}$) denotes the maximum (resp.\ minimum) price of stock $i$ on day $t$, and set $X_{it} = \log(\sigma_{it}^2)$, see, e.g.\ \citet{diebold2014network}.

The sub-sampling-based factor number estimator discussed in Section~\ref{sec:tuning} returns $r = 7$ as the factor number.
However, the panel data is unbalanced with the dimension $N = 72$ being considerably smaller than the sample size $T = 4312$, a situation that does not favour the sub-sampling approach as pointed out by \cite{onatski2024}.
The information criteria of \citet{baing02} return $r = 5$, while the approach based on inspecting the ratio of successive eigenvalues \citep{ahnhorenstein13} returns $r = 1$, which implies that any change point we detect would solely be attributed to heteroscedasticity of the single factor. 
While the former is known to detect weakly pervasive factors \citep{baing23}, the latter tends to recover only strongly pervasive ones. 

In the presence of some uncertainty in the number of factors, a situation commonly faced in real data analysis, we choose to apply the proposed MOSUM procedure with varying $r \in \{1, \ldots, 7\}$ and inspect its outputs.
We set other tuning parameters as described in Section~\ref{sec:tuning} and adopt the standardisation based on the diagonal entries of $\widehat{\mbf V}$ only (referred to as `MOSUM-diagonal' in Section~\ref{simulations}); this gives a bandwidth $\gamma = 227$ corresponding almost to one trading year.
Figure~\ref{fig:real} illustrates the series of standardised MOSUM statistics $D_{T, \gamma}^{-1} \cdot \mc T_{N, T, \gamma}(k), \, \gamma \le k \le T - \gamma$, obtained for different values of $r$, where the standardisation is applied to ensure that the MOSUM series derived from $\textsf{Vech}(\wh{\mbf g}_t \wh{\mbf g}_t^\top)$ of different dimensions are comparable. 
See also Figure~\ref{fig:real:all:7} and Table~\ref{tab:real} for the visualisation and the list of change point estimators returned by MOSUM and those competitors considered in the simulation studies, namely BSCOV \citep{li2023detection} and BDH \citep{baiduan}.

It is noteworthy that the sets of change point estimators output by the MOSUM procedure are \textit{nested} as the number of pseudo factors increases. 
That is, denoting by $\widehat{\mathcal{K}}(r)$ the set of change point estimators with $r$ as the factor number, we have $\widehat{\mathcal{K}}(r) \subset \widehat{\mathcal{K}}(r')$ for any $r < r'$, when accommodating the possible bias in the change point estimators (up to $3$ months).
Specifically, with $r \le 3$, we detect two prominent changes in mid-2008 and 2009 which, being associated with the Great Financial Crisis in 2007--2009, are detected invariably with all $r$. 
With $r = 4$, we additionally detect 2012-10-05 as a change point, which is subsequently detected for all $r \ge 5$.
With $r \ge 5$, 2020-02-24 and 2021-01-19 emerge as change points, which are accounted for by the stock market crash in February 2020 and the ensuing recession due to the COVID-19 pandemic.
With $r \in \{6, 7\}$, MOSUM outputs almost identical sets of change point estimators.
These results offer an interpretation as to how different factors `encode' different structural changes, and demonstrate that the MOSUM procedure is insensitive to the specified number of factors within certain ranges (i.e.\ $\{1, 2, 3\}$, $\{6, 7\}$).

Similarly to MOSUM, BSCOV also returns nested sets of change point estimators as $r$ increases, and many of its estimators overlap with those returned by MOSUM.
When $r = 7$, the estimators returned by MOSUM form a subset of those returned by BSCOV, and some changes detected solely by BSCOV do not appear as estimators detected with $r < 7$ by any method.
BDH estimates the number of factors by the information criterion of \cite{baing02} at each iteration of the binary segmentation algorithm and as such, when applied with a fixed number of factors, its output lacks the nested property.

\begin{figure}[h!t!]
\centering
\includegraphics[width = .9\textwidth]{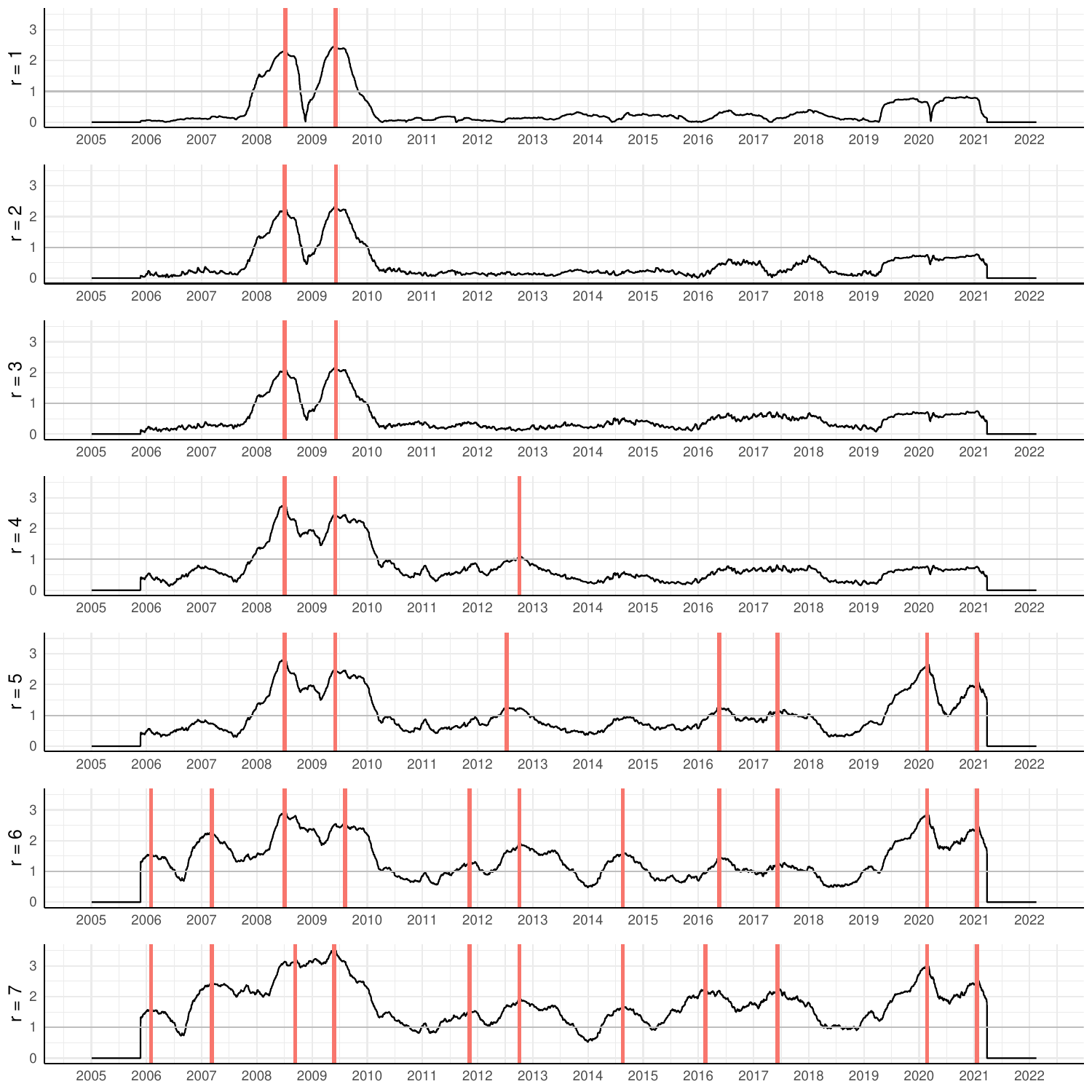}
\caption{US blue chip data: Standardised MOSUM statistics obtained with $r \in \{1, \ldots, 7\}$ (top to bottom). Vertical lines denote the change point estimators returned by MOSUM and the horizontal line is at $y = 1$.}
\label{fig:real}
\end{figure} 

\begin{figure}[h!t!]
\centering
\includegraphics[width = 1\textwidth]{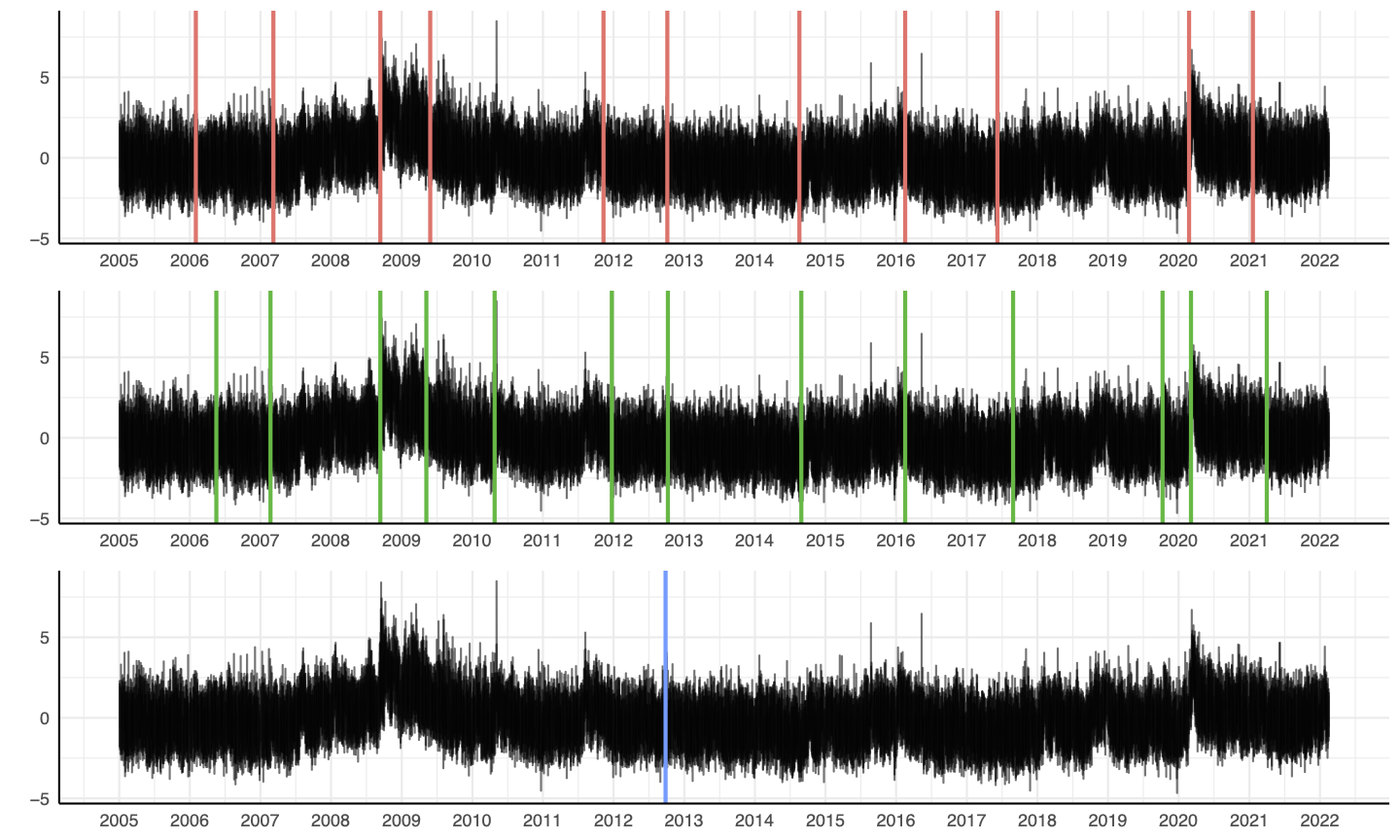}
\caption{US blue chip data: Volatilities from the $72$ companies. Vertical lines denote the change point estimators detected by MOSUM, BSCOV and BDH (top to bottom), with $r = 7$.}
\label{fig:real:all:7}
\end{figure} 

\begin{table}[h!t!b!]
\caption{US blue chip data: List of change point estimators obtained with $r \in \{1, \ldots, 7\}$ by MOSUM, BSCOV \citep{li2023detection} and BDH \citep{baiduan}.}
\label{tab:real}
\resizebox{\textwidth}{!}{  
\begin{tabular}{c l l l}
\toprule
$r$ &	MOSUM 	&	BSCOV 	&	BDH 	\\	\cmidrule(lr){1-1} \cmidrule(lr){2-2} \cmidrule(lr){3-3} \cmidrule(lr){4-4}
$1$ &	2008-07-07, 2009-06-05	&	2008-09-12, 2009-03-25	&	—	\\	\cmidrule(lr){1-1} \cmidrule(lr){2-2} \cmidrule(lr){3-3} \cmidrule(lr){4-4}
$2$ &	2008-07-07, 2009-06-05	&	2008-09-12, 2009-03-25	&	—	\\	\cmidrule(lr){1-1} \cmidrule(lr){2-2} \cmidrule(lr){3-3} \cmidrule(lr){4-4}
$3$ &	2008-07-07, 2009-06-05	&	2008-09-12, 2009-03-25	&	2008-06-27, 2029-08-07	\\	\cmidrule(lr){1-1} \cmidrule(lr){2-2} \cmidrule(lr){3-3} \cmidrule(lr){4-4}
$4$ &	2008-07-03, 2009-06-04, 2012-10-05	&	2008-09-12, 2009-05-08, 2012-10-05	&	2008-07-03, 2009-07-23, 2012-06-21,	\\	
&		&		&	2014-09-29, 2020-02-24	\\	\cmidrule(lr){1-1} \cmidrule(lr){2-2} \cmidrule(lr){3-3} \cmidrule(lr){4-4}
$5$ &	2008-07-03, 2009-06-04, 2012-07-13,	&	2008-09-12, 2009-05-08, 2012-10-08,	&	2007-10-31, 2012-10-05	\\	
&	2016-05-20, 2017-06-08, 2020-02-24,	&	2014-10-24, 2020-03-05	&		\\	
&	2021-01-19	&		&		\\	\cmidrule(lr){1-1} \cmidrule(lr){2-2} \cmidrule(lr){3-3} \cmidrule(lr){4-4}
$6$&	2006-02-01, 2007-03-09, 2008-07-03,	&	2007-01-11, 2008-09-12, 2009-05-08,	&	2012-10-05	\\	
&	2009-08-07, 2011-11-10, 2012-10-05,	&	2011-12-23, 2012-10-08, 2014-08-29,	&		\\	
&	2014-08-19, 2016-05-20, 2017-06-08,	&	2020-03-05	&		\\	
&	2020-02-24, 2021-01-19	&		&		\\	\cmidrule(lr){1-1} \cmidrule(lr){2-2} \cmidrule(lr){3-3} \cmidrule(lr){4-4}
$7$ &	2006-02-01, 2007-03-09, 2008-09-12,	&	2006-05-18, 2007-02-22, 2008-09-12,	&	2012-09-26	\\	
&	2009-05-28, 2011-11-10, 2012-10-05,	&	2009-05-08, 2010-04-26, 2011-12-23,	&		\\	
&	2014-08-19, 2016-02-17, 2017-06-08, 	&	2012-10-08, 2014-08-29, 2016-02-17, 	&		\\	
&	2020-02-24, 2021-01-19	&	2017-08-28, 2019-10-10, 2020-03-05, 2021-04-01	&		\\	\bottomrule
\end{tabular}}
\end{table}
\section{Conclusions}
\label{sec:conc}

This paper proposes a MOSUM procedure for change point analysis under a static factor model that is popularly adopted in econometrics and statistics.
In addition to deriving the asymptotic null distribution of the maximally selected MOSUM statistic, 
we establish the consistency of the procedure in multiple change point estimation with the accompanying rate of estimation, 
contributing to the relatively scarce literature on multiple change point detection in factor models.
On a range of simulated datasets and in a real data application, we demonstrate the competitiveness of the proposal empirically.
At the same time, the success of the proposed single-scale MOSUM procedure hinges on the availability of the bandwidth $\gamma$ that fulfils a set of assumptions, which may not exist in the presence of multiscale change points \citep{cho2021data}. 
One natural avenue for an extension is to apply the MOSUM procedure with a range of bandwidths, which we leave for future research. 

\bibliographystyle{apalike}
\bibliography{LTbiblio}

\clearpage
\appendix

\numberwithin{equation}{section} \numberwithin{figure}{section} %
\numberwithin{table}{section} \numberwithin{thm}{section}

\section{Proofs}

Throughout, we write $C_{NT} = \sqrt{\min(N, T)}$, and denote by $c_i \in (0, \infty), \, i \ge 0$, some fixed constants, and by $\epsilon \in (0, 1)$ a small constant which may vary from one occasion to another.

\subsection{Preliminary lemmas}
\label{sec:prem:proofs}

\label{lemmas}

The following quantities are used extensively in \citet{bai03} and also in
our proofs: 
\begin{align*}
\gamma_{s,t}=\mathsf{E}\left( \frac{1}{N}\sum_{i=1}^{N}e_{i,t}e_{i,s}%
\right) ,& \qquad \zeta_{s,t}=\frac{1}{N}\sum_{i=1}^{N}e_{i,t}e_{i,s}-%
\gamma_{s,t}, \\
\eta_{s,t}=\frac{1}{N}\sum_{i=1}^{N}\mathbf{g}_{s}^{\top }\bm\lambda
_{i}e_{i,t},& \qquad \xi_{s,t}=\frac{1}{N}\sum_{i=1}^{N}\mathbf{g}%
_{t}^{\top }\bm\lambda_{i}e_{i,s}.
\end{align*}
Then, it holds that 
\begin{align}
\widehat{\mathbf{g}}_{t}-\mathbf{H}^{\top }\mathbf{g}_{t}& =\bm\Phi
_{NT}^{-1}\left( \frac{1}{T}\sum_{s=1}^{T}\widehat{\mathbf{g}}_{s}\gamma
_{s,t}+\frac{1}{T}\sum_{s=1}^{T}\widehat{\mathbf{g}}_{s}\zeta_{s,t}+\frac{1%
}{T}\sum_{s=1}^{T}\widehat{\mathbf{g}}_{s}\eta_{s,t}+\frac{1}{T}%
\sum_{s=1}^{T}\widehat{\mathbf{g}}_{s}\xi_{s,t}\right) ,  \label{eq:bai} \\
\text{where \ }\mathbf{H}& =\frac{1}{NT}(\bm\Lambda^{\top }\bm\Lambda )(%
\mathbf{G}^{\top }\widehat{\mathbf{G}})\bm\Phi_{NT}^{-1},  \label{eq:H}
\end{align}
with $\bm\Phi_{NT}$ denoting the $r\times r$-diagonal matrix with the $r$
largest eigenvalues of $(NT)^{-1}\mathbf{X}\mathbf{X}^{\top}$ on its diagonal.

\begin{lem}
\label{lem:gg} Under Assumptions~\ref{factors} (with $\rho = 1$) and~\ref{assum:cps}, we have 
\begin{align*}
\E\l( \l\Vert \frac{1}{T} \mathbf{G}^\top \mathbf{G }- \bm\Sigma_G \r\Vert^2 \r) \le c_0 T^{-1}.
\end{align*}
\end{lem}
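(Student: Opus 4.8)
The plan is to exploit the piecewise structure of $\mbf g_t$ and reduce the claim to a linear-in-length second-moment bound for partial sums of the stationary sequence $\vech(\mbf f_t\mbf f_t^\top)$. Writing $\frac1T\mbf G^\top\mbf G = \frac1T\sum_{t=1}^T\mbf g_t\mbf g_t^\top$ and $\bm\Sigma_G = \frac1T\sum_{t=1}^T\E(\mbf g_t\mbf g_t^\top)$, and noting that on the $j$-th segment $k_j<t\le k_{j+1}$ we have $\mbf g_t=\mbf A_j\mbf f_t$, so that by strict stationarity of the Bernoulli shift $\{\mbf f_t\}$ the centring is exact, $\E(\mbf g_t\mbf g_t^\top)=\mbf A_j\bm\Sigma_F\mbf A_j^\top$, I would first record the decomposition
\begin{align*}
\frac1T\mbf G^\top\mbf G - \bm\Sigma_G = \frac1T\sum_{j=0}^R \mbf A_j\,\mbf S_j\,\mbf A_j^\top, \qquad \mbf S_j := \sum_{t=k_j+1}^{k_{j+1}}\l(\mbf f_t\mbf f_t^\top - \bm\Sigma_F\r).
\end{align*}

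The core step is the bound $\E(\|\mbf S_j\|^2)\le c_1\Delta_j$ with $\Delta_j=k_{j+1}-k_j$. Passing to the vectorised, centred process $\mbf Y_t:=\mc F_t-\E(\mc F_t)$ with $\mc F_t=\vech(\mbf f_t\mbf f_t^\top)$ (so that $\|\mbf S_j\|$ and $\|\sum_t\mbf Y_t\|$ agree up to a fixed multiplicative constant depending only on $r$), the bound reduces to showing that the partial-sum variance of $\{\mbf Y_t\}$ grows at most linearly, which follows from summability of the autocovariances $\sum_{h\in\Z}\|\Cov(\mbf Y_0,\mbf Y_h)\|<\infty$ via the standard stationary-sum identity. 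To obtain summability I would use the decomposable Bernoulli shift structure: the product identity
\begin{align*}
\mbf f_t\mbf f_t^\top - \mbf f_{t,l}^*(\mbf f_{t,l}^*)^\top = (\mbf f_t - \mbf f_{t,l}^*)\mbf f_t^\top + \mbf f_{t,l}^*(\mbf f_t - \mbf f_{t,l}^*)^\top,
\end{align*}
combined with H\"older's inequality and Assumption~\ref{factors}~\ref{assum:factors:one} (which supplies $\nu\ge 8+\eps$, hence finite $\nu$-th moments of $\Vert\mbf f_t\Vert$ and the decay $\vert\,\Vert\mbf f_t-\mbf f_{t,l}^*\Vert\,\vert_\nu\le c_0 l^{-a}$), shows that $\{\mbf Y_t\}$ is itself an $L_{\nu/2}$-decomposable Bernoulli shift with the same polynomial rate $a>2$. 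A coupling argument---replacing in $\mbf Y_h$ the innovations indexed by $s\le 0$ with independent copies to form a variable independent of $\mbf Y_0$---then yields $\|\Cov(\mbf Y_0,\mbf Y_h)\|\le c_2 h^{-a}$, which is summable because $a>2>1$.

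Finally I would reassemble the pieces. By the triangle inequality and Assumption~\ref{assum:cps}~\ref{assum:cp:three}, $\|\mbf A_j\mbf S_j\mbf A_j^\top\|\le\|\mbf A_j\|^2\|\mbf S_j\|\le c_0^2\|\mbf S_j\|$, and since $R$ is fixed, Cauchy--Schwarz across the $R+1$ summands gives
\begin{align*}
\E\l(\l\|\tfrac1T\mbf G^\top\mbf G - \bm\Sigma_G\r\|^2\r) \le \frac{R+1}{T^2}\sum_{j=0}^R c_0^4\,\E(\|\mbf S_j\|^2) \le \frac{(R+1)c_0^4 c_1}{T^2}\sum_{j=0}^R\Delta_j = \frac{(R+1)c_0^4 c_1}{T},
\end{align*}
using $\sum_{j=0}^R\Delta_j=T$, which is the asserted $c_0 T^{-1}$ rate. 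The main obstacle is the second-moment bound on $\mbf S_j$, i.e.\ the autocovariance summability for the \emph{quadratic} functional $\vech(\mbf f_t\mbf f_t^\top)$; this is precisely the bookkeeping underlying the SIP in Lemma~\ref{sip-1}, so in practice I would either cite the Bernoulli-shift estimates established there or reproduce the short coupling computation sketched above.
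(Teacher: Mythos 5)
Your proposal is correct, and its skeleton is the same as the paper's proof: both decompose by segments using $\mbf g_t = \mbf A_j \mbf f_t$, reduce the claim to a linear-in-length second-moment bound for partial sums of the centred quadratic process $\mbf Y_t = \mathsf{Vech}(\mbf f_t \mbf f_t^\top) - \E(\mathsf{Vech}(\mbf f_t \mbf f_t^\top))$, and establish that this process inherits the decomposable-Bernoulli-shift property via exactly the product identity plus Cauchy--Schwarz/Minkowski that you write down. The one place you genuinely diverge is the per-segment bound $\E(\Vert \mbf S_j \Vert^2) \le c_1 \Delta_j$: the paper obtains it by citing Lemma~S2.1 of \citet{aue2014} as a black box, whereas you prove it from scratch by coupling out the innovations with indices $s \le 0$ to get $\Vert \Cov(\mbf Y_0, \mbf Y_h) \Vert \le c_2 h^{-a}$ and then summing autocovariances (summable since $a > 2 > 1$), which is a sound and standard argument given the $\mathcal{L}_{\nu/2}$-decomposability with $\nu/2 \ge 4$. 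Your route is more elementary and self-contained, needing nothing beyond second moments of the quadratic shift; moreover, your final assembly via $\sum_{j=0}^R \Delta_j = T$ and Cauchy--Schwarz across the $R+1$ segments does not even use that each segment has length proportional to $T$, which the paper's intermediate bound (normalised by $\Delta_j$ yet stated with rate $T^{-1}$) implicitly exploits through Assumption~\ref{assum:cps}~\ref{assum:cp:one}. What the paper's citation buys is economy and coherence: the same result of \citet{aue2014} is the engine behind the strong invariance principle in Lemma~\ref{sip-1}, of which the variance bound needed here is essentially a by-product, so no separate covariance-summability computation is required.
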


\begin{proof}
Firstly, we show that for any $j = 0, \ldots, R$,
\begin{align}
\label{eq:ff}
\E\l( \l\Vert \frac{1}{k_{j + 1} - k_j} \sum_{t = k_j + 1}^{k_{j + 1}} \mbf f_t \mbf f_t^\top - \bm\Sigma_F \r\Vert^2 \r) \le c_0 T^{-1}.
\end{align}
We begin by showing that Assumption~\ref{factors}~\ref{assum:factors:one} entails that $\{ \mbf f_t\mbf f_t^\top - \bm\Sigma_F \}$ is an $\mc L_{\phi}$-decomposable Bernoulli shift with some $\phi > 2$. 
Without loss of generality, let $r = d = 1$ for simplicity.
Noting that $f_t^{2} = h^2(\eta_t, \eta_{t - 1}, \ldots)$, consider the construction
\begin{align*}
\widetilde{f}_{t, \ell}^{2} = h^{2}(\eta_t, \ldots, \eta_{t - \ell}, \eta_{t - \ell -1}^{\prime}, \eta_{t - \ell - 2}^\prime, \ldots),
\end{align*}
where $\{\eta_t^\prime\}_{t \in \Z}$ is a sequence of i.i.d\ copies of $\eta_0$ independent of $\{\eta_t\}_{t \in \Z}$, such that $\eta_t \overset{\mathcal{D}}{=} \eta^\prime_t$. 
Then, $f_{t}^{2} - \widetilde{f}_{t, \ell}^{2} = ( f_{t} + \wt{f}_{t, \ell} ) ( f_{t} - \wt{f}_{t, \ell} )$ whence, using the Cauchy-Schwartz inequality and Minkowski's inequality,
\begin{align*}
\l\vert f_{t}^{2} - \widetilde{f}_{t, \ell}^{2} \r\vert_{\phi} \leq
\l\vert f_{t} + \wt{f}_{t, \ell} \r\vert_{2\phi} \l\vert f_{t} - \wt{f}_{t, \ell} \r\vert_{2\phi} \leq 
2 \l\vert f_{t} \r\vert_{2\phi} \l\vert f_{t} -\wt f_{t, \ell} \r\vert_{2\phi}.
\end{align*}
By Assumption~\ref{factors}~\ref{assum:factors:one}, we have $\vert f_{t} \vert_{2\phi} < \infty$ provided that $\phi \le 4$, and also that $\vert f_{t} - \wt f_{t, \ell} \vert_{2\phi} \leq c_{0} \ell^{-a}$ with some $a > 2$. 
Hence, $\{ f_{t}^{2} - \E(f_t^2) \}$ is an $\mc L_{\phi}$-decomposable Bernoulli shift with some $a > 2$ such that
\begin{align*}
\l\vert f_{t}^{2} - \wt{f}_{t, \ell}^2 \r\vert_{\phi} \leq c_1 \ell^{-a}.
\end{align*}
Then, \eqref{eq:ff} follows from Lemma~S2.1 of \cite{aue2014}.
This, combined with Assumption~\ref{assum:cps} and the following observations,
\begin{align*}
\l\Vert \frac{1}{T} \sum_{t = 1}^T \mbf g_t \mbf g_t^\top - \bm\Sigma_G \r\Vert
\le \sum_{j = 0}^R (\tau_{j + 1} - \tau_j) \l\Vert \mbf A_j \l( \frac{1}{k_{j + 1} - k_j} \sum_{t = k_j + 1}^{k_{j + 1}} \mbf f_t \mbf f_t^\top - \bm\Sigma_F \r) \mbf A_j^\top \r\Vert
\\
\le \sum_{j = 0}^R (\tau_{j + 1} - \tau_j) \Vert \mbf A_j \Vert^2 \l\Vert \frac{1}{k_{j + 1} - k_j} \sum_{t = k_j + 1}^{k_{j + 1}} \mbf f_t \mbf f_t^\top - \bm\Sigma_F \r\Vert,
\end{align*}
concludes the proof.
\end{proof}

\begin{lem}
\label{lem:Phi} Let Assumptions~\ref{factors}--\ref{assum:cps} hold, and denote
by $\bm\Phi \in \mathbb{R}^{r\times r}$ the diagonal matrix containing the
eigenvalues of $\bm\Sigma_{\Lambda }^{1/2}\bm\Sigma_{G}\bm\Sigma_{\Lambda
}^{1/2}$ on its diagonal. Then, we have 
\begin{equation*}
\E\l( \Vert \bm\Phi_{NT}-\bm\Phi \Vert^{4 \rho + \epsilon} \r) \le c_0 C_{NT}^{- 4\rho - \epsilon},
\end{equation*}
where $\rho$ is as in Assumptions~\ref{factors}, \ref{idiosyncratic} and~\ref{depFE}.
\end{lem}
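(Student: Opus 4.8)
The plan is to sandwich the sample eigenvalues $\bm\Phi_{NT}$ between the population eigenvalues $\bm\Phi$ and the eigenvalues of the \emph{noise-free} matrix, and to apply Weyl's inequality twice. Write $\mathbf{S}=(NT)^{-1}\mathbf{X}\mathbf{X}^{\top}$ and, using $\mathbf{X}=\mathbf{G}\bm\Lambda^{\top}+\mathbf{E}$, let $\mathbf{S}_0=(NT)^{-1}\mathbf{G}\bm\Lambda^{\top}\bm\Lambda\mathbf{G}^{\top}$ be its signal part, with $\bm\Phi_0$ collecting the $r$ largest eigenvalues of $\mathbf{S}_0$. Setting $\mathbf{A}=(NT)^{-1/2}\mathbf{G}(\bm\Lambda^{\top}\bm\Lambda)^{1/2}$ and abbreviating $\mathbf{P}=N^{-1}\bm\Lambda^{\top}\bm\Lambda$ and $\mathbf{Q}=T^{-1}\mathbf{G}^{\top}\mathbf{G}$, we have $\mathbf{S}_0=\mathbf{A}\mathbf{A}^{\top}$ while $\mathbf{A}^{\top}\mathbf{A}=\mathbf{P}^{1/2}\mathbf{Q}\mathbf{P}^{1/2}=:\mathbf{W}$ is $r\times r$. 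Since $\mathbf{A}\mathbf{A}^{\top}$ and $\mathbf{A}^{\top}\mathbf{A}$ share the same nonzero eigenvalues and $\mathbf{S}_0$ has rank at most $r$, the $r$ largest eigenvalues of $\mathbf{S}_0$ are exactly those of $\mathbf{W}$. Weyl's inequality (with both families ordered decreasingly, the matching being unambiguous by the distinctness in Assumption~\ref{assum:cps}~\ref{assum:cp:two}) then gives $\Vert\bm\Phi_{NT}-\bm\Phi_0\Vert\le\Vert\mathbf{S}-\mathbf{S}_0\Vert$ and $\Vert\bm\Phi_0-\bm\Phi\Vert\le\Vert\mathbf{W}-\bm\Sigma_{\Lambda}^{1/2}\bm\Sigma_G\bm\Sigma_{\Lambda}^{1/2}\Vert$, so it suffices to bound the two right-hand sides in $L_{4\rho+\epsilon}$.

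For the population comparison I would telescope
\[
\mathbf{W}-\bm\Sigma_{\Lambda}^{1/2}\bm\Sigma_G\bm\Sigma_{\Lambda}^{1/2}=(\mathbf{P}^{1/2}-\bm\Sigma_{\Lambda}^{1/2})\mathbf{Q}\mathbf{P}^{1/2}+\bm\Sigma_{\Lambda}^{1/2}(\mathbf{Q}-\bm\Sigma_G)\mathbf{P}^{1/2}+\bm\Sigma_{\Lambda}^{1/2}\bm\Sigma_G(\mathbf{P}^{1/2}-\bm\Sigma_{\Lambda}^{1/2}).
\]
As $\bm\Sigma_{\Lambda}\succ 0$, the matrix square root is locally Lipschitz, so $\Vert\mathbf{P}^{1/2}-\bm\Sigma_{\Lambda}^{1/2}\Vert\le c_0\Vert\mathbf{P}-\bm\Sigma_{\Lambda}\Vert\le c_0 N^{-1/2}$ \emph{deterministically} by Assumption~\ref{loadings}~\ref{assum:loadings:two}, while $\mathbf{P}$ and $\bm\Sigma_G$ are bounded. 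The only stochastic factor, $\Vert\mathbf{Q}-\bm\Sigma_G\Vert$, is handled by the $L_{4\rho+\epsilon}$ analogue of Lemma~\ref{lem:gg}: its proof extracts merely a second moment, but the same reduction to $\mathsf{Vech}(\mathbf{f}_t\mathbf{f}_t^{\top})$ together with the Bernoulli-shift moment inequality (Lemma~S2.1 of \cite{aue2014}) yields $\E\Vert\mathbf{Q}-\bm\Sigma_G\Vert^{4\rho+\epsilon}\le c_0 T^{-(2\rho+\epsilon/2)}$, which is legitimate because $\nu\ge 8\rho+\epsilon$ makes $\mathbf{f}_t\mathbf{f}_t^{\top}$ an $\mathcal{L}_{4\rho+\epsilon}$ sequence once $\epsilon$ is taken small enough. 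Applying Minkowski in the telescoped identity then gives $(\E\Vert\mathbf{W}-\bm\Sigma_{\Lambda}^{1/2}\bm\Sigma_G\bm\Sigma_{\Lambda}^{1/2}\Vert^{4\rho+\epsilon})^{1/(4\rho+\epsilon)}\le c_0(N^{-1/2}+T^{-1/2})\le 2c_0 C_{NT}^{-1}$.

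For the noise block $\mathbf{S}-\mathbf{S}_0=(NT)^{-1}(\mathbf{G}\bm\Lambda^{\top}\mathbf{E}^{\top}+\mathbf{E}\bm\Lambda\mathbf{G}^{\top}+\mathbf{E}\mathbf{E}^{\top})$ I would bound each piece in spectral norm. The two (rank-$\le r$) cross terms satisfy $(NT)^{-1}\Vert\mathbf{G}\bm\Lambda^{\top}\mathbf{E}^{\top}\Vert\le(NT)^{-1}\Vert\mathbf{G}\Vert\,\Vert\bm\Lambda^{\top}\mathbf{E}^{\top}\Vert$, where $\Vert\mathbf{G}\Vert^{2}=T\Vert\mathbf{Q}\Vert$ is of order $T$ by Lemma~\ref{lem:gg}, and $\Vert\bm\Lambda^{\top}\mathbf{E}^{\top}\Vert^{2}\le\sum_{t}\Vert\sum_i\bm\lambda_i e_{i,t}\Vert^{2}$ is controlled through Assumption~\ref{idiosyncratic}~\ref{assum:idiosyncratic:four}, which supplies $8+\epsilon\ge 4\rho+\epsilon$ moments; combining the two factors by Hölder yields a contribution of order $N^{-1/2}\le C_{NT}^{-1}$ in $L_{4\rho+\epsilon}$. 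For the purely idiosyncratic block I would split $\mathbf{E}\mathbf{E}^{\top}$ into its mean $[\sum_i\E(e_{i,s}e_{i,t})]_{s,t}=N[\gamma_{s,t}]$, whose spectral norm is $\le c_0 N$ by the absolute summability in Assumption~\ref{idiosyncratic}~\ref{assum:idiosyncratic:two} so that after the factor $(NT)^{-1}$ it is $O(T^{-1})$, and its fluctuation $[\sum_i(e_{i,s}e_{i,t}-\E(e_{i,s}e_{i,t}))]_{s,t}$, whose spectral norm I bound by the Frobenius norm and control via Assumption~\ref{idiosyncratic}~\ref{assum:idiosyncratic:three}, giving order $N^{-1/2}$; both are $O(C_{NT}^{-1})$.

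Assembling the two Weyl bounds by Minkowski delivers $\E\Vert\bm\Phi_{NT}-\bm\Phi\Vert^{4\rho+\epsilon}\le c_0 C_{NT}^{-4\rho-\epsilon}$. I expect the main obstacle to be the moment bookkeeping of the noise block: every stochastic factor must be controlled in the \emph{same} $L_{4\rho+\epsilon}$ norm and shown to land at rate $C_{NT}^{-1}$, which is exactly where the strengthened conditions of Assumption~\ref{idiosyncratic}~\ref{assum:idiosyncratic:three}--\ref{assum:idiosyncratic:four} and Assumption~\ref{depFE} are needed. The most delicate term is the idiosyncratic fluctuation $(NT)^{-1}[\sum_i(e_{i,s}e_{i,t}-\E(e_{i,s}e_{i,t}))]_{s,t}$; its favourable $N^{-1/2}$ rate—comfortably below $C_{NT}^{-1}$ under Assumption~\ref{assum:nt}—provides the slack required to absorb it into the overall bound while keeping the argument self-contained and free of any circular dependence on the consistency of the estimated factors.
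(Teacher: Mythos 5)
Your overall architecture (a two-sided Weyl sandwich between the sample eigenvalues, the eigenvalues of the noise-free Gram matrix, and the population eigenvalues, with the population block handled by an $L_{4\rho+\epsilon}$ strengthening of Lemma~\ref{lem:gg} and the Lipschitz property of the matrix square root) is the same as the paper's, and your signal/population step is in fact \emph{more} careful than the paper's own, which invokes the second-moment Lemma~\ref{lem:gg} and concludes only an $O_P(C_{NT}^{-1})$ bound where a $4\rho+\epsilon$ moment bound is what the lemma asserts. The genuine gap is in your noise block, and it is fatal for $\rho=2$. Because you work with the $T\times T$ matrix $\mathbf{S}=(NT)^{-1}\mathbf{X}\mathbf{X}^{\top}$, your noise terms aggregate over the cross-section, so you are forced to control them through Assumption~\ref{idiosyncratic}~\ref{assum:idiosyncratic:three} and~\ref{assum:idiosyncratic:four} -- and neither of these conditions scales with $\rho$: \ref{assum:idiosyncratic:three} supplies exactly $4+\epsilon$ moments of $\sum_{i}(e_{i,s}e_{i,t}-\gamma_{s,t})$, and \ref{assum:idiosyncratic:four} exactly $8+\epsilon$ moments of $\Vert\sum_i \bm\lambda_i e_{i,t}\Vert$. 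For $\rho=2$ the lemma demands an $L_{8+\epsilon}$ bound on every noise term. Your idiosyncratic fluctuation then needs $8+\epsilon$ moments of $\zeta_{s,t}$, which \ref{assum:idiosyncratic:three} does not provide. Your cross term needs H\"{o}lder on the product $\Vert\mathbf{G}\Vert\cdot\Vert\bm\Lambda^{\top}\mathbf{E}^{\top}\Vert$ with exponents $1/p+1/q=1$: since Assumption~\ref{factors}~\ref{assum:factors:one} gives $\nu\ge 16+\epsilon$, the factor $\Vert\mathbf{G}\Vert$ forces $(8+\epsilon)p\le 16+\epsilon$, hence $p<2$, hence $q>2$, hence you need more than $16$ moments of $\Vert\sum_i\bm\lambda_i e_{i,t}\Vert$ -- roughly double what \ref{assum:idiosyncratic:four} provides. (For $\rho=1$ the same bookkeeping does close, with small adjustments of $\epsilon$, so your proof is sound in that case; but $\rho=2$ is exactly the case needed for Theorem~\ref{thm:consistency}~\ref{thm:consistency:two}.)

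The paper sidesteps both failures by passing to the $N\times N$ matrix $\mathbf{X}^{\top}\mathbf{X}$, which has the same nonzero eigenvalues but whose noise entries are sums over \emph{time}. This makes available the two conditions that are parameterised by $\rho$: Assumption~\ref{idiosyncratic}~\ref{assum:idiosyncratic:six} for the entries of $\mathbf{E}^{\top}\mathbf{E}$, and Assumption~\ref{depFE}~\ref{assum:depFE:three} for the entries of $\bm\Lambda\mathbf{G}^{\top}\mathbf{E}$, each giving $4\rho+\epsilon$ moments at rate $T^{2\rho+\epsilon/2}$ directly on the coupled, time-summed quantities, so that no H\"{o}lder splitting of products of borderline-integrable factors is ever required. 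You correctly anticipate that "moment bookkeeping" is the main obstacle and that the strengthened assumptions are "exactly where" they are needed, but your argument never actually invokes \ref{assum:idiosyncratic:six} or \ref{assum:depFE:three} -- and these are precisely the conditions your decomposition cannot reach, because it aggregates the noise in the wrong direction. The repair is mechanical: transpose, i.e.\ replace $\mathbf{S}$ and $\mathbf{S}_0$ by their $N\times N$ counterparts, bound the noise blocks entrywise in Frobenius norm via \ref{assum:idiosyncratic:six} and \ref{assum:depFE:three}, and keep your (correct) treatment of the population block.
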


\begin{proof}
The proof follows from standard arguments, which we briefly summarise. 
We note that the leading $r$ eigenvalues of $(NT)^{-1}\mathbf{X}\mathbf{X}^\top$ are
identical to those of $(NT)^{-1}\mathbf{X}^\top\mathbf{X}$, and
\[
\mathbf{X}^\top\mathbf{X}= \bm\Lambda \mathbf{G}^\top \mathbf{G} \bm\Lambda^\top + \mathbf{E}^\top \mathbf{E} + 
\bm\Lambda \mathbf{G}^\top \mathbf{E} + \mathbf{E}^\top \mathbf{G} \bm\Lambda^\top.
\]%
Let $\Lambda_{j}\left( \mathbf{A}\right) $ denote the $j$-th
eigenvalue, sorted in descending order, of a matrix $\mathbf{A}$. 
Then,
\begin{align*}
\E\l( \Vert \bm\Phi_{NT}-\bm\Phi \Vert^{4\rho + \epsilon} \r) &\le \E\l[ \l( \sum_{j = 1}^r \left\vert \Lambda_{j}\left( \frac{1}{NT} \mathbf{X}^\top\mathbf{X} \right)
-\Lambda_{j}(\bm\Phi) \right\vert^2 \r)^{2\rho + \epsilon/2} \r]
\\
&\le r^{2\rho + \epsilon/2} \E\l( \sum_{j = 1}^r \left\vert \Lambda_{j}\left( \frac{1}{NT} \mathbf{X}^\top\mathbf{X} \right)
-\Lambda_{j}(\bm\Phi) \right\vert^{4\rho + \epsilon} \r).
\end{align*}
By Weyl's inequality, we have%
\begin{eqnarray*}
&&\left\vert \Lambda_{j}\left( \frac{1}{NT} \mathbf{X}^\top\mathbf{X} \right)
-\Lambda_{j}\left( \frac{1}{NT} \bm\Lambda \mathbf{G}^\top \mathbf{G} \bm\Lambda^\top \right) \right\vert^{4\rho + \epsilon}  \\
&=&\left\vert \Lambda_{j}\left( \frac{1}{NT} \mathbf{X}^\top\mathbf{X} \right)
-\Lambda_{j}\left( \frac{1}{NT} \mathbf{G}^\top \mathbf{G} \bm\Lambda^\top \bm\Lambda \right) \right\vert^{4\rho + \epsilon}  \\
&\leq & 2\l( \Lambda_{\max }\left( \frac{1}{NT} \mathbf{E}^\top \mathbf{E} \right) \r)^{4\rho + \epsilon}
+ 2\l( \Lambda_{\max }\left( \frac{1}{NT}\left( \bm\Lambda \mathbf{G}^\top \mathbf{E}
+ \mathbf{E}^\top \mathbf{G} \bm\Lambda^\top \right) \right) \r)^{4\rho + \epsilon}.
\end{eqnarray*}%
Further%
\[
\Lambda_{\max }\left( \frac{1}{NT} \mathbf{E}^\top \mathbf{E} \right) \leq
\Lambda_{\max }\left( \frac{1}{NT}\mathsf{E}\left(  \mathbf{E}^\top \mathbf{E} \right) \right) +
\left\Vert \frac{1}{NT}\left( \mathbf{E}^\top \mathbf{E} -\mathsf{E}\left(\mathbf{E}^\top \mathbf{E} \right) \right) \right\Vert.
\]%
Using Assumption~\ref{idiosyncratic}~\ref{assum:idiosyncratic:five},
\[
\Lambda_{\max }\left( \frac{1}{NT}\mathsf{E}\left(  \mathbf{E}^\top \mathbf{E} \right) \right) \leq \frac{1}{NT}\max_{1\leq i\leq
N}\sum_{j=1}^{N}\sum_{s=1}^{T}\left\vert \mathsf{E}\left(
e_{i,t}e_{j,t}\right) \right\vert \leq c_{0}N^{-1}.
\]%
Similarly, on account of Assumption~\ref{idiosyncratic}~\ref{assum:idiosyncratic:six},
\begin{eqnarray*}
&&\mathsf{E}\left( \left\Vert \frac{1}{NT}\left( \mathbf{E}^\top \mathbf{E} -\mathsf{E}%
\left( \mathbf{E}^\top \mathbf{E} \right) \right) \right\Vert_{F}^{4\rho + \epsilon} \right) \\
&=& 
\E\l\{ \l[ \frac{1}{N^2} \sum_{i, j = 1}^N \l\vert \frac{1}{T} \sum_{t = 1}^T \l(e_{i, t} e_{j, t} - \E(e_{i, t} e_{j, t}) \r) \r\vert^2 \r]^{2\rho + \epsilon/2} \r\}
\\
&\le& \E\l[ \frac{1}{N^2} \sum_{i, j = 1}^N \l\vert \frac{1}{T} \sum_{t = 1}^T \l(e_{i, t} e_{j, t} - \E(e_{i, t} e_{j, t}) \r) \r\vert^{4\rho + \epsilon} \r]
\le c_0 T^{-2\rho - \epsilon/2},
\end{eqnarray*}%
and therefore%
\[
\l( \Lambda_{\max }\left( \frac{1}{NT} \mathbf{E}^\top \mathbf{E} \right) \r)^{4\rho + \epsilon} \le c_0 C_{N, T}^{-4\rho - \epsilon}.
\]%
Besides, by Assumption~\ref{depFE}~\ref{assum:depFE:three},
\begin{eqnarray*}
&&\mathsf{E}\left( \left\Vert \frac{1}{NT} \bm\Lambda \mathbf{G}^\top \mathbf{E} \right\Vert_{F}^{4\rho + \epsilon} \right) \\
&=& \E\l\{ \l[ \frac{1}{N^2} \sum_{i, j = 1}^N \l( \frac{1}{T} \sum_{t = 1}^T \bm\lambda_i^\top \mathbf{g}_t e_{j, t} \r)^2 \r]^{2\rho + \epsilon/2} \r\}
\\
&\le& \frac{1}{N^2} \sum_{i, j = 1}^N \E\l[ \l( \frac{1}{T} \sum_{t = 1}^T \bm\lambda_i^\top \mathbf{g}_t e_{j, t} \r)^{4\rho + \epsilon} \r] \le c_0 T^{-2\rho- \epsilon/2}
\end{eqnarray*}%
and $\E(\Vert (NT)^{-1} \mathbf{E}^\top \mathbf{G} \bm\Lambda^\top \Vert_F^{4\rho + \epsilon})$ is similarly bounded.
Finally, denoting by $\mbf B = \bm\Sigma_\Lambda^{1/2} \bm\Sigma_G \bm\Sigma_\Lambda^{1/2}$ and
$\mbf B_{NT} = (NT)^{-1} (\bm\Lambda^\top\bm\Lambda)^{1/2} ( \mbf G^\top \mbf G ) (\bm\Lambda^\top\bm\Lambda)^{1/2}$,
we have $\Vert \mbf B_{NT} - \mbf B \Vert = O_P(C_{NT}^{-1})$ by Assumption~\ref{loadings}~\ref{assum:loadings:two} and Lemma~\ref{lem:gg}.
The desired result now follows from putting all the bounds together.
\end{proof}

\citet{bai03} proves that $T^{-1} \sum_{t = 1}^{T} \Vert \widehat{\mathbf{g}}%
_{t} - \mathbf{H}^\top \mathbf{g}_{t} \Vert^2$ and $T^{-1} \Vert \sum_{t =
1}^{T} \mathbf{g}_t (\widehat{\mathbf{g}}_{t} - \mathbf{H}^\top \mathbf{g}%
_{t})^\top \Vert$ are bounded in probability. We report two results of
independent interest by deriving the upper bounds on the two terms in $\mathcal{L}_{\delta}$-norm for some $\delta \geq 1$.

\begin{lem}
\label{moment-1} Suppose that Assumptions~\ref{factors} and~%
\ref{idiosyncratic} hold with $\rho = 1$. Then it follows that 
\begin{equation*}
\mathsf{E}\left( \left\Vert \sum_{t=1}^{T}\left( \widehat{\mathbf{g}}_{t}-%
\mathbf{H}^{\top }\mathbf{g}_{t}\right) \left( \widehat{\mathbf{g}}_{t}-%
\mathbf{H}^{\top }\mathbf{g}_{t}\right)^{\top}\right\Vert^{\delta
}\right) \leq c_{0}\left( TC_{NT}^{-2}\right)^{\delta }
\end{equation*}%
for all $1\leq \delta \leq 2 + \epsilon$, with $\mathbf{H}$ defined in~%
\eqref{eq:H}.
\end{lem}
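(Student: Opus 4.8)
The plan is to start from the representation in~\eqref{eq:bai}, which writes $\widehat{\mathbf{g}}_t - \mathbf{H}^\top\mathbf{g}_t = \bm\Phi_{NT}^{-1}(\mathbf{a}_t + \mathbf{b}_t + \mathbf{c}_t + \mathbf{d}_t)$, where $\mathbf{a}_t,\mathbf{b}_t,\mathbf{c}_t,\mathbf{d}_t$ denote the four partial sums built from $\gamma_{s,t},\zeta_{s,t},\eta_{s,t},\xi_{s,t}$, respectively. Since $\sum_{t}(\widehat{\mathbf{g}}_t - \mathbf{H}^\top\mathbf{g}_t)(\widehat{\mathbf{g}}_t - \mathbf{H}^\top\mathbf{g}_t)^\top$ is positive semidefinite, its norm is bounded by its trace, so it suffices to bound $\mathsf{E}[(\sum_t\Vert\widehat{\mathbf{g}}_t - \mathbf{H}^\top\mathbf{g}_t\Vert^2)^\delta]$. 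Factoring out the common $\bm\Phi_{NT}^{-1}$ and using $\Vert x_1+\cdots+x_4\Vert^2 \le 4\sum_k\Vert x_k\Vert^2$, the task reduces to bounding in $\mathcal{L}_\delta$-norm each of $\sum_t\Vert\mathbf{a}_t\Vert^2,\ldots,\sum_t\Vert\mathbf{d}_t\Vert^2$, up to the multiplicative factor $\Vert\bm\Phi_{NT}^{-1}\Vert^2$.

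Next I would treat the four sums separately. For the $\gamma$-term, the Cauchy--Schwarz inequality together with the deterministic identity $T^{-1}\sum_s\Vert\widehat{\mathbf{g}}_s\Vert^2 = r$ and the summability $\sum_t\vert\gamma_{s,t}\vert \le c_0$ from Assumption~\ref{idiosyncratic}~\ref{assum:idiosyncratic:two} yields a deterministic $O(1)$ bound for $\sum_t\Vert\mathbf{a}_t\Vert^2$, which lies within $(TC_{NT}^{-2})$ since $TC_{NT}^{-2} = \max(1,T/N) \ge 1$. For the $\zeta$-term, I would apply Minkowski's inequality to $\sum_{s,t}\zeta_{s,t}^2$ and insert the moment bound $\mathsf{E}\vert\zeta_{s,t}\vert^{2\delta} \le c_0 N^{-\delta}$, which follows from Assumption~\ref{idiosyncratic}~\ref{assum:idiosyncratic:three} after Lyapunov interpolation, giving $\mathsf{E}[(\sum_t\Vert\mathbf{b}_t\Vert^2)^\delta] \le c_0(T/N)^\delta$. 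The $\eta$- and $\xi$-terms both factor through the cross-sectional average $\mathbf{w}_t = N^{-1}\sum_i\bm\lambda_i e_{i,t}$: writing $\eta_{s,t} = \mathbf{g}_s^\top\mathbf{w}_t$ and $\xi_{s,t} = \mathbf{g}_t^\top\mathbf{w}_s$ gives $\mathbf{c}_t = (T^{-1}\widehat{\mathbf{G}}^\top\mathbf{G})\mathbf{w}_t$ and $\mathbf{d}_t = (T^{-1}\sum_s\widehat{\mathbf{g}}_s\mathbf{w}_s^\top)\mathbf{g}_t$, so that $\sum_t\Vert\mathbf{c}_t\Vert^2 \le \Vert T^{-1}\widehat{\mathbf{G}}^\top\mathbf{G}\Vert^2\sum_t\Vert\mathbf{w}_t\Vert^2$ and, analogously, $\sum_t\Vert\mathbf{d}_t\Vert^2 \le r\,(\sum_s\Vert\mathbf{w}_s\Vert^2)\,(T^{-1}\sum_t\Vert\mathbf{g}_t\Vert^2)$. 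The key input here is $\mathsf{E}\Vert\mathbf{w}_t\Vert^{2\delta} \le c_0 N^{-\delta}$ from Assumption~\ref{idiosyncratic}~\ref{assum:idiosyncratic:four}, which again gives an $\mathcal{L}_\delta$-bound of $c_0(T/N)^\delta$ for $\sum_t\Vert\mathbf{w}_t\Vert^2$. Since $T/N \le TC_{NT}^{-2}$, all four contributions fit the target rate.

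The main obstacle is that every term carries a random normalising matrix --- the common $\bm\Phi_{NT}^{-1}$, and, for the $\eta$- and $\xi$-terms, the rotation $T^{-1}\widehat{\mathbf{G}}^\top\mathbf{G}$ and the average $T^{-1}\sum_t\Vert\mathbf{g}_t\Vert^2$ --- none of which is independent of the idiosyncratic partial sums, so the desired $\mathcal{L}_\delta$-bound involves products of dependent random quantities with a tight moment budget (since $\delta$ can approach $2+\epsilon$, while the available moments reach only order $8+\epsilon$ and $4+\epsilon$). I would resolve this by an event-splitting argument: on the event $\{\Vert\bm\Phi_{NT}-\bm\Phi\Vert \le \underline\phi/2\} \cap \{\Vert T^{-1}\mathbf{G}^\top\mathbf{G}-\bm\Sigma_G\Vert \le c_0\}$, where $\underline\phi>0$ is the smallest eigenvalue of the positive definite limit $\bm\Phi$, the factors $\bm\Phi_{NT}^{-1}$, $T^{-1}\widehat{\mathbf{G}}^\top\mathbf{G}$ (via $\Vert T^{-1}\widehat{\mathbf{G}}^\top\mathbf{G}\Vert^2 \le \Lambda_{\max}(T^{-1}\mathbf{G}^\top\mathbf{G})$) and $T^{-1}\sum_t\Vert\mathbf{g}_t\Vert^2$ are all bounded by absolute constants, reducing each term to the idiosyncratic sums bounded above. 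Lemma~\ref{lem:Phi} and Lemma~\ref{lem:gg} guarantee that the complementary events have probability decaying polynomially in $C_{NT}$ at a rate that, combined with a crude H\"older bound on the integrand, renders their contribution negligible relative to $(TC_{NT}^{-2})^\delta$. Collecting the four bounds and recalling that the norm of the positive semidefinite sum is controlled by its trace then completes the proof.
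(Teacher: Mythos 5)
Your handling of the four component sums is, in substance, the paper's own computation: the paper also starts from~\eqref{eq:bai}, reduces the matrix norm to $\sum_t \Vert \widehat{\mathbf{g}}_t - \mathbf{H}^\top \mathbf{g}_t\Vert^2$, exploits the exact normalisation $\sum_s \Vert \widehat{\mathbf{g}}_s \Vert^2 = rT$ together with $\sum_s |\gamma_{s,t}| \le c_0$, and uses Lyapunov interpolation of Assumption~\ref{idiosyncratic}~\ref{assum:idiosyncratic:three}--\ref{assum:idiosyncratic:four} to get precisely the rates $O(1)$ and $(T/N)^\delta$ you state. The divergence is in how the random normalisers are removed, and this is where your argument has a genuine gap. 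The paper never event-splits: for the $\eta$- and $\xi$-terms it keeps the dependent factor inside the expectation and applies the Cauchy--Schwarz inequality there, e.g.
\begin{equation*}
\mathsf{E}\left( \left\vert \frac{1}{N}\sum_{i=1}^N \lambda_i e_{i,t}\right\vert^{2\delta} \left\vert \sum_{s=1}^T g_s^2 \right\vert^{\delta}\right) \le \left[ \mathsf{E}\left( \left\vert \frac{1}{N}\sum_{i=1}^N \lambda_i e_{i,t}\right\vert^{4\delta}\right)\right]^{1/2} \left[ \mathsf{E}\left( \left\vert \sum_{s=1}^T g_s^2\right\vert^{2\delta}\right)\right]^{1/2},
\end{equation*}
which is exactly what the $(8+\epsilon)$-th moment conditions in Assumptions~\ref{factors}~\ref{assum:factors:one} and~\ref{idiosyncratic}~\ref{assum:idiosyncratic:four} are designed for; the moment budget is not as tight as you fear, and no conditioning is needed. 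This is also what lets the lemma be proved under Assumptions~\ref{factors} and~\ref{idiosyncratic} alone, whereas your route invokes Lemmas~\ref{lem:Phi} and~\ref{lem:gg}, which require Assumptions~\ref{loadings}, \ref{depFE} and~\ref{assum:cps} --- hypotheses the lemma does not grant.

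The concrete failure is in closing the bad-event term. To apply H\"{o}lder to $\mathsf{E}[X^\delta \mathbb{I}_{\mathcal{E}^c}]$ you need a moment bound on the integrand, but every crude bound on $X = \Vert \sum_t (\widehat{\mathbf{g}}_t - \mathbf{H}^\top\mathbf{g}_t)(\widehat{\mathbf{g}}_t - \mathbf{H}^\top\mathbf{g}_t)^\top \Vert$ involves $\Vert \bm\Phi_{NT}^{-1}\Vert$ --- both through the factor you pull out of~\eqref{eq:bai} and through $\mathbf{H}$ itself, see~\eqref{eq:H} --- and no bound on $\mathsf{E}\Vert \bm\Phi_{NT}^{-1}\Vert^{p}$ for any $p$ is available under the paper's assumptions: Lemma~\ref{lem:Phi} controls the \emph{probability} that $\bm\Phi_{NT}$ is far from $\bm\Phi$, not the \emph{size} of $\Vert \bm\Phi_{NT}^{-1}\Vert$ on that event, where $\Lambda_{\min}(\bm\Phi_{NT})$ may be arbitrarily close to zero. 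So the ``crude H\"{o}lder bound on the integrand'' does not exist and the argument does not terminate. Moreover, even if the integrand were deterministically of order $T^\delta$, the probability decay you can actually cite is too slow: the target requires $\mathsf{P}(\mathcal{E}^c) \lesssim C_{NT}^{-2\delta}$, but Lemma~\ref{lem:gg} with Chebyshev gives only $O(T^{-1})$ for the event on $T^{-1}\mathbf{G}^\top\mathbf{G}$, which fails for all $\delta > 1$ when $N \gtrsim T$, and Lemma~\ref{lem:Phi} (with $\rho = 1$) gives $C_{NT}^{-4-\epsilon}$, which is insufficient as $\delta$ approaches $2+\epsilon$. (For fairness: the paper's own proof simply drops $\bm\Phi_{NT}^{-1}$ and $\mathbf{H}$ without comment, so it avoids rather than solves this issue; but your proof explicitly stakes its correctness on closing it, and the proposed closure fails.)
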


\begin{proof}

To simplify the notation, we set $r = d = 1$ and omit the matrix $\mbf H$. 
By convexity,
\begin{align*}
\E \l( \l\vert \frac{1}{T}\sum_{t = 1}^{T} \l( \widehat{g}_{t} - g_{t} \r)^{2} \r\vert^{\delta} \r) \leq \frac{1}{T} \sum_{t = 1}^{T} \E\l( \l\vert \widehat{g}_{t} - g_t \r\vert^{2\delta} \r).
\end{align*}
Using~\eqref{eq:bai},
\begin{align}
\frac{1}{T} \sum_{t = 1}^{T} \E\l( \l\vert \widehat{g}_{t} - g_t \r\vert^{2\delta} \r)
& \leq 
T^{-2\delta - 1} \sum_{t = 1}^{T} \E\l( \l\vert \sum_{s = 1}^{T} \widehat{g}_{s} \gamma_{s, t} \r\vert^{2\delta} \r)
+ T^{-2\delta - 1} \sum_{t = 1}^{T} \E\l( \l\vert \sum_{s = 1}^{T} \widehat{g}_{s} \zeta_{s, t} \r\vert^{2\delta} \r) 
\nn \\
& + T^{-2\delta -  1} \sum_{t = 1}^{T} \E\l( \l\vert \sum_{s = 1}^{T} \widehat{g}_{s} \eta_{s, t} \r\vert^{2\delta} \r) 
+ T^{-2\delta - 1} \sum_{t = 1}^{T} \E\l( \l\vert \sum_{s = 1}^{T}\widehat{g}_{s} \xi_{s, t} \r\vert^{2\delta}\r)
\nn \\
&=: T_1 + T_2 + T_3 + T_4.   
\label{bai03-1}
\end{align}

We now study each of these terms. 
By construction, $\sum_{s = 1}^{T} \widehat{g}_{s}^{2} = T$. 
Also by Assumption~\ref{idiosyncratic}~\ref{assum:idiosyncratic:two} and Lemma~1~(i) in \citet{baing02}, it follows that $\sum_{s = 1}^{T}\gamma_{s, t}^{2} \leq c_{0}$.
Therefore,
\begin{align}
T_1 
\leq 
\E\l( \sum_{t = 1}^{T} \l\vert \sum_{s = 1}^{T}
\widehat{g}_{s}^{2} \r\vert^{\delta} 
\l\vert \sum_{s = 1}^{T} \gamma_{s, t}^2 \r\vert^{\delta} \r) \le c_0 T^{-\delta}.
\label{a-1}
\end{align}
Next, from Assumption~\ref{idiosyncratic}~\ref{assum:idiosyncratic:three},
\begin{align*}
T^{2\delta + 1} \cdot T_2  
&\leq 
\sum_{t = 1}^{T} \E\l( \l\vert \l( \sum_{s = 1}^{T} \widehat{g}_{s}^{2} \r)^{1/2} \l( \sum_{s = 1}^{T} \zeta_{s, t}^{2} \r)^{1/2} \r\vert^{2\delta} \r) 
\\
&\leq T^{\delta} \sum_{t = 1}^{T} \E\l( \l\vert \sum_{s = 1}^{T}\zeta_{s, t}^{2} \r\vert^{\delta} \r) 
\leq T^{2\delta - 1} \sum_{t = 1}^{T} \sum_{s = 1}^{T} \E\l( \l\vert \zeta_{s, t} \r\vert^{2\delta} \r)
\leq T^{2\delta + 1} N^{-\delta},
\end{align*}
Therefore it follows that
\begin{align}
T_2
\le c_0 N^{-\delta}. 
\label{a-2}
\end{align}
Similarly, by the Cauchy-Schwartz inequality,
\begin{align*}
T^{2\delta + 1} \cdot T_3
&= \sum_{t = 1}^{T} \E\l( \l\vert \sum_{s = 1}^{T}\widehat{g}_{s} g_{s} \frac{1}{N} \sum_{i = 1}^{N}\lambda_{i} e_{i,t} \r\vert^{2\delta} \r) 
\\
&\leq \sum_{t = 1}^{T} \E\l( \l\vert \frac{1}{N}\sum_{i = 1}^{N}\lambda_{i} e_{i, t} \r\vert^{2\delta} \l\vert \sum_{s = 1}^{T}\widehat{g}_{s}^{2} \r\vert^{\delta} \l\vert \sum_{s = 1}^{T} g_{s}^{2} \r\vert^{\delta} \r)
\\
&= T^{\delta} \sum_{t = 1}^{T} \E\l( \l\vert \frac{1}{N} \sum_{i = 1}^{N}\lambda_{i} e_{i,t}\r\vert^{2\delta} \l\vert \sum_{s = 1}^{T}g_{s}^{2} \r\vert^{\delta} \r) 
\\
&\leq T^{\delta} \sum_{t = 1}^{T} \l[ \E \l( \l\vert \frac{1}{N} \sum_{i = 1}^{N} \lambda_{i}e_{i, t} \r\vert^{4\delta} \r) \r]^{1/2}
\l[ \E \l( \l\vert \sum_{s = 1}^{T} g_{s}^{2}\r\vert^{2\delta} \r) \r]^{1/2}.
\end{align*}
By Assumption~\ref{idiosyncratic}~\ref{assum:idiosyncratic:four},
\begin{align*}
\E\l( \l\vert \frac{1}{N} \sum_{i = 1}^{N}\lambda_{i} e_{i,t} \r\vert^{4\delta} \r) \leq c_{0}N^{-2\delta}.
\end{align*}
Also, from Assumption~\ref{factors}~\ref{assum:factors:one},
\begin{align*}
\E\l( \l\vert \sum_{s = 1}^{T} g_{s}^{2}\r\vert^{2\delta} \r) \leq T^{2\delta - 1} \sum_{s = 1}^{T} \E\l( \l\vert g_{s}\r\vert^{4\delta} \r) \leq
c_{0} T^{2\delta}.
\end{align*}
Putting all together, we have
\begin{align}
T_3
\le c_0 N^{-\delta}.  
\label{a-3}
\end{align}
Finally, we consider
\begin{align*}
T^{2\delta + 1} \cdot T_4
&\leq \sum_{t = 1}^{T} \E\l( \l\vert \sum_{s = 1}^{T}\widehat{g}_{s}^{2} \r\vert^{\delta} \l\vert \sum_{s = 1}^{T} \xi_{s, t}^{2} \r\vert^{\delta} \r) 
\\
&= T^{\delta} \sum_{t = 1}^{T} \E\l( \l\vert \sum_{s = 1}^{T}\xi_{s, t}^{2} \r\vert^{\delta}\r) 
\leq T^{2\delta -1}\sum_{t = 1}^{T} \sum_{s = 1}^{T} \E\l( \l\vert \xi_{s, t} \r\vert^{2\delta}\r).
\end{align*}
By Assumptions~\ref{idiosyncratic}~\ref{assum:idiosyncratic:four} and~\ref{factors}~\ref{assum:factors:one},
\begin{align*}
\E\l( \l\vert \xi_{s, t} \r\vert^{2\delta}\r) = \E\l( \l\vert \frac{1}{N} \sum_{i = 1}^{N} g_{t} \lambda_{i} e_{i,s} \r\vert^{2\delta} \r) 
\leq \l[ \E \l(\l\vert \frac{1}{N} \sum_{i = 1}^{N} \lambda_{i} e_{i,s} \r\vert^{4\delta}\r) \r]^{1/2}\l[ \E \l( \l\vert g_{t} \r\vert^{4\delta}\r) \r]^{1/2} \leq c_{0} N^{-\delta},
\end{align*}
Therefore it follows that
\begin{align}
T_4 
\le c_0 N^{-\delta}.
\label{a-4}
\end{align}
Putting together~\eqref{a-1}--\eqref{a-4} into~\eqref{bai03-1}, the desired result follows.
\end{proof}

\begin{lem}
\label{moment-2} Suppose that Assumptions~\ref{factors}--\ref{assum:cps} hold. Then it
holds that 
\begin{equation*}
\mathsf{E}\left( \left\Vert \sum_{t=1}^{T}\mathbf{g}_{t}\left( \widehat{\mathbf{g}}_{t}-\mathbf{H}^{\top }\mathbf{g}_{t}\right)^{\top}\right\Vert
^{\delta }\right) \leq c_{0}\left( TC_{NT}^{-2}\right)^{\delta },
\end{equation*}%
for all $1 \leq \delta \leq \rho + \epsilon$, with $\rho$ is as in Assumptions~\ref{factors}, \ref{idiosyncratic} and~\ref{depFE}.
\end{lem}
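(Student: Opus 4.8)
The plan is to mirror the route of Lemma~\ref{moment-1}: I set $r = d = 1$ and suppress $\mbf H$ for brevity, then substitute the identity~\eqref{eq:bai} into $\sum_{t=1}^T g_t(\wh g_t - g_t)$, which splits it as $\bm\Phi_{NT}^{-1}(S_1 + S_2 + S_3 + S_4)$, where $S_j = T^{-1}\sum_{t,s} g_t \wh g_s\, (\cdot)_{s,t}$ and $(\cdot)_{s,t}$ runs over $\gamma_{s,t}, \zeta_{s,t}, \eta_{s,t}, \xi_{s,t}$ in turn. Since $\bm\Phi_{NT} \to \bm\Phi$ with $\bm\Phi$ invertible (Lemma~\ref{lem:Phi}), the factor $\bm\Phi_{NT}^{-1}$ contributes $O_P(1)$ and is peeled off by H\"older against its moments from Lemma~\ref{lem:Phi}, so it suffices to show $\E(|S_j|^\delta) \le c_0(TC_{NT}^{-2})^\delta$ for each $j$. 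Throughout I exploit the normalisation $\sum_{s=1}^T \wh g_s^2 = T$ and the fact that $T^{-1}\sum_{t=1}^T g_t^2 = O_P(1)$ (via Lemma~\ref{lem:gg}), and I keep $\delta \le \rho + \epsilon$ so that the moments invoked from Assumption~\ref{depFE} (of order $2\rho+\epsilon$ or $4\rho+\epsilon$) are always available; I also note that the $\rho = 2$ version of the assumptions implies their $\rho = 1$ counterparts, so Lemma~\ref{moment-1} remains applicable.

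For $S_1$, Cauchy--Schwarz in $s$ with $\sum_s \wh g_s^2 = T$ gives $|S_1| \le T^{-1/2}(\sum_s a_s^2)^{1/2}$ with $a_s = \sum_t g_t\gamma_{s,t}$; since $\sum_t |\gamma_{s,t}| \le c_0$ by Assumption~\ref{idiosyncratic}~\ref{assum:idiosyncratic:two}, one gets $\E(a_s^2) \le c_0$ and hence $\E(|S_1|^\delta) \le c_0 = O((TC_{NT}^{-2})^\delta)$. For $S_2$ the same Cauchy--Schwarz step reduces matters to $b_s = \sum_t g_t\zeta_{s,t} = N^{-1}\sum_t g_t \sum_i(e_{i,t}e_{i,s} - \gamma_{s,t})$, which by Assumption~\ref{depFE}~\ref{assum:depFE:one} (with $a = 0$, $b = T$) satisfies $|b_s|_{2\rho+\epsilon} \le c_0(T/N)^{1/2}$, so $\E(\sum_s b_s^2) \le c_0 T^2/N$; a Jensen bound (for $\delta \le 2$) or a Minkowski-in-$\mathcal{L}_{\delta/2}$ bound (for $\delta > 2$) then yields $\E(|S_2|^\delta) \le c_0(T/N)^{\delta/2}$. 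For $S_3$ the point is the factorisation $\sum_t g_t\eta_{s,t} = g_s \cdot N^{-1}\sum_i\lambda_i\sum_t g_t e_{i,t}$, whence $S_3 = (T^{-1}\sum_s \wh g_s g_s)\cdot N^{-1}\sum_i\sum_t \lambda_i g_t e_{i,t}$; the first factor is $O_P(1)$ by Cauchy--Schwarz and the second is controlled by Assumption~\ref{depFE}~\ref{assum:depFE:two}, giving $\E(|S_3|^\delta) \le c_0(T/N)^{\delta/2}$. All three bounds are $O((TC_{NT}^{-2})^\delta)$, because $(T/N)^{1/2} \le T/\min(N,T) = TC_{NT}^{-2}$ and $1 \le TC_{NT}^{-2}$.

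The delicate term is $S_4$, and it is where I expect the real work: a blind Cauchy--Schwarz here delivers only the suboptimal rate $TN^{-1/2}$. Writing $\xi_{s,t} = g_t \cdot U_s$ with $U_s = N^{-1}\sum_i\lambda_i e_{i,s}$, I factor $S_4 = (T^{-1}\sum_t g_t^2)\sum_s \wh g_s U_s$ and then split $\wh g_s = g_s + (\wh g_s - g_s)$ inside the remaining sum. The main part $\sum_s g_s U_s = N^{-1}\sum_i\sum_s \lambda_i g_s e_{i,s}$ is again bounded at rate $(T/N)^{1/2}$ by Assumption~\ref{depFE}~\ref{assum:depFE:two}, while for the correction I apply Cauchy--Schwarz to obtain $|\sum_s(\wh g_s - g_s)U_s| \le (\sum_s(\wh g_s - g_s)^2)^{1/2}(\sum_s U_s^2)^{1/2}$, bound the first factor by $O_P((TC_{NT}^{-2})^{1/2})$ using Lemma~\ref{moment-1} and the second by $O_P((T/N)^{1/2})$ using Assumption~\ref{idiosyncratic}~\ref{assum:idiosyncratic:four}; their product is exactly $O_P(TC_{NT}^{-2})$. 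The main obstacle is precisely this $S_4$ estimate: extracting the sharp $TC_{NT}^{-2}$ rate rather than $TN^{-1/2}$ forces the decomposition of $\wh g_s$ and the reuse of the companion Lemma~\ref{moment-1}, and one must track the H\"older exponents carefully so that every invoked moment stays within the ranges guaranteed by Assumptions~\ref{idiosyncratic} and~\ref{depFE} under $\delta \le \rho + \epsilon$. Collecting the four bounds and reinstating $\bm\Phi_{NT}^{-1}$ gives the claim, with the general $r$ following by applying the argument entrywise to the $r \times r$ matrix.
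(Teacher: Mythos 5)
Your proposal is correct and takes essentially the same route as the paper's proof: the same expansion of $\sum_{t}\mathbf{g}_t(\widehat{\mathbf{g}}_t-\mathbf{H}^\top\mathbf{g}_t)^\top$ via~\eqref{eq:bai} into the four terms driven by $\gamma_{s,t}$, $\zeta_{s,t}$, $\eta_{s,t}$, $\xi_{s,t}$, the same assumptions invoked term by term, and an identical treatment of the critical $\xi$-term, where splitting $\widehat{g}_s = g_s + (\widehat{g}_s - g_s)$ and combining Lemma~\ref{moment-1} with Assumptions~\ref{depFE}~\ref{assum:depFE:two} and~\ref{idiosyncratic}~\ref{assum:idiosyncratic:four} yields the rate $TN^{-1/2}C_{NT}^{-1}\leq TC_{NT}^{-2}$, exactly matching the paper's bounds on $T_{4,1}$ and $T_{4,2}$. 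The only differences are cosmetic: you exploit the normalisation $\sum_{s}\widehat{g}_s^2 = T$ to avoid splitting $\widehat{g}_s$ in the first three terms (the paper performs the split in all four), and you peel off $\bm\Phi_{NT}^{-1}$ explicitly via Lemma~\ref{lem:Phi}, whereas the paper suppresses that factor.
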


\begin{proof}
Throughout, we frequently use that for $1 \le \delta \le 4 + \epsilon$, 
\begin{align}
\label{eq:g:delta}
\E\l( \l\Vert \sum_{s = 1}^{T} \mbf g_{s} \mbf g_s^\top \r\Vert^{\delta} \r) \le T^{\delta - 1} \sum_{s = 1}^T \E\l( \Vert \mbf g_s \Vert^{2\delta} \r) \le c_0 T^\delta,
\end{align}
from Assumption~\ref{factors}~\ref{assum:factors:one}.
Thanks to~\eqref{eq:bai}, we have
\begin{align}
\E\l(\l\Vert \sum_{t = 1}^{T} \mbf g_{t}\l( \widehat{\mbf g}_{t} - \mbf H^\top \mbf g_{t} \r)^\top \r\Vert^{\delta}  \r)
&\leq 
\E\l( \l\Vert \frac{1}{T} \sum_{s, t = 1}^{T} \widehat{\mbf g}_{s} \mbf g_{t}^\top \gamma_{s, t} \r\Vert^{\delta} \r) +
\E\l( \l\Vert \frac{1}{T} \sum_{s, t = 1}^{T} \widehat{\mbf g}_{s} \mbf g_{t}^\top \zeta_{s, t} \r\Vert^{\delta} \r) 
\nn \\
& + \E\l( \l\Vert \frac{1}{T} \sum_{s, t = 1}^{T}\widehat{\mbf g}_{s} \mbf g_{t}^\top \eta_{s, t} \r\Vert^{\delta} \r) 
+ \E\l( \l\Vert \frac{1}{T} \sum_{s, t = 1}^{T}\widehat{\mbf g}_{s} \mbf g_{t}^\top \xi_{s, t} \r\Vert^{\delta} \r)
\nn \\
&=: T_1 + T_2 + T_3 + T_4. 
\label{dec-lemma-2} 
\end{align}
We begin with $T_4$ which is bounded as
\begin{align*}
T_4 \le 
T^{-\delta} \E\l( \l\Vert \sum_{s, t = 1}^T (\wh{\mbf g}_s - \mbf H^\top \mbf g_s) \mbf g_t^\top \xi_{s, t} \r\Vert^\delta \r)
+ T^{-\delta} \E\l( \l\Vert \mbf H^\top \sum_{s, t = 1}^T \mbf g_s \mbf g_t^\top \xi_{s, t} \r\Vert^\delta \r) =: T_{4, 1} + T_{4, 2}.
\end{align*}

By Assumptions~\ref{factors}, \ref{loadings} and~\ref{assum:cps} and Lemma~\ref%
{lem:Phi}, we have 
\begin{align}
\E\l( \l\Vert \mbf H \r\Vert^{p\delta} \r) 
&\leq 
\l\Vert \frac{\bm\Lambda^\top \bm\Lambda}{N} \r\Vert^{p\delta} \E\l( \l\Vert \frac{1}{\sqrt{T}} \widehat{\mbf G}\r\Vert^{p\delta} \l\Vert \frac{1}{\sqrt{T}} \mbf G \r\Vert^{p\delta} \Vert \bm\Phi_{NT}^{-1} \Vert^{p\delta} \r) 
\nn \\
&\leq c_{0} \l\{ \E\l[ \l( \frac{1}{T} \sum_{t = 1}^T \Vert \mbf g_t \Vert^2 \r)^{p\delta} \r] \; \E\l( \Vert \bm\Phi_{NT}^{-1} \Vert^{2p\delta} \r) \r\}^{1/2}
\nn \\
&\leq c_1 \l\{\frac{1}{T} \sum_{t = 1}^{T} \E\l( \l\Vert \mbf g_{t} \r\Vert^{2 p\delta} \r) \r\}^{1/2} \le c_2
\label{eq:H:bound}
\end{align}
for $1 \le p \le 2$. 
From this, we obtain 
\begin{align*}
T^\delta \cdot T_{4, 2} &\le \E\l( \Vert \mbf H^\top \Vert^\delta \l\Vert \sum_{s, t = 1}^\top \mbf g_s \mbf g_t^\top \xi_{s, t} \r\Vert^\delta \r)
\\
& \le \l[ \E\l( \Vert \mbf H^\top \Vert^{2\delta} \r) \r]^{1/2} \l[ \E\l( \l\Vert \sum_{s, t = 1}^\top \mbf g_s \mbf g_t^\top \xi_{s, t} \r\Vert^{2\delta} \r) \r]^{1/2}.
\end{align*} 
WLOG, we may set $r = d = 1$ for notational simplicity. 
Then by H\"{o}lder's inequality,
\begin{align*}
\E\l( \l\vert \sum_{s, t = 1}^T g_s g_t \xi_{s, t} \r\vert^{2\delta} \r)
&= \E\l[ \l( \sum_{t = 1}^T g_t^2 \r)^{2\delta} \l\vert \frac{1}{N} \sum_{s = 1}^T \sum_{i = 1}^N \lambda_i g_s e_{i, s} \r\vert^{2\delta} \r]
\\
&\le \l( \E\l[ \l( \sum_{t = 1}^T g_t^2 \r)^{4\delta} \r] \r)^{1/2} \l( \E\l[ \l\vert \frac{1}{N} \sum_{s = 1}^T \sum_{i = 1}^N \lambda_i g_s e_{i, s} \r\vert^{4\delta} \r] \r)^{1/2},
\end{align*}
where by~\eqref{eq:g:delta},
\begin{align*}
\l( \E\l[ \l( \sum_{t = 1}^T g_t^2 \r)^{4\delta} \r] \r)^{1/2} \le \l[ T^{4\delta - 1} \sum_{t = 1}^T \E( \vert g_t \vert^{8\delta}) \r]^{1/2} \le c_0 T^{2\delta},
\end{align*}
while by Assumption~\ref{depFE}~\ref{assum:depFE:two},
\begin{align*}
\l( \E\l[ \l\vert \frac{1}{N} \sum_{s = 1}^T \sum_{i = 1}^N \lambda_i g_s e_{i, s} \r\vert^{4\delta} \r] \r)^{1/2}
\le c_0 (T^{1/2} N^{-1/2})^{2\delta}.
\end{align*}
Altogether, this yields
\begin{align}
T_{4, 2} \le c_0 T^{\delta/2} N^{-\delta/2}.
\label{b-4-2}
\end{align}
Next, we note that by H\"{o}lder's inequality with some $1 < p \le 4$,
\begin{align*}
T_{4, 1} &=
T^{-\delta} \E\l( \l\vert \sum_{t = 1}^{T} g_{t}^{2} \cdot \sum_{s = 1}^{T} \l( \widehat{g}_{s} - g_{s} \r) \frac{1}{N} \sum_{i = 1}^{N} \lambda_{i} e_{i, s} \r\vert^{\delta} \r)
\\
&\leq 
T^{-\delta} \l[ \E\l( \l\vert \sum_{t = 1}^{T} g_{t}^{2} \r\vert^{p\delta} \r) \r]^{\frac{1}{p}} \l[ \E\l( \l\vert \sum_{s = 1}^{T} \l( \widehat{g}_{s} - g_{s} \r) \frac{1}{N} \sum_{i = 1}^{N} \lambda_{i} e_{i, s}\r\vert^{\frac{p\delta}{p-1}}\r) \r]^{\frac{p-1}{p}}
\\
&\leq c_0 \l[ \E\l( \l\vert \sum_{s = 1}^{T} \l( \widehat{g}_{s} - g_{s} \r) \frac{1}{N} \sum_{i = 1}^{N} \lambda_{i} e_{i, s}\r\vert^{\frac{p\delta}{p-1}}\r) \r]^{\frac{p-1}{p}},
\end{align*}
where the last passage follows from~\eqref{eq:g:delta}.
Again applying H\"{o}lder's inequality,
\begin{align*}
& \E\l( \l\vert \sum_{s = 1}^{T} \l( \widehat{g}_{s} - g_{s} \r) \l( \frac{1}{N} \sum_{i = 1}^{N} \lambda_{i} e_{i, s} \r) \r\vert^{\frac{p\delta}{p-1}}\r)
\\
&\leq 
\E\l( \l\vert \sum_{s = 1}^{T} \l( \widehat{g}_{s} - g_{s}\r)^{2} \r\vert^{\frac{p\delta}{2(p - 1)}} \l\vert \sum_{s = 1}^{T} \l( \frac{1}{N} \sum_{i = 1}^{N} \lambda_{i} e_{i, s} \r)^{2} \r\vert^{\frac{p\delta}{2(p - 1)}} \r)  
\\
&\leq \l[ \E\l( \l\vert \sum_{s = 1}^{T} \l(  \widehat{g}_{s} - g_{s}\r)^{2} \r\vert^{\frac{pq\delta}{2(p - 1)}} \r) \r]^{\frac{1}{q}} \l[ \E\l( \l\vert \sum_{s = 1}^{T} \l( \frac{1}{N}\sum_{i = 1}^{N} \lambda_{i} e_{i, s} \r)^{2} \r\vert^{\frac{pq\delta}{2(p - 1)(q - 1)}} \r) \r]^{\frac{q - 1}{q}}.
\end{align*}
Setting $p = 4$ and $q = 3/2$, we have $pq\delta/(p - 1) = 2\delta$ such that by Lemma~\ref{moment-1},
\begin{align*}
\E\l( \l\vert \sum_{s = 1}^{T} \l(  \widehat{g}_{s} - g_{s}\r)^{2} \r\vert^{\frac{pq\delta}{2(p - 1)}} \r) \leq c_{0} \l( T C_{NT}^{-2} \r)^{\delta}.
\end{align*}
Besides, since the choices of $p$ and $q$ lead to 
\begin{align*}
\frac{pq\delta}{2(p - 1)(q - 1)} = 2\delta,
\end{align*}
applying Assumption~\ref{idiosyncratic}~\ref{assum:idiosyncratic:four},
\begin{align*}
\E\l( \l\vert \sum_{s = 1}^{T} \l( \frac{1}{N}\sum_{i = 1}^{N} \lambda_{i} e_{i, s} \r)^{2} \r\vert^{2\delta} \r) 
& \leq 
T^{2\delta - 1} \sum_{s = 1}^{T}\E\l( \l\vert \frac{1}{N} \sum_{i = 1}^{N} \lambda_{i} e_{i, s} \r\vert^{4\delta} \r)
\\
& \leq c_{0}T^{2\delta} N^{- 2\delta}.
\end{align*}
From the above arguments and~\eqref{eq:g:delta}, it follows that
\begin{align}
\label{b-4-1}
T_{4, 1} \le c_0 T^{\delta} N^{-\delta/2} C_{NT}^{-\delta}. 
\end{align}
Collecting the bounds on $T_{4, 1}$ and $T_{4, 2}$, we have
\begin{align}
T_4 \leq c_{0} T^{\delta} N^{-\delta/2} C_{NT}^{-\delta}. 
\label{term-4}
\end{align}

For the rest of the terms, we may proceed analogously.
From~\eqref{eq:H:bound}, for simplicity, we check the steps by setting $r = d = 1$ and omitting $\mbf H$.
Note that
\begin{align*}
T_1 \leq T^{-\delta} \E\l( \l\vert
\sum_{s, t = 1}^{T}\l( \widehat{g}_{s} - g_{s}\r) g_{t} \gamma_{s, t} \r\vert^{\delta} \r) + T^{-\delta} \E\l( \l\vert \sum_{s, t = 1}^{T} g_{s} g_{t} \gamma_{s, t} \r\vert^{\delta} \r) =: T_{1, 1} + T_{1, 2}.
\end{align*}
Using the Cauchy-Schwartz inequality twice,
\begin{align*}
T_{1, 1} &\le T^{-\delta} \E\l( \l\vert \sum_{s = 1}^{T}\l( \widehat{g}_{s} - g_{s} \r)^{2} \r\vert^{\delta/2} \l\vert \sum_{s = 1}^{T} \l( \sum_{t = 1}^{T} g_{t} \gamma_{s, t} \r)^{2} \r\vert^{\delta /2} \r)
\\
&\leq T^{-\delta} \l[ \E\l( \l\vert \sum_{s = 1}^{T}\l( \widehat{g}_{s} - g_{s} \r)^{2} \r\vert^{\delta} \r) \r]^{1/2} \l[ \E\l( \l\vert \sum_{s = 1}^{T} \l( \sum_{t = 1}^{T} g_{t}\gamma_{s, t} \r)^{2} \r\vert^{\delta} \r) \r]^{1/2} 
\\
&\le c_0 T^{-\delta /2} C_{NT}^{-\delta} \l[ T^{\delta - 1} \sum_{s = 1}^{T} \E \l( \l\vert \sum_{t = 1}^{T} g_{t} \gamma_{s, t} \r\vert^{2\delta} \r) \r]^{1/2},
\end{align*}
having used Lemma~\ref{moment-1} in the last passage. 
Let $n_{0} = \lceil 4\delta \rceil$ and recall that, by Assumption~\ref{factors}~\ref{assum:factors:one}, $\E(\l\vert g_{t}\r\vert^{n_{0}}) < \infty$.
Using the $\mathcal{L}_{p}$-norm inequality, we have
\begin{align*}
\E\l( \l\vert \sum_{t = 1}^{T} g_{t} \gamma_{s, t}\r\vert^{2\delta} \r) &\leq \l[ \E\l( \l\vert \sum_{t = 1}^{T} g_{t} \gamma_{s, t} \r\vert^{n_{0}}\r) \r]^{2\delta /n_{0}}, \text{ \ and}
\\
\E\l( \l\vert \sum_{t = 1}^{T} g_{t} \gamma_{s, t}\r\vert^{n_{0}} \r) &=
\E\l( \l\vert \sum_{t_{1}, \ldots, t_{n_{0}} = 1}^{T} g_{t_{1}} \cdot \ldots \cdot g_{t_{n_{0}}} \cdot \gamma_{s, t_{1}} \cdot \ldots \cdot \gamma_{s, t_{n_{0}}} \r\vert \r)
\\
&\leq \sum_{t_{1}, \ldots, t_{n_{0}} = 1}^{T} 
\E\l( \prod\limits_{i=1}^{n_{0}} \l\vert g_{t_{i}}\r\vert \r) \prod\limits_{i = 1}^{n_{0}}\l\vert \gamma_{s, t_{i}} \r\vert 
\\
&\leq \max_{1 \leq t \leq T} \l\vert g_{t_{i}} \r\vert_{n_{0}} \cdot \l( \sum_{s = 1}^{T} \l\vert \gamma_{s, t} \r\vert \r)^{n_{0}} \leq c_{0},
\end{align*}
where the penultimate passage follows from H\"{o}lder's inequality, and the last passage from Assumption~\ref{idiosyncratic}~\ref{assum:idiosyncratic:two}.
The above entails that $\E(\vert \sum_{t = 1}^{T}g_{t}\gamma_{s, t} \vert^{2\delta}) \leq c_{0}$, and therefore $T_{1, 1} \le c_0 C_{NT}^{-\delta}$.
Similarly, we have from~\eqref{eq:g:delta},
\begin{align*}
T_{1, 2} & \leq T^{-\delta} \E \l( \l\vert \sum_{s = 1}^{T}g_{s}^{2} \r\vert^{\delta /2} \l\vert \sum_{s = 1}^{T} \l( \sum_{t = 1}^{T} g_{t}\gamma_{s, t} \r)^{2} \r\vert^{\delta /2} \r)
\\
&\leq T^{-\delta} \l[ \E\l( \l\vert \sum_{s = 1}^{T}g_{s}^{2} \r\vert^{\delta} \r) \r]^{1/2} \l[  T^{\delta - 1} \sum_{s = 1}^{T} \E\l( \l\vert \sum_{t = 1}^{T} g_{t} \gamma_{s, t} \r\vert^{2\delta} \r) \r]^{1/2} 
\\
&\leq c_{0} T^{-\delta} \cdot T^{\delta / 2} \cdot T^{\delta / 2} = c_0,
\end{align*}
so that we finally have
\begin{align}
T_1 \le c_0. \label{term-1}
\end{align}

We now turn to studying 
\begin{align*}
T_2 \leq T^{-\delta} \E\l( \l\vert \sum_{s = 1}^{T} \l( \widehat{g}_{s} - g_{s}\r) \l( \sum_{t = 1}^{T} g_{t} \zeta_{s, t} \r) \r\vert^{\delta} \r) 
+ T^{-\delta} \E\l( \l\vert \sum_{s = 1}^{T} g_{s} \l( \sum_{t = 1}^{T} g_{t} \zeta_{s, t} \r) \r\vert^{\delta} \r) =: T_{2, 1} + T_{2, 2}.
\end{align*}
Using the Cauchy-Schwartz inequality and Lemma~\ref{moment-1},
\begin{align*}
T_{2, 1} &\leq T^{-\delta} \E\l( \l\vert \sum_{s = 1}^{T} \l( \widehat{g}_{s} - g_{s} \r)^{2}\r\vert^{\delta / 2} \l\vert \sum_{s = 1}^{T} \l( \sum_{t = 1}^{T} g_{t} \zeta_{s, t} \r)^{2} \r\vert
^{\delta/2} \r)
\\
&\leq T^{-\delta} \l[ \E\l( \l\vert \sum_{s = 1}^{T}\l( \widehat{g}_{s} - g_{s} \r)^{2} \r\vert^{\delta} \r) \r]^{1/2} \l[ \E\l( \l\vert \sum_{s = 1}^{T} \l( \sum_{t = 1}^{T} g_{t}\zeta_{s, t} \r)^{2} \r\vert^{\delta} \r) \r]^{1/2} 
\\
&\le c_0 T^{-\delta/2} C_{NT}^{-\delta} \l[ T^{\delta - 1} \sum_{s = 1}^{T} \E\l( \l\vert \sum_{t = 1}^{T} g_{t}\zeta_{s, t} \r\vert^{2\delta} \r) \r]^{1/2};
\end{align*}
thence, Assumption~\ref{depFE}~\ref{assum:depFE:one} immediately entails that
\begin{align}
T_{2, 1} \le c_0 T^{-\delta/2} C_{NT}^{-\delta} \l( T^{\delta} \l( N^{-1}T\r)^{\delta} \r)^{1/2} = c_0 T^{\delta/2} N^{-\delta/2} C_{NT}^{-\delta}.  
\nn 
\end{align}
By the same token, with~\eqref{eq:g:delta},
\begin{align*}
T_{2, 2} 
&\leq T^{-\delta} \E\l( \l\vert \sum_{s = 1}^{T} g_{s}^{2}\r\vert^{\delta / 2} \l\vert \sum_{s = 1}^{T} \l( \sum_{t = 1}^{T} g_{t} \zeta_{s, t} \r)^{2} \r\vert^{\delta/2} \r)
\\
&\leq T^{-\delta} \l[ \E\l( \l\vert \sum_{s = 1}^{T} g_{s}^{2} \r\vert^{\delta} \r) \r]^{1/2} \l[ \E\l( \l\vert \sum_{s = 1}^{T} \l( \sum_{t = 1}^{T} g_{t}\zeta_{s, t} \r)^{2} \r\vert^{\delta} \r) \r]^{1/2} 
\\
&\le c_0 T^{-\delta/2} 
\l[ T^{\delta - 1} \sum_{s = 1}^{T} \E\l( \l\vert \sum_{t = 1}^{T} g_{t}\zeta_{s, t} \r\vert^{2\delta} \r) \r]^{1/2}
\le c_0 T^{\delta /2} N^{-\delta /2}.
\end{align*}
Putting together the bounds on $T_{2, 1}$ and $T_{2, 2}$, we have
\begin{align}
T_2 \le c_0 T^{\delta /2}N^{-\delta /2}.
\label{term-2}
\end{align}
Finally consider
\begin{align*}
T_3 \leq T^{-\delta} \E\l( \l\vert \sum_{s = 1}^{T} \l( \widehat{g}_{s} - g_{s} \r) \l( \sum_{t = 1}^{T}g_{t} \eta_{s, t} \r) \r\vert^{\delta} \r) + T^{-\delta} \E\l( \l\vert \sum_{s = 1}^{T} g_{s} \l( \sum_{t = 1}^{T} g_{t} \eta_{s, t} \r) \r\vert^{\delta} \r) =: T_{3, 1} + T_{3, 2}.
\end{align*}
It holds that
\begin{align*}
T_{3, 1} &\leq 
T^{-\delta} \E\l( \l\vert \sum_{s = 1}^{T} \l( \widehat{g}_{s} - g_{s} \r)^{2} \r\vert^{\delta /2} \l\vert \sum_{s = 1}^{T} \l( \sum_{t = 1}^{T} g_{t} \eta_{s, t} \r)^{2} \r\vert^{\delta /2} \r) 
\\
&\leq T^{-\delta} \l[ \E\l( \l\vert \sum_{s = 1}^{T}\l( \widehat{g}_{s} - g_{s} \r)^{2} \r\vert^{\delta} \r) \r]^{1/2} \l[ \E\l( \l\vert \sum_{s = 1}^{T} \l( \sum_{t = 1}^{T} g_{t} \eta_{s, t} \r)^{2} \r\vert^{\delta} \r) \r]^{1/2} 
\\
&\le c_0 T^{-\delta /2} C_{NT}^{-\delta} \l[ T^{\delta - 1} \sum_{s = 1}^{T} \E\l( \l\vert \sum_{t = 1}^{T} g_{t} \eta_{s, t} \r\vert^{2\delta} \r) \r]^{1/2}.
\end{align*}
By definition of $\eta_{s, t}$, making use of~\eqref{eq:g:delta}, 
\begin{align*}
T^{\delta - 1} \sum_{s = 1}^{T} \E\l( \l\vert \sum_{t = 1}^{T} g_{t} \eta_{s, t} \r\vert^{2\delta} \r)
&= T^{\delta - 1} \sum_{s = 1}^{T} \E\l( \l\vert
\sum_{t = 1}^{T} g_{t} g_{s} \frac{1}{N} \sum_{i = 1}^{N} \lambda_{i} e_{i, t} \r\vert^{2\delta} \r)
\\
&\leq T^{\delta - 1} \sum_{s = 1}^{T} \l[ \E\l( \l\vert g_{s} \r\vert^{4\delta} \r) \r]^{1/2} \l[ \E\l( \l\vert \frac{1}{N} \sum_{i = 1}^{N} \sum_{t = 1}^{T} g_{t} \lambda_{i} e_{i, t} \r\vert^{4\delta} \r) \r]^{1/2}
\\
&\leq c_{0} T^{2\delta} N^{-\delta}
\end{align*}
by Assumptions~\ref{factors}~\ref{assum:factors:one} and~\ref{depFE}~\ref{assum:depFE:two}.
Therefore,
\begin{align}
T_{3, 1} \leq c_0 T^{\delta/2} N^{-\delta/2} C_{NT}^{-\delta}. \nn 
\end{align}
Along the same lines, 
\begin{align*}
T_{3, 2} &\leq T^{-\delta} \E\l( \l\vert \sum_{s = 1}^{T} g_{s}^{2} \r\vert^{\delta/2} \l\vert \sum_{s = 1}^{T} \l( \sum_{t = 1}^{T} g_{t} \eta_{s, t} \r)^{2} \r\vert^{\delta /2} \r) 
\\
&\leq T^{-\delta} \l[ \E\l( \l\vert \sum_{s = 1}^{T} g_{s}^{2} \r\vert^{\delta} \r) \r]^{1/2} \l[ \E\l( \l\vert \sum_{s = 1}^{T} \l( \sum_{t = 1}^{T} g_{t} \eta_{s, t} \r)^{2} \r\vert^{\delta}\r) \r]^{1/2}
\\
&\le c_0 T^{-\delta/2} \l[ T^{\delta - 1} \sum_{s = 1}^{T} \E\l( \l\vert \sum_{t = 1}^{T} g_{t} \eta_{s, t} \r\vert^{2\delta} \r) \r]^{1/2}
\le c_0 T^{\delta/2} N^{-\delta/2},
\end{align*}
which entails that
\begin{align}
T_3 \le c_0 T^{\delta/2} N^{-\delta/2}.
\label{term-3}
\end{align}
The desired result now follows from plugging~\eqref{term-4}, \eqref{term-1}, \eqref{term-2} and \eqref{term-3} into~\eqref{dec-lemma-2}.
\end{proof}

\begin{lem}
\label{lem:gghat} Suppose that Assumptions~\ref{factors}--\ref{assum:cps} hold with $\rho = 1$ in Assumptions~\ref{factors}, \ref{idiosyncratic} and~\ref{depFE}. Then it
follows that 
\begin{equation*}
\left\Vert \left( \frac{\widehat{\mathbf{G}}^{\top }\mathbf{G}}{T}\right)
\left( \frac{\bm\Lambda^{\top }\bm\Lambda }{N}\right) \left( \frac{\mathbf{G%
}^{\top }\widehat{\mathbf{G}}}{T}\right) -\bm\Phi \right\Vert =O_{P}\left(\frac{1}{C_{NT}}\right) .
\end{equation*}
\end{lem}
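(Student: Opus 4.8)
The plan is to turn the statement into an exact algebraic identity coming from the eigenvalue problem that defines $\widehat{\mbf G}$, and then to control the remainder by operator-norm bounds that are already available from Lemmas~\ref{lem:gg} and~\ref{lem:Phi}. Recall that the columns of $\widehat{\mbf G}/\sqrt T$ are the $r$ leading orthonormal eigenvectors of $(NT)^{-1}\mbf X\mbf X^\top$, so that $(NT)^{-1}\mbf X\mbf X^\top\widehat{\mbf G}=\widehat{\mbf G}\bm\Phi_{NT}$ and $T^{-1}\widehat{\mbf G}^\top\widehat{\mbf G}=\mbf I_r$. First I would left-multiply the eigen-equation by $T^{-1}\widehat{\mbf G}^\top$ and use the normalisation to obtain the exact relation
\begin{equation*}
\frac{1}{NT^2}\widehat{\mbf G}^\top\mbf X\mbf X^\top\widehat{\mbf G}=\bm\Phi_{NT}.
\end{equation*}

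Next, using $\mbf X=\mbf G\bm\Lambda^\top+\mbf E$, I would expand $\mbf X\mbf X^\top$ into the four products $\mbf G\bm\Lambda^\top\bm\Lambda\mbf G^\top$, $\mbf G\bm\Lambda^\top\mbf E^\top$, $\mbf E\bm\Lambda\mbf G^\top$ and $\mbf E\mbf E^\top$. Once sandwiched between $(NT^2)^{-1}\widehat{\mbf G}^\top(\cdot)\widehat{\mbf G}$, the first product equals exactly the target matrix $(\widehat{\mbf G}^\top\mbf G/T)(\bm\Lambda^\top\bm\Lambda/N)(\mbf G^\top\widehat{\mbf G}/T)$. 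Subtracting $\bm\Phi$ from both sides of the identity therefore gives
\begin{align*}
&\l(\frac{\widehat{\mbf G}^\top\mbf G}{T}\r)\l(\frac{\bm\Lambda^\top\bm\Lambda}{N}\r)\l(\frac{\mbf G^\top\widehat{\mbf G}}{T}\r)-\bm\Phi\\
&\qquad=(\bm\Phi_{NT}-\bm\Phi)-\frac{1}{NT^2}\widehat{\mbf G}^\top\l(\mbf G\bm\Lambda^\top\mbf E^\top+\mbf E\bm\Lambda\mbf G^\top+\mbf E\mbf E^\top\r)\widehat{\mbf G},
\end{align*}
so it remains to show that each of the four matrices on the right-hand side has norm $O_P(C_{NT}^{-1})$.

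For the first, Lemma~\ref{lem:Phi} with $\rho=1$ together with Markov's inequality yields $\Vert\bm\Phi_{NT}-\bm\Phi\Vert=O_P(C_{NT}^{-1})$. For the three remainder terms I would use submultiplicativity of the spectral norm together with $\Vert\widehat{\mbf G}\Vert=\sqrt T$ (from $\widehat{\mbf G}^\top\widehat{\mbf G}=T\mbf I_r$) and $\Vert\mbf G\Vert=O_P(\sqrt T)$ (from Lemma~\ref{lem:gg}, since $\Lambda_{\max}(T^{-1}\mbf G^\top\mbf G)\to\Lambda_{\max}(\bm\Sigma_G)<\infty$). The idiosyncratic quadratic term is bounded by
\begin{equation*}
\frac{1}{NT^2}\Vert\widehat{\mbf G}\Vert^2\,\Vert\mbf E\mbf E^\top\Vert=\frac{1}{NT}\Vert\mbf E\Vert^2=\Lambda_{\max}\l(\frac{1}{NT}\mbf E^\top\mbf E\r),
\end{equation*}
which is $O_P(C_{NT}^{-1})$ by the very estimate derived inside the proof of Lemma~\ref{lem:Phi} (splitting $\mbf E^\top\mbf E$ into its mean, controlled by Assumption~\ref{idiosyncratic}\ref{assum:idiosyncratic:five}, and its fluctuation, controlled by Assumption~\ref{idiosyncratic}\ref{assum:idiosyncratic:six}). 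Each cross term is bounded by $(NT^2)^{-1}\Vert\widehat{\mbf G}\Vert^2\,\Vert\mbf G\Vert\,\Vert\mbf E\bm\Lambda\Vert$, and since $\E(\Vert\mbf E\bm\Lambda\Vert^2)\le\sum_{t}\E(\Vert\sum_i\bm\lambda_i e_{i,t}\Vert^2)\le c_0 NT$ by Assumption~\ref{idiosyncratic}\ref{assum:idiosyncratic:four}, we obtain $\Vert\mbf E\bm\Lambda\Vert=O_P(\sqrt{NT})$ and hence each cross term is $O_P(N^{-1/2})=O_P(C_{NT}^{-1})$. Collecting the four bounds completes the argument.

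The only delicate point is the idiosyncratic term $(NT)^{-1}\Vert\mbf E\Vert^2$, which is the one that attains the exact rate $C_{NT}^{-1}$ (rather than something faster); everything hinges on reusing the sharp eigenvalue bound $\Lambda_{\max}((NT)^{-1}\mbf E^\top\mbf E)=O_P(C_{NT}^{-1})$ from Lemma~\ref{lem:Phi}, so I would make sure that estimate is stated in reusable form. An alternative, should one prefer to avoid spectral-norm manipulations of $\mbf E$, is to substitute $\widehat{\mbf G}=\mbf G\mbf H+(\widehat{\mbf G}-\mbf G\mbf H)$ and invoke Lemmas~\ref{moment-1} and~\ref{moment-2}; this is more laborious but delivers the same rate.
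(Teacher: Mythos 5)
Your proposal is correct, and it starts from exactly the same algebraic identity as the paper: left-multiplying the eigen-equation $(NT)^{-1}\mbf X\mbf X^\top\wh{\mbf G}=\wh{\mbf G}\bm\Phi_{NT}$ by $T^{-1}\wh{\mbf G}^\top$, expanding $\mbf X=\mbf G\bm\Lambda^\top+\mbf E$, and isolating the remainder $\frac{1}{T}\wh{\mbf G}^\top\mbf R_{NT}\wh{\mbf G}$ with $\mbf R_{NT}=(NT)^{-1}(\mbf G\bm\Lambda^\top\mbf E^\top+\mbf E\bm\Lambda\mbf G^\top+\mbf E\mbf E^\top)$; both arguments then finish with Lemma~\ref{lem:Phi}. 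Where you genuinely diverge is in how the remainder is controlled. The paper uses the Bai-type identity~\eqref{eq:bai} to express $\wh{\mbf G}^\top\mbf R_{NT}$ through the estimation error $\wh{\mbf G}-\mbf G\mbf H$, and then invokes Lemmas~\ref{moment-1} and~\ref{moment-2} to conclude that the remainder is $O_P(C_{NT}^{-2})$ --- this is precisely the ``alternative'' you sketch in your closing sentence. You instead bound the three pieces of the remainder by direct submultiplicativity, using $\Vert\wh{\mbf G}\Vert=\sqrt{T}$, $\Vert\mbf G\Vert=O_P(\sqrt{T})$ (from Lemma~\ref{lem:gg}), the eigenvalue estimate $\Lambda_{\max}((NT)^{-1}\mbf E^\top\mbf E)=O_P(C_{NT}^{-1})$ derived inside the proof of Lemma~\ref{lem:Phi} from Assumptions~\ref{idiosyncratic}~\ref{assum:idiosyncratic:five} and~\ref{assum:idiosyncratic:six}, and $\Vert\mbf E\bm\Lambda\Vert_F=O_P(\sqrt{NT})$, which follows from Assumption~\ref{idiosyncratic}~\ref{assum:idiosyncratic:four} after a Lyapunov-inequality step you should spell out. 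This yields only $O_P(C_{NT}^{-1})$ for the remainder rather than the paper's $O_P(C_{NT}^{-2})$, but that is all the lemma needs, since $\Vert\bm\Phi_{NT}-\bm\Phi\Vert=O_P(C_{NT}^{-1})$ is the bottleneck either way. What your route buys is self-containedness and weaker moment usage: it bypasses $\mbf H$, \eqref{eq:bai} and Lemmas~\ref{moment-1}--\ref{moment-2} entirely. What the paper's route buys is the sharper remainder rate --- making explicit that the error in this lemma is driven entirely by $\bm\Phi_{NT}-\bm\Phi$ --- and reuse of machinery needed elsewhere in any case. The one point to tidy up, which you flag yourself, is that the bound on $\Lambda_{\max}((NT)^{-1}\mbf E^\top\mbf E)$ is established inside the proof of Lemma~\ref{lem:Phi} but is not part of its statement, so you should either restate it as a standalone estimate or repeat its short derivation.
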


\begin{proof}
By construction, we have
\begin{align}
\wh{\mbf G} \bm\Phi_{NT} &= \l( \frac{\mbf X\mbf X^\top}{NT} \r) \wh{\mbf G}, \text{ \ hence}
\label{eq:ghat:eigvec} \\
\bm\Phi_{NT} &= \frac{1}{T} \wh{\mbf G}^\top \l( \frac{\mbf X\mbf X^\top}{NT} \r) \wh{\mbf G} =  \frac{\wh{\mbf G}^\top \mbf G}{T}  \l( \frac{\bm\Lambda^\top \bm\Lambda}{N} \r) \frac{\mbf G^\top \wh{\mbf G}}{T} + \frac{1}{T} \wh{\mbf G}^\top \mbf R_{NT} \wh{\mbf G}, \text{ \ where}
\nn \\
\mbf R_{NT} &= \frac{1}{NT} \l( \mbf G \bm\Lambda^\top \mbf E^\top + \mbf E \bm\Lambda \mbf G^\top + \mbf E \mbf E^\top \r).
\nn
\end{align}
By~\eqref{eq:bai} and Lemmas~\ref{moment-1} and~\ref{moment-2},
\begin{align*}
\l\Vert \frac{1}{T} \wh{\mbf G}^\top \mbf R_{NT} \wh{\mbf G} \r\Vert 
&\le 
\l( \l\Vert \frac{1}{T} (\wh{\mbf G} - \mbf G \mbf H)^\top (\wh{\mbf G} - \mbf G \mbf H) \r\Vert + 
\l\Vert \mbf H \r\Vert \; \l\Vert \frac{1}{T} \mbf G^\top (\wh{\mbf G} - \mbf G \mbf H) \r\Vert \r) \l\Vert \bm\Phi_{NT}^{-1} \r\Vert 
\\
&= O_P\l(\frac{1}{C_{NT}^2}\r),
\end{align*}
where we also use that $\Vert \bm\Phi_{NT}^{-1} \Vert = O_P(1)$ from Lemma~\ref{lem:Phi}, and $\Vert \mbf H \Vert = O_P(1)$ from~\eqref{eq:H:bound}.
Then, the conclusion follows from Lemma~\ref{lem:Phi}.
\end{proof}

\begin{lem}
\label{lem:hi} Suppose that Assumptions~\ref{factors}--\ref{assum:cps} hold with $\rho = 1$ in Assumptions~\ref{factors}, \ref{idiosyncratic} and~\ref{depFE}.
For $\mathbf{H}_0$ is defined in~\eqref{h-tilde}, we have $\Vert \mathbf{H }- \mathbf{H}_0 \Vert = O_P(C_{NT}^{-1})$.
\end{lem}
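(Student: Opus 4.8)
The plan is to characterise both $\mbf H$ and its probability limit $\mbf H_0$ through an approximate eigenvalue--eigenvector relation, and then to invoke symmetric eigenvector perturbation. Write $\mbf P_{NT} = N^{-1} \bm\Lambda^\top \bm\Lambda$ and $\mbf S_{NT} = T^{-1} \mbf G^\top \mbf G$, so that from~\eqref{eq:H},
\begin{align*}
\mbf H = \mbf P_{NT} \l( \frac{\mbf G^\top \wh{\mbf G}}{T} \r) \bm\Phi_{NT}^{-1}.
\end{align*}
Using the decomposition $\wh{\mbf G} = \mbf G \mbf H + (\wh{\mbf G} - \mbf G \mbf H)$ together with Lemma~\ref{moment-2} (with $\delta = 1$, valid under $\rho = 1$), we have $T^{-1} \mbf G^\top (\wh{\mbf G} - \mbf G \mbf H) = O_P(C_{NT}^{-2})$, whence $T^{-1} \mbf G^\top \wh{\mbf G} = \mbf S_{NT} \mbf H + O_P(C_{NT}^{-2})$. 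Substituting this into the display above and multiplying by $\bm\Phi_{NT}$ on the right, recalling $\Vert \mbf H \Vert = O_P(1)$ from~\eqref{eq:H:bound} and $\Vert \bm\Phi_{NT}^{-1} \Vert = O_P(1)$ from Lemma~\ref{lem:Phi}, yields the approximate eigen-equation
\begin{align}
\mbf P_{NT} \mbf S_{NT} \mbf H = \mbf H \bm\Phi_{NT} + O_P(C_{NT}^{-2}).
\label{eq:hi:eig}
\end{align}
Similarly, expanding $\mbf I_r = T^{-1} \wh{\mbf G}^\top \wh{\mbf G}$ with the same decomposition and applying Lemmas~\ref{moment-1} and~\ref{moment-2} gives the normalisation $\mbf H^\top \mbf S_{NT} \mbf H = \mbf I_r + O_P(C_{NT}^{-2})$.

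The next step is to pass to the limit. By Assumption~\ref{loadings}~\ref{assum:loadings:two}, $\mbf P_{NT} = \bm\Sigma_\Lambda + O(N^{-1/2})$; by Lemma~\ref{lem:gg}, $\mbf S_{NT} = \bm\Sigma_G + O_P(T^{-1/2})$; and by Lemma~\ref{lem:Phi}, $\bm\Phi_{NT} = \bm\Phi + O_P(C_{NT}^{-1})$, all of which are $O_P(C_{NT}^{-1})$. Hence~\eqref{eq:hi:eig} states that the columns of $\mbf H$ are approximate eigenvectors of $\mbf P_{NT} \mbf S_{NT}$, a matrix converging to $\bm\Sigma_\Lambda \bm\Sigma_G$ at rate $C_{NT}^{-1}$, with eigenvalues $\bm\Phi_{NT} \to \bm\Phi$ and $\mbf S_{NT}$-orthonormal columns; the same relations hold exactly in the limit for $\mbf H_0 = \mathrm{p}\lim \mbf H$, namely $\bm\Sigma_\Lambda \bm\Sigma_G \mbf H_0 = \mbf H_0 \bm\Phi$ and $\mbf H_0^\top \bm\Sigma_G \mbf H_0 = \mbf I_r$.

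To extract the rate I would symmetrise. Since $\bm\Sigma_G$ is positive definite (Assumption~\ref{assum:cps}~\ref{assum:cp:two}), with probability tending to one $\mbf S_{NT}^{1/2}$ is well-defined, and setting $\mbf U_{NT} = \mbf S_{NT}^{1/2} \mbf H$ transforms~\eqref{eq:hi:eig} and the normalisation into
\begin{align*}
\l( \mbf S_{NT}^{1/2} \mbf P_{NT} \mbf S_{NT}^{1/2} \r) \mbf U_{NT} = \mbf U_{NT} \bm\Phi_{NT} + O_P(C_{NT}^{-2}), \qquad \mbf U_{NT}^\top \mbf U_{NT} = \mbf I_r + O_P(C_{NT}^{-2}),
\end{align*}
so that the columns of $\mbf U_{NT}$ are approximately orthonormal eigenvectors of the symmetric matrix $\mbf A_{NT} := \mbf S_{NT}^{1/2} \mbf P_{NT} \mbf S_{NT}^{1/2}$, which converges at rate $C_{NT}^{-1}$ to $\mbf A := \bm\Sigma_G^{1/2} \bm\Sigma_\Lambda \bm\Sigma_G^{1/2}$. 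Crucially, $\mbf A$ is similar to $\bm\Sigma_\Lambda \bm\Sigma_G$ and hence has the same eigenvalues as $\bm\Phi$, which are distinct by Assumption~\ref{assum:cps}~\ref{assum:cp:two}; the spectral gaps are therefore bounded away from zero. The Davis--Kahan theorem then bounds the deviation of each (sign-fixed) orthonormal eigenvector of $\mbf A_{NT}$ from that of $\mbf A$ by a constant multiple of $\Vert \mbf A_{NT} - \mbf A \Vert = O_P(C_{NT}^{-1})$, while the $O_P(C_{NT}^{-2})$ eigen-residual is negligible relative to this gap-controlled term. This yields $\Vert \mbf U_{NT} - \mbf U \Vert = O_P(C_{NT}^{-1})$ with $\mbf U = \bm\Sigma_G^{1/2} \mbf H_0$, and finally
\begin{align*}
\mbf H - \mbf H_0 = \mbf S_{NT}^{-1/2} \mbf U_{NT} - \bm\Sigma_G^{-1/2} \mbf U = \l( \mbf S_{NT}^{-1/2} - \bm\Sigma_G^{-1/2} \r) \mbf U_{NT} + \bm\Sigma_G^{-1/2} \l( \mbf U_{NT} - \mbf U \r) = O_P(C_{NT}^{-1}),
\end{align*}
using $\mbf S_{NT}^{-1/2} - \bm\Sigma_G^{-1/2} = O_P(C_{NT}^{-1})$. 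The main obstacle is the eigenvector perturbation step: one must control the sign/orientation ambiguity of the eigenvectors, which is resolved by the very existence of the probability limit $\mbf H_0$ so that signs are eventually aligned on an event of probability tending to one, and one must confirm that the distinctness of the eigenvalues of $\bm\Phi$ delivers gaps bounded below, so that Davis--Kahan applies with an $O_P(1)$ constant.
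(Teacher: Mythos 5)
Your proof is correct, and it rests on the same core mechanism as the paper's: an approximate $r\times r$ symmetric eigen-problem whose residual is $O_P(C_{NT}^{-2})$ by Lemma~\ref{moment-2}, followed by eigenvector perturbation under the distinct-eigenvalue condition of Assumption~\ref{assum:cps}~\ref{assum:cp:two}, with matrix convergence supplied by Lemma~\ref{lem:gg}, Lemma~\ref{lem:Phi} and Assumption~\ref{loadings}~\ref{assum:loadings:two}. The bookkeeping, however, is a mirror image of the paper's. The paper premultiplies the exact PCA relation $\wh{\mbf G}\bm\Phi_{NT} = (NT)^{-1}\mbf X\mbf X^\top\wh{\mbf G}$ by $(\bm\Lambda^\top\bm\Lambda/N)^{1/2}\,T^{-1}\mbf G^\top$ and substitutes $\mbf X = \mbf G\bm\Lambda^\top + \mbf E$, so its approximate eigenvectors are $\mbf C_{NT} = (\bm\Lambda^\top\bm\Lambda/N)^{1/2}(\mbf G^\top\wh{\mbf G}/T)$ for the loading-symmetrised pivot $\mbf B_{NT}$, with the normalisation coming from Lemma~\ref{lem:gghat}, and the perturbation step is delegated to Lemma~6 of \cite{han2014}. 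You symmetrise on the factor side instead ($\mbf A_{NT} = \mbf S_{NT}^{1/2}\mbf P_{NT}\mbf S_{NT}^{1/2}$ with $\mbf U_{NT} = \mbf S_{NT}^{1/2}\mbf H$), derive the eigen-equation from the definition~\eqref{eq:H} of $\mbf H$ rather than from the PCA relation, exploit the exact constraint $T^{-1}\wh{\mbf G}^\top\wh{\mbf G} = \mbf I_r$ together with Lemmas~\ref{moment-1} and~\ref{moment-2} to get approximate orthonormality, and spell out a Davis--Kahan argument. Since $\mbf A_{NT}$ and $\mbf B_{NT}$ (and their limits $\bm\Sigma_G^{1/2}\bm\Sigma_\Lambda\bm\Sigma_G^{1/2}$ and $\bm\Sigma_\Lambda^{1/2}\bm\Sigma_G\bm\Sigma_\Lambda^{1/2}$) are similar matrices sharing the eigenvalues in $\bm\Phi$, the spectral-gap condition is identical, and your identification $\mbf U = \bm\Sigma_G^{1/2}\mbf H_0$ is consistent with the paper's $\mbf H_0 = \bm\Sigma_\Lambda^{1/2}\mbf W\bm\Phi^{-1/2}$. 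What your route buys is self-containedness: no appeal to an external lemma for the perturbation bound. What it costs is having to take square roots and inverse square roots of the random matrix $\mbf S_{NT}$, so you need the local Lipschitz continuity of $\mbf A\mapsto\mbf A^{1/2}$ and $\mbf A\mapsto\mbf A^{-1/2}$ at the positive definite limit $\bm\Sigma_G$ at both ends of the argument; the paper sidesteps this by sandwiching with the deterministic loading matrix only. Your handling of the two delicate points --- that the columns of $\mbf U_{NT}$ are only approximate, approximately orthonormal eigenvectors (so Davis--Kahan must be applied in residual form), and the sign alignment, which you anchor to the existence of the probability limit defining $\mbf H_0$ --- is at the same level of rigour as the paper's own proof, which leaves exactly these details to the cited result.
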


\begin{proof}
From~\eqref{eq:ghat:eigvec}, $\wh{\mbf G}$ satisfies
\begin{align*}
\l( \frac{\bm\Lambda^\top\bm\Lambda}{N} \r)^{1/2} \frac{1}{T} \mbf G^\top \l( \frac{\mbf X\mbf X^\top}{NT} \r) \wh{\mbf G} &= \l( \frac{\bm\Lambda^\top\bm\Lambda}{N} \r)^{1/2} \l( \frac{\mbf G^\top \wh{\mbf G}}{T} \r) \bm\Phi_{NT}.
\end{align*}
Substituting $\mbf X = \mbf G \bm\Lambda^\top + \mbf E$ into the above equations, we have
\begin{align*}
\l( \frac{\bm\Lambda^\top\bm\Lambda}{N} \r)^{1/2} \l( \frac{\mbf G^\top \mbf G}{T} \r) \l( \frac{\bm\Lambda^\top\bm\Lambda}{N} \r) \l( \frac{\mbf G^\top \wh{\mbf G}}{T} \r) + c_{NT} &= \l( \frac{\bm\Lambda^\top\bm\Lambda}{N} \r)^{1/2} \l( \frac{\mbf G^\top \wh{\mbf G}}{T} \r) \bm\Phi_{NT}
\end{align*}
where, recalling the definition of $\mbf R_{NT}$ in the proof of Lemma~\ref{lem:gghat}, we have
\begin{align*}
c_{NT} &= \l( \frac{\bm\Lambda^\top\bm\Lambda}{N} \r)^{1/2} \frac{1}{T} \mbf G^\top \mbf R_{NT} \wh{\mbf G}
\text{ \ such that \ }
\\
\Vert c_{NT} \Vert &\le \l\Vert \frac{\bm\Lambda^\top\bm\Lambda}{N} \r\Vert^{1/2} \l\Vert \frac{1}{T} \mbf G^\top (\wh{\mbf G} - \mbf G \mbf H) \r\Vert = O_P\l( \frac{1}{C_{NT}^2} \r).
\end{align*}
by Assumption~\ref{loadings}~\ref{assum:loadings:two} and Lemma~\ref{moment-2}.
Recall the definitions of $\mbf B$ and $\mbf B_{NT}$ in the proof of Lemma~\ref{lem:Phi}, and let us define 
\begin{align*}
\mbf C_{NT} = \l( \frac{\bm\Lambda^\top\bm\Lambda}{N} \r)^{1/2} \l( \frac{\mbf G^\top \wh{\mbf G}}{T} \r).
\end{align*}
Then, $\Vert \mbf B_{NT} - \mbf B \Vert = O_P(C_{NT}^{-1})$ by Assumption~\ref{loadings}~\ref{assum:loadings:two} and Lemma~\ref{lem:gg}.
Also, $\mbf C_{NT}$ is $O_P(1)$ and (asymptotically) invertible from Lemma~\ref{lem:gghat}.
Denote by $\mbf W$ the $r \times r$-matrix containing the (normalised) eigenvectors of $\mbf B$ corresponding to the eigenvalues on the diagonal of $\bm\Phi$. 
Then, the remainder of the proof proceeds analogously as that of Lemma~6 of \cite{han2014} which shows that $\mbf H_0 = \text{plim}_{\min(N, T) \rightarrow \infty} \mbf H = \bm\Sigma_{\Lambda}^{1/2} \mbf W \bm\Phi^{-1/2}$ and $\Vert \mbf H - \mbf H_0 \Vert = O_P(C_{NT}^{-1})$.
\end{proof}

The next two lemmas contain two maximal inequalities which are required to
bound the difference between the partial sums of $\widehat{\mathbf{g}}_{t}%
\widehat{\mathbf{g}}_{t}^\top$ and those of $\mathbf{H}^\top \mathbf{g}_{t}%
\mathbf{g}_{t}^\top \mathbf{H}$.

\begin{lem}
\label{max-inequ} Suppose that the assumptions of Lemma~\ref{moment-1} hold.
Then, we have 
\begin{equation*}
\mathsf{E}\left( \max_{1\leq k\leq T}\left\Vert \sum_{t=1}^{k}\left( 
\widehat{\mathbf{g}}_{t}-\mathbf{H}^{\top }\mathbf{g}_{t}\right) \left( 
\widehat{\mathbf{g}}_{t}-\mathbf{H}^{\top }\mathbf{g}_{t}\right)^{\top
}\right\Vert^{\delta }\right) \leq c_{0}T^{\delta }C_{NT}^{-2\delta },
\end{equation*}%
for all $1\leq \delta \leq 2 + \epsilon$.
\end{lem}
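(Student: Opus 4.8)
The plan is to exploit the special structure of the summands: each term $(\widehat{\mathbf{g}}_t - \mathbf{H}^\top \mathbf{g}_t)(\widehat{\mathbf{g}}_t - \mathbf{H}^\top \mathbf{g}_t)^\top$ is a rank-one positive semidefinite matrix. Writing $\mathbf{S}_k = \sum_{t=1}^k (\widehat{\mathbf{g}}_t - \mathbf{H}^\top \mathbf{g}_t)(\widehat{\mathbf{g}}_t - \mathbf{H}^\top \mathbf{g}_t)^\top$, the difference $\mathbf{S}_k - \mathbf{S}_{k-1}$ is positive semidefinite for every $k$, so the partial sums form an increasing chain $\mathbf{O} \preceq \mathbf{S}_1 \preceq \mathbf{S}_2 \preceq \ldots \preceq \mathbf{S}_T$ in the Loewner (positive semidefinite) order. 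The entire point is that this monotonicity renders the maximum over $k$ trivial, so no genuine maximal-inequality machinery (Doob- or Ottaviani-type arguments, chaining) is needed.

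First I would invoke Weyl's monotonicity theorem: if $\mathbf{O} \preceq \mathbf{A} \preceq \mathbf{B}$, then $\Lambda_j(\mathbf{A}) \le \Lambda_j(\mathbf{B})$ for each eigenvalue sorted in descending order, since $\mathbf{B} - \mathbf{A} \succeq \mathbf{O}$ has nonnegative smallest eigenvalue. Applied along the chain $\mathbf{S}_1 \preceq \ldots \preceq \mathbf{S}_T$, this shows that every eigenvalue of $\mathbf{S}_k$ is nondecreasing in $k$. Because the norm $\Vert \cdot \Vert$ is a nondecreasing function of the (nonnegative) eigenvalues on positive semidefinite matrices --- this is immediate for the spectral norm $\Lambda_{\max}(\cdot)$ and follows for the Frobenius norm $(\sum_j \Lambda_j^2(\cdot))^{1/2}$ from the eigenvalue monotonicity just noted --- it follows that $\Vert \mathbf{S}_k \Vert$ is nondecreasing in $k$. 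Hence $\max_{1 \le k \le T} \Vert \mathbf{S}_k \Vert = \Vert \mathbf{S}_T \Vert$, and the proof concludes by applying Lemma~\ref{moment-1}, which yields $\mathsf{E}(\max_{1 \le k \le T} \Vert \mathbf{S}_k \Vert^\delta) = \mathsf{E}(\Vert \mathbf{S}_T \Vert^\delta) \le c_0 (T C_{NT}^{-2})^\delta = c_0 T^\delta C_{NT}^{-2\delta}$ for all $1 \le \delta \le 2 + \epsilon$.

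The one step requiring a moment's care --- though it is routine --- is confirming that the matrix norm used in the paper is monotone under the Loewner order for positive semidefinite arguments, which holds for both the spectral and Frobenius norms and, more generally, for any unitarily invariant norm via Weyl's inequality. Should $\Vert \cdot \Vert$ denote a norm lacking this property, one would instead bound $\mathsf{E}(\max_k \Vert \mathbf{S}_k \Vert^\delta)$ by a constant multiple of $\mathsf{E}(\Vert \mathbf{S}_T \Vert^\delta)$ through a standard maximal inequality, absorbing the extra factor into $c_0$; but the positive semidefinite structure of the summands makes this detour unnecessary, and the bound follows directly from Lemma~\ref{moment-1}.
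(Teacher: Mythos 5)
Your proof is correct and is essentially the paper's own argument: the paper's proof consists precisely of noting that $\max_{1\le k\le T}\Vert \mathbf{S}_k\Vert \le \Vert \mathbf{S}_T\Vert$ (which holds because the summands are positive semidefinite, exactly the Loewner-monotonicity point you spell out) and then invoking Lemma~\ref{moment-1}. The only difference is that you make explicit the eigenvalue-monotonicity justification that the paper leaves implicit as ``follows immediately upon noting that.''
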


\begin{proof}
The proof of follows immediately upon noting that
\begin{align*}
\E\l( \max_{1 \leq k \leq T} \l\Vert \sum_{t = 1}^{k}\l( \widehat{\mbf g}_{t} - \mbf H^{\top} \mbf g_{t}\r) \l( \widehat{\mbf g}_{t} - \mbf H^{\top} \mbf g_{t}\r)^\top \r\Vert^{\delta} \r)
\le 
\E\l( \l\Vert \sum_{t = 1}^{T} \l( \widehat{\mbf g}_{t} - \mbf H^{\top} \mbf g_{t}\r) \l( \widehat{\mbf g}_{t} - \mbf H^{\top} \mbf g_{t}\r)^\top \r\Vert^{\delta} \r)
\end{align*}
and using Lemma~\ref{moment-1}.
\end{proof}

\begin{lem}
\label{max-inequ-2} Suppose that the assumptions of Lemma~\ref{moment-2} hold with $\rho = 1$ in Assumptions~\ref{factors}, \ref{idiosyncratic} and~\ref{depFE}. Then,
\begin{align}
T^{-\delta}\mathsf{E}\left( \max_{1\leq k\leq T}\left\Vert
\sum_{t=1}^{k}\sum_{s=1}^{T}\widehat{\mathbf{g}}_{s}\mathbf{g}_{t}^{\top
}\gamma_{s,t}\right\Vert^{\delta}\right) & \leq c_{0},  \label{mc-1}
\\
T^{-\delta}\mathsf{E}\left( \max_{1\leq k\leq T}\left\Vert
\sum_{t=1}^{k}\sum_{s=1}^{T}\widehat{\mathbf{g}}_{s}\mathbf{g}_{t}^{\top
}\zeta_{s,t}\right\Vert^{\delta}\right) & \leq c_{0}T^{{\delta/2}%
}N^{-\delta/2},  \label{mc-2} \\
T^{-\delta}\mathsf{E}\left( \max_{1\leq k\leq T}\left\Vert
\sum_{t=1}^{k}\sum_{s=1}^{T}\widehat{\mathbf{g}}_{s}\mathbf{g}_{t}^{\top
}\eta_{s,t}\right\Vert^{\delta}\right) & \leq c_{0}T^{\delta
_{2}/2}N^{-\delta/2},  \label{mc-3} \\
T^{-\delta}\mathsf{E}\left( \max_{1\leq k\leq T}\left\Vert
\sum_{t=1}^{k}\sum_{s=1}^{T}\widehat{\mathbf{g}}_{s}\mathbf{g}_{t}^{\top
}\xi_{s,t}\right\Vert^{\delta}\right) & \leq c_{0}T^{\delta
_{3}/2}N^{-\delta/2}+c_{1}T^{\delta}N^{-\delta/2}{%
C_{NT}^{-\delta}},  \label{mc-4}
\end{align}
for $ 1\leq \delta \leq 1+\epsilon $.
\end{lem}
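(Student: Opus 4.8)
The plan is to read the four bounds off as the partial-sum maxima of the four terms $T_1,\ldots,T_4$ already analysed in Lemma~\ref{moment-2}, recycling those computations verbatim while inserting the maximum over $k$ at the single place where the summation index $t$ enters. The key structural observation is that the maximum over $k$ is harmless for \eqref{mc-1} and \eqref{mc-4}, whereas for \eqref{mc-2} and \eqref{mc-3} it is absorbed by a M\'oricz-type maximal inequality fed by the \emph{interval}-indexed moment bounds of Assumption~\ref{depFE}, which were stated precisely for this purpose.

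For \eqref{mc-1}, since $\gamma_{s,t}$ is deterministic and absolutely summable, I would bound the maximum pathwise, $\max_{1\le k\le T}\vert\sum_{t=1}^k \mathbf{g}_t\gamma_{s,t}\vert\le\sum_{t=1}^T\Vert \mathbf{g}_t\Vert\,\vert\gamma_{s,t}\vert$, and then apply Minkowski's inequality together with $\sum_{t=1}^T\vert\gamma_{s,t}\vert\le c_0$ (Assumption~\ref{idiosyncratic}~\ref{assum:idiosyncratic:two}) and $\max_t\vert\Vert \mathbf{g}_t\Vert\vert_{2\delta}<\infty$; no genuine maximal inequality is needed and the remaining steps reproduce the bound $T_1\le c_0$. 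For \eqref{mc-4}, writing $\xi_{s,t}=N^{-1}\sum_i \mathbf{g}_t^\top\bm\lambda_i e_{i,s}$, the inner partial sum $\sum_{t=1}^k\widehat{\mathbf{g}}_s\mathbf{g}_t^\top\xi_{s,t}$ depends on $k$ only through the monotone object $\sum_{t=1}^k\mathbf{g}_t\mathbf{g}_t^\top$, a sum of positive semidefinite matrices, so the maximum over $k$ is attained at $k=T$ up to norm monotonicity and the estimate collapses to the two-piece bound on $T_4$, delivering the two summands in~\eqref{mc-4}.

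The substantive cases are \eqref{mc-2} and \eqref{mc-3}. Following the proof of Lemma~\ref{moment-2}, I would split $\widehat{\mathbf{g}}_s=(\widehat{\mathbf{g}}_s-\mathbf{H}^\top\mathbf{g}_s)+\mathbf{H}^\top\mathbf{g}_s$, apply the Cauchy--Schwarz inequality in expectation across the index $s$, and then move the maximum inside via $\max_{k}\sum_{s=1}^T Y_{s,k}^2\le\sum_{s=1}^T\max_k Y_{s,k}^2$, where $Y_{s,k}=\sum_{t=1}^k \mathbf{g}_t\zeta_{s,t}$ for~\eqref{mc-2} and the analogous object built from $\eta_{s,t}$ for~\eqref{mc-3}. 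It then remains to bound $\E(\max_k\vert Y_{s,k}\vert^{2\delta})$ for each fixed $s$, for which I would invoke a M\'oricz maximal inequality: Assumption~\ref{depFE}~\ref{assum:depFE:one} yields $\E(\vert\sum_{t=a+1}^b \mathbf{g}_t\zeta_{s,t}\vert^{2\delta})\le c_0(N^{-1}(b-a))^{\delta}$, and Assumption~\ref{depFE}~\ref{assum:depFE:two} the corresponding interval bound for the $\eta$-term, both being superadditive functions of $(a,b]$ with superadditivity exponent at least $1$; the maximal inequality then gives $\E(\max_k\vert Y_{s,k}\vert^{2\delta})\le c_0(N^{-1}T)^{\delta}$ with no logarithmic loss. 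Summing over $s$, handling the $(\widehat{\mathbf{g}}_s-\mathbf{H}^\top\mathbf{g}_s)$-piece by Lemma~\ref{moment-1} and the $\mathbf{H}^\top\mathbf{g}_s$-piece by~\eqref{eq:g:delta}, and using $\Vert\mathbf{H}\Vert=O_P(1)$ from~\eqref{eq:H:bound}, reproduces exactly the rates $c_0 T^{\delta/2}N^{-\delta/2}$ of~\eqref{mc-2} and~\eqref{mc-3}.

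The main obstacle is the maximal inequality itself and its bookkeeping. To obtain the stated bounds \emph{without} a spurious $(\log T)$ factor, one must apply M\'oricz's inequality with a superadditivity exponent strictly exceeding $1$; for the $\eta$-term this holds automatically (the exponent is $2\delta\ge 2$), but for the $\zeta$-term the exponent equals $\delta$, so it exceeds $1$ only when $\delta>1$, and the interval bound at the required moment $2\delta>2$ is available from Assumption~\ref{depFE}~\ref{assum:depFE:one} only for $\delta\le 1+\epsilon/2$. Consequently the range $1\le\delta\le1+\epsilon$ must be read with $\epsilon$ allowed to shrink, and the endpoint $\delta=1$ recovered by monotonicity of $\mathcal{L}_\delta$-norms from values $\delta>1$. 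A secondary care point is that the maximum may be pulled inside only after the Cauchy--Schwarz split over $s$, so that it acts on scalar partial sums indexed by a single time variable, which is exactly the form in which the interval-type assumptions apply.
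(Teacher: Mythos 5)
Your proposal follows essentially the same route as the paper's own proof: the same four-term decomposition inherited from Lemma~\ref{moment-2} with the split $\widehat{\mathbf{g}}_s = (\widehat{\mathbf{g}}_s - \mathbf{H}^\top\mathbf{g}_s) + \mathbf{H}^\top\mathbf{g}_s$, the same pathwise/monotonicity arguments that dispose of the maximum over $k$ in~\eqref{mc-1} and~\eqref{mc-4}, and the same application of M\'oricz's (1982) Theorem~3.1, fed by the interval-indexed moment bounds of Assumption~\ref{depFE}, to handle~\eqref{mc-2} and~\eqref{mc-3} after pulling the maximum inside the sum over $s$. Your further remarks on the superadditivity exponent needed to avoid logarithmic factors, the restriction $\delta \le 1 + \epsilon/2$ imposed by Assumption~\ref{depFE}~\ref{assum:depFE:one} with $\rho = 1$, and the recovery of the endpoint $\delta = 1$ by Lyapunov's inequality are correct refinements of points the paper leaves implicit under its convention that $\epsilon$ may vary between occurrences.
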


\begin{proof}
The proof is based on very similar passages as the proof of Lemma~\ref{moment-2}, which we omit when possible. 
As before, we start with~\eqref{mc-4}.
Note that
\begin{align*}
T^{-\delta} \E\left( \max_{1 \leq k \leq T} \left\Vert
\sum_{t = 1}^{k} \sum_{s = 1}^{T} \widehat{\mathbf{g}}_{s} \mathbf{g}_{t}^{\top} \xi_{s, t} \right\Vert^{\delta} \right)
\le T^{-\delta} \E\left( \max_{1 \leq k \leq T} \left\Vert \sum_{t = 1}^{k} \sum_{s = 1}^{T} (\widehat{\mathbf{g}}_{s} - \mbf H^\top \mbf g_s) \mathbf{g}_{t}^{\top} \xi_{s, t} \right\Vert^{\delta} \right)
\\
+ T^{-\delta} \E\left( \max_{1 \leq k \leq T} \left\Vert \mbf H^\top \sum_{t = 1}^{k} \sum_{s = 1}^{T} \mbf g_s \mathbf{g}_{t}^{\top} \xi_{s, t} \right\Vert^{\delta} \right) =: T_{4, 1} + T_{4, 2}.
\end{align*}
Repeating the same passages as in the proof of Lemma~\ref{moment-2} leading to~\eqref{b-4-2}, 
it is easily seen that 
\begin{align*}
T_{4, 2} &\le T^{-\delta} \l( \E( \Vert \mbf H^\top \Vert^{2\delta} \r)^{1/2} \l[ \E\left( \max_{1 \leq k \leq T} \left\Vert \sum_{t = 1}^{k} \sum_{s = 1}^{T} \mbf g_s \mathbf{g}_{t}^{\top} \xi_{s, t} \right\Vert^{2\delta} \right) \r]^{1/2}
\\
&\le c_0 T^{-\delta} \l\{ 
\l[ \E\l( \max_{1 \le k \le T} \l\Vert \sum_{t = 1}^k \mbf g_t \mbf g_t^\top \r\Vert^{4\delta} \r) \r]^{1/2} 
\l[ \E\l( \l\Vert \frac{1}{N} \sum_{s = 1}^T \sum_{i = 1}^N \bm\lambda_i^\top \mbf g_s e_{i, s} \r\Vert^{4\delta} \r) \r]^{1/2}
\r\}^{1/2}
\\
&\le c_0 T^{-\delta} \l\{ 
\l[ \E\l( \l\Vert \sum_{t = 1}^T \mbf g_t \mbf g_t^\top \r\Vert^{4\delta} \r) \r]^{1/2} 
\l[ \E\l( \l\Vert \frac{1}{N} \sum_{s = 1}^T \sum_{i = 1}^N \bm\lambda_i^\top \mbf g_s e_{i, s} \r\Vert^{4\delta} \r) \r]^{1/2}
\r\}^{1/2}
\\
&\le c_0 T^{\delta/2} N^{-\delta/2}
\end{align*}
As for $T_{4, 1}$, we may set $r = d = 1$ and omit $\mbf H$, which gives
\begin{align*}
T_{4, 1} &=
T^{-\delta} \E\l( \max_{1 \le k \le T} \l\vert \sum_{t = 1}^{k} g_{t}^{2} \cdot \sum_{s = 1}^{T} \l( \widehat{g}_{s} - g_{s} \r) \frac{1}{N} \sum_{i = 1}^{N} \lambda_{i} e_{i, s} \r\vert^{\delta} \r)
\\
&\leq 
T^{-\delta} \l[ \E\l( \l\vert \sum_{t = 1}^{T} g_{t}^{2} \r\vert^{p\delta} \r) \r]^{\frac{1}{p}} \l[ \E\l( \l\vert \sum_{s = 1}^{T} \l( \widehat{g}_{s} - g_{s} \r) \frac{1}{N} \sum_{i = 1}^{N} \lambda_{i} e_{i, s}\r\vert^{\frac{p\delta}{p - 1}}\r) \r]^{\frac{p - 1}{p}}.
\end{align*}
Setting $p = 4$ and applying the arguments analogous to those adopted in \eqref{b-4-1}, we obtain $T_{4, 1} \le c_0 T^{\delta} N^{-\delta/2} C_{NT}^{-\delta}$  which completes the proof of~\eqref{mc-4}.

For the rest of the proof, we proceed analogously and check the steps for the case of $r = d = 1$ for simplicity. 
For~\eqref{mc-1}, we have
\begin{align*}
& T^{-\delta} \E\l( \max_{1 \leq k \leq T} \l\vert \sum_{t = 1}^{k} \sum_{s = 1}^{T} \widehat{g}_{s} g_{t} \gamma_{s, t} \r\vert^{\delta} \r) 
\leq T^{-\delta} \E\l( \l\vert \max_{1 \leq k \leq T} \sum_{t = 1}^k \sum_{s = 1}^{T} \l( \widehat{g}_{s} - g_{s}\r) g_{t} \gamma_{s, t} \r\vert^{\delta} \r) 
\\
&+ T^{-\delta} \E\l( \l\vert \max_{1 \leq k \leq T} \sum_{t = 1}^k \sum_{s = 1}^{T} g_{s} g_{t} \gamma_{s, t} \r\vert^{\delta} \r) =: T_{1, 1} + T_{1, 2}.
\end{align*}

Using the Cauchy-Schwartz inequality twice,
\begin{align*}
T_{1, 1} &\le T^{-\delta} \E\l( \l\vert \sum_{s = 1}^{T}\l( \widehat{g}_{s} - g_{s} \r)^{2} \r\vert^{\delta/2} \max_{1 \le k \le T} \l\vert \sum_{s = 1}^{T} \l( \sum_{t = 1}^{k} g_{t} \gamma_{s, t} \r)^{2} \r\vert^{\delta /2} \r)
\\
&\le c_0 T^{-\delta /2} C_{NT}^{-\delta} \l[ T^{\delta - 1} \sum_{s = 1}^{T} \E \l( \max_{1 \le k \le T} \l\vert \sum_{t = 1}^{k} g_{t} \gamma_{s, t} \r\vert^{2\delta} \r) \r]^{1/2},
\end{align*}
having used Lemma~\ref{moment-1} in the last passage. 
Let $n_{0} = \lceil 4\delta \rceil$.
Using the $\mathcal{L}_{p}$-norm inequality, we have
\begin{align*}
\E\l( \max_{1 \le k \le T} \l\vert \sum_{t = 1}^{k} g_{t} \gamma_{s, t}\r\vert^{n_{0}} \r) &=
\E\l( \max_{1 \le k \le T} \l\vert \sum_{t_{1}, \ldots, t_{n_{0}} = 1}^{k} g_{t_{1}} \cdot \ldots \cdot g_{t_{n_{0}}} \cdot \gamma_{s, t_{1}} \cdot \ldots \cdot \gamma_{s, t_{n_{0}}} \r\vert \r)
\\
&\leq \sum_{t_{1}, \ldots, t_{n_{0}} = 1}^{T} 
\E\l( \prod\limits_{i=1}^{n_{0}} \l\vert g_{t_{i}}\r\vert \r) \prod\limits_{i = 1}^{n_{0}}\l\vert \gamma_{s, t_{i}} \r\vert 
\leq c_{0},
\end{align*}
and therefore $T_{1, 1} \le c_0 C_{NT}^{-\delta}$.
Analgously, combined with~\eqref{eq:g:delta},
\begin{align*}
T_{1, 2} & \leq T^{-\delta} \E \l( \l\vert \sum_{s = 1}^{T} g_{s}^{2} \r\vert^{\delta /2} \max_{1 \le k \le T} \l\vert \sum_{s = 1}^{T} \l( \sum_{t = 1}^{k} g_{t}\gamma_{s, t} \r)^{2} \r\vert^{\delta /2} \r)
\leq c_{0} T^{-\delta} \cdot T^{\delta / 2} \cdot T^{\delta / 2} = O(1),
\end{align*}
so we have the desired result. 
Next, let us consider~\eqref{mc-2}.
\begin{align*}
& T^{-\delta} \E\l( \max_{1 \leq k \leq T} \l\vert \sum_{t = 1}^{k} \sum_{s = 1}^{T} \wh g_s g_t \zeta_{s, t} \r\vert^\delta \r)
\leq T^{-\delta} \E\l( \max_{1 \leq k \leq T} \l\vert \sum_{s = 1}^{T} \l( \widehat{g}_{s} - g_{s}\r) \l( \sum_{t = 1}^{k} g_{t} \zeta_{s, t} \r) \r\vert^{\delta} \r) 
\\
& + T^{-\delta} \E\l( \max_{1 \leq k \leq T} \l\vert \sum_{s = 1}^{T} g_{s} \l( \sum_{t = 1}^{k} g_{t} \zeta_{s, t} \r) \r\vert^{\delta} \r) =: T_{2, 1} + T_{2, 2}.
\end{align*}
Using the Cauchy-Schwartz inequality and Lemma~\ref{moment-1},
\begin{align*}
T_{2, 1} &\leq T^{-\delta} \E\l( \l\vert \sum_{s = 1}^{T} \l( \widehat{g}_{s} - g_{s} \r)^{2}\r\vert^{\delta / 2} \max_{1 \leq k \leq T} \l\vert \sum_{s = 1}^{T} \l( \sum_{t = 1}^{k} g_{t} \zeta_{s, t} \r)^{2} \r\vert
^{\delta/2} \r)
\\
&\leq T^{-\delta} \l[ \E\l( \l\vert \sum_{s = 1}^{T}\l( \widehat{g}_{s} - g_{s} \r)^{2} \r\vert^{\delta} \r) \r]^{1/2} \l[ \E\l( \max_{1 \leq k \leq T} \l\vert \sum_{s = 1}^{T} \l( \sum_{t = 1}^{k} g_{t}\zeta_{s, t} \r)^{2} \r\vert^{\delta} \r) \r]^{1/2} 
\\
&\le c_0 T^{-\delta/2} C_{NT}^{-\delta} \l[ T^{\delta - 1} \sum_{s = 1}^{T} \E\l( \max_{1 \leq k \leq T} \l\vert \sum_{t = 1}^{k} g_{t}\zeta_{s, t} \r\vert^{2\delta} \r) \r]^{1/2}.
\end{align*}
Under Assumption~\ref{depFE}~\ref{assum:depFE:one}, it follows that 
\begin{align*}
\E\l( \max_{1 \leq k \leq T} \l\vert \sum_{t = 1}^{k} g_{t}\zeta_{s, t} \r\vert^{2\delta} \r) \le c_0 T^\delta N^{-\delta}
\end{align*}
by Theorem~3.1 in \citet{moricz1982}, which
immediately entails that
\begin{align}
T_{2, 1} \le c_0 T^{\delta/2} N^{-\delta/2} C_{NT}^{-\delta}.
\end{align}
By the same token, with~\eqref{eq:g:delta},
\begin{align*}
T_{2, 2} 
&\leq T^{-\delta} \E\l( \l\vert \sum_{s = 1}^{T} g_{s}^{2} \r\vert^{\delta / 2} \max_{1 \leq k \leq T} \l\vert \sum_{s = 1}^{T} \l( \sum_{t = 1}^{k} g_{t} \zeta_{s, t} \r)^{2} \r\vert^{\delta/2} \r)
\\
&\le c_0 T^{-\delta/2} 
\l[ T^{\delta - 1} \sum_{s = 1}^{T} \E\l( \max_{1 \leq k \leq T} \l\vert \sum_{t = 1}^{k} g_{t} \zeta_{s, t} \r\vert^{2\delta} \r) \r]^{1/2}
\le c_0 T^{\delta /2} N^{-\delta /2}.
\end{align*}
Putting together the bounds on $T_{2, 1}$ and $T_{2, 2}$, we have the desired result. 
Finally, as for~\eqref{mc-3}, 
\begin{align*}
& T^{-\delta} \E\l( \max_{1 \leq k \leq T} \l\vert \sum_{s = 1}^{T} \sum_{t = 1}^{k} \wh g_s g_t \eta_{s, t} \r\vert^{\delta} \r) 
\le 
T^{-\delta} \E\l( \l\vert \sum_{s = 1}^{T} \l( \widehat{g}_{s} - g_{s} \r) \max_{1 \le k \le T} \l( \sum_{t = 1}^{k} g_{t} \eta_{s, t} \r) \r\vert^{\delta} \r) 
\\
& + T^{-\delta} \E\l( \l\vert \sum_{s = 1}^{T} g_{s} \max_{1 \le k \le T} \l( \sum_{t = 1}^{k} g_{t} \eta_{s, t} \r) \r\vert^{\delta} \r) =: T_{3, 1} + T_{3, 2}.
\end{align*}
It holds that
\begin{align*}
T_{3, 1} 
&\leq T^{-\delta} \E\l( \l\vert \sum_{s = 1}^{T} \l( \widehat{g}_{s} - g_{s} \r)^{2} \r\vert^{\delta /2} \max_{1 \le k \le T} \l\vert \sum_{s = 1}^{T} \l( \sum_{t = 1}^{k} g_{t} \eta_{s, t} \r)^{2} \r\vert^{\delta /2} \r) 
\\
&\leq T^{-\delta} \l[ \E\l( \l\vert \sum_{s = 1}^{T}\l( \widehat{g}_{s} - g_{s} \r)^{2} \r\vert^{\delta} \r) \r]^{1/2} \l[ \E\l( \max_{1 \le k \le T} \l\vert \sum_{s = 1}^{T} \l( \sum_{t = 1}^{k} g_{t} \eta_{s, t} \r)^{2} \r\vert^{\delta} \r) \r]^{1/2} 
\\
&\le c_0 T^{-\delta /2} C_{NT}^{-\delta} \l[ T^{\delta - 1} \sum_{s = 1}^{T} \E\l( \max_{1 \le k \le T} \l\vert \sum_{t = 1}^{k} g_{t} \eta_{s, t} \r\vert^{2\delta} \r) \r]^{1/2}.
\end{align*}
By definition of $\eta_{s, t}$, making use of~\eqref{eq:g:delta}, 
\begin{align*}
& T^{\delta - 1} \sum_{s = 1}^{T} \E\l( \max_{1 \le k \le T} \l\vert \sum_{t = 1}^{k} g_{t} \eta_{s, t} \r\vert^{2\delta} \r)
= T^{\delta - 1} \sum_{s = 1}^{T} \E\l( \max_{1 \le k \le T} \l\vert \sum_{t = 1}^{k} g_{t} g_{s} \frac{1}{N} \sum_{i = 1}^{N} \lambda_{i} e_{i, t} \r\vert^{2\delta} \r)
\\
&\leq T^{\delta - 1} \sum_{s = 1}^{T} \l[ \E\l( \l\vert g_{s} \r\vert^{4\delta} \r) \r]^{1/2} \l[ \E\l( \max_{1 \le k \le T} \l\vert \frac{1}{N} \sum_{i = 1}^{N} \sum_{t = 1}^{k} g_{t} \lambda_{i} e_{i, t} \r\vert^{4\delta} \r) \r]^{1/2}
\leq c_{0} T^{2\delta} N^{-\delta}
\end{align*}
by Assumptions~\ref{factors}~\ref{assum:factors:one} and~\ref{depFE}~\ref{assum:depFE:two}, and Theorem~3.1 of \cite{moricz1982}.
Therefore,
\begin{align}
T_{3, 1} \leq c_0 T^{\delta/2} N^{-\delta/2} C_{NT}^{-\delta}. \nn 
\end{align}
Along the same lines, 
\begin{align*}
T_{3, 2} &\leq T^{-\delta} \E\l( \l\vert \sum_{s = 1}^{T} g_{s}^{2} \r\vert^{\delta/2} \max_{1 \le k \le T} \l\vert \sum_{s = 1}^{T} \l( \sum_{t = 1}^{k} g_{t} \eta_{s, t} \r)^{2} \r\vert^{\delta /2} \r) 
\\
&\leq T^{-\delta} \l[ \E\l( \l\vert \sum_{s = 1}^{T} g_{s}^{2} \r\vert^{\delta} \r) \r]^{1/2} \l[ \E\l( \max_{1 \le k \le T} \l\vert \sum_{s = 1}^{T} \l( \sum_{t = 1}^{k} g_{t} \eta_{s, t} \r)^{2} \r\vert^{\delta}\r) \r]^{1/2}
\\
&\le c_0 T^{-\delta/2} \l[ T^{\delta - 1} \sum_{s = 1}^{T} \E\l( \max_{1 \le k \le T} \l\vert \sum_{t = 1}^{k} g_{t} \eta_{s, t} \r\vert^{2\delta} \r) \r]^{1/2}
\le c_0 T^{\delta/2} N^{-\delta/2},
\end{align*}
which completes the proof.
\end{proof}

The following three lemmas provide an estimate of the rate of convergence of
the partial sums of (appropriately centered) $\vech(\widehat{\mathbf{g}}_{t}\widehat{\mathbf{g}}_{t}^\top)$ to their weak limit.

\begin{lem}
\label{sip-1} Suppose that Assumption~\ref{factors} holds.
Then for each $j\in \{0,\ldots ,R\}$, on a suitably enlarged probability
space, there exist some constant $\zeta_{1}\in (0,1/2)$ and two independent 
$d$-dimensional Wiener processes $\{W_{1,dT}^{(j)}(k),\,1\leq k\leq \Delta
_{j}/2\}$ and $\{W_{2,dT}^{(j)}(k),\,1\leq k\leq \Delta_{j}/2\}$ with $\Delta_{j}=k_{j+1}-k_{j}$, such that 
\begin{align*}
& \max_{1\leq k\leq \Delta_{j}/2}\frac{1}{k^{\zeta_{1}}}\left\Vert
\sum_{t=k_{j}+1}^{k_{j}+k}\mathsf{Vech}\left( \mathbf{g}_{t}\mathbf{g}%
_{t}^{\top }-\mathsf{E}\left( \mathbf{g}_{t}\mathbf{g}_{t}^{\top }\right)
\right) -\mathbf{D}_{j}^{1/2}W_{1,dT}^{(j)}(k)\right\Vert =O_{P}(1), \\
& \max_{\Delta_{j}/2<k<\Delta_{j}}\frac{1}{(\Delta_{j}-k)^{\zeta_{1}}}%
\left\Vert \sum_{t=k_{j}+k+1}^{k_{j+1}}\mathsf{Vech}\left( \mathbf{g}_{t}%
\mathbf{g}_{t}^{\top }-\mathsf{E}\left( \mathbf{g}_{t}\mathbf{g}_{t}^{\top
}\right) \right) -\mathbf{D}_{j}^{1/2}W_{2,dT}^{(j)}\left( \Delta
_{j}-k\right) \right\Vert =O_{P}(1)
\end{align*}
where, with $\mathbf{A}_{j}$ defined in~\eqref{eq:model:two} and $\mathbf{D}$
in~\eqref{eq:D}, we have 
\begin{equation}
\mathbf{D}_{j}=\mathbf{L}_{r}(\mathbf{A}_{j}\otimes \mathbf{A}_{j})\mathbf{K}%
_{r}\mathbf{D}\mathbf{K}_{r}^{\top }(\mathbf{A}_{j}^{\top }\otimes \mathbf{A}%
_{j}^{\top })\mathbf{L}_{r}^{\top }. \label{eq:dj}
\end{equation}%
\end{lem}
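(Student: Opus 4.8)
The plan is to reduce the claim to a strong invariance principle (SIP) for the partial sums of the centred sequence $\mc F_t - \E(\mc F_t)$ and then transport it through the linear map induced by $\mbf A_j$. On the $j$-th segment $\mbf g_t = \mbf A_j \mbf f_t$ for $k_j < t \le k_{j+1}$, so using $\veco(\mbf A_j \mbf f_t \mbf f_t^\top \mbf A_j^\top) = (\mbf A_j \otimes \mbf A_j)\veco(\mbf f_t \mbf f_t^\top)$ together with $\vech(\cdot) = \mbf L_r \veco(\cdot)$ and $\veco(\mbf f_t \mbf f_t^\top) = \mbf K_r \mc F_t$, and noting $\E(\mbf g_t \mbf g_t^\top) = \mbf A_j \b\Sigma_F \mbf A_j^\top$ by stationarity, one obtains
\[
\vech\l( \mbf g_t \mbf g_t^\top - \E(\mbf g_t \mbf g_t^\top) \r) = \mbf M_j \l( \mc F_t - \E(\mc F_t) \r), \qquad \mbf M_j := \mbf L_r (\mbf A_j \otimes \mbf A_j)\mbf K_r.
\]
Hence the partial sums in both displays are exactly $\mbf M_j$ applied to those of $\mc F_t - \E(\mc F_t)$, and the target covariance is $\mbf M_j \mbf D \mbf M_j^\top = \mbf D_j$ as in~\eqref{eq:dj}. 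It thus suffices to prove the SIP for $\mc F_t - \E(\mc F_t)$ with limiting covariance $\mbf D$, push it forward by $\mbf M_j$, and rewrite the Gaussian process $\mbf M_j \mbf D^{1/2} W$ as $\mbf D_j^{1/2}\wt W$ for a standard Wiener process $\wt W$ via equality of covariances.

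Second, I would verify that $\{\mc F_t - \E(\mc F_t)\}$ is itself an $\mc L_\phi$-decomposable Bernoulli shift for some $\phi > 2$ with decay exponent $a > 2$; this is the multivariate analogue of the computation already performed in the proof of Lemma~\ref{lem:gg}, where expanding $f_{i,t}f_{j,t} - \wt f_{i,t,\ell}\wt f_{j,t,\ell}$ and applying the Cauchy--Schwarz and Minkowski inequalities, using Assumption~\ref{factors}~\ref{assum:factors:one} (with $\nu \ge 8$, so $2\phi \le \nu$ for $\phi \le 4$, and $a>2$), yields $|\,\Vert \mc F_t - \mc F_{t,\ell}^\ast\Vert\,|_\phi \le c_0 \ell^{-a}$. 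Since $\phi>2$ and $a>2$, a multivariate SIP for $\mc L_\phi$-decomposable Bernoulli shifts produces, on an enlarged probability space, a standard $d$-dimensional Wiener process $W$ with
\[
\l\Vert \sum_{t=1}^{k}\l(\mc F_t - \E(\mc F_t)\r) - \mbf D^{1/2} W(k) \r\Vert = o(k^{\zeta_1}) \quad \text{a.s.}
\]
for some $\zeta_1 \in (0,1/2)$ (admissible because $1/\phi < 1/2$), with $\mbf D$ invertible by Assumption~\ref{factors}~\ref{assum:factors:three}. Applying $\mbf M_j$ and dividing by $k^{\zeta_1}$ gives the first display, the normalised supremum over $k$ being $O_P(1)$ since the almost-sure $o(k^{\zeta_1})$ error dominates uniformly.

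The two remaining points are the backward half-sum and the independence of $W_{1,dT}^{(j)}$ and $W_{2,dT}^{(j)}$, which I expect to be the main obstacle. For the second display I would apply the same SIP to the time-reversed sequence $\check{\mc F}_s = \mc F_{k_{j+1}+1-s}$, which is stationary and, by the symmetry of the coupling bound in Definition~\ref{bernoulli}~(iv) after relabelling the innovations, is again an $\mc L_\phi$-decomposable Bernoulli shift; this approximates the sum of the last $m = \Delta_j - k$ terms by $\mbf D_j^{1/2} W_{2,dT}^{(j)}(m)$ at the same rate. Independence would follow by splitting the segment at its midpoint: $W_{1,dT}^{(j)}$ is constructed from the first half $[k_j+1, k_j + \Delta_j/2]$ and $W_{2,dT}^{(j)}$ from the second half $(k_j + \Delta_j/2, k_{j+1}]$, and the polynomial decay $a>2$ renders the cross-dependence across the midpoint of order $o(k^{\zeta_1})$, so that on the enlarged space the two limiting processes can be realised as exactly independent without degrading the stated rates. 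The delicate part is precisely this decoupling — ensuring the reversal preserves the Bernoulli-shift structure and that the two strong approximations can be coupled to be independent while retaining the uniform $O_P(1)$ control — which is where the weak-dependence hypothesis $a>2$ is indispensable.
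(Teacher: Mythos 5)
Your first two steps coincide with the paper's own proof: the identity $\vech(\mbf g_t\mbf g_t^\top - \E(\mbf g_t\mbf g_t^\top)) = \mbf L_r(\mbf A_j\otimes\mbf A_j)\mbf K_r\,(\mc F_t - \E(\mc F_t))$ reduces everything to a SIP for $\mc F_t$, with $\mbf D_j = \mbf M_j \mbf D \mbf M_j^\top$ as in~\eqref{eq:dj}, and the verification that $\{\mc F_t - \E(\mc F_t)\}$ is an $\mc L_\phi$-decomposable Bernoulli shift with $\phi>2$ and $a>2$ is exactly the computation from the proof of Lemma~\ref{lem:gg}, which the paper reuses verbatim. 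The gap lies in where you admit it lies: the backward approximation and, above all, the independence of $W^{(j)}_{1,dT}$ and $W^{(j)}_{2,dT}$.

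Two concrete problems. First, time reversal does not preserve the structure of Definition~\ref{bernoulli}: if $\mc F_t = h(\bm\eta_t,\bm\eta_{t-1},\ldots)$, then $\check{\mc F}_s = \mc F_{k_{j+1}+1-s}$ is a function of the relabelled innovations $\check{\bm\eta}_s, \check{\bm\eta}_{s+1},\ldots$, i.e.\ an \emph{anti-causal} representation, so the reversed sequence is not a decomposable Bernoulli shift in the paper's sense, and ``symmetry of the coupling bound after relabelling'' does not repair this -- one must argue (or cite) that the SIP machinery is invariant under time reflection, which is plausible but is an additional claim, not a consequence of the definition. Second, and decisively, your independence step is not a proof: knowing that the cross-dependence between the two halves of the segment is small does not allow you to ``realise the two limiting processes as exactly independent'' on an enlarged space while retaining the uniform $O_P(k^{\zeta_1})$, resp.\ $O_P((\Delta_j-k)^{\zeta_1})$, error. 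Replacing a weakly dependent pair of Gaussian approximants by an exactly independent pair is a coupling construction in itself, and controlling the approximation error through that coupling is precisely the nontrivial content of the blocking argument. The paper sidesteps both issues at once by invoking Theorem~S2.1 of \citet{aue2014}, which is stated in exactly the required two-sided form -- a forward Wiener process built from one end of the sample and a backward one from the other, independent \emph{by construction} because the theorem's proof assembles them from disjoint blocks. Your proposal is therefore the paper's proof with its key external ingredient replaced by a sketch; to close it you must either cite such a two-sided SIP or reproduce the blocking construction, and the proposal does neither.
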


\begin{proof}
We first show that on a suitably enlarged probability space, there exist some constant $\zeta_{1} \in (0, 1/2)$ and two independent $d$-dimensional Wiener processes $\{ W_{1, dT}(k), \, 1 \le k \le T/2 \}$ and $\{ W_{2, dT}(k), \, 1 \le k \le T/2 \}$ such that
\begin{align*}
& \max_{1 \leq k \leq T/2} \frac{1}{k^{\zeta_{1}}} \l\Vert \sum_{t = 1}^{k} \vech\l( \mbf f_{t}\mbf f_{t}^\top - \E\l( \mbf f_{t} \mbf f_{t}^\top \r) \r) - \mbf D^{1/2} W_{1, dT}(k) \r\Vert = O_P(1),
\\
& \max_{T/2 < k < T} \frac{1}{(T - k)^{\zeta_{1}}} \l\Vert \sum_{t = k + 1}^{T} \vech\l( \mbf f_{t}\mbf f_{t}^\top - \E\l( \mbf f_{t} \mbf f_{t}^\top \r) \r) - \mbf D^{1/2} W_{2, dT}(T - k) \r\Vert = O_P(1).
\end{align*}
We begin by noting that Assumption~\ref{factors}~\ref{assum:factors:one} entails that $\{ \mbf f_t\mbf f_t^\top - \bm\Sigma_F \}$ is an $\mc L_{\phi}$-decomposable Bernoulli shift with some $\phi > 2$, see the proof of Lemma~\ref{lem:gg}. 
Then, the desired result follows immediately from Theorem~S2.1 of \citet{aue2014}; note that the proofs in \citet{aue2014} are based on the
blocking argument, and therefore this leads to the independence between $\{ W_{1, dT}(k), \, 1 \le k \le T/2 \}$ and $\{ W_{2, dT}(k), \, 1 \le k \le T/2\}$.
The claim of the lemma follows from this, by noting that there are finitely many change points and also from~\eqref{eq:model:two}, we have $\vech(\mbf g_t \mbf g_t^\top) = \mbf L_r (\mbf A_j \otimes \mbf A_j) \mbf K_r \vech(\mbf f_t \mbf f_t^\top)$.
\end{proof}

\begin{lem}
\label{sip-2} Suppose that the assumptions of Lemmas~\ref{moment-1} and \ref{moment-2} hold with with $\rho = 1$ in Assumptions~\ref{factors}, \ref{idiosyncratic} and~\ref{depFE}, as well as Assumption~\ref{assum:nt}. Then there exists
some constant $\zeta_{2}\in (0,1/2)$ such that 
\begin{align}
\max_{1\leq k\leq T}\frac{1}{k^{\zeta_{2}}}\left\Vert \sum_{t=1}^{k}\left( 
\widehat{\mathbf{g}}_{t}-\mathbf{H}^{\top }\mathbf{g}_{t}\right) \left( 
\widehat{\mathbf{g}}_{t}-\mathbf{H}^{\top }\mathbf{g}_{t}\right)^{\top
}\right\Vert & =O_{P}(1),  \label{sip-2-1} \\
\max_{1\leq k\leq T}\frac{1}{k^{\zeta_{2}}}\left\Vert \sum_{t=1}^{k}\mathbf{%
g}_{t}\left( \widehat{\mathbf{g}}_{t}-\mathbf{H}^{\top }\mathbf{g}%
_{t}\right)^{\top}\right\Vert & =O_{P}(1).  \label{sip-2-2}
\end{align}
\end{lem}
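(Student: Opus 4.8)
The plan is to prove both \eqref{sip-2-1} and \eqref{sip-2-2} by a dyadic blocking argument that upgrades the full-sample moment bounds of Lemmas~\ref{moment-1} and~\ref{moment-2} (and the maximal inequalities of Lemmas~\ref{max-inequ} and~\ref{max-inequ-2}) to the self-normalised maxima on the left-hand sides. The weight $k^{-\zeta_2}$ is what makes this nontrivial: the full-sample bounds control the partial sums only at $k=T$, whereas the normalisation demands control uniformly in $k$, so one must exhibit the (roughly linear) growth of the partial sums in the block length. The ingredient that makes the weighting succeed is Assumption~\ref{assum:nt}, which gives $C_{NT}^{-2}=(\min(N,T))^{-1}\le T^{-\min(1,\beta)}$ with $\beta>1/2+\epsilon_\circ$; this permits choosing $\zeta_2\in(\max(0,1-\beta),1/2)$ so that $C_{NT}^{-2}T^{1-\zeta_2}=O(1)$.

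For \eqref{sip-2-1} I would first record a block form of Lemma~\ref{moment-1}. Its proof in fact yields the per-$t$ bound $\max_{1\le t\le T}\E(\Vert\wh{\mbf g}_t-\mbf H^\top\mbf g_t\Vert^{2\delta})\le c_0 C_{NT}^{-2\delta}$ for $1\le\delta\le 2+\epsilon$, so by the triangle and power-mean inequalities,
\begin{align*}
\E\l(\l\Vert\sum_{t=a+1}^b(\wh{\mbf g}_t-\mbf H^\top\mbf g_t)(\wh{\mbf g}_t-\mbf H^\top\mbf g_t)^\top\r\Vert^\delta\r)\le(b-a)^{\delta-1}\sum_{t=a+1}^b\E\l(\Vert\wh{\mbf g}_t-\mbf H^\top\mbf g_t\Vert^{2\delta}\r)\le c_0\l((b-a)C_{NT}^{-2}\r)^\delta.
\end{align*}
Writing $\mbf S_k$ for the partial sum in \eqref{sip-2-1} and using that its summands are positive semidefinite, so $\Vert\mbf S_k\Vert$ is nondecreasing in $k$, I would split $\{1,\dots,T\}$ into dyadic blocks $(2^{m-1},2^m]$, $m\le\lceil\log_2 T\rceil$, bound $\max_{2^{m-1}<k\le 2^m}k^{-\zeta_2}\Vert\mbf S_k\Vert\le 2^{-(m-1)\zeta_2}\Vert\mbf S_{2^m}\Vert$, and then combine a union bound over $m$ with Markov's inequality and the block moment bound. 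The geometric series $\sum_m 2^{m\delta(1-\zeta_2)}$ is dominated by its last term $T^{\delta(1-\zeta_2)}$, yielding $\P(\max_k k^{-\zeta_2}\Vert\mbf S_k\Vert>\lambda)\le c_0\lambda^{-\delta}(C_{NT}^{-2}T^{1-\zeta_2})^\delta$, which tends to $0$ as $\lambda\to\infty$ by the choice of $\zeta_2$; this is precisely tightness, i.e.\ $O_P(1)$.

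For \eqref{sip-2-2} the summands are no longer positive semidefinite, so monotonicity is unavailable and genuine within-block maximal inequalities are needed. Here I would use \eqref{eq:bai} to decompose $\sum_{t=1}^k\mbf g_t(\wh{\mbf g}_t-\mbf H^\top\mbf g_t)^\top$ into the four pieces driven by $\gamma_{s,t},\zeta_{s,t},\eta_{s,t},\xi_{s,t}$, exactly as in Lemma~\ref{max-inequ-2}, and establish block versions of \eqref{mc-1}--\eqref{mc-4} in which the range $1\le k\le T$ of the outer maximum is replaced by $a<k\le b$ and the bounds scale with $(b-a)$ rather than $T$. These follow by repeating the arguments of Lemma~\ref{max-inequ-2}, now invoking the interval forms of Assumption~\ref{depFE}~\ref{assum:depFE:one}--\ref{assum:depFE:two} (stated for all $0\le a<b\le T$ precisely for this purpose) together with the Móricz maximal inequality (Theorem~3.1 of \citet{moricz1982}) applied over each block. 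With these block inequalities in hand, the same dyadic union-bound-plus-Markov scheme as above delivers \eqref{sip-2-2}.

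The main obstacle is the derivation of the block maximal inequalities for \eqref{sip-2-2} and the bookkeeping of their heterogeneous $(b-a)$-rates: the four terms do not scale identically (the $\gamma$-term behaves differently from the $\zeta$-, $\eta$- and $\xi$-terms, which carry additional $N^{-1/2}$ or $C_{NT}^{-1}$ factors), so one must check that, after weighting by $k^{-\zeta_2}$ and summing the dyadic contributions, even the slowest-decaying term is rendered $O_P(1)$. This is where Assumption~\ref{assum:nt} is used decisively, as it forces each of these polynomial-in-$T$ rates below the $T^{\zeta_2}$ afforded by the normalisation.
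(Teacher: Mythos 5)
Your proposal is correct and takes essentially the same approach as the paper: the paper proves \eqref{sip-2-1} by an exponential blocking argument (blocks $[\exp(\ell),\exp(\ell+1)]$) combined with a union bound, Markov's inequality and the PSD-monotonicity-based maximal bound of Lemma~\ref{max-inequ}, choosing $\zeta_2$ just below $1/2$ via Assumption~\ref{assum:nt} exactly as you do, and it proves \eqref{sip-2-2} by applying the same scheme to the four terms of the decomposition \eqref{eq:bai}, invoking the maximal inequalities of Lemma~\ref{max-inequ-2} (based on M\'{o}ricz's theorem) on each block. Your deviations --- dyadic rather than $e$-adic blocks, and deriving the per-block moment bound for \eqref{sip-2-1} from a uniform per-$t$ bound instead of citing Lemma~\ref{max-inequ} directly --- are cosmetic, and your closing remark that the $\gamma$-term scales differently from the others (an $O(1)$ per-block bound, which the $k^{-\zeta_2}$ weight alone renders summable) matches what the paper's bookkeeping actually delivers.
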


\begin{proof}
We begin with~\eqref{sip-2-1}. 
Standard arguments entail that
\begin{align}
& \p\l( \max_{1 \leq k \leq T} \frac{1}{k^{\zeta_{2}}} \l\Vert \sum_{t = 1}^{k} \l( \widehat{\mbf g}_{t} - \mbf H^\top \mbf g_{t} \r) \l( \widehat{\mbf g}_{t} - \mbf H^\top \mbf g_{t} \r)^\top \r\Vert > x \r)  
\nn \\
\leq & \, \p\l( \max_{0 \leq \ell \leq \lfloor \log(T) \rfloor} \max_{\exp(\ell) \le k \le \exp(\ell + 1)} \frac{1}{k^{\zeta_{2}}} \l\Vert \sum_{t = 1}^{k} \l( \widehat{\mbf g}_{t} - \mbf H^\top \mbf g_{t} \r) \l( \widehat{\mbf g}_{t} - \mbf H^\top \mbf g_{t} \r)^\top \r\Vert > x \r)  
\nn \\
\leq & \, \sum_{\ell = 0}^{\lfloor \log(T) \rfloor} \p\l( \max_{\exp(\ell) \le k \le \exp(\ell + 1)} \frac{1}{k^{\zeta_{2}}} \l\Vert \sum_{t = 1}^{k} \l( \widehat{\mbf g}_{t} - \mbf H^\top \mbf g_{t} \r) \l( \widehat{\mbf g}_{t} - \mbf H^\top \mbf g_{t} \r)^\top \r\Vert > x \r) 
\nn \\
\leq & \, \sum_{\ell = 0}^{\lfloor \log(T) \rfloor} \p\l( \max_{\exp(\ell) \le k \le \exp(\ell + 1)} \l\Vert \sum_{t = 1}^{k} \l( \widehat{\mbf g}_{t} - \mbf H^\top \mbf g_{t} \r) \l( \widehat{\mbf g}_{t} - \mbf H^\top \mbf g_{t} \r)^\top \r\Vert > x \exp(\zeta_2 \ell) \r) 
\nn \\
\leq & \, \frac{1}{x} \sum_{\ell = 0}^{\lfloor \log(T) \rfloor} \exp(-\zeta_2 \ell) \E\l[ \max_{\exp(\ell) \le k \le \exp(\ell + 1)} \l\Vert \sum_{t = 1}^{k} \l( \widehat{\mbf g}_{t} - \mbf H^\top \mbf g_{t} \r) \l( \widehat{\mbf g}_{t} - \mbf H^\top \mbf g_{t} \r)^\top \r\Vert \r]
\nn \\
\leq & \, \frac{1}{x} \sum_{\ell = 0}^{\lfloor \log(T) \rfloor} \exp(-\zeta_2 \ell) \frac{\exp(\ell + 1)}{\min\{N, \exp(\ell + 1)\}},
\label{eq:lem:sip-2:one}
\end{align}
where the last passage follows from Lemma~\ref{max-inequ}.
If $N \ge T$, the conclusion follows trivially.
On the other hand, if $\min\{N, \exp(\ell + 1)\} = N$ for some $\ell$, we have $N = T^\beta$ with some $\beta \in (1/2 + \epsilon_\circ, 1)$ under Assumption~\ref{assum:nt}.
Then, the RHS of~\eqref{eq:lem:sip-2:one} is bounded by
\begin{align*}
\frac{1}{x} \sum_{\ell = 0}^{\lfloor \log(T) \rfloor} \exp(-\zeta_2 \ell) + \frac{\exp(-\beta\log(T))}{x} \sum_{\ell = 0}^{\lfloor \log(T) \rfloor} \exp((1 -\zeta_2) \ell + 1) \le \frac{c_0}{x},
\end{align*}
provided that $1 - \beta \le \zeta_2$, which follows for $\zeta_2 = 1/2 - \epsilon$ with some $\epsilon \in (0, \epsilon_\circ)$.
This proves the desired result. 
The proof of~\eqref{sip-2-2} takes analogous steps and we discuss it only briefly. 
Note that, setting $r = d = 1$ and omitting $\mbf H$ for simplicity,
\begin{align}
& \max_{1 \leq k \leq T} \frac{1}{k^{\zeta_{2}}} \l\vert
\sum_{t = 1}^{k} g_{t} \l( \widehat{g}_{t} - g_{t} \r) \r\vert 
\nn \\
\leq &\, \max_{1 \leq k \leq T} \frac{1}{k^{\zeta_{2}}} \l\vert \frac{1}{T} \sum_{t = 1}^{k} \sum_{s = 1}^{T} \widehat{g}_{s} g_{t} \gamma_{s, t} \r\vert + 
\max_{1 \leq k \leq T} \frac{1}{k^{\zeta_{2}}} \l\vert  \frac{1}{T} \sum_{t = 1}^{k} \sum_{s = 1}^{T} \widehat{g}_{s} g_{t} \zeta_{s, t} \r\vert 
\nn \\
\leq &\, + \max_{1 \leq k \leq T} \frac{1}{k^{\zeta_{2}}} \l\vert \frac{1}{T} \sum_{t = 1}^{k} \sum_{s = 1}^{T} \widehat{g}_{s} g_{t} \eta_{s, t} \r\vert +
\max_{1 \leq k \leq T} \frac{1}{k^{\zeta_{2}}} \l\vert \frac{1}{T} \sum_{t = 1}^{k} \sum_{s = 1}^{T} \widehat{g}_{s} g_{t} \xi_{s, t} \r\vert
\label{m-ineq-2} 
\end{align}
by applying~\eqref{eq:bai} as in~\eqref{bai03-1}.
Then, the proof proceeds as in the proof of~\eqref{sip-2-1} to each term in the RHS of~\eqref{m-ineq-2} using Lemma~\ref{max-inequ-2}.
\end{proof}

\begin{lem}
\label{bound} Suppose that Assumptions~\ref{factors}--\ref{assum:nt} hold with $\rho = 1$ in Assumptions~\ref{factors}, \ref{idiosyncratic} and~\ref{depFE}. Then on
a suitably enlarged probability space, there exists some constant $\zeta \in
(0,1/2)$ 
such that 
\begin{align}
& \max_{1\leq k\leq T/2}\frac{1}{k^{\zeta }}\left\Vert \sum_{t=1}^{k}\mathsf{%
Vech}\left( \widehat{\mathbf{g}}_{t}\widehat{\mathbf{g}}_{t}^{\top }-\mathbf{%
H}^{\top }\mathsf{E}\left( \mathbf{g}_{t}\mathbf{g}_{t}^{\top }\right) 
\mathbf{H}\right) \right. -  \notag \\
& \qquad \qquad \qquad \left. \sum_{j=0}^{R}\mathbb{I}_{\{k_{j}\leq
k\}}\cdot \mathbf{L}_{r}(\mathbf{H}^{\top }\otimes \mathbf{H}^{\top })%
\mathbf{K}_{r}\mathbf{D}_{j}^{1/2}W_{1,dT}^{(j)}\left( \min
(k,k_{j+1})-k_{j}\right) \right\Vert =O_{P}(1),  \label{bound-1} \\
& \max_{T/2<k<T}\frac{1}{(T-k)^{\zeta }}\left\Vert \sum_{t=1}^{k}\mathsf{Vech%
}\left( \widehat{\mathbf{g}}_{t}\widehat{\mathbf{g}}_{t}^{\top }-\mathbf{H}%
^{\top }\mathsf{E}\left( \mathbf{g}_{t}\mathbf{g}_{t}^{\top }\right) \mathbf{%
H}\right) \right. -  \notag \\
& \qquad \qquad \left. \sum_{j=0}^{R}\mathbb{I}_{\{k_{j+1}\geq k\}}\cdot 
\mathbf{L}_{r}(\mathbf{H}^{\top }\otimes \mathbf{H}^{\top })\mathbf{K}_{r}%
\mathbf{D}_{j}^{1/2}W_{2,dT}^{(j)}\left( k_{j+1}-\max (k,k_{j})\right)
\right\Vert =O_{P}(1),  \label{bound-2}
\end{align}
where 
$\mathbf{D}_{j}$ is defined in~\eqref{eq:dj} and $W_{\ell ,dT}^{(j)}(\cdot
),\,\ell =1,2$, in Lemma~\ref{sip-1}.
\end{lem}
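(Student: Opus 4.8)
The plan is to decompose $\wh{\mbf g}_t \wh{\mbf g}_t^\top$ around $\mbf H^\top \mbf g_t \mbf g_t^\top \mbf H$ and reduce the claim to the latent-factor SIP of Lemma~\ref{sip-1} together with the estimation-error bounds of Lemma~\ref{sip-2}. Writing $\mbf u_t = \wh{\mbf g}_t - \mbf H^\top \mbf g_t$, I would start from the identity
\begin{align*}
\wh{\mbf g}_t \wh{\mbf g}_t^\top - \mbf H^\top \E(\mbf g_t \mbf g_t^\top) \mbf H = \mbf H^\top \l( \mbf g_t \mbf g_t^\top - \E(\mbf g_t \mbf g_t^\top) \r) \mbf H + \mbf u_t \mbf u_t^\top + \mbf H^\top \mbf g_t \mbf u_t^\top + \mbf u_t \mbf g_t^\top \mbf H,
\end{align*}
and apply $\vech(\cdot)$ termwise. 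Using $\veco(\mbf H^\top \mbf M \mbf H) = (\mbf H^\top \otimes \mbf H^\top)\veco(\mbf M)$ and $\veco(\mbf M) = \mbf K_r \vech(\mbf M)$ for symmetric $\mbf M$, the leading term equals $\mbf L_r (\mbf H^\top \otimes \mbf H^\top)\mbf K_r \sum_{t=1}^k \vech(\mbf g_t \mbf g_t^\top - \E(\mbf g_t \mbf g_t^\top))$, where $\mbf H$, being free of $t$, is pulled outside the sum.

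For this leading term I would split $\{1,\dots,k\}$ into the completed segments and the partial segment containing $k$, and on each invoke the first display of Lemma~\ref{sip-1}, recalling that $\vech(\mbf g_t \mbf g_t^\top) = \mbf L_r(\mbf A_j \otimes \mbf A_j)\mbf K_r \vech(\mbf f_t \mbf f_t^\top)$ on the $j$-th segment, with associated long-run covariance $\mbf D_j$ from~\eqref{eq:dj}. This reproduces exactly $\sum_{j=0}^R \mathbb{I}_{\{k_j \le k\}} \mbf L_r(\mbf H^\top \otimes \mbf H^\top)\mbf K_r \mbf D_j^{1/2} W_{1,dT}^{(j)}(\min(k,k_{j+1}) - k_j)$, leaving a remainder that is $O_P((\min(k,k_{j+1})-k_j)^{\zeta_1})$ on each segment. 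Because there are only finitely many segments and, by Assumption~\ref{assum:cps}~\ref{assum:cp:one}, crossing any boundary forces $k \ge k_1 \asymp T$, the normalisation $k^{-\zeta}$ absorbs the accumulated $O_P(T^{\zeta_1})$ from completed segments as soon as $\zeta \ge \zeta_1$, while for $k$ in the initial segment only the $O_P(k^{\zeta_1})$ term survives.

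The three remaining (estimation-error) terms are dispatched by Lemma~\ref{sip-2}: $\max_{1 \le k \le T} k^{-\zeta_2}\Vert \sum_{t=1}^k \mbf u_t \mbf u_t^\top \Vert = O_P(1)$ by~\eqref{sip-2-1}, and $\max_{1 \le k \le T} k^{-\zeta_2}\Vert \sum_{t=1}^k \mbf g_t \mbf u_t^\top \Vert = O_P(1)$ by~\eqref{sip-2-2} (the term $\mbf u_t \mbf g_t^\top \mbf H$ being its transpose). Pre- and post-multiplication by $\mbf H$ preserves these rates since $\Vert \mbf H \Vert = O_P(1)$ by~\eqref{eq:H:bound}. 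Setting $\zeta = \max(\zeta_1, \zeta_2) \in (0, 1/2)$ then delivers~\eqref{bound-1}. The bound~\eqref{bound-2} for $T/2 < k < T$ follows by the symmetric argument applied from the right, using the second (backward) display of Lemma~\ref{sip-1} and the processes $W_{2,dT}^{(j)}$, with $(T-k)^{-\zeta}$ playing the role of $k^{-\zeta}$.

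I expect the main obstacle to be the uniform-in-$k$ bookkeeping across segment boundaries: Lemma~\ref{sip-1} supplies independent Wiener pieces each carrying its own $O_P(\Delta_j^{\zeta_1})$ error, and one must check that assembling them into a single process indexed by $\min(k,k_{j+1})-k_j$ does not let the errors from completed segments overwhelm the normalisation at moderate $k$. This is exactly where the linear spacing of the change points (Assumption~\ref{assum:cps}~\ref{assum:cp:one}) is indispensable, since it guarantees $k \asymp T$ whenever a boundary has been crossed, so that finitely many $O_P(T^{\zeta_1})$ contributions are dominated by $k^\zeta$.
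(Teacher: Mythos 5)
Your proposal is correct and takes essentially the same approach as the paper's proof: split the target into the estimation-error part $\sum_{t}\mathsf{Vech}\bigl(\widehat{\mathbf{g}}_t\widehat{\mathbf{g}}_t^{\top}-\mathbf{H}^{\top}\mathbf{g}_t\mathbf{g}_t^{\top}\mathbf{H}\bigr)$, controlled by Lemma~\ref{sip-2} together with $\Vert\mathbf{H}\Vert=O_P(1)$ from~\eqref{eq:H:bound}, and the latent-factor part $\mathbf{L}_r(\mathbf{H}^{\top}\otimes\mathbf{H}^{\top})\mathbf{K}_r\sum_{t}\mathsf{Vech}\bigl(\mathbf{g}_t\mathbf{g}_t^{\top}-\mathsf{E}(\mathbf{g}_t\mathbf{g}_t^{\top})\bigr)$, approximated segment by segment via Lemma~\ref{sip-1}, with $\zeta=\max(\zeta_1,\zeta_2)$ and a symmetric argument for~\eqref{bound-2}. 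The cross-boundary bookkeeping you spell out (finitely many segments, $k\asymp T$ once a change point is crossed) is exactly what the paper's one-line proof leaves implicit in its appeal to Lemmas~\ref{sip-1} and~\ref{sip-2}.
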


\begin{proof}
The proof follows immediately from Lemmas~\ref{sip-1} and~\ref{sip-2}.
We prove~\eqref{bound-1} only since the arguments for~\eqref{bound-2} are analogous.
Let $\zeta = \max(\zeta_1, \zeta_2)$ where $\zeta_1$ and $\zeta_2$ are defined in Lemmas~\ref{sip-1} and~\ref{sip-2}, respectively.
Then we have
\begin{align*}
& \max_{1 \leq k \leq T/2} \frac{1}{k^{\zeta}} \l\Vert \sum_{t = 1}^{k} \vech\l( \widehat{\mbf g}_{t} \widehat{\mbf g}_{t}^\top - \mbf H^{\top}\E\l(  \mbf g_{t} \mbf g_{t}^\top \r) \mbf H \r) - \r.
\\
& \qquad \qquad \qquad \qquad \l. \sum_{j = 0}^R \mathbb{I}_{\{k_j \le k\}} \cdot \mbf L_r (\mbf H^\top \otimes \mbf H^\top) \mbf K_r \mbf D_j^{1/2} W^{(j)}_{1, dT}\l( \min(k, k_{j + 1}) - k_j \r) \r\Vert 
\\
\le & \, \max_{1 \leq k \leq T/2} \frac{1}{k^{\zeta}} \l\Vert \sum_{t = 1}^{k} \vech\l(  \widehat{\mbf g}_{t} \widehat{\mbf g}_{t}^\top - \mbf H^{\top} \mbf g_{t} \mbf g_{t}^\top \mbf H \r)  \r\Vert 
\\
& \, + \max_{1 \leq k \leq T/2} \frac{1}{k^{\zeta}} \l\Vert \mbf L_r (\mbf H^\top \otimes \mbf H^\top) \mbf K_r \l( \sum_{t = 1}^{k} \vech\l( \mbf g_{t} \mbf g_{t}^\top - \E\l( \mbf g_{t} \mbf g_{t}^\top \r) \r) -  \r. \r. 
\\
& \qquad \qquad \qquad \qquad \qquad \qquad \qquad \l. \l. \sum_{j = 0}^R \mathbb{I}_{\{k_j \le k\}} \cdot \mbf D_j^{1/2} W^{(j)}_{1, dT}\l( \min(k, k_{j + 1}) - k_j \r) \r) \r\Vert
=: T_1 + T_2. 
\end{align*}
Lemma~\ref{sip-2} immediately yields that $T_1 = O_P(1)$. 
We have $T_2$ bounded in light of Lemma~\ref{sip-1} since $\Vert \mbf H \Vert = O_P(1)$ (see~\eqref{eq:H:bound}) and $\Vert \mbf L_r \Vert = O(1)$ and $\Vert \mbf K_r \Vert = O(1)$ by their construction.
\end{proof}

The following two lemmas are useful in studying the behaviour of MOSUM
statistics in~\eqref{mos} in the presence of multiple change points.

\begin{lem}
\label{mosum-alt-1} Suppose that Assumptions~\ref{factors}--\ref{assum:nt}
hold with $\rho = 1$ in Assumptions~\ref{factors}, \ref{idiosyncratic} and~\ref{depFE}. Then it holds that 
\begin{equation*}
\max_{0\leq k\leq T-\gamma }\frac{1}{\sqrt{\gamma }}\left\Vert
\sum_{t=k+1}^{k+\gamma }\mathsf{Vech}\left( \widehat{\mathbf{g}}_{t}\widehat{\mathbf{g}}_{t}^{\top }-\mathbf{H}^{\top }\mathsf{E}\left( \mathbf{g}_{t}%
\mathbf{g}_{t}^{\top }\right) \mathbf{H}\right) \right\Vert =O_{P}(\sqrt{%
\log (T/\gamma )}).
\end{equation*}
\end{lem}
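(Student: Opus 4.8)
The plan is to deduce the statement from the strong invariance principle in Lemma~\ref{bound}, which already supplies a Wiener approximation for the cumulative sums of exactly the centred quantity appearing here. Writing $\mathbf{S}_k = \sum_{t=1}^k \mathsf{Vech}(\widehat{\mathbf{g}}_t\widehat{\mathbf{g}}_t^\top - \mathbf{H}^\top\mathsf{E}(\mathbf{g}_t\mathbf{g}_t^\top)\mathbf{H})$, I would first observe that the moving sum over $[k+1,k+\gamma]$ equals $\mathbf{S}_{k+\gamma}-\mathbf{S}_k$, so the object to be controlled is $\max_{0\le k\le T-\gamma}\gamma^{-1/2}\Vert \mathbf{S}_{k+\gamma}-\mathbf{S}_k\Vert$.

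Next I would substitute the SIP. By Lemma~\ref{bound}, for $m\le T/2$ one has $\mathbf{S}_m = \bm{\Pi}_m + \mathbf{R}_m$ with $\bm{\Pi}_m = \sum_{j=0}^R \mathbb{I}_{\{k_j\le m\}}\mathbf{L}_r(\mathbf{H}^\top\otimes\mathbf{H}^\top)\mathbf{K}_r\mathbf{D}_j^{1/2}W_{1,dT}^{(j)}(\min(m,k_{j+1})-k_j)$ and $\max_m \Vert\mathbf{R}_m\Vert/m^{\zeta} = O_P(1)$, together with the analogous backward representation for $m>T/2$. Hence $\mathbf{S}_{k+\gamma}-\mathbf{S}_k = (\bm{\Pi}_{k+\gamma}-\bm{\Pi}_k)+(\mathbf{R}_{k+\gamma}-\mathbf{R}_k)$. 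The remainder is crudely bounded by $\Vert\mathbf{R}_{k+\gamma}\Vert+\Vert\mathbf{R}_k\Vert \le 2T^{\zeta}\cdot O_P(1)$ uniformly in $k$, so $\gamma^{-1/2}\max_k\Vert\mathbf{R}_{k+\gamma}-\mathbf{R}_k\Vert = O_P(T^{\zeta}\gamma^{-1/2})$; the lower bound on $\gamma$ in~\eqref{b-mosum} forces $T^{2\zeta}/\gamma = o(1/\log(T/\gamma))$, whence this term is $o_P(\sqrt{\log(T/\gamma)})$ and therefore negligible.

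For the principal term, whenever $[k+1,k+\gamma]$ lies inside a single segment $j$ the difference $\bm{\Pi}_{k+\gamma}-\bm{\Pi}_k$ reduces to $\mathbf{L}_r(\mathbf{H}^\top\otimes\mathbf{H}^\top)\mathbf{K}_r\mathbf{D}_j^{1/2}\{W_{1,dT}^{(j)}(k+\gamma-k_j)-W_{1,dT}^{(j)}(k-k_j)\}$, i.e.\ an increment of a $d$-dimensional Wiener process over an interval of length $\gamma$; windows that straddle one of the (finitely many) change points, or the split point $T/2$, decompose into a bounded number of such increments and are handled by combining both parts of Lemma~\ref{bound}. Since $\Vert\mathbf{H}\Vert = O_P(1)$ by~\eqref{eq:H:bound} and $\Vert\mathbf{L}_r\Vert,\Vert\mathbf{K}_r\Vert,\Vert\mathbf{D}_j^{1/2}\Vert = O(1)$, the claim reduces to the classical maximal bound $\max_{0\le k\le T-\gamma}\gamma^{-1/2}\Vert W(k+\gamma)-W(k)\Vert = O_P(\sqrt{\log(T/\gamma)})$ for increments of a Wiener process. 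Collecting the two contributions then yields the asserted rate.

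I expect this last Wiener increment bound to be the main obstacle: obtaining the sharp $\sqrt{\log(T/\gamma)}$ rate, rather than $\sqrt{\log T}$, requires exploiting the heavy overlap between neighbouring windows. I would establish it along the lines familiar from the MOSUM literature \citep{huvskova2001permutation, kirch}, via a union bound over a grid of spacing proportional to $\gamma$ combined with L\'evy's modulus of continuity within cells (equivalently, the Cs\"org\H{o}--R\'ev\'esz increment results). The remaining work is bookkeeping: confirming through~\eqref{b-mosum} that the SIP remainder is of strictly smaller order than $\sqrt{\log(T/\gamma)}$, and checking that the $O(\gamma)$ windows meeting a change point or the midpoint contribute only the same order.
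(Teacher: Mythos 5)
Your proposal is correct and follows essentially the same route as the paper: both write the moving sum as a difference of cumulative sums, apply the SIP of Lemma~\ref{bound} to replace these by Wiener processes, dispose of the remainders uniformly via the bound $O_P(T^{\zeta}/\sqrt{\gamma})$ together with condition~\eqref{b-mosum}, and reduce the claim to the maximal increment bound for $d$-dimensional Wiener processes over windows of length $\gamma$ (handling the finitely many windows meeting a change point or the split at $T/2$ separately). The only difference is cosmetic: where you propose to establish the Wiener increment bound by a grid/union-bound argument with L\'evy's modulus (equivalently, Cs\"org\H{o}--R\'ev\'esz increments), the paper simply invokes Theorem~1 of \citet{shao1995conjecture}.
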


\begin{proof}
By Lemma~\ref{bound}, on a suitably enlarged probability space, it holds that
\begin{align*}
& \max_{0 \leq k \leq T - \gamma} \frac{1}{\sqrt{\gamma}} \l\Vert \sum_{t = k + 1}^{k + \gamma} \vech \l( \widehat{\mbf g}_{t} \widehat{\mbf g}_{t}^\top - \mbf H^{\top} \E\l( \mbf g_{t} \mbf g_{t}^\top \r) \mbf H \r) \r\Vert
\\
\le & \frac{1}{\sqrt{\gamma}} \max_{0 \le k \le T - \gamma} \l\Vert \sum_{t = 1}^k \vech \l( \widehat{\mbf g}_{t} \widehat{\mbf g}_{t}^\top - \mbf H^{\top} \E\l( \mbf g_{t} \mbf g_{t}^\top \r) \mbf H \r) - \sum_{j = 0}^R \mathbb{I}_{\{k_j \le k\}} \cdot \mbf V_j^{1/2} W^{(j)}_{1, dT}\l( \min(k, k_{j + 1}) - k_j \r) \r\Vert
\\
&+ \frac{1}{\sqrt{\gamma}} \max_{0 \le k \le T - \gamma} \l\Vert \sum_{t = 1}^{k + \gamma} \vech \l( \widehat{\mbf g}_{t} \widehat{\mbf g}_{t}^\top - \mbf H^{\top} \E\l( \mbf g_{t} \mbf g_{t}^\top \r) \mbf H \r) - \r. 
\\
& \qquad \qquad \qquad \l. \sum_{j = 0}^R \mathbb{I}_{\{k_j \le k + \gamma \}} \cdot \mbf V_j^{1/2} W^{(j)}_{1, dT}\l( \min( k + \gamma, k_{j + 1} ) - k_j \r) \r\Vert
\\
& + \max_{0 \le k \le T - \gamma} \frac{1}{\sqrt{\gamma}} \l\Vert \sum_{j = 0}^R \mathbb{I}_{\{k < k_j \le k + \gamma \}} \cdot \mbf V_j^{1/2} W^{(j)}_{1, dT}\l( \min( k + \gamma, k_{j + 1} ) - \max(k, k_j) \r) \r\Vert 
\\
=:& \, T_1 + T_2 + T_3.
\end{align*}
Using Lemma~\ref{bound},
\begin{align*}
T_1 \leq & \, \frac{1}{\sqrt{\gamma}} \max_{0 \le k \le T - \gamma} k^{\zeta} \cdot k^{-\zeta} \l\Vert \sum_{t = 1}^k \vech \l( \widehat{\mbf g}_{t} \widehat{\mbf g}_{t}^\top - \mbf H^{\top} \E\l( \mbf g_{t} \mbf g_{t}^\top \r) \mbf H \r) - \r.
\\
& \qquad \qquad \qquad \qquad \qquad \l. \sum_{j = 0}^R \mathbb{I}_{\{k_j \le k\}} \cdot \mbf V_j^{1/2} W^{(j)}_{1, dT}\l( \min( k, k_{j + 1} ) - k_j \r) \r\Vert
\\
= & \, \frac{1}{\sqrt{\gamma}} \cdot O_P(1) \max_{0 \le k \le T - \gamma} k^{\zeta} = O_P\l( \frac{T^\zeta}{\sqrt{\gamma}} \r) = O_P\l(\frac{1}{\sqrt{\log(T/\gamma)}} \r),
\end{align*}
where the last equality follows from~\eqref{b-mosum}; the term $T_2$ is analogously bounded.
From Theorem~1 in \citet{shao1995conjecture} and the fact that there are finitely many change points, we have $T_3 = O_P(\sqrt{\log(T)})$, which completes the proof.
\end{proof}


\begin{lem}
\label{mosum-alt-2} Suppose that Assumptions~\ref{factors}--\ref{kirch} hold with $\rho = 2$ in Assumptions~\ref{factors}, \ref{idiosyncratic} and~\ref{depFE}. 
Let us define $D_{T} = \min_{1\leq j\leq R}d_j \sqrt{\gamma}$.
Then for any sequence $a_{T}$ satisfying $1\leq a_{T}\leq D_{T}$, and a (slowly varying) sequence $\omega_{T}\rightarrow \infty$, define 
\begin{align*}
\mathcal{M}_{T}^{(\ell )}=& \,\left\{ \max_{1\leq j\leq
R}\max_{d_j^{-2}a_{T}\leq k\leq k_{j}-k_{j-1}}\frac{\sqrt{d_j^{-2}a_{T}}%
}{k}\left\Vert \sum_{t=k_{j}+\ell \gamma -k+1}^{k_{j}+\ell \gamma }\left( 
\widehat{\mathbf{g}}_{t}\widehat{\mathbf{g}}_{t}^{\top }-\mathbf{H}^{\top }%
\mathsf{E}\left( \mathbf{g}_{t}\mathbf{g}_{t}^{\top }\right) \mathbf{H}%
\right) \right\Vert_{F}\leq \omega_{T}\right\} \\
& \bigcap \left\{ \max_{1\leq j\leq R}\max_{d_j^{-2}a_{T}\leq k\leq
k_{j}-k_{j-1}}\frac{\sqrt{d_j^{-2}a_{T}}}{k}\left\Vert \sum_{t=k_{j}+\ell
\gamma +1}^{k_{j}+\ell \gamma +k}\left( \widehat{\mathbf{g}}_{t}\widehat{\mathbf{g}}_{t}^{\top }-\mathbf{H}^{\top }\mathsf{E}\left( \mathbf{g}_{t}\mathbf{g}_{t}^{\top }\right) \mathbf{H}\right) \right\Vert_{F}\leq \omega
_{T}\right\}
\end{align*}
for some $\epsilon >0$ and $\ell \in \{0,\pm 1\}$. Then it holds that, as $\min (N,T)\rightarrow \infty$, 
\begin{equation}
\mathsf{P}\left( \cap_{\ell \in \{0,\pm 1\}}\mathcal{M}_{T}^{(\ell
)}\right) \rightarrow 1. \label{mosum-alt-2-b}
\end{equation}
\end{lem}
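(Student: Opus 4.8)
The plan is to reduce the claim to a finite family of $O_P(1)$ statements. Since $R$ is fixed and $\ell$ ranges over the three-element set $\{0,\pm1\}$, the event $\cap_{\ell}\mathcal{M}_T^{(\ell)}$ is an intersection of finitely many events, each asking a weighted maximum over $k$ and over $1\le j\le R$ to lie below $\omega_T$; because $\omega_T\to\infty$, it suffices to show that every such maximum is $O_P(1)$ and then close with a union bound. First I would record that, by Assumption~\ref{kirch}~\ref{assum:kirch:three}, every window appearing in $\mathcal M_T^{(\ell)}$ has one endpoint within $\gamma$ of $k_j$ and length $k\le k_j-k_{j-1}$, so it straddles at most a bounded (in $T$) number of the $k_i$ and therefore decomposes into boundedly many within-segment blocks, on each of which the population mean is constant. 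On each window I split the centred summand as $\widehat{\mathbf g}_t\widehat{\mathbf g}_t^\top-\mathbf H^\top\mathsf{E}(\mathbf g_t\mathbf g_t^\top)\mathbf H=\mathbf H^\top(\mathbf g_t\mathbf g_t^\top-\mathsf{E}(\mathbf g_t\mathbf g_t^\top))\mathbf H+(\widehat{\mathbf g}_t\widehat{\mathbf g}_t^\top-\mathbf H^\top\mathbf g_t\mathbf g_t^\top\mathbf H)$, calling the two pieces (a) and (b), and treat the backward and forward sums (and the finitely many $j$, $\ell$) identically.

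For part (a), on each within-segment block (say in segment $i$) the local SIP of Lemma~\ref{sip-1}, transported through $\mathbf L_r(\mathbf H^\top\otimes\mathbf H^\top)\mathbf K_r$ as in Lemma~\ref{bound}, approximates the block sum by an increment of $\mathbf L_r(\mathbf H^\top\otimes\mathbf H^\top)\mathbf K_r\mathbf D_i^{1/2}W$ with remainder $O_P(k^{\zeta_1})$, $\zeta_1<1/2$. The Wiener part is dispatched by Brownian scale invariance: writing $b_T=d_j^{-2}a_T$ and substituting $u=k/b_T\ge1$ turns the weighted maximum $\max_k(\sqrt{b_T}/k)\|\mathbf D_i^{1/2}W(k)\|$ into $\sup_{u\ge1}u^{-1}\|\mathbf D_i^{1/2}\widetilde W(u)\|$, which is a.s.\ finite and tight, hence $O_P(1)$ (the maxima over the finitely many blocks and $j$ preserving this). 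The SIP remainder contributes $(\sqrt{b_T}/k)O_P(k^{\zeta_1})$, maximised at $k=b_T$ to give $O_P(b_T^{\zeta_1-1/2})=O_P(1)$, since $d_j\le c_0$ forces $b_T\ge c_0^{-2}a_T\ge c_0^{-2}$. Finally, when a window genuinely straddles $k_j$, the short block of length $\le\gamma$ in the neighbouring segment is centred there and so contributes only $O_P(\sqrt\gamma)$; weighted this is $O_P(\sqrt{b_T/\gamma})$, which is $O_P(1)$ because $a_T\le D_T=\min_j d_j\sqrt\gamma$ gives $b_T\le d_j^{-1}\sqrt\gamma$ while Assumption~\ref{kirch}~\ref{assum:kirch:two} forces $d_j\sqrt\gamma\to\infty$.

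Part (b) is the main obstacle, because the global estimation-error bounds of Lemmas~\ref{sip-2} and~\ref{max-inequ-2}, being anchored at $t=1$, would only supply a remainder of order $T^{\zeta_2}$ on a window near $k_j$, which is hopelessly crude once divided by $k\approx d_j^{-2}a_T$. The route I would take is a genuinely local maximal bound: substituting~\eqref{eq:bai}, each contribution to the block sum of (b) factorises into a global, asymptotically small scalar/vector factor (for instance $(NT)^{-1}\sum_{s,i}\widehat{\mathbf g}_s\bm\lambda_i e_{i,s}$ for the $\xi$-term) times a local partial sum of $\mathbf g_t$ or of $\mathbf g_t\mathbf g_t^\top$ of length $k$. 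I would bound the local partial sums by the maximal inequality of \citet[Theorem~3.1]{moricz1982}, exactly as in the proof of Lemma~\ref{max-inequ-2}, the second-order ($\rho=2$) moment conditions in Assumptions~\ref{factors}, \ref{idiosyncratic} and~\ref{depFE} being precisely what makes these inequalities valid down to the fine scale $k\asymp d_j^{-2}a_T$. Combining the local $\sqrt k$ growth of the partial sums with the smallness of the global factors and the weight $\sqrt{b_T}/k$, and using $a_T\le D_T=\min_j d_j\sqrt\gamma$ together with~\eqref{b-mosum} to absorb the factors $T^{\zeta}$, $\gamma$ and the powers of $N,T$, yields $O_P(1)$ for (b) as well. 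A union bound over the finitely many $(\ell,j)$ and the two directions, together with $\omega_T\to\infty$, then gives $\mathsf{P}(\cap_\ell\mathcal M_T^{(\ell)})\to1$.
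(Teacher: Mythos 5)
Your skeleton---splitting $\widehat{\mbf g}_t\widehat{\mbf g}_t^\top-\mbf H^\top\E(\mbf g_t\mbf g_t^\top)\mbf H$ into an oracle part and an estimation-error part, and reducing the claim to finitely many $O_P(1)$ bounds on weighted maxima---matches the paper's, but your concluding mechanism is different and contains a genuine gap. The paper never uses the SIP in this lemma: it verifies the single, position-free window moment bound~\eqref{cho-kirch}, namely $\E \Vert \sum_{t=a+1}^{b}\vech ( \widehat{\mbf g}_{t}\widehat{\mbf g}_{t}^\top-\mbf H^{\top}\E(\mbf g_{t}\mbf g_{t}^\top)\mbf H ) \Vert^{2+\epsilon}\le c_0(b-a)^{1+\epsilon/2}$, via Proposition~4 of \citet{berkeshormann} segment-by-segment for the oracle part (\eqref{berkes-2}) and window versions of Lemmas~\ref{moment-1}--\ref{moment-2} with $\rho=2$ plus Assumption~\ref{assum:nt} for the estimation part (\eqref{berkes-1}), and then cites Proposition~2.1~(c.ii) of \citet{chok}, which converts such anchor-free moment bounds into exactly the weighted-maximum statement. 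Your part~(a) instead relies on Lemma~\ref{sip-1}, whose approximation is anchored at the segment boundaries $k_j$. For $\ell=\pm1$ the windows end at $k_j\pm\gamma$, so a block sum of length $k$ is a difference of two anchored partial sums and its SIP remainder is $O_P((k+\gamma)^{\zeta_1})$, not $O_P(k^{\zeta_1})$ as you assert. After weighting by $\sqrt{b_T}/k$ with $b_T=d_j^{-2}a_T$ and taking $k\asymp b_T$, this leaves $O_P(\gamma^{\zeta_1}/\sqrt{b_T})$; since the lemma only requires $a_T\ge 1$ and $d_j$ may be bounded away from zero, $b_T$ can be $O(1)$ while $\gamma^{\zeta_1}$ grows polynomially in $T$ (recall $\gamma\gg T^{2\zeta}$ by~\eqref{b-mosum}), so no slowly diverging $\omega_T$ can absorb it. Anchor-insensitivity is precisely why the paper works with moment bounds rather than strong approximation here; your argument could be rescued by using within-segment stationarity to translate the offset windows back to a boundary, but that step is absent.

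Part~(b) is also incomplete in two respects. First, the claimed factorisation into ``global small factor $\times$ local partial sum of $\mbf g_t$ or $\mbf g_t\mbf g_t^\top$'' holds for the $\eta$- and $\xi$-contributions of~\eqref{eq:bai} but fails for the $\zeta_{s,t}$-contributions and, crucially, for the purely quadratic term $\sum_{t}(\widehat{\mbf g}_t-\mbf H^\top\mbf g_t)(\widehat{\mbf g}_t-\mbf H^\top\mbf g_t)^\top$, which has no such product structure. For the quadratic term the only anchor-free bound your toolkit provides is the full-sample one, $O_P(TC_{NT}^{-2})$, and after weighting by $\sqrt{b_T}/k\le b_T^{-1/2}=O(1)$ this may still diverge like $T/N$, which Assumption~\ref{assum:nt} does not preclude; what is required is the window-sensitive bound $\E \Vert \sum_{t=a+1}^{b}(\widehat{\mbf g}_t-\mbf H^\top\mbf g_t)(\widehat{\mbf g}_t-\mbf H^\top\mbf g_t)^\top \Vert^{2+\epsilon}\le c_0((b-a)C_{NT}^{-2})^{2+\epsilon}$, i.e.\ the window analogue of Lemma~\ref{moment-1} on which the paper's~\eqref{berkes-1} rests. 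Second, even granted Móricz-type maximal bounds on every window, controlling $\max_{k\ge b_T}(\sqrt{b_T}/k)\Vert S_k\Vert$ requires an additional Hájek--Rényi/dyadic-blocking step (this is what Proposition~2.1~(c.ii) of \citet{chok} packages): Móricz controls $\max_{k\le n}\Vert S_k\Vert$, not the $1/k$-weighted maximum, and your sketch leaves this step implicit. Both issues are repairable---indeed the paper repairs them by reducing everything to~\eqref{cho-kirch}---but as written your proof does not go through.
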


\begin{proof}
We base the proof on Proposition~2.1~(c.ii) in \citet{chok}, where a sufficient condition for~\eqref{mosum-alt-2-b} is that 
\begin{align}
\E\l( \l\Vert \sum_{t = a + 1}^b \vech \l( \widehat{\mbf g}_{t} \widehat{\mbf g}_{t}^\top - \mbf H^{\top} \E\l( \mbf g_{t} \mbf g_{t}^\top \r) \mbf H \r) \r\Vert^{2 + \epsilon} \r) \leq c_{0} (b - a)^{1 + \epsilon/2}
\label{cho-kirch}
\end{align}
for some $\epsilon > 0$. 
This in turn follows if we show that
\begin{align}
\E\l( \l\Vert \sum_{t = a + 1}^b \vech \l( \widehat{\mbf g}_{t} \widehat{\mbf g}_{t}^\top - \mbf H^{\top} \mbf g_{t} \mbf g_{t}^\top \mbf H \r) \r\Vert^{2 + \epsilon} \r) &\leq c_{0} (b - a)^{1 + \epsilon/2},
\label{berkes-1} \\
\E\l( \l\Vert \sum_{t = a + 1}^b \vech \l( \mbf g_{t} \mbf g_{t}^\top - \E\l( \mbf g_{t} \mbf g_{t}^\top \r) \r) \r\Vert^{2+\epsilon} \r) &\leq c_{0} (b - a)^{1 + \epsilon/2},
\label{berkes-2}
\end{align}
together with~\eqref{eq:H:bound}.
Equation~\eqref{berkes-2} follows immediately from Proposition~4 of \cite{berkeshormann}, which entails that
\begin{align*}
& \E\l( \l\Vert \sum_{t = a + 1}^b \vech \l( \mbf g_{t} \mbf g_{t}^\top - \E\l( \mbf g_{t} \mbf g_{t}^\top \r) \r) \r\Vert^{2 + \epsilon} \r) 
\\
=& \,
\E\l( \l\Vert \sum_{j = 0}^R \mathbb{I}_{\{a \le k_j < b\}} \cdot \sum_{t = \max(a, k_j) + 1}^{\min(b, k_{j + 1})} \vech \l( \mbf g_{t} \mbf g_{t}^\top - \E\l( \mbf g_{t} \mbf g_{t}^\top \r) \r) \r\Vert^{2 + \epsilon} \r) 
\\
\le & \, \sum_{j = 0}^R \mathbb{I}_{\{a \le k_j < b\}} \cdot \Vert \mbf A_j \Vert^{4 + 2\epsilon} \E\l( \l\Vert \sum_{t = \max(a, k_j) + 1}^{\min(b, k_{j + 1})} \vech \l( \mbf f_{t} \mbf f_{t}^\top - \E\l( \mbf f_{t} \mbf f_{t}^\top \r) \r) \r\Vert^{2 + \epsilon} \r)
\\ 
\le & \, c_{0} R (b - a)^{1 + \epsilon/2},
\end{align*}
by Assumption~\ref{factors}, Assumption~\ref{assum:cps}~\ref{assum:cp:one} and~\ref{assum:cp:three}.
As for~\eqref{berkes-1}, mechanically repeating the arguments in the proofs of Lemmas~\ref{moment-1} and \ref{moment-2}, we obtain that
\begin{align*}
\E\l( \l\Vert \sum_{t = a + 1}^b \vech \l( \widehat{\mbf g}_{t} \widehat{\mbf g}_{t}^\top - \mbf H^\top \mbf g_{t} \mbf g_{t}^\top \mbf H \r) \r\Vert^{2 + \epsilon} \r) \leq c_{0} \l( \frac{b - a}{\min( N, b - a )} \r)^{2 + \epsilon}.
\end{align*}
By elementary arguments 
\begin{align*}
\frac{b - a}{\min(N, b - a)} = \max\l( 1, \frac{b - a}{N} \r) \le \max\l\{ 1, \frac{T^{1/2 + \epsilon^\prime}}{N} (b - a)^{1/2 - \epsilon^\prime} \r\} = o\l( (b - a)^{1/2 - \epsilon^\prime} \r)
\end{align*}
for some $\epsilon^{\prime} \in (0, \epsilon_\circ)$ under Assumption~\ref{assum:nt}.
Therefore,
\begin{align*}
\E\l( \l\Vert \sum_{t = a + 1}^b \vech \l( \widehat{\mbf g}_{t} \widehat{\mbf g}_{t}^\top - \mbf H^\top \mbf g_{t} \mbf g_{t}^\top \mbf H \r) \r\Vert^{2 + \epsilon} \r) \leq c_{0} (b - a)^{ 1/2 + \epsilon/2}.
\end{align*}
Putting all together, the condition in~\eqref{cho-kirch}, which completes the proof. 
\end{proof}

\subsection{Proof of Theorem \protect\ref{thm:null}}

\begin{proof}[Proof of Theorem \ref{thm:null} \ref{thm:null:one}]
Let us define a symmetric, $d \times d$-matrix
\begin{align*}
\wt{\mbf V} = \mbf L_r(\mbf H^\top \otimes \mbf H^\top) \mbf K_r \mbf D_j \mbf K_r^\top (\mbf H \otimes \mbf H) \mbf L_r^\top.
\end{align*}
From Lemma~\ref{lem:hi} and its proof, we have $\mbf H$ asymptotically invertible and $\Vert \mbf H \Vert = O_P(1)$.
Also, from that $\Vert \mbf D \Vert = O(1)$ (due to Assumption~\ref{factors}~\ref{assum:factors:one}), $\Vert \mbf D^{-1} \Vert = O(1)$ (Assumption~\ref{factors}~\ref{assum:factors:three}) and 
\begin{align*}
\Lambda_{\min}(\wt{\mbf V}) \ge \Lambda_{\min}(\mbf D) \l\Vert \mbf L_r(\mbf H^\top \otimes \mbf H^\top) \mbf K_r \r\Vert_F^2,
\end{align*}
we have $\wt{\mbf V}$ asymptotically invertible with 
\begin{align}
\Vert \wt{\mbf V} \Vert = O_P(1) \text{ \ and \ } \Vert \wt{\mbf V}^{-1} \Vert = O_P(1).
\label{eq:wtV}
\end{align}

Then by Lemma~\ref{bound} (with $R = 0$ under $\mc H_0$), there exist two independent $d$-dimensional Wiener processes $W_{\ell, dT}(\cdot), \, \ell = 1, 2$, such that
\begin{align}
\max_{1 \leq k \leq T} \frac{1}{k^{\zeta}} \l\Vert \sum_{t = 1}^{k} \wt{\mbf V}^{-1/2} \vech\l( \widehat{\mbf g}_{t} \widehat{\mbf g}_{t}^{\top} - \mbf H^{\top} \E\l( \mbf g_{t} \mbf g_{t}^\top \r) \mbf H \r) - W_{dT}(k) \r\Vert & = O_P(1), 
\label{sip-bound}
\end{align}
where $W_{dT}(k) = W_{1, dT}(\min(k, T/2)) + (W_{2, dT}(T/2) - W_{2, dT}(T - k)) \cdot \mathbb{I}_{\{k > T/2\}}$, and $\zeta = \max(\zeta_1, \zeta_2)$ with $\zeta_1$ and $\zeta_2$ defined in Lemmas~\ref{sip-1} and~\ref{sip-2},
respectively. 
Following Theorem~S2.1 in \citet{berkesliuwu}, which is referred to in the proof of Lemma~\ref{sip-1}, we have $\zeta_{1} = 2/\nu$ with $\nu$ denote the largest number such that $\E(\vert g_t \vert^\nu) < \infty$; under Assumption~\ref{factors}~\ref{assum:factors:one}, we can set e.g.\ $\nu = 8$. 
Further, inspecting the proof of Lemma~\ref{sip-2}, it emerges that whenever 
\begin{align*}
\frac{1}{2} + \epsilon_\circ < \beta = \frac{\log(N)}{\log(T)} \leq 1,
\end{align*}
it must hold that $1 - \zeta_2 \le \beta$, whereas $\zeta_2 > 0$ can be arbitrarily small when $\beta > 1$.
Hence we set $\zeta_{2} = 1 - \min(1, \beta)$. 
Thus, the statement in~\eqref{sip-bound} holds with $\zeta$ chosen as in~\eqref{eq:zeta}.

The rest of the proof now is similar to that of Theorem~2.1 in \citet{huvskova2001permutation}. 
Note that
\begin{multline*}
\sum_{t = k + 1}^{k + \gamma} \wt{\mbf V}^{-1/2} \vech\l( \widehat{\mbf g}_{t} \widehat{\mbf g}_{t}^{\top} - \mbf H^{\top} \E\l( \mbf g_{t} \mbf g_{t}^\top \r) \mbf H \r) 
\\
= \sum_{t = 1}^{k + \gamma} \wt{\mbf V}^{-1/2} \vech\l( \widehat{\mbf g}_{t} \widehat{\mbf g}_{t}^{\top} - \mbf H^{\top} \E\l( \mbf g_{t} \mbf g_{t}^\top \r) \mbf H \r) - \sum_{t = 1}^{k} \wt{\mbf V}^{-1/2} \vech\l( \widehat{\mbf g}_{t} \widehat{\mbf g}_{t}^{\top} - \mbf H^{\top} \E\l( \mbf g_{t} \mbf g_{t}^\top \r) \mbf H \r).
\end{multline*}
It holds that
\begin{align*}
& \max_{1 \leq k \leq T - \gamma} \frac{1}{\sqrt{2\gamma}} \l\Vert \sum_{t = 1}^{k + \gamma} \wt{\mbf V}^{-1/2} \vech\l( \widehat{\mbf g}_{t} \widehat{\mbf g}_{t}^{\top} - \mbf H^{\top} \E\l( \mbf g_{t} \mbf g_{t}^\top \r) \mbf H \r) - W_{dT}(k + \gamma) \r\Vert 
\\
&= \frac{1}{\sqrt{2\gamma}} \max_{1 \leq k \leq T - \gamma} \frac{1}{(k + \gamma)^\zeta} \l\Vert \sum_{t = 1}^{k + \gamma} \wt{\mbf V}^{-1/2} \vech\l( \widehat{\mbf g}_{t} \widehat{\mbf g}_{t}^{\top} - \mbf H^{\top} \E\l( \mbf g_{t} \mbf g_{t}^\top \r) \mbf H \r) - W_{dT}(k + \gamma) \r\Vert 
\\
& \qquad \qquad \qquad \cdot \max_{1 \leq k \leq T - \gamma} (k + \gamma)^\zeta = O_P\l( T^\zeta \gamma^{-1/2} \r) = o_P\l( \log^{-1/2}(T/\gamma) \r)
\end{align*}
under~\eqref{sip-bound}, \eqref{eq:zeta} and~\eqref{b-mosum}. 
Similarly,
\begin{align*}
& \max_{\gamma < k \leq T} \frac{1}{\sqrt{2\gamma}} \l\Vert \sum_{t = 1}^k \wt{\mbf V}^{-1/2} \vech\l( \widehat{\mbf g}_{t} \widehat{\mbf g}_{t}^{\top} - \mbf H^{\top} \E\l( \mbf g_{t} \mbf g_{t}^\top \r) \mbf H \r) - W_{dT}(k) \r\Vert
\\
=& \, O_P\l( T^\zeta \gamma^{-1/2} \r) = o_P\l( \log^{-1/2}(T/\gamma) \r).
\end{align*}
From the above, we obtain
\begin{multline}
\frac{1}{\sqrt{2\gamma}} \max_{\gamma \leq k \leq T - \gamma} \l\Vert \sum_{t = k + 1}^{k + \gamma} \wt{\mbf V}^{-1/2} \vech\l( \widehat{\mbf g}_{t} \widehat{\mbf g}_{t}^{\top} - \mbf H^{\top} \E\l( \mbf g_{t} \mbf g_{t}^\top \r) \mbf H \r) - \l( W_{dT}(k + \gamma) - W_{dT}(k) \r) \r\Vert
\\ 
= o_P\l( \log^{-1/2}(T/\gamma) \r). \label{mos-1}
\end{multline}
Analogously, we can show that
\begin{multline}
\frac{1}{\sqrt{2\gamma}} \max_{\gamma \leq k \leq T - \gamma} \l\Vert \sum_{t = k - \gamma + 1}^k \wt{\mbf V}^{-1/2} \vech\l( \widehat{\mbf g}_{t} \widehat{\mbf g}_{t}^{\top} - \mbf H^{\top} \E\l( \mbf g_{t} \mbf g_{t}^\top \r) \mbf H \r) - \l( W_{dT}(k) - W_{dT}(k - \gamma) \r) \r\Vert 
\\
= o_P\l( \log^{-1/2}(T/\gamma) \r).\label{mos-2}
\end{multline}
Combining~\eqref{mos-1} and~\eqref{mos-2}, and from the fact that $\E( \mbf g_{t} \mbf g_{t}^\top )$ is time-invariant, we obtain
\begin{multline}
\max_{\gamma \le k \le T -\gamma} \l\Vert \wt{\mbf V}^{-1/2} \mbf M_{N, T, \gamma}(k) \r\Vert
\\
= \frac{1}{\sqrt{2\gamma}} \max_{\gamma \le k \le T - \gamma} \l\Vert W_{dT}(k + \gamma) - W_{dT}(k - \gamma) \r\Vert + o_P\l( \log^{-1/2}(T/\gamma) \r). 
\label{mos-3}
\end{multline}

Let $k = \lfloor \gamma t \rfloor$ with $1 \le t \le T/\gamma - 1$. On account of~\eqref{mos-3}, we will study
\begin{align*}
& \frac{1}{\sqrt{2\gamma}} \max_{\gamma \le k \le T - \gamma} \l\Vert W_{dT}(k + \gamma) - W_{dT}(k - \gamma) \r\Vert
\\
=& \, \frac{1}{\sqrt{2\gamma}} \max_{1 \le t \le T/\gamma} \l\Vert W_{dT}\l( \lfloor \gamma t \rfloor + \gamma \r) - W_{dT}\l( \lfloor \gamma t \rfloor - \gamma \r) \r\Vert
\\
\overset{\mathcal{D}}{=}& \,  
\frac{1}{\sqrt{2}} \max_{1 \le t \le T/\gamma - 1} \l\Vert W_{dT}(t + 1) - W_{dT}(t - 1) \r\Vert,
\end{align*}
having used the scale transformation of the Wiener process. 
Since the distribution of $W_{dT}(t)$ does not depend on $T$, we also have that, as $T \rightarrow \infty$ with~\eqref{b-mosum},
\begin{align*}
\frac{1}{\sqrt{2}} \max_{1 \le t \le T/\gamma - 1} \l\Vert W_{dT}(t + 1) - W_{dT}(t - 1) \r\Vert 
\rightarrow 
\frac{1}{\sqrt{2}} \max_{1 \le t < \infty} \l\Vert W_{dT}(t + 1) - W_{dT}(t - 1) \r\Vert
\end{align*}
almost surely. 
The $d$-dimensional process 
\begin{align}
\omega(t) = \frac{1}{\sqrt 2} \l( W_{dT}(t + 1) - W_{dT}(t - 1) \r),  
\label{omega-t}
\end{align}
has mean zero; elementary calculations yield that it has unit variance and that its coordinates $\omega_i(t), \, 1 \le i \le d$, have covariance given by
\begin{align}
\E\l( \omega_i(t) \omega_i(t + h) \r) = \l\{ 
\begin{array}{ll}
1 - \frac{1}{2} \vert h \vert & \text{for \ } 0 \le \vert h \vert \le 1,
\\
0 & \text{for \ } \vert h \vert > 2.
\end{array}\r. \label{cov-omega}
\end{align}
Hence, $\{ \Vert \omega(t) \Vert, \, 1 \le t < \infty \}$ is a Rayleigh process with index $\alpha = 1$ (for its definition, refer to Section~3 of \citealp{steinebach1996extreme}).
Thus, by Lemma~3.1 of \citet{steinebach1996extreme} and Slutsky's theorem, we have
\begin{align}
\lim_{\min(N, T) \rightarrow \infty} \p\l( a\l( \frac{T}{\gamma} \r) \max_{\gamma \le k \le T -\gamma} \l\vert \mbf M_{N, T, \gamma}^\top(k) \wt{\mbf V}^{-1} \mbf M_{N, T, \gamma}(k) \r\vert^{1/2} - b_d\l( \frac{T}{\gamma} \r) \le x \r) 
\nn \\
= \exp \l(- 2\exp(-x) \r).
\label{eq:null:one}
\end{align}
This, together with~\eqref{eq:wtV}, implies that
\begin{align}
\label{eq:M:bound}
\max_{\gamma \le k \le T - \gamma} \l\Vert \mbf M_{N, T, \gamma}(k) \r\Vert = O_P\l( \sqrt{\log(T/\gamma)} \r).
\end{align}
Also, we have
\begin{align}
\l\Vert \wt{\mbf V} - \mbf V \r\Vert &= O_P\l( \l\Vert (\mbf H \otimes \mbf H) - (\mbf H_0 \otimes \mbf H_0) \r\Vert \r) = O_P\l( \l\Vert \mbf H - \mbf H_0 \r\Vert \r) 
\nn \\
&= O_P\l( C_{NT}^{-1} \r) = o_P\l( \log^{-1}(T/\gamma) \r)
\end{align}
by Assumption~\ref{assum:nt} and Lemma~\ref{lem:hi}.
Then, e.g.\ by Lemma~4.1 of \cite{powers1970free}, we have
\begin{align*}
\l\Vert \wt{\mbf V}^{-1/2} - \mbf V^{-1/2} \r\Vert
&= 
o_P\l( \log^{-1}(T/\gamma) \r).
\end{align*}
This, together with~\eqref{eq:M:bound}, establishes that we can replace $\wt{\mbf V}$ with $\mbf V$ and continue to have the asymptotic distribution in~\eqref{eq:null:one} hold, since
\begin{align}
& \l\vert \max_{\gamma \le k \le T -\gamma} \l\Vert \wt{\mbf V}^{-1/2} \mbf M_{N, T, \gamma}(k) \r\Vert - \max_{\gamma \le k \le T -\gamma} \l\Vert \mbf V^{-1/2} \mbf M_{N, T, \gamma}(k) \r\Vert \r\vert
\nn \\
\le & \, \max_{\gamma \le k \le T -\gamma} \l\Vert \l( \wt{\mbf V}^{-1/2} - \mbf V^{-1/2} \r) \mbf M_{N, T, \gamma}(k) \r\Vert 
\nn \\
\le & \, \l\Vert \wt{\mbf V}^{-1/2} - \mbf V^{-1/2} \r\Vert \cdot \max_{\gamma \le k \le T -\gamma} \l\Vert \mbf M_{N, T, \gamma}(k) \r\Vert
\nn \\
=& \, o_P\l( \log^{-1}(T/\gamma) \r) \cdot O_P\l( \sqrt{\log(T/\gamma)} \r) = o_P\l( a^{-1}(T/\gamma) \r).
\label{eq:vtide2v}
\end{align}
\end{proof}

\begin{proof}[Proof of Theorem \ref{thm:null} \ref{thm:null:two}]
Arguments analogous to those leading to~\eqref{eq:vtide2v} can be adopted under~\eqref{cov-stronger}, and thus we omit the proof.
\end{proof}

\subsection{Proof of Proposition~\protect\ref{cor:V}}

The proof follows similar passages to \citet{han2014}, and therefore we
focus only on some aspects of it. We will use the following notations 
\begin{align*}
\widehat{\bm\Gamma }(\ell )& =\frac{1}{T}\sum_{t=\ell +1}^{T}\mathbf{Z}_{t}%
\mathbf{Z}_{t-\ell }^{\top }\text{ \ with \ }\mathbf{Z}_{t}=\mathsf{Vech}%
\left( \widehat{\mathbf{g}}_{t}\widehat{\mathbf{g}}_{t}^{\top }-\mathbf{I}%
_{r}\right) , \\
\bm\Gamma (\ell )& =\mathsf{E}\left( \widetilde{\mathbf{U}}_{t}\widetilde{%
\mathbf{U}}_{t-\ell }^{\top }\right) \text{ \ with \ }\widetilde{\mathbf{U}}%
_{t}=\mathsf{Vech}\left( \mathbf{H}_{0}^{\top }\mathbf{g}_{t}\mathbf{g}%
_{t}^{\top }\mathbf{H}_{0}-\mathbf{I}_{r}\right) ,\text{ \ and} \\
\mathbf{U}_{t}& =\mathsf{Vech}\left( \mathbf{H}^{\top }\mathbf{g}_{t}\mathbf{%
g}_{t}^{\top }\mathbf{H}-\mathbf{H}^{\top }\mathsf{E}(\mathbf{g}_{t}\mathbf{g%
}_{t}^{\top })\mathbf{H}\right) .
\end{align*}
Noting that $\mathbf{H}_{0}^{\top }\bm\Sigma_{G}\mathbf{H}_{0}=\mathbf{I}%
_{r}$ (see the proof of Lemma~\ref{lem:hi}), it follows that 
\begin{equation*}
\mathbf{V}=\bm\Gamma (0)+\sum_{\ell =1}^{\infty }\left( \bm\Gamma (\ell )+\bm%
\Gamma (\ell )^{\top }\right) ,
\end{equation*}%
so that 
\begin{align}
\widehat{\mathbf{V}}-\mathbf{V}=& \,\left( \widehat{\bm\Gamma }(0)-\bm\Gamma
(0)\right) +\sum_{\ell =1}^{m}\left( 1-\frac{\ell }{m+1}\right) \left[
\left( \widehat{\bm\Gamma }(\ell )-\bm\Gamma (\ell )\right) +\left( \widehat{\bm\Gamma }(\ell )-\bm\Gamma (\ell )\right)^{\top}\right]  \notag \\
& +\sum_{\ell =1}^{m}\frac{\ell }{m+1}\left( \bm\Gamma (\ell )+\bm\Gamma
(\ell )^{\top }\right) +\sum_{\ell =m+1}^{\infty }\left( \bm\Gamma (\ell )+%
\bm\Gamma (\ell )^{\top }\right) .  \label{vcv-error}
\end{align}
Since $\mathbf{g}_{t}$ is an $\mathcal{L}_{8 + \epsilon}$-decomposable Bernoulli shift
with $a>2$, it is easy to see (cfr.\ the proof of Lemma~\ref{sip-1}) that $\mathsf{Vech}(\mathbf{g}_{t}\mathbf{g}_{t}^{\top })$ is an $\mathcal{L}_{4 + \epsilon/2}$
-decomposable Bernoulli shift, also with $a>2$. The covariance summability
of Bernoulli shifts (see e.g.\ Lemma~D.4 in \citealp{HT2022}) entails that 
\begin{align}
\left\Vert \sum_{\ell =1}^{m}\frac{\ell }{m+1}\left( \bm\Gamma (\ell )+\bm%
\Gamma (\ell )^{\top }\right) \right\Vert & =O\left( \frac{1}{m}\right) ,
\label{vcv-proof-4} \\
\left\Vert \sum_{\ell =m+1}^{\infty }\left( \bm\Gamma (\ell )+\bm\Gamma
(\ell )^{\top }\right) \right\Vert & =O\left( \frac{1}{m}\right) .
\label{vcv-proof-5}
\end{align}

We now bound the rest of the terms in~\eqref{vcv-error}. First, note that 
\begin{align}
\sum_{\ell =1}^{m}\frac{1}{T}\sum_{t=\ell +1}^{T}\mathbf{Z}_{t}\mathbf{Z}%
_{t-\ell }^{\top }=& \,\sum_{\ell =1}^{m}\frac{1}{T}\sum_{t=\ell +1}^{T}%
\mathbf{U}_{t}\mathbf{U}_{t-\ell }^{\top }+\sum_{\ell =1}^{m}\frac{1}{T}%
\sum_{t=\ell +1}^{T}\mathbf{U}_{t}(\mathbf{Z}_{t-\ell }-\mathbf{U}_{t-\ell
})^{\top }  \notag \\
& +\sum_{\ell =1}^{m}\frac{1}{T}\sum_{t=\ell +1}^{T}(\mathbf{Z}_{t}-\mathbf{U%
}_{t})\mathbf{U}_{t-\ell }^{\top }+\sum_{\ell =1}^{m}\frac{1}{T}\sum_{t=\ell
+1}^{T}(\mathbf{Z}_{t}-\mathbf{U}_{t})(\mathbf{Z}_{t-\ell }-\mathbf{U}%
_{t-\ell })^{\top }  \notag \\
=:& \,T_{1}+T_{2}+T_{3}+T_{4}.  \label{vcv-1}
\end{align}
First we study $T_{2}$ in~\eqref{vcv-1}. For simplicity, let $r=d=1$ and
omit $\mathbf{H}$ noting that $\Vert \mathbf{H} \Vert = O_P(1)$ due to~%
\eqref{eq:H:bound}.{\ Further, we may treat $\mathsf{E}(g_{t}^{2})=1$.}
Then, we can write 
\begin{align}
T_{2}& =\frac{1}{T}\sum_{t=1}^{T}\left( g_{t}^{2}-\mathsf{E}%
(g_{t}^{2})\right) \left( \sum_{\ell =1}^{m}\left( \widehat{g}_{t-\ell
}-g_{t-\ell }\right)^{2}+2\sum_{\ell =1}^{m}g_{t-\ell }\left( \widehat{g}%
_{t-\ell }-g_{t-\ell }\right) \right) \mathbb{I}_{\{\ell <t\}}  \notag \\
& \leq \frac{1}{T}\left( \sum_{t=1}^{T}\left( g_{t}^{2}-\mathsf{E}%
(g_{t}^{2})\right)^{2}\right)^{1/2}\left( \sum_{t=1}^{T}\left( \sum_{\ell
=1}^{m}\left( \widehat{g}_{t-\ell }-g_{t-\ell }\right)^{2}\right)
^{2}+4\sum_{t=1}^{T}\left( \sum_{\ell =1}^{m}g_{t-\ell }\left( \widehat{g}%
_{t-\ell }-g_{t-\ell }\right) \right)^{2}\right)^{1/2}  \notag \\
& \leq \frac{1}{T}\left( \sum_{t=1}^{T}\left( g_{t}^{2}-\mathsf{E}%
(g_{t}^{2})\right)^{2}\right)^{1/2}\left( \sum_{t=1}^{T}\left( \sum_{\ell
=1}^{m}\left( \widehat{g}_{t-\ell }-g_{t-\ell }\right)^{2}\right)
^{2}\right)^{1/2}  \notag \\
& \quad +\frac{2}{T}\left( \sum_{t=1}^{T}\left( g_{t}^{2}-\mathsf{E}%
(g_{t}^{2})\right)^{2}\right)^{1/2}\left( \sum_{t=1}^{T}\left( \sum_{\ell
=1}^{m}g_{t-\ell }\left( \widehat{g}_{t-\ell }-g_{t-\ell }\right) \right)
^{2}\right)^{1/2}=:T_{2,1}+T_{2,2}.  \label{vcv-2}
\end{align}
Assumption~\ref{factors}~\ref{assum:factors:one} entails $\sum_{t=1}^{T}(g_{t}^{2}-\mathsf{E}(g_{t}^{2}))^{2}=O_{P}(T)$. Also, by~%
\eqref{eq:bai}, 
\begin{align*}
& \sum_{t=1}^{T}\left( \sum_{\ell =1}^{m}\left( \widehat{g}_{t-\ell
}-g_{t-\ell }\right)^{2}\right)^{2} \\
\leq & \,\sum_{t=1}^{T}\left( \sum_{\ell =1}^{m}\left\vert \frac{1}{T}%
\sum_{s=1}^{T}\widehat{g}_{s}\gamma_{s,t-\ell }\right\vert^{2}\right)
^{2}+\sum_{t=1}^{T}\left( \sum_{\ell =1}^{m}\left\vert \frac{1}{T}%
\sum_{s=1}^{T}\widehat{g}_{s}\zeta_{s,t-\ell }\right\vert^{2}\right)^{2}
\\
& \,+\sum_{t=1}^{T}\left( \sum_{\ell =1}^{m}\left\vert \frac{1}{T}%
\sum_{s=1}^{T}\widehat{g}_{s}\eta_{s,t-\ell }\right\vert^{2}\right)
^{2}+\sum_{t=1}^{T}\left( \sum_{\ell =1}^{m}\left\vert \frac{1}{T}%
\sum_{s=1}^{T}\widehat{g}_{s}\xi_{s,t-\ell }\right\vert^{2}\right)^{2}.
\end{align*}
First, we have 
\begin{align*}
& \sum_{t=1}^{T}\left( \sum_{\ell =1}^{m}\left\vert \frac{1}{T}\sum_{s=1}^{T}%
\widehat{g}_{s}\gamma_{s,t-\ell }\right\vert^{2}\right)^{2} \\
\leq & \,T^{-4}m\sum_{\ell =1}^{m}\sum_{t=1}^{T}\left\vert \sum_{s=1}^{T}%
\widehat{g}_{s}\gamma_{s,t-\ell }\right\vert^{4}\leq T^{-4}m\sum_{\ell
=1}^{m}\sum_{t=1}^{T}\left( \sum_{s=1}^{T}\widehat{g}_{s}^{2}\right)
^{2}\left( \sum_{s=1}^{T}\gamma_{s,t-\ell }^{2}\right)^{2} \\
\leq & \,T^{-2}m\sum_{\ell =1}^{m}\sum_{t=1}^{T}\left( \sum_{s=1}^{T}\gamma
_{s,t-\ell }^{2}\right)^{2}\leq T^{-2}m\sum_{\ell
=1}^{m}\sum_{t=1}^{T}\left( \sum_{s=1}^{T}|\gamma_{s,t-\ell }|\right)
^{4}\leq c_{0}T^{-1}m^{2}
\end{align*}
by Assumption~\ref{idiosyncratic}~\ref{assum:idiosyncratic:two}. Thus, it
holds that 
\begin{equation*}
\frac{1}{T}\left( \sum_{t=1}^{T}\left( g_{t}^{2}-\mathsf{E}%
(g_{t}^{2})\right)^{2}\right)^{1/2}\left( \sum_{t=1}^{T}\left( \sum_{\ell
=1}^{m}\left\vert \frac{1}{T}\sum_{s=1}^{T}\widehat{g}_{s}\gamma_{s,t-\ell
}\right\vert^{2}\right)^{2}\right)^{1/2}=O_{P}\left( \frac{m}{T}\right) .
\end{equation*}%
Using similar arguments, 
\begin{align*}
& \mathsf{E}\left[ \sum_{t=1}^{T}\left( \sum_{\ell =1}^{m}\left\vert \frac{1%
}{T}\sum_{s=1}^{T}\widehat{g}_{s}\zeta_{s,t-\ell }\right\vert^{2}\right)
^{2}\right] \\
\leq & \,\mathsf{E}\left[ T^{-2}m\sum_{\ell =1}^{m}\sum_{t=1}^{T}\left(
\sum_{s=1}^{T}\zeta_{s,t-\ell }^{2}\right)^{2}\right] \leq
T^{-1}m\sum_{\ell =1}^{m}\sum_{t=1}^{T}\sum_{s=1}^{T}\mathsf{E}\left( \zeta
_{s,t-\ell }^{4}\right) \leq c_{0}\frac{m^{2}T}{N^{2}}
\end{align*}
by Assumption~\ref{idiosyncratic}~\ref{assum:idiosyncratic:three}, whereby 
\begin{equation*}
\frac{1}{T}\left( \sum_{t=1}^{T}\left( g_{t}^{2}-\mathsf{E}%
(g_{t}^{2})\right)^{2}\right)^{1/2}\left( \sum_{t=1}^{T}\left( \sum_{\ell
=1}^{m}\left\vert \frac{1}{T}\sum_{s=1}^{T}\widehat{g}_{s}\zeta_{s,t-\ell
}\right\vert^{2}\right)^{2}\right)^{1/2}=O_{P}\left( \frac{m}{N}\right) .
\end{equation*}%
Similarly we get 
\begin{align*}
& \sum_{t=1}^{T}\left( \sum_{\ell =1}^{m}\left\vert \frac{1}{T}\sum_{s=1}^{T}%
\widehat{g}_{s}\eta_{s,t-\ell }\right\vert^{2}\right)^{2} \\
\leq & \,T^{-1}m\sum_{\ell =1}^{m}\sum_{t=1}^{T}\sum_{s=1}^{T}\left\vert
\eta_{s,t-\ell }\right\vert^{4}=T^{-1}m\left( \sum_{\ell
=1}^{m}\sum_{t=1}^{T}\left\vert \frac{1}{N}\sum_{i=1}^{N}\lambda
_{i}e_{i,t-\ell }\right\vert^{4}\right) \left( \sum_{s=1}^{T}\left\vert
g_{s}\right\vert^{4}\right) \\
=& \,O_{P}\left( T\right) T^{-1}m\left( \sum_{\ell
=1}^{m}\sum_{t=1}^{T}\left\vert \frac{1}{N}\sum_{i=1}^{N}\lambda
_{i}e_{i,t-\ell }\right\vert^{4}\right) =O_{P}\left( \frac{m^{2}T}{N^{2}}%
\right) ,
\end{align*}
having used Assumption~\ref{idiosyncratic}~\ref{assum:idiosyncratic:four} in
the final passage. Thus 
\begin{equation*}
\frac{1}{T}\left( \sum_{t=1}^{T}\left( g_{t}^{2}-\mathsf{E}%
(g_{t}^{2})\right)^{2}\right)^{1/2}\left( \sum_{t=1}^{T}\left( \sum_{\ell
=1}^{m}\left\vert \frac{1}{T}\sum_{s=1}^{T}\widehat{g}_{s}\eta_{s,t-\ell
}\right\vert^{2}\right)^{2}\right)^{1/2}=O_{P}\left( \frac{m}{N}\right) .
\end{equation*}%
By the same arguments, we also obtain 
\begin{equation*}
\frac{1}{T}\left( \sum_{t=1}^{T}\left( g_{t}^{2}-\mathsf{E}%
(g_{t}^{2})\right)^{2}\right)^{1/2}\left( \sum_{t=1}^{T}\left( \sum_{\ell
=1}^{m}\left\vert \frac{1}{T}\sum_{s=1}^{T}\widehat{g}_{s}\xi_{s,t-\ell
}\right\vert^{2}\right)^{2}\right)^{1/2}=O_{P}\left( \frac{m}{N}\right) ,
\end{equation*}%
and therefore in~\eqref{vcv-2}, $T_{2,1}=O_{P}(mC_{NT}^{-2})$. Similarly, we
note that 
\begin{align*}
& \sum_{t=1}^{T}\left( \sum_{\ell =1}^{m}g_{t-\ell }\left( \widehat{g}%
_{t-\ell }-g_{t-\ell }\right) \right)^{2} \\
\leq & \,\sum_{t=1}^{T}\left( \sum_{\ell =1}^{m}\left\vert \frac{1}{T}%
g_{t-\ell }\sum_{s=1}^{T}\widehat{g}_{s}\gamma_{s,t-\ell }\right\vert
\right)^{2}+\sum_{t=1}^{T}\left( \sum_{\ell =1}^{m}\left\vert \frac{1}{T}%
g_{t-\ell }\sum_{s=1}^{T}\widehat{g}_{s}\zeta_{s,t-\ell }\right\vert
\right)^{2} \\
& \,+\sum_{t=1}^{T}\left( \sum_{\ell =1}^{m}\left\vert \frac{1}{T}g_{t-\ell
}\sum_{s=1}^{T}\widehat{g}_{s}\eta_{s,t-\ell }\right\vert \right)
^{2}+\sum_{t=1}^{T}\left( \sum_{\ell =1}^{m}\left\vert \frac{1}{T}g_{t-\ell
}\sum_{s=1}^{T}\widehat{g}_{s}\xi_{s,t-\ell }\right\vert \right)^{2}.
\end{align*}
From Assumption~\ref{idiosyncratic}~\ref{assum:idiosyncratic:two}, it holds
that 
\begin{align*}
& \sum_{t=1}^{T}\left( \sum_{\ell =1}^{m}\left\vert \frac{1}{T}g_{t-\ell
}\sum_{s=1}^{T}\widehat{g}_{s}\gamma_{s,t-\ell }\right\vert \right)^{2} \\
\leq & \,mT^{-2}\sum_{\ell =1}^{m}\sum_{t=1}^{T}g_{t-\ell }^{2}\left\vert
\sum_{s=1}^{T}\widehat{g}_{s}\gamma_{s,t-\ell }\right\vert^{2}\leq
mT^{-2}\sum_{\ell =1}^{m}\sum_{t=1}^{T}g_{t-\ell }^{2}\left( \sum_{s=1}^{T}%
\widehat{g}_{s}^{2}\right) \left( \sum_{s=1}^{T}\gamma_{s,t-\ell
}^{2}\right) \\
\leq & \,mT^{-1}\sum_{\ell =1}^{m}\sum_{t=1}^{T}g_{t-\ell }^{2}\left(
\sum_{s=1}^{T}\left\vert \gamma_{s,t-\ell }\right\vert \right)
^{2}=O_{P}(m^{2}),
\end{align*}
with which we obtain 
\begin{equation*}
\frac{1}{T}\left( \sum_{t=1}^{T}\left( g_{t}^{2}-\mathsf{E}%
(g_{t}^{2})\right)^{2}\right)^{1/2}\left( \sum_{t=1}^{T}\left( \sum_{\ell
=1}^{m}\left\vert \frac{1}{T}g_{t-\ell }\sum_{s=1}^{T}\widehat{g}_{s}\gamma
_{s,t-\ell }\right\vert \right)^{2}\right)^{1/2}=O_{P}\left( \frac{m}{%
T^{1/2}}\right) .
\end{equation*}%
Similarly, 
\begin{align*}
& \mathsf{E}\left[ \sum_{t=1}^{T}\left( \sum_{\ell =1}^{m}\left\vert \frac{1%
}{T}g_{t-\ell }\sum_{s=1}^{T}\widehat{g}_{s}\zeta_{s,t-\ell }\right\vert
\right)^{2}\right] \\
\leq & \,mT^{-2}\mathsf{E}\left[ \sum_{\ell =1}^{m}\sum_{t=1}^{T}g_{t-\ell
}^{2}\left\vert \sum_{s=1}^{T}\widehat{g}_{s}\zeta_{s,t-\ell }\right\vert
^{2}\right] \leq mT^{-2}\mathsf{E}\left[ \sum_{\ell
=1}^{m}\sum_{t=1}^{T}g_{t-\ell }^{2}\left( \sum_{s=1}^{T}\widehat{g}%
_{s}^{2}\right) \left( \sum_{s=1}^{T}\zeta_{s,t-\ell }^{2}\right) \right] \\
\leq & \,mT^{-1}\left( \sum_{\ell =1}^{m}\sum_{t=1}^{T}\mathsf{E}\left(
|g_{t-\ell }|^{4}\right) \right)^{1/2}\left( \mathsf{E}\left[ \sum_{\ell
=1}^{m}\sum_{t=1}^{T}\left( \sum_{s=1}^{T}\zeta_{s,t-\ell }^{2}\right)^{2}%
\right] \right)^{1/2} \\
\leq & \,mT^{-1}O\left( \left( mT\right)^{1/2}\right) \left( T\sum_{\ell
=1}^{m}\sum_{t=1}^{T}\sum_{s=1}^{T}\mathsf{E}\left( \left\vert \zeta
_{s,t-\ell }\right\vert^{4}\right) \right)^{1/2}=O\left( \frac{m^{2}T}{N}%
\right) ,
\end{align*}
by Assumption~\ref{idiosyncratic}~\ref{assum:idiosyncratic:two}, so that 
\begin{equation*}
\frac{1}{T}\left( \sum_{t=1}^{T}\left( g_{t}^{2}-\mathsf{E}%
(g_{t}^{2})\right)^{2}\right)^{1/2}\left( \sum_{t=1}^{T}\left( \sum_{\ell
=1}^{m}\left\vert \frac{1}{T}g_{t-\ell }\sum_{s=1}^{T}\widehat{g}_{s}\zeta
_{s,t-\ell }\right\vert \right)^{2}\right)^{1/2}=O_{P}\left( \frac{m}{%
N^{1/2}}\right) .
\end{equation*}%
Analogously, 
\begin{align*}
& \sum_{t=1}^{T}\left( \sum_{\ell =1}^{m}\left\vert \frac{1}{T}g_{t-\ell
}\sum_{s=1}^{T}\widehat{g}_{s}\eta_{s,t-\ell }\right\vert \right)^{2} \\
\leq & \,mT^{-1}\sum_{\ell =1}^{m}\sum_{t=1}^{T}g_{t-\ell
}^{2}\sum_{s=1}^{T}\left\vert g_{s}\frac{1}{N}\sum_{i=1}^{N}\lambda
_{i}e_{i,t-\ell }\right\vert^{2}=mT^{-1}\left(
\sum_{s=1}^{T}g_{s}^{2}\right) \sum_{\ell =1}^{m}\sum_{t=1}^{T}g_{t-\ell
}^{2}\left\vert \frac{1}{N}\sum_{i=1}^{N}\lambda_{i}e_{i,t-\ell
}\right\vert^{2} \\
\leq & \,mT^{-1}O_{P}(T)\left( \sum_{\ell =1}^{m}\sum_{t=1}^{T}g_{t-\ell
}^{4}\right)^{1/2}\left( \sum_{\ell =1}^{m}\sum_{t=1}^{T}\left\vert \frac{1%
}{N}\sum_{i=1}^{N}\lambda_{i}e_{i,t-\ell }\right\vert^{4}\right)^{1/2} \\
=& \,mT^{-1}O_{P}\left( T\cdot \left( mT\right)^{1/2}\cdot \frac{\left(
mT\right)^{1/2}}{N}\right) =O_{P}\left( \frac{m^{2}T}{N}\right) .
\end{align*}
Hence, 
\begin{equation*}
\frac{1}{T}\left( \sum_{t=1}^{T}\left( g_{t}^{2}-\mathsf{E}%
(g_{t}^{2})\right)^{2}\right)^{1/2}\left( \sum_{t=1}^{T}\left( \sum_{\ell
=1}^{m}\left\vert \frac{1}{T}g_{t-\ell }\sum_{s=1}^{T}\widehat{g}_{s}\eta
_{s,t-\ell }\right\vert \right)^{2}\right)^{1/2}=O_{P}\left( \frac{m}{%
N^{1/2}}\right) .
\end{equation*}%
The same passages, in essence, yield 
\begin{equation*}
\frac{1}{T}\left( \sum_{t=1}^{T}\left( g_{t}^{2}-\mathsf{E}%
(g_{t}^{2})\right)^{2}\right)^{1/2}\left( \sum_{t=1}^{T}\left( \sum_{\ell
=1}^{m}\left\vert \frac{1}{T}g_{t-\ell }\sum_{s=1}^{T}\widehat{g}_{s}\xi
_{s,t-\ell }\right\vert \right)^{2}\right)^{1/2}=O_{P}\left( \frac{m}{%
N^{1/2}}\right) ,
\end{equation*}%
so that $T_{2,2}=O_{P}(mC_{NT}^{-1})$ in~\eqref{vcv-2}. Therefore, we
finally have 
\begin{equation*}
T_{2}=O_{P}\left( \frac{m}{C_{NT}}\right) .
\end{equation*}%
Following the analogous arguments, $T_{3}$ and $T_{4}$ are similarly
bounded, from which we conclude that 
\begin{equation}
\left\Vert \sum_{\ell =1}^{m}\frac{1}{T}\sum_{t=\ell +1}^{T}\mathbf{Z}_{t}%
\mathbf{Z}_{t-\ell }^{\top }-\sum_{\ell =1}^{m}\frac{1}{T}\sum_{t=\ell
+1}^{T}\mathbf{U}_{t}\mathbf{U}_{t-\ell }^{\top }\right\Vert =O_{P}\left( 
\frac{m}{C_{NT}}\right) , \label{vcv-proof-1}
\end{equation}%
and repeating essentially the same passages, it can be shown that 
\begin{equation*}
\left\Vert \frac{1}{T}\sum_{t=1}^{T}\mathbf{Z}_{t}\mathbf{Z}_{t}^{\top }-%
\frac{1}{T}\sum_{t=1}^{T}\mathbf{U}_{t}\mathbf{U}_{t}^{\top }\right\Vert
=O_{P}\left( \frac{1}{C_{NT}}\right) .
\end{equation*}%
Next, note that by definition, we can write 
\begin{equation*}
\mathbf{U}_{t}\mathbf{U}_{t-\ell }^{\top }=\mathbf{L}_{r}(\mathbf{H}^{\top
}\otimes \mathbf{H}^{\top })\mathsf{Vec}\left( \mathbf{g}_{t}\mathbf{g}%
_{t}^{\top }\right) \mathsf{Vec}\left( \mathbf{g}_{t - \ell}\mathbf{g}_{t - \ell}^{\top
}\right)^{\top}(\mathbf{H}\otimes \mathbf{H})\mathbf{L}_{r}^{\top },
\end{equation*}%
and a similar representation holds for $\widetilde{\mathbf{U}}_{t}\widetilde{%
\mathbf{U}}_{t-\ell }^{\top}$ with $\mathbf{H}_{0}$ replacing $\mathbf{H}$.
Then, by Lemma~\ref{lem:hi}, it can be verified that 
\begin{equation}
\left\Vert \sum_{\ell =1}^{m}\frac{1}{T}\sum_{t=\ell +1}^{T}\mathbf{U}_{t}%
\mathbf{U}_{t-\ell }^{\top }-\sum_{\ell =1}^{m}\frac{1}{T}\sum_{t=\ell
+1}^{T}\widetilde{\mathbf{U}}_{t}\widetilde{\mathbf{U}}_{t-\ell }^{\top
}\right\Vert =O_{P}\left( \frac{m}{C_{NT}}\right) . \label{vcv-proof-1-1}
\end{equation}%
We now consider bounding 
\begin{equation*}
\sum_{\ell =1}^{m}\frac{1}{T}\sum_{t=\ell +1}^{T}\widetilde{\mathbf{U}}_{t}%
\widetilde{\mathbf{U}}_{t-\ell }^{\top }-\sum_{\ell =1}^{m}\frac{1}{T}%
\sum_{t=\ell +1}^{T}\mathsf{E}\left( \widetilde{\mathbf{U}}_{t}\widetilde{%
\mathbf{U}}_{t-\ell }^{\top }\right)
\end{equation*}%
and again, we set $r=d=1$ for simplicity. Under Assumption~\ref{factors}~\ref%
{assum:factors:one}, it is easy to see that for all $j\geq 0$, the sequence $\mathbf{S}_{t,\ell }=\widetilde{\mathbf U}_{t}\widetilde{\mathbf U}_{t-\ell }^{\top }-\mathsf{E}(%
\widetilde{\mathbf U}_{t}\widetilde{\mathbf U}_{t-\ell }^{\top })$ is an $\mathcal{L}_{2}$-decomposable Bernoulli shift with $a>2$. Hence 
\begin{equation*}
\mathsf{E}\left( \left\Vert \frac{1}{T}\sum_{\ell =1}^{m}\sum_{t=\ell
+1}^{T}\mathbf S_{t,\ell }\right\Vert^{2}\right) \leq T^{-2}m\sum_{\ell =1}^{m}%
\mathsf{E}\left( \left\Vert \sum_{t=\ell +1}^{T}\mathbf S_{t,\ell }\right\Vert
^{2}\right) \leq \frac{c_{0}m^{2}}{T},
\end{equation*}%
where the last passage follows from Proposition~4 in \citet{berkeshormann}.
Hence we have 
\begin{equation}
\left\Vert \sum_{\ell =1}^{m}\frac{1}{T}\sum_{t=\ell +1}^{T}\widetilde{%
\mathbf{U}}_{t}\widetilde{\mathbf{U}}_{t-\ell }^{\top }-\sum_{\ell =1}^{m}%
\frac{1}{T}\sum_{t=\ell +1}^{T}\mathsf{E}\left( \widetilde{\mathbf{U}}_{t}%
\widetilde{\mathbf{U}}_{t-\ell }^{\top }\right) \right\Vert =O_{P}\left( 
\frac{m}{T^{1/2}}\right) . \label{vcv-proof-2}
\end{equation}%
Thus, putting together \eqref{vcv-proof-1}, \eqref{vcv-proof-1-1} and~%
\eqref{vcv-proof-2}, it follows that 
\begin{equation}
\left\Vert \sum_{\ell =1}^{m}\frac{1}{T}\sum_{t=\ell +1}^{T}\mathbf{Z}_{t}%
\mathbf{Z}_{t-\ell }^{\top }-\sum_{\ell =1}^{m}\frac{1}{T}\sum_{t=\ell
+1}^{T}\mathsf{E}\left( \widetilde{\mathbf{U}}_{t}\widetilde{\mathbf{U}}%
_{t-\ell }^{\top }\right) \right\Vert =O_{P}\left( \frac{m}{C_{NT}}\right) \label{vcv-proof-3}
\end{equation}%
and similarly, 
\begin{equation}
\left\Vert \frac{1}{T}\sum_{t=1}^{T}\mathbf{Z}_{t}\mathbf{Z}_{t}^{\top }-%
\frac{1}{T}\sum_{t=1}^{T}\mathsf{E}\left( \widetilde{\mathbf{U}}_{t}%
\widetilde{\mathbf{U}}_{t}^{\top }\right) \right\Vert =O_{P}\left( \frac{1}{%
C_{NT}}\right) . \label{vcv-proof-3-bis}
\end{equation}%
Finally, using~\eqref{vcv-proof-4}, \eqref{vcv-proof-5}, \eqref{vcv-proof-3}
and~\eqref{vcv-proof-3-bis} in \eqref{vcv-error}, it yields 
\begin{equation*}
\left\Vert \widehat{\mathbf{V}}-\mathbf{V}\right\Vert =O_{P}\left( \frac{m}{%
C_{NT}}\right) +O\left( \frac{1}{m}\right) =O_{P}\left( \frac{1}{\log
(T/\gamma )}\right) ,
\end{equation*}%
from the conditions made in~\eqref{b-mosum-restrict} on $m$. 

\subsection{Proof of Theorem~\protect\ref{thm:consistency}}

WLOG, we may regard $\mathbf{V}=\mathbf{I}_{r}$, which does not
alter the arguments as $\mathbf{V}$ is a positive definite matrix under
Assumption~\ref{factors}~\ref{assum:factors:three} with bounded eigenvalues. Also for simplicity, we write $\mathbf{M}(k)=%
\mathbf{M}_{N,T,\gamma }(k)$ and $\mathcal{T}(k)=\mathcal{T}_{N,T,\gamma
}(k) $. Decompose $\mathbf{M}(k)$ as 
\begin{align*}
\mathbf{M}(k)& =\frac{1}{\sqrt{2\gamma }}\left[ \sum_{t=k+1}^{k+\gamma }%
\mathsf{Vech}\left( \widehat{\mathbf{g}}_{t}\widehat{\mathbf{g}}_{t}^{\top }-%
\mathbf{H}^{\top }\mathsf{E}(\mathbf{g}_{t}\mathbf{g}_{t}^{\top })\mathbf{H}%
\right) -\sum_{t=k-\gamma +1}^{k}\mathsf{Vech}\left( \widehat{\mathbf{g}}_{t}%
\widehat{\mathbf{g}}_{t}^{\top }-\mathbf{H}^{\top }\mathsf{E}(\mathbf{g}_{t}%
\mathbf{g}_{t}^{\top })\mathbf{H}\right) \right] \\
& \qquad +\frac{1}{\sqrt{2\gamma }}\left[ \sum_{t=k+1}^{k+\gamma }\mathsf{%
Vech}\left( \mathbf{H}^{\top }\mathsf{E}(\mathbf{g}_{t}\mathbf{g}_{t}^{\top
})\mathbf{H}\right) -\sum_{t=k-\gamma +1}^{k}\mathsf{Vech}\left( \mathbf{H}%
^{\top }\mathsf{E}(\mathbf{g}_{t}\mathbf{g}_{t}^{\top })\mathbf{H}\right) %
\right] \\
& =:\mathbf{N}(k)+\mathbf{S}(k).
\end{align*}
Then, we can write 
\begin{equation}
\frac{1}{2}\left( \left\Vert \mathbf{S}(k)\right\Vert^{2}+\left\Vert 
\mathbf{N}(k)\right\Vert^{2}\right) \leq (\mathcal{T}(k))^{2}=\left\Vert 
\mathbf{S}(k)+\mathbf{N}(k)\right\Vert^{2}\leq 2\left( \left\Vert \mathbf{S}%
(k)\right\Vert^{2}+\left\Vert \mathbf{N}(k)\right\Vert^{2}\right) . \label{eq:tsq}
\end{equation}%
From Lemma~\ref{mosum-alt-1}, 
\begin{equation}
\max_{\gamma \leq k\leq T-\gamma }\Vert \mathbf{N}(k)\Vert =O_{P}\left( 
\sqrt{\log (T/\gamma )}\right) . \label{eq:e:size}
\end{equation}%
By definition of $\bm\delta_{j}$, we have 
\begin{equation}
\mathbf{S}(k)=\left\{ 
\begin{array}{ll}
\frac{\gamma -|k-k_{j}|}{\sqrt{2\gamma }}\;\mathsf{Vech}\left( \mathbf{H}%
^{\top }\bm\delta_{j}\mathbf{H}\right) & \text{if \ }k_{j}-\gamma +1\leq
k\leq k_{j}+\gamma - 1, \\ 
\mathbf{0} & \text{if \ }\min_{1\leq j\leq R}|k-k_{j}|\geq \gamma .%
\end{array}%
\right. \label{eq:s:size}
\end{equation}%
Further, thanks to Lemma~\ref{lem:hi}, there exists some event $\mathcal{H}%
_{N,T}$ satisfying $\mathsf{P}(\mathcal{H}_{N,T})\rightarrow 1$ as $\min
(N,T)\rightarrow \infty $ such that on $\mathcal{H}_{N,T}$, 
\begin{align}
\left\Vert \mathsf{Vech}\left( \mathbf{H}^{\top }\bm\delta_{j}\mathbf{H}%
\right) \right\Vert & \geq \frac{1}{2}\left\Vert \mathbf{H}^{\top }\bm\delta
_{j}\mathbf{H}\right\Vert \geq \frac{1}{2}\Lambda_{\min }(\mathbf{H}^{\top }%
\mathbf{H})\Vert \bm\delta_{j}\Vert  \notag \\
& \geq \frac{1}{2}\left( 1-\left\Vert \mathbf{H}-\mathbf{H}_{0}\right\Vert
\right) \Vert \bm\delta_{j}\Vert \geq \frac{1}{4}d_j,  \label{eq:s:lb}
\end{align}
and similarly 
\begin{equation}
\left\Vert \mathsf{Vech}\left( \mathbf{H}^{\top }\bm\delta_{j}\mathbf{H}%
\right) \right\Vert \leq \Lambda_{\max }(\mathbf{H}^{\top }\mathbf{H})\Vert %
\bm\delta_{j}\Vert \leq \frac{3}{2}d_j. \label{eq:s:ub}
\end{equation}

\begin{proof}[Proof of Theorem \ref{thm:consistency} \ref{thm:consistency:one}]

Consider for $j=1,\ldots ,R$, 

\begin{equation*}
\mathcal{S}_{T,j}= \left\{ \mathcal{T}(k_{j})\geq \max \left(
\max_{k:\,|k-k_{j}|>(1-\eta )\gamma }\mathcal{T}(k),\,\widetilde{D}%
_{T,\gamma }(\alpha) \cdot \omega_{T}^{(1)}\right) \right\} \cap \mathcal{H}_{N,T}
\end{equation*}%
and $\mathcal{S}_{T}=\bigcap_{1\leq j\leq R}\mathcal{S}_{T,j}$. Then for any 
$\alpha ,\eta \in (0,1)$, we have 
\begin{align*}
(\mathcal{T}(k_{j}))^{2}& \geq \frac{1}{16}d_j^{2}\gamma +O_{P}(\log
(T/\gamma ))=\frac{1}{16}d_j^{2}\gamma (1+o_{P}(1)), \\
\max_{\substack{ k:\,|k-k_{j}|>(1-\eta )\gamma  \\ 1\leq j\leq R}}(\mathcal{T%
}(k))^{2}& \leq \eta^{2}(\mathcal{T}(k_{j}))^{2}+O_{P}(\log (T/\gamma
))=\eta^{2}(\mathcal{T}(k_{j}))^{2}(1+o_{P}(1))
\end{align*}
under Assumption~\ref{kirch}, by~\eqref{eq:tsq}, \eqref{eq:e:size}, %
\eqref{eq:s:size} and~\eqref{eq:s:lb}, where the $O_{P}$-bounds hold
uniformly over $j$ and $k$. Combined with that $\widetilde{D}_{T,\gamma}(\alpha) \asymp \sqrt{\log (T/\gamma )}$ for any $\alpha \in (0,1)$, we have $\mathsf{P}(\mathcal{S}_{T})\rightarrow 1$ as $\min (N,T)\rightarrow \infty$. Also defining 
\begin{multline*}
\widetilde{\mathcal{S}}_{T,j}=\bigcap_{0\leq q\leq \lfloor 2/\eta \rfloor -2}%
\left[ \left\{ \mathcal{T}(k_{j}+q\eta \gamma /2)\geq \max_{k_{j}+(q+1)\eta
\gamma /2\leq k\leq k_{j}+\gamma }\mathcal{T}(k)\right\} \right. \\
\left. \bigcap \left\{ \mathcal{T}(k_{j}-q\eta \gamma /2)\geq
\max_{k_{j}-\gamma \leq k\leq k_{j}-(q+1)\eta \gamma /2}\mathcal{T}%
(k)\right\} \right] \cap \mathcal{H}_{N,T},
\end{multline*}%
by the analogous arguments, we have $\mathsf{P}(\widetilde{\mathcal{S}}%
_{T,j})\rightarrow 1$ and hence $\mathsf{P}(\widetilde{\mathcal{S}}%
_{T})\rightarrow 1$ where $\widetilde{\mathcal{S}}_{T}=\cap_{j=1}^{R}%
\widetilde{\mathcal{S}}_{T,j}$. On $\mathcal{S}_{T}\cap \widetilde{\mathcal{S%
}}_{T}$, we detect exactly one change point estimator within the radius of $\eta \gamma /2$ for each change point according to the rule~\eqref{eq:eta}.
Further, due to~\eqref{eq:e:size} and~\eqref{eq:s:size}, 
\begin{equation*}
\mathsf{P}\left\{ \max_{\substack{ k:\,|k-k_{j}|\geq \gamma  \\ 1\leq j\leq
R }}\mathcal{T}(k)>\widetilde{D}_{T,\gamma }(\alpha )\cdot \omega
_{T}^{(1)}\right\} \rightarrow 0 \text{ \ as \ } \min(N, T) \rightarrow \infty,
\end{equation*}%
which guarantees that no estimator is detected outside the radius of $\gamma 
$ from each change point. Altogether, the above arguments show that 
\begin{equation*}
\mathsf{P}\left( \widehat{R}=R;\,\max_{1\leq j\leq R}|\widehat{k}%
_{j}-k_{j}|\leq \eta \gamma /2\right) \rightarrow 1 \text{ \ as \ } \min(N, T) \rightarrow \infty.
\end{equation*}
\end{proof}

\begin{proof}[Proof of Theorem \ref{thm:consistency} \ref{thm:consistency:two}]
For each $j$, recall that $|\widehat{k}_{j}-k_{j}|\leq \gamma$, on $\mathcal{S}_{T}\cap \widetilde{\mathcal{S}}_{T}$. WLOG, suppose that $\widehat{k}_{j}\leq k_{j}$ and define $\widetilde{\mathcal{T}}_{j}(k)=(\mathcal{T}(k))^{2}-(\mathcal{T}(k_{j}))^{2}$. Then, recalling $\omega_{T}$
defined in Lemma~\ref{mosum-alt-2}, let us consider 
\begin{multline*}
\left\{ Cd_j^{2}(\widehat{k}_{j}-k_{j})\leq -\omega_{T}^{2}\right\}
\subset \left\{ \max_{k_{j}-\gamma +1\leq k\leq k_{j}-Cd_j^{2}\omega
_{T}^{2}}\widetilde{\mathcal{T}}_{j}(k)\geq \max_{k_{j}-Cd_j^{2}\omega
_{T}^{2}+1\leq k\leq k_{j}+\gamma }\widetilde{\mathcal{T}}_{j}(k)\right\} \\
\subset \left\{ \max_{k_{j}-\gamma +1\leq k\leq k_{j}-Cd_j^{2}\omega
_{T}^{2}}\widetilde{\mathcal{T}}_{j}(k)\geq 0\right\}
\end{multline*}%
for some fixed $C\in (0,\infty )$. We can decompose $\widetilde{\mathcal{T}}%
_{j}(k)$ as 
\begin{align*}
\widetilde{\mathcal{T}}_{j}(k)
&= \left( \mathbf{N}(k)-\mathbf{N}(k_{j})+\mathbf{S}(k)- \mathbf{S}(k_{j})\right)^\top \left( \mathbf{N}(k)+\mathbf{N}(k_{j})+\mathbf{S}%
(k)+\mathbf{S}(k_{j})\right) \\
&=-\left( \mathbf{S}(k_{j})-\mathbf{S}(k)\right)^{\top}\left( \mathbf{S}%
(k_{j})+\mathbf{S}(k)\right) +\left( \mathbf{N}(k)-\mathbf{N}(k_{j})\right)
\left( \mathbf{S}(k)+\mathbf{S}(k_{j})\right) 
\\
&+\left( \mathbf{N}(k)+\mathbf{N%
}(k_{j})\right) \left( \mathbf{S}(k)-\mathbf{S}(k_{j})\right) +\left( \mathbf{N}(k)-\mathbf{N}(k_{j})\right) \left( \mathbf{N}(k_{j})+%
\mathbf{N}(k)\right) 
\\
&=:
\widetilde{\mathcal{T}}_{j,1}(k)+\widetilde{\mathcal{T}}_{j,2}(k)+\widetilde{\mathcal{T}}_{j,3}(k)+\widetilde{\mathcal{T}}_{j,4}(k).
\end{align*}
From~\eqref{eq:s:size}, \eqref{eq:s:lb} and~\eqref{eq:s:ub}, we have $\widetilde{\mathcal{T}}_{j,1}(k)<0$ and 
\begin{align}
& \left\vert \widetilde{\mathcal{T}}_{j,1}(k)\right\vert =\frac{(2\gamma -|k-k_{j}|)|k-k_{j}|}{%
2\gamma }\left\Vert \mathsf{Vech}(\mathbf{H}^{\top }\bm\delta_{j}\mathbf{H}%
)\right\Vert^{2}\geq \frac{1}{2}\left\Vert \mathsf{Vech}(\mathbf{H}^{\top }%
\bm\delta_{j}\mathbf{H})\right\Vert^{2}|k-k_{j}|,  \notag \\
& \left\Vert \mathbf{S}(k_{j})-\mathbf{S}(k)\right\Vert =\frac{|k-k_{j}|}{%
\sqrt{2\gamma }}\left\Vert \mathsf{Vech}(\mathbf{H}^{\top }\bm\delta_{j}%
\mathbf{H})\right\Vert ,  \notag \\
& \left\Vert \mathbf{S}(k_{j})+\mathbf{S}(k)\right\Vert \leq \sqrt{2\gamma }%
\left\Vert \mathsf{Vech}(\mathbf{H}^{\top }\bm\delta_{j}\mathbf{H}%
)\right\Vert .  \label{eq:thm:consist:ub}
\end{align}
Then, using the arguments from the proof of Theorem~3.2 in \cite{kirch}, 
\begin{align*}
& \mathsf{P}\left( \max_{k_{j}-\gamma +1\leq k\leq k_{j}-Cd_j^{2}\omega
_{T}^{2}}\widetilde{T}_{j}(k)\geq 0,\,\mathcal{S}_{T}\cap \widetilde{%
\mathcal{S}}_{T}\right) \\
\leq & \,\mathsf{P}\left( \max_{k_{j}-\gamma +1\leq k\leq
k_{j}-Cd_j^{2}\omega_{T}^{2}}\left\vert \frac{\widetilde{T}_{j,2}(k)}{\widetilde{T}_{j,1}(k)}+%
\frac{\widetilde{T}_{j,3}(k)}{\widetilde{T}_{j,1}(k)}+\frac{\widetilde{T}_{j,4}(k)}{\widetilde{T}_{j,1}(k)}\right\vert \geq
1,\,\mathcal{S}_{T}\cap \widetilde{\mathcal{S}}_{T}\right) \\
\le & \, 2\mathsf{P}\left( \max_{k_{j}-\gamma +1\leq k\leq
k_{j}-Cd_j^{2}\omega_{T}^{2}}\frac{8\sqrt{2\gamma }\Vert \mathbf{N}(k)-%
\mathbf{N}(k_{j})\Vert }{d_j|k-k_{j}|}\geq \frac{1}{3}\right) \\
& + 2\mathsf{P}\left( \max_{k_{j}-\gamma +1\leq k\leq k_{j}-Cd_j^{2}\omega
_{T}^{2}}\frac{4\Vert \mathbf{N}(k)+\mathbf{N}(k_{j})\Vert }{d_j\sqrt{%
2\gamma }}\geq \frac{1}{3}\right) .
\end{align*}

By~\eqref{eq:e:size} and Assumption~\ref{kirch}, 
\begin{equation*}
\max_{k_{j}-\gamma +1\leq k\leq k_{j}-Cd_j^{2}\omega_{T}^{2}}\frac{\Vert 
\mathbf{N}(k)+\mathbf{N}(k_{j})\Vert }{d_j\sqrt{2\gamma }}\leq
\max_{\gamma \leq k\leq T-\gamma }\frac{2\Vert \mathbf{N}(k)\Vert }{d_j%
\sqrt{2\gamma }}=O_{P}\left( \frac{\sqrt{\log (T/\gamma )}}{d_j\sqrt{%
2\gamma }}\right) =o_{P}(1).
\end{equation*}%
Also, we have
\begin{multline*}
\sqrt{2\gamma }\left\Vert \mathbf{N}(k)-\mathbf{N}(k_{j})\right\Vert \leq
2\left\Vert \sum_{t=k+1}^{k_{j}}\mathsf{Vech}\left( \widehat{\mathbf{g}}_{t}%
\widehat{\mathbf{g}}_{t}^{\top }-\mathbf{H}^{\top }\mathsf{E}(\mathbf{g}_{t}%
\mathbf{g}_{t}^{\top })\mathbf{H}\right) \right\Vert \\
+\left\Vert \sum_{t=k-\gamma +1}^{k_{j}-\gamma }\mathsf{Vech}\left( \widehat{\mathbf{g}}_{t}\widehat{\mathbf{g}}_{t}^{\top }-\mathbf{H}^{\top }\mathsf{E}(%
\mathbf{g}_{t}\mathbf{g}_{t}^{\top })\mathbf{H}\right) \right\Vert
+\left\Vert \sum_{t=k+\gamma +1}^{k_{j}+\gamma }\mathsf{Vech}\left( \widehat{\mathbf{g}}_{t}\widehat{\mathbf{g}}_{t}^{\top }-\mathbf{H}^{\top }\mathsf{E}(%
\mathbf{g}_{t}\mathbf{g}_{t}^{\top })\mathbf{H}\right) \right\Vert.
\end{multline*}
Recalling the definition of $\mathcal{M}_{T}^{(\ell )}$ from Lemma~\ref%
{mosum-alt-2}, we have 
\begin{align*}
& \mathsf{P}\left( \max_{k_{j}-\gamma +1\leq k\leq k_{j}-Cd_j^{2}\omega
_{T}^{2}}\frac{8\sqrt{2\gamma }\Vert \mathbf{N}(k)-\mathbf{N}(k_{j})\Vert }{%
d_j|k-k_{j}|}\geq \frac{1}{3}\right) \\
=& \,\mathsf{P}\left( \max_{\ell \in \{0,\pm 1\}}\max_{k_{j}-\gamma +1\leq
k\leq k_{j}-Cd_j^{2}\omega_{T}^{2}}\frac{\sqrt{Cd_j^{-2}\omega_{T}}}{%
|k-k_{j}|}\left\Vert \sum_{t=k+\ell \gamma +1}^{k_{j}+\ell \gamma }\mathsf{%
Vech}\left( \widehat{\mathbf{g}}_{t}\widehat{\mathbf{g}}_{t}^{\top }-\mathbf{%
H}^{\top }\mathsf{E}(\mathbf{g}_{t}\mathbf{g}_{t}^{\top })\mathbf{H}\right)
\right\Vert \geq \frac{\sqrt{C}\omega_{T}}{96}\right) \\
\leq & \,\mathsf{P}\left( \cap_{\ell \in \{0,\pm 1\}}\mathcal{M}_{T}^{(\ell
)}\right) +o(1)=o(1),
\end{align*}
for large enough $C$. Altogether, we have the RHS of~%
\eqref{eq:thm:consist:ub} bounded as $o(1)$. Analogous arguments apply to
the case where $\widehat{k}_{j}>k_{j}$. Finally, setting $\omega
_{T}^{(2)}=C\omega_{T}^{2}$ concludes the proof.
\end{proof}

\clearpage

\section{Additional simulation results}
\label{app:sim}

\subsection{Additional results obtained under~\ref{m:two}}

We additionally report the histograms of the change point estimators obtained by MOSUM-diagonal, BSCOV \citep{li2023detection} and BDH \citep{baiduan} on realisations generated under the scenario~\ref{m:two} in Section~\ref{simulations} with $N \in \{200, 500\}$, see Figures~\ref{fig:m3:p200depFALSE}--\ref{fig:m3:p500depTRUE}.

\begin{figure}[h!t!]
\centering
\includegraphics[width = 1\linewidth]{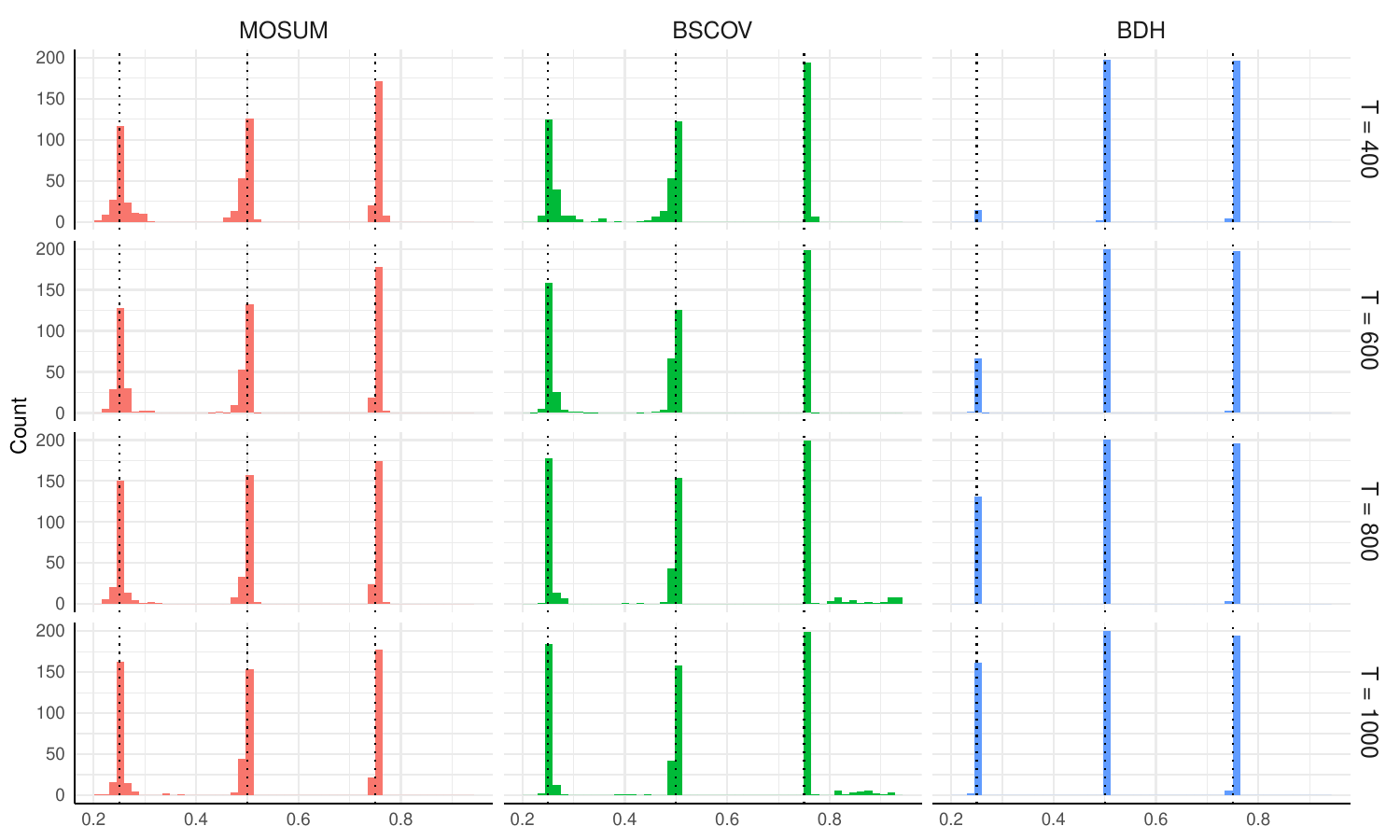}
\caption{\ref{m:two} Histogram of the change point estimators returned by MOSUM-diagonal, BSCOV and BDH when $N = 200$, $(\rho_f, \rho_e) = (0, 0)$ and varying $T \in \{400, 600, 800, 1000\}$ (top to bottom). The scaled locations of the true change points, $k_j/T$, at $(1/4, 1/2, 3/4)$ are marked by vertical dotted lines.}
\label{fig:m3:p200depFALSE}
\end{figure}

\begin{figure}[h!t!]
\centering
\includegraphics[width = 1\linewidth]{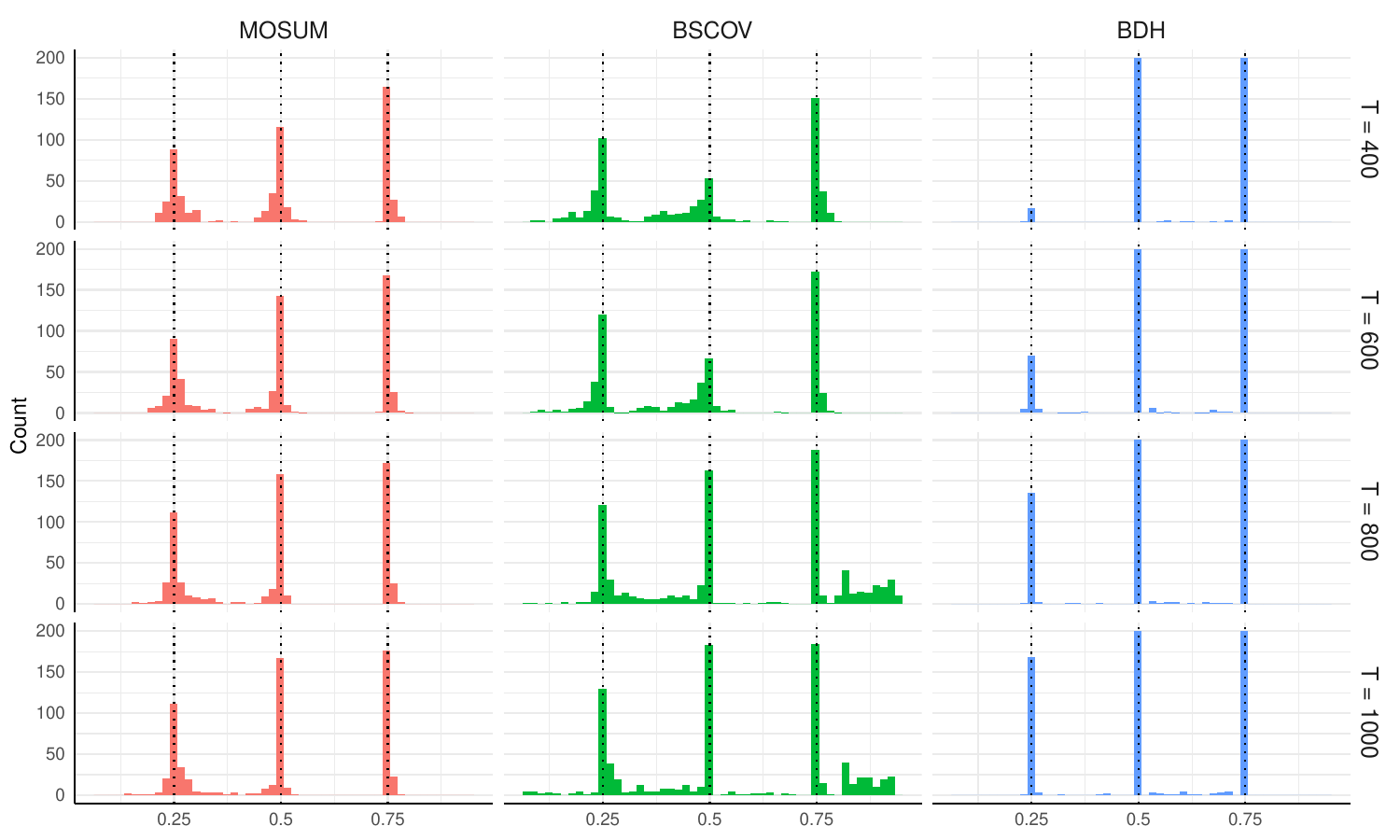}
\caption{\ref{m:two} Histogram of the change point estimators returned by MOSUM-diagonal, BSCOV and BDH when $N = 200$, $(\rho_f, \rho_e) = (0.7, 0.3)$ and varying $T \in \{400, 600, 800, 1000\}$ (top to bottom). The scaled locations of the true change points, $k_j/T$, at $(1/4, 1/2, 3/4)$ are marked by vertical dotted lines.}
\label{fig:m3:p200depTRUE}
\end{figure}

\begin{figure}[h!t!]
\centering
\includegraphics[width = 1\linewidth]{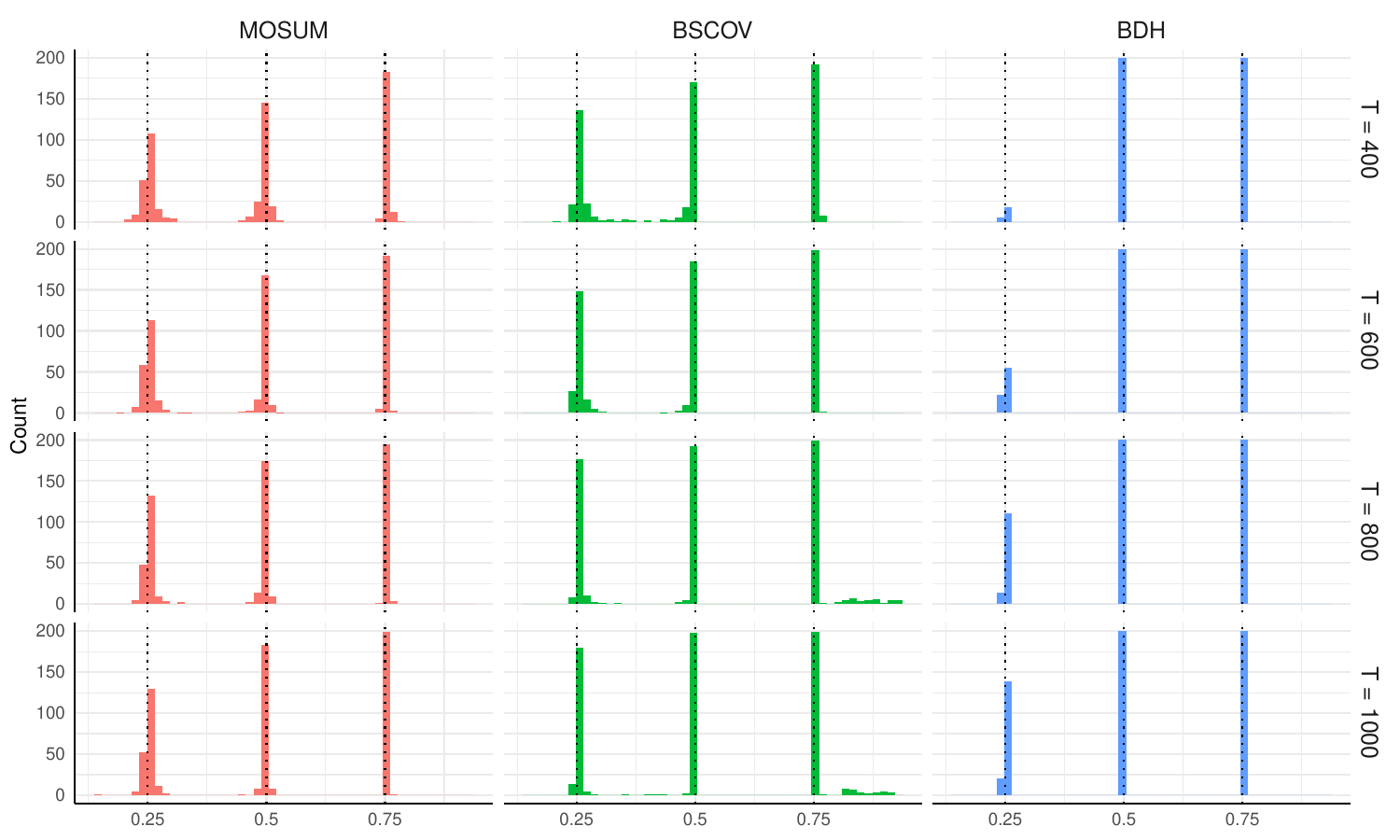}
\caption{\ref{m:two} Histogram of the change point estimators returned by MOSUM-diagonal, BSCOV and BDH when $N = 500$, $(\rho_f, \rho_e) = (0, 0)$ and varying $T \in \{400, 600, 800, 1000\}$ (top to bottom). The scaled locations of the true change points, $k_j/T$, at $(1/4, 1/2, 3/4)$ are marked by vertical dotted lines.}
\label{fig:m3:p500depFALSE}
\end{figure}

\begin{figure}[h!t!]
\centering
\includegraphics[width = 1\linewidth]{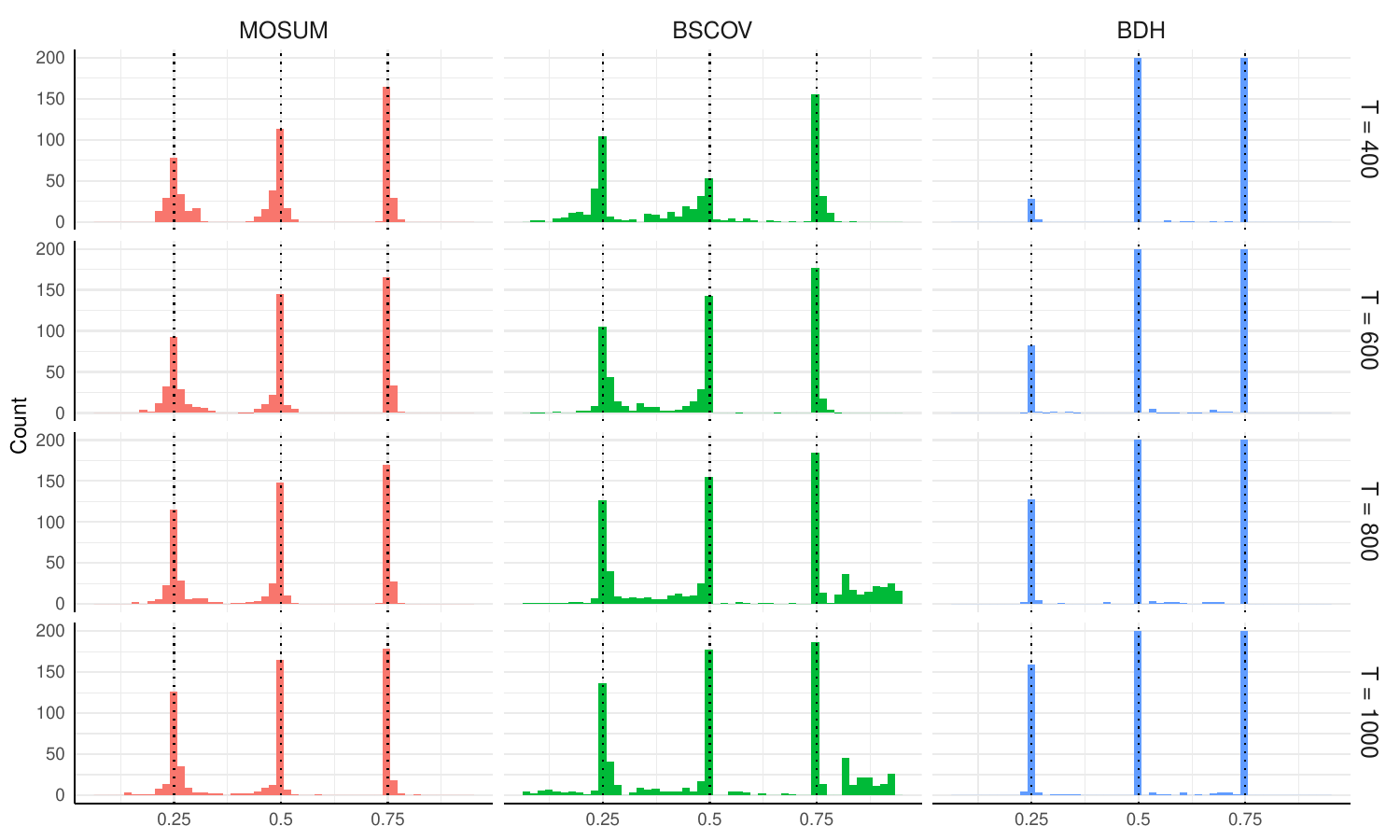}
\caption{\ref{m:two} Histogram of the change point estimators returned by MOSUM-diagonal, BSCOV and BDH when $N = 500$, $(\rho_f, \rho_e) = (0.7, 0.3)$ and varying $T \in \{400, 600, 800, 1000\}$ (top to bottom). The scaled locations of the true change points, $k_j/T$, at $(1/4, 1/2, 3/4)$ are marked by vertical dotted lines.}
\label{fig:m3:p500depTRUE}
\end{figure}

\clearpage 

\subsection{Choice of $\omega^{(1)}_T$}
\label{sec:kappa}

As discussed in Section~\ref{sec:tuning}, we set the threshold as $D_{T, \gamma} = \widetilde{D}_{T, \gamma}(\alpha) \cdot \omega^{(1)}_T$ with $\omega^{(1)}_T = \log^\kappa(T/\gamma)$ for some $\kappa \ge 0$.
In this section, we demonstrate that the detection performance proposed MOSUM procedure is less sensitive to the choice of $\kappa$ within a reasonable range, see Figures~\ref{fig:kappa:p100:depFALSE}--\ref{fig:kappa:p500:depTRUE} which plot the histograms of the change point estimators detected by MOSUM-diagonal, with varying $\kappa \in \{0, 0.1, 0.2, 0.3\}$, over $200$ realisations generated under~\ref{m:two} in Section~\ref{simulations}.
We complement these results with those obtained in the no change point scenario of~\ref{m:zero}, see Table~\ref{tab:kappa}, where it shows that $\kappa = 0.2$ is a choice that balances between good detection performance as well as in keeping the false positives at bay when the data contain no change point, particularly when serial dependence is present in the data.

\begin{figure}[h!t!]
\centering
\includegraphics[width = 1\textwidth]{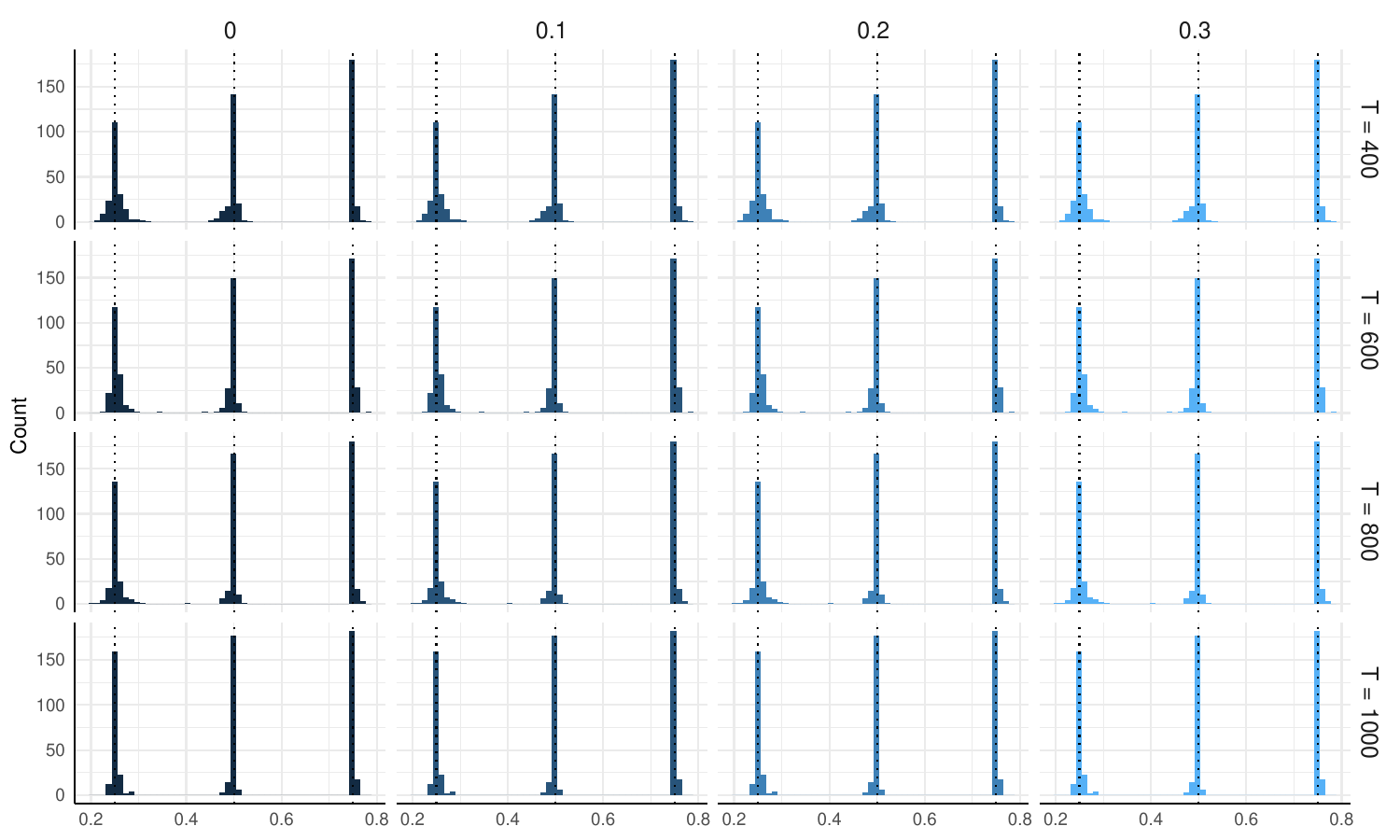}
\caption{\ref{m:two} Histogram of the change point estimators returned by MOSUM-diagonal with $\kappa \in \{0, 0.1, 0.2, 0.3\}$ (left to right) when $N = 100$, $(\rho_f, \rho_e) = (0, 0)$ and varying $T \in \{400, 600, 800, 1000\}$ (top to bottom). The scaled locations of the true change points, $k_j/T$, at $(1/4, 1/2, 3/4)$ are marked by vertical dotted lines.}
\label{fig:kappa:p100:depFALSE}
\end{figure} 

\begin{figure}[h!t!]
\centering
\includegraphics[width = 1\textwidth]{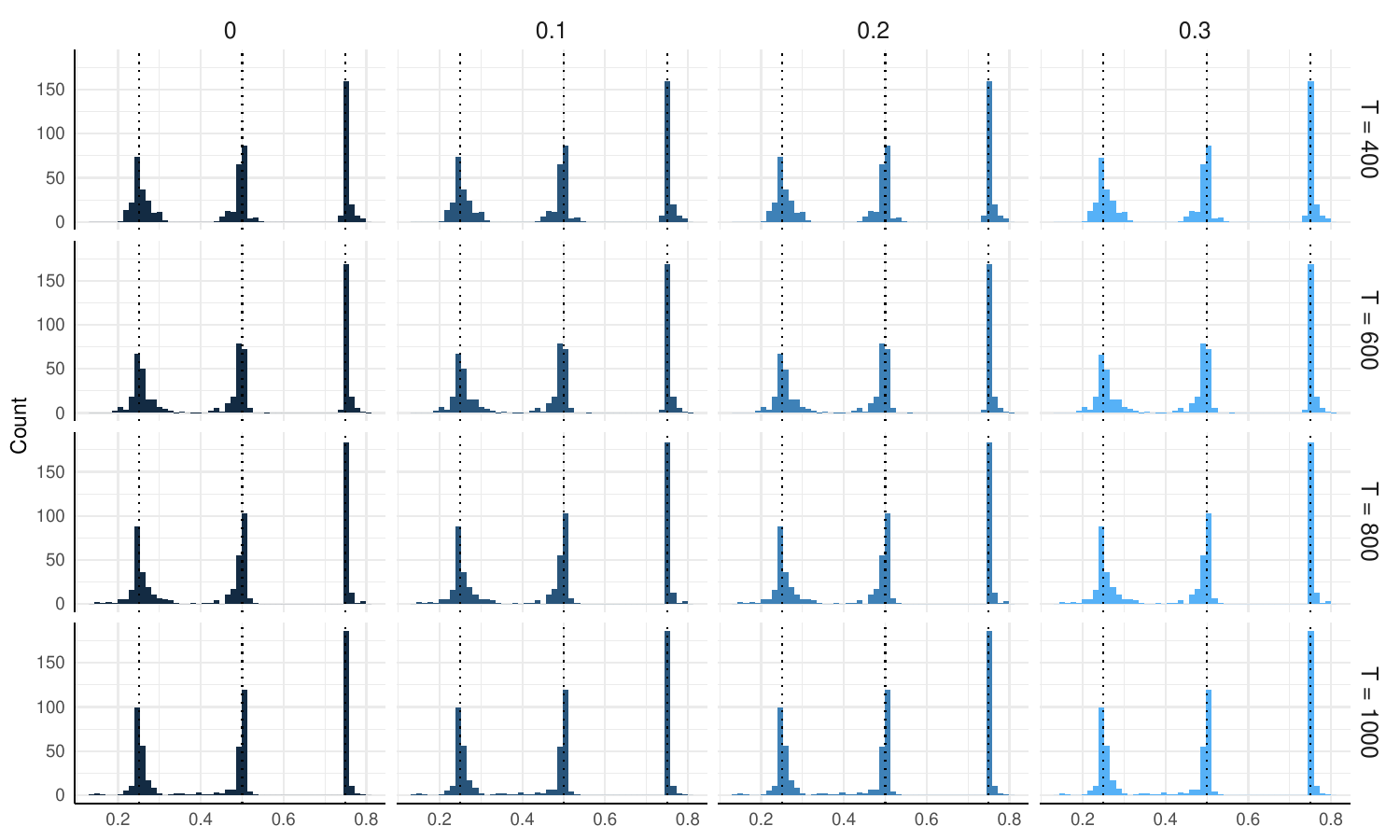}
\caption{\ref{m:two} Histogram of the change point estimators returned by MOSUM-diagonal with $\kappa \in \{0, 0.1, 0.2, 0.3\}$ (left to right) when $N = 100$, $(\rho_f, \rho_e) = (0.7, 0.3)$ and varying $T \in \{400, 600, 800, 1000\}$ (top to bottom). The scaled locations of the true change points, $k_j/T$, at $(1/4, 1/2, 3/4)$ are marked by vertical dotted lines.}
\label{fig:kappa:p100:depTRUE}
\end{figure} 

\begin{figure}[h!t!]
\centering
\includegraphics[width = 1\textwidth]{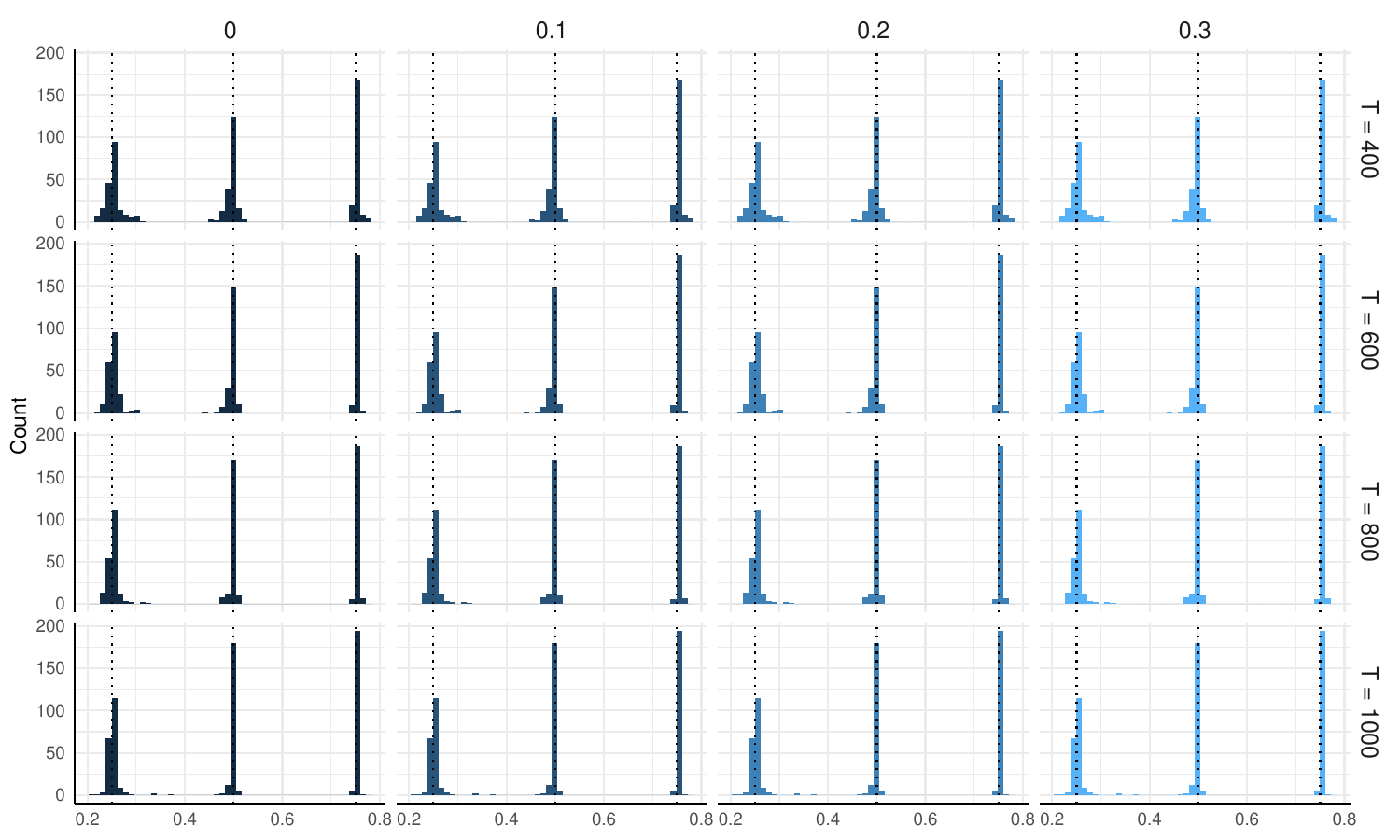}
\caption{\ref{m:two} Histogram of the change point estimators returned by MOSUM-diagonal with $\kappa \in \{0, 0.1, 0.2, 0.3\}$ (left to right) when $N = 200$, $(\rho_f, \rho_e) = (0, 0)$ and varying $T \in \{400, 600, 800, 1000\}$ (top to bottom). The scaled locations of the true change points, $k_j/T$, at $(1/4, 1/2, 3/4)$ are marked by vertical dotted lines.}
\label{fig:kappa:p200:depFALSE}
\end{figure} 

\begin{figure}[h!t!]
\centering
\includegraphics[width = 1\textwidth]{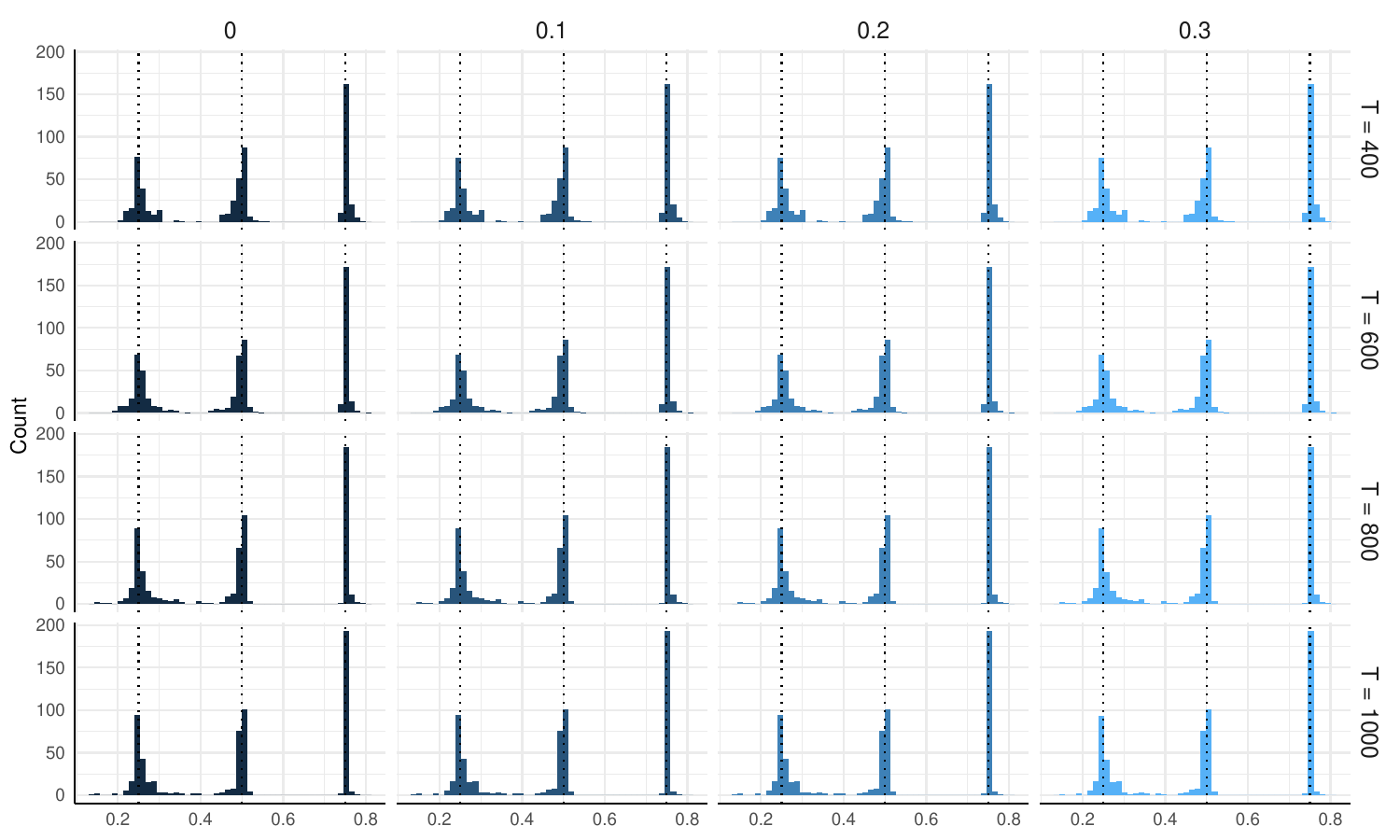}
\caption{\ref{m:two} Histogram of the change point estimators returned by MOSUM-diagonal with $\kappa \in \{0, 0.1, 0.2, 0.3\}$ (left to right) when $N = 200$, $(\rho_f, \rho_e) = (0.7, 0.3)$ and varying $T \in \{400, 600, 800, 1000\}$ (top to bottom). The scaled locations of the true change points, $k_j/T$, at $(1/4, 1/2, 3/4)$ are marked by vertical dotted lines.}
\label{fig:kappa:p200:depTRUE}
\end{figure} 

\begin{figure}[h!t!]
\centering
\includegraphics[width = 1\textwidth]{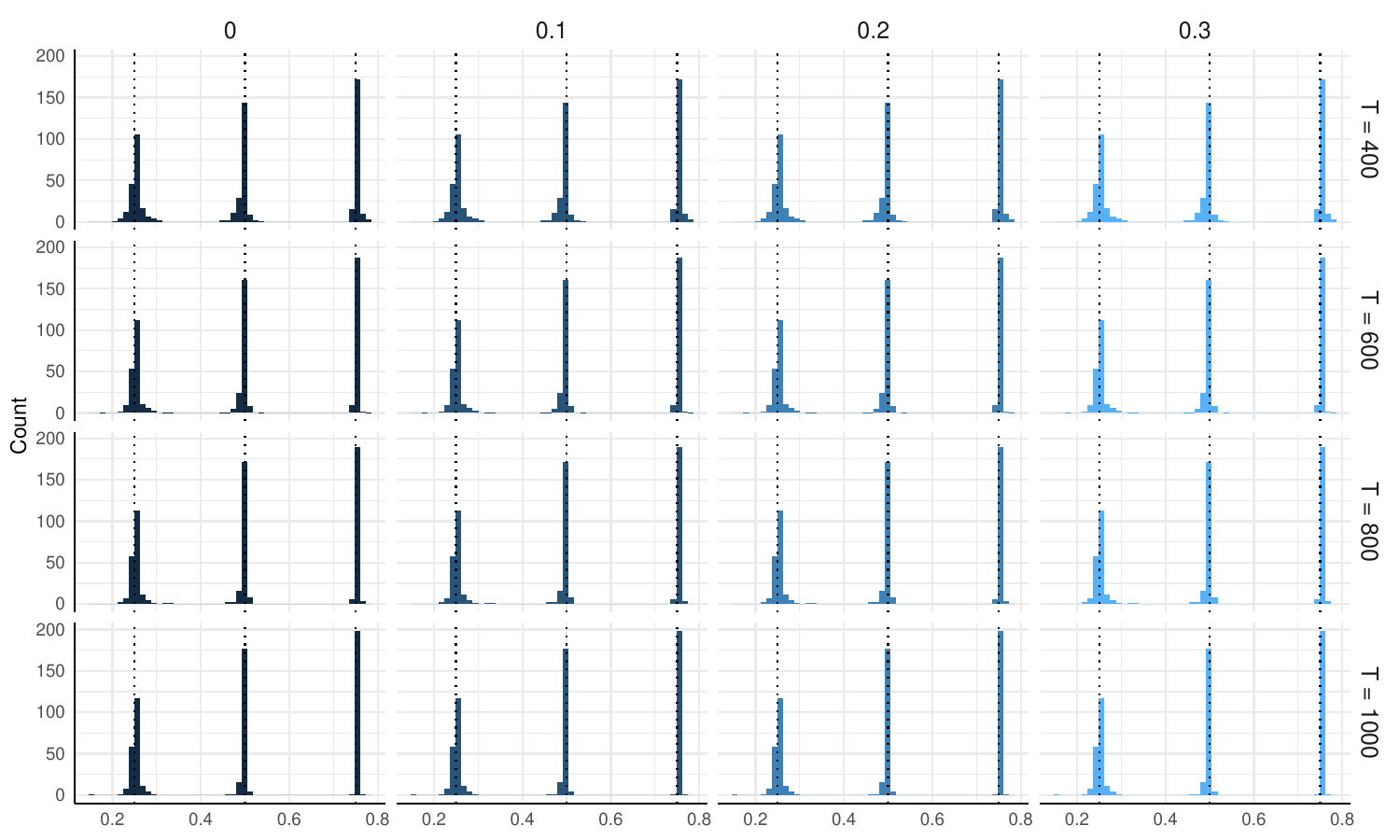}
\caption{\ref{m:two} Histogram of the change point estimators returned by MOSUM-diagonal with $\kappa \in \{0, 0.1, 0.2, 0.3\}$ (left to right) when $N = 500$, $(\rho_f, \rho_e) = (0, 0)$ and varying $T \in \{400, 600, 800, 1000\}$ (top to bottom). The scaled locations of the true change points, $k_j/T$, at $(1/4, 1/2, 3/4)$ are marked by vertical dotted lines.}
\label{fig:kappa:p500:depFALSE}
\end{figure} 

\begin{figure}[h!t!]
\centering
\includegraphics[width = 1\textwidth]{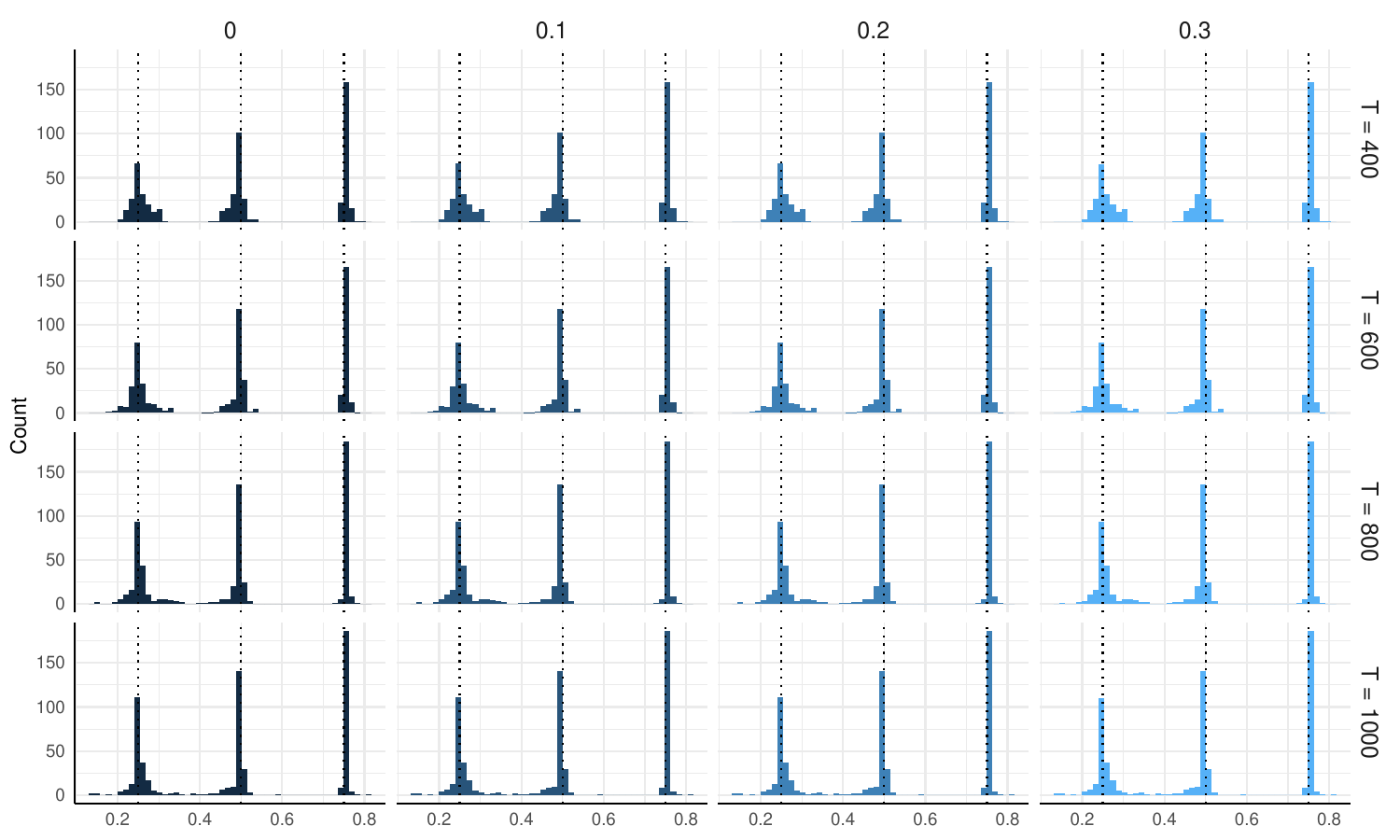}
\caption{\ref{m:two} Histogram of the change point estimators returned by MOSUM-diagonal with $\kappa \in \{0, 0.1, 0.2, 0.3\}$ (left to right) when $N = 500$, $(\rho_f, \rho_e) = (0.7, 0.3)$ and varying $T \in \{400, 600, 800, 1000\}$ (top to bottom). The scaled locations of the true change points, $k_j/T$, at $(1/4, 1/2, 3/4)$ are marked by vertical dotted lines.}
\label{fig:kappa:p500:depTRUE}
\end{figure} 

\clearpage

\begin{table}[h!t!p!]
\caption{\ref{m:zero} with $R = 0$: Distribution of $\widehat R - R$ returned by MOSUM-diagonal over $200$ realisations with varying $\kappa \in \{0, 0.1, 0.2, 0.3\}$.}
\label{tab:kappa}
\centering
{\footnotesize
\begin{tabular}{rrrcccccc}
\toprule 
&	&	&	\multicolumn{3}{c}{$(\rho_f, \rho_e) = (0, 0)$}  &			\multicolumn{3}{c}{$(\rho_f, \rho_e) = (0.7, 0.3)$} 			\\	\cmidrule(lr){4-6} \cmidrule(lr){7-9}	
&	&	&	\multicolumn{3}{c}{$\wh{R} - R$}  &			\multicolumn{3}{c}{$\wh{R} - R$} 			\\		
$n$ &	$p$ &	$\kappa$ &	$0$ &	$1$ &	$\ge 2$ &	$0$ &	$1$ &	$\ge 2$ 	\\	\cmidrule(lr){1-3} \cmidrule(lr){4-6} \cmidrule(lr){7-9}	
$400$ &	$100$ &	$0$ &	0.955 &	0.045 &	0 &	0.75 &	0.22 &	0.03	\\		
&	&	$0.1$ &	0.985 &	0.015 &	0 &	0.84 &	0.135 &	0.025	\\		
&	&	$0.2$ &	0.985 &	0.015 &	0 &	0.895 &	0.08 &	0.025	\\		
&	&	$0.3$ &	0.985 &	0.015 &	0 &	0.955 &	0.03 &	0.015	\\		
&	$200$ &	$0$ &	0.96 &	0.04 &	0 &	0.745 &	0.22 &	0.035	\\		
&	&	$0.1$ &	0.99 &	0.01 &	0 &	0.79 &	0.19 &	0.02	\\		
&	&	$0.2$ &	0.995 &	0.005 &	0 &	0.88 &	0.11 &	0.01	\\		
&	&	$0.3$ &	1 &	0 &	0 &	0.94 &	0.055 &	0.005	\\		
&	$500$ &	$0$ &	0.95 &	0.05 &	0 &	0.725 &	0.25 &	0.025	\\		
&	&	$0.1$ &	0.985 &	0.015 &	0 &	0.845 &	0.14 &	0.015	\\		
&	&	$0.2$ &	0.99 &	0.01 &	0 &	0.88 &	0.11 &	0.01	\\		
&	&	$0.3$ &	0.99 &	0.01 &	0 &	0.925 &	0.07 &	0.005	\\	\cmidrule(lr){1-3} \cmidrule(lr){4-6} \cmidrule(lr){7-9}	
$600$ &	$100$ &	$0$ &	0.94 &	0.055 &	0.005 &	0.635 &	0.275 &	0.09	\\		
&	&	$0.1$ &	0.97 &	0.03 &	0 &	0.755 &	0.195 &	0.05	\\		
&	&	$0.2$ &	0.99 &	0.01 &	0 &	0.855 &	0.125 &	0.02	\\		
&	&	$0.3$ &	0.995 &	0.005 &	0 &	0.92 &	0.065 &	0.015	\\		
&	$200$ &	$0$ &	0.97 &	0.02 &	0.01 &	0.62 &	0.26 &	0.12	\\		
&	&	$0.1$ &	0.975 &	0.015 &	0.01 &	0.755 &	0.18 &	0.065	\\		
&	&	$0.2$ &	0.99 &	0.01 &	0 &	0.845 &	0.14 &	0.015	\\		
&	&	$0.3$ &	0.995 &	0.005 &	0 &	0.92 &	0.07 &	0.01	\\		
&	$500$ &	$0$ &	0.94 &	0.055 &	0.005 &	0.625 &	0.265 &	0.11	\\		
&	&	$0.1$ &	0.97 &	0.025 &	0.005 &	0.755 &	0.195 &	0.05	\\		
&	&	$0.2$ &	0.985 &	0.015 &	0 &	0.855 &	0.11 &	0.035	\\		
&	&	$0.3$ &	0.99 &	0.01 &	0 &	0.925 &	0.06 &	0.015	\\	\cmidrule(lr){1-3} \cmidrule(lr){4-6} \cmidrule(lr){7-9}	
$800$ &	$100$ &	$0$ &	0.94 &	0.06 &	0 &	0.66 &	0.3 &	0.04	\\		
&	&	$0.1$ &	0.97 &	0.03 &	0 &	0.795 &	0.185 &	0.02	\\		
&	&	$0.2$ &	1 &	0 &	0 &	0.905 &	0.085 &	0.01	\\		
&	&	$0.3$ &	1 &	0 &	0 &	0.975 &	0.015 &	0.01	\\		
&	$200$ &	$0$ &	0.95 &	0.05 &	0 &	0.645 &	0.31 &	0.045	\\		
&	&	$0.1$ &	0.975 &	0.025 &	0 &	0.82 &	0.17 &	0.01	\\		
&	&	$0.2$ &	1 &	0 &	0 &	0.94 &	0.05 &	0.01	\\		
&	&	$0.3$ &	1 &	0 &	0 &	0.99 &	0.01 &	0	\\		
&	$500$ &	$0$ &	0.96 &	0.04 &	0 &	0.63 &	0.315 &	0.055	\\		
&	&	$0.1$ &	1 &	0 &	0 &	0.805 &	0.175 &	0.02	\\		
&	&	$0.2$ &	1 &	0 &	0 &	0.92 &	0.075 &	0.005	\\		
&	&	$0.3$ &	1 &	0 &	0 &	0.97 &	0.03 &	0	\\		
\cmidrule(lr){1-3} \cmidrule(lr){4-6} \cmidrule(lr){7-9}	
$1000$ &	$100$ &	$0$ &	0.935 &	0.06 &	0.005 &	0.65 &	0.26 &	0.09	\\		
&	&	$0.1$ &	0.98 &	0.02 &	0 &	0.805 &	0.16 &	0.035	\\		
&	&	$0.2$ &	1 &	0 &	0 &	0.895 &	0.1 &	0.005	\\		
&	&	$0.3$ &	1 &	0 &	0 &	0.97 &	0.03 &	0	\\		
&	$200$ &	$0$ &	0.96 &	0.04 &	0 &	0.685 &	0.23 &	0.085	\\		
&	&	$0.1$ &	0.995 &	0.005 &	0 &	0.83 &	0.13 &	0.04	\\		
&	&	$0.2$ &	1 &	0 &	0 &	0.9 &	0.09 &	0.01	\\		
&	&	$0.3$ &	1 &	0 &	0 &	0.96 &	0.04 &	0	\\		
&	$500$ &	$0$ &	0.97 &	0.03 &	0 &	0.67 &	0.23 &	0.1	\\		
&	&	$0.1$ &	0.985 &	0.015 &	0 &	0.815 &	0.15 &	0.035	\\		
&	&	$0.2$ &	1 &	0 &	0 &	0.91 &	0.085 &	0.005	\\		
&	&	$0.3$ &	1 &	0 &	0 &	0.975 &	0.025 &	0	\\	\bottomrule		
\end{tabular}
}
\end{table}

\end{document}